\documentclass[11pt]{article}
\usepackage{times}
\usepackage{xspace,comment,calc}
\usepackage{amssymb,amsmath,amsthm,amsfonts,amstext,stmaryrd}
\usepackage[margin=1cm]{caption}
\usepackage[shortlabels,inline]{enumitem}
\usepackage{multirow,makecell}
\usepackage{graphicx}
\usepackage[dvipsnames]{xcolor}
\usepackage{bbm}
\usepackage{comment}
\usepackage{mdframed}
\usepackage{mathrsfs}
\usepackage{pict2e,graphpap}
\usepackage{caption}
\usepackage{thm-restate}

\usepackage{tikz}         % For arrow and dots in \xvec
\usepackage{hyperref}

\hypersetup{colorlinks = true,linkcolor = blue, anchorcolor = blue, citecolor = blue, filecolor = red,urlcolor = red}
\makeatletter
\newcommand{\clonelabel}[2]{\@bsphack
  \expandafter\ifx\csname r@#2\endcsname\relax
  \else\protected@write\@auxout{}{\string\newlabel{#1}%
    {\csname r@#2\endcsname}}%
  \fi
  \expandafter\ifx\csname r@#2@cref\endcsname\relax
  \else\protected@write\@auxout{}{\string\newlabel{#1@cref}%
    {\csname r@#2@cref\endcsname}}%
  \fi
  \@esphack}

\newlength\xvec@height%
\newlength\xvec@depth%
\newlength\xvec@width%
\newcommand{\xvec}[2][]{%
  \ifmmode%
    \settoheight{\xvec@height}{$#2$}%
    \settodepth{\xvec@depth}{$#2$}%
    \settowidth{\xvec@width}{$#2$}%
  \else%
    \settoheight{\xvec@height}{#2}%
    \settodepth{\xvec@depth}{#2}%
    \settowidth{\xvec@width}{#2}%
  \fi%
  \def\xvec@arg{#1}%
  \def\xvec@dd{:}%
  \def\xvec@d{.}%
  \raisebox{.2ex}{\raisebox{\xvec@height}{\rlap{%
    \kern.05em%  (Because left edge of drawing is at .05em)
    \begin{tikzpicture}[scale=1]
    \pgfsetroundcap
    \draw (.05em,0em)--(\xvec@width-.05em,0em);
    \draw (\xvec@width-.05em,0em)--(\xvec@width-.25em,.15em);
    \draw (\xvec@width-.05em,0em)--(\xvec@width-.25em,-.15em);
    \ifx\xvec@arg\xvec@d%
      \fill(\xvec@width*.45,.5ex) circle (.5pt);%
    \else\ifx\xvec@arg\xvec@dd%
      \fill(\xvec@width*.30,.5ex) circle (.5pt);%
      \fill(\xvec@width*.65,.5ex) circle (.5pt);%
    \fi\fi%
    \end{tikzpicture}%
  }}}%
  #2%
}

\newcommand{\leqnomode}{\tagsleft@true\let\veqno\@@leqno}
\newcommand{\reqnomode}{\tagsleft@false\let\veqno\@@eqno}
\makeatother

\let\stdvec\vec
\renewcommand{\vec}[1]{\xvec[]{#1}}

\newcommand{\np}{{\em NP}\xspace} 
\newcommand{\nphard}{\np-hard\xspace}
 
\newcommand{\apx}{{\em APX}\xspace}
\newcommand{\apxhard}{\apx-hard\xspace}

\DeclareMathOperator*{\Exp}{E}

\newcommand{\E}[2][{}]{\ensuremath{{\textstyle{\boldsymbol \Exp}_{#1}}\bigl[#2\bigr]}}

\newtheorem{theorem}{Theorem}[section]
\newtheorem{lemma}[theorem]{Lemma}
\newtheorem{claim}[theorem]{Claim}
\newtheorem{corollary}[theorem]{Corollary}

{\theoremstyle{definition} 
\newtheorem{definition}[theorem]{Definition}
\newtheorem{defn}[theorem]{Definition}}

{\theoremstyle{remark} \newtheorem{remark}[theorem]{Remark} }

\newenvironment{proofof}[1]{\begin{proof}[Proof of #1]}{\end{proof}}

\newcommand{\R}{\ensuremath{\mathbb R}}
\newcommand{\Rp}{\ensuremath{\mathbb R_{+}}}

\newcommand{\Z}{\ensuremath{\mathbb Z}}
\newcommand{\Zp}{\ensuremath{\mathbb Z_{+}}}

\newcommand{\A}{\ensuremath{\mathcal{A}}}
\newcommand{\B}{\ensuremath{\mathcal{B}}}
\newcommand{\C}{\ensuremath{\mathcal{C}}}
\newcommand{\D}{\ensuremath{\mathcal{D}}}
\newcommand{\I}{\ensuremath{\mathcal I}}
\newcommand{\F}{\ensuremath{\mathcal F}}
\newcommand{\G}{\ensuremath{\mathcal{G}}}
\newcommand{\J}{\ensuremath{\mathcal J}}
\newcommand{\K}{\ensuremath{\mathcal K}}

\newcommand{\M}{\ensuremath{\mathcal M}}
\newcommand{\Nc}{\ensuremath{\mathcal N}}

\newcommand{\Sc}{\ensuremath{\mathcal S}}
\newcommand{\T}{\ensuremath{\mathcal T}}

\newcommand{\Pc}{\ensuremath{\mathcal P}}

\newcommand{\Rc}{\ensuremath{\mathcal R}}

\newcommand{\OPT}{\ensuremath{\mathit{OPT}}}

\newcommand{\val}{\ensuremath{\mathit{value}}}
\newcommand{\frall}{\ensuremath{\text{ for all }}}

\newcommand{\load}{\ensuremath{\mathsf{load}}}
\newcommand{\ub}{\ensuremath{\mathsf{ub}}}

\newcommand{\LP}[1]{\ensuremath{\mathsf{LP}}}
\newcommand{\alg}{\mathcal{A}}

\newcommand{\sm}{\ensuremath{\setminus}} 
\newcommand{\es}{\ensuremath{\emptyset}}
\newcommand{\sse}{\subseteq} 

\newcommand{\ceil}[1]{\ensuremath{\bigl\lceil#1\bigr\rceil}}
\newcommand{\floor}[1]{\ensuremath{\bigl\lfloor#1\bigr\rfloor}}

\newcommand{\assign}{\leftarrow}

\newcommand{\poly}{\operatorname{poly}}

\newcommand{\topl}[1][\ell]{\ensuremath{\mathsf{Top}_{#1}}}

\newcommand{\pos}{\ensuremath{\mathsf{POS}}}
\newcommand{\POS}{\pos}

\newcommand{\e}{\ensuremath{\epsilon}} 
\newcommand{\ve}{\ensuremath{\varepsilon}} 
\newcommand{\gm}{\ensuremath{\gamma}} 
\newcommand{\Gm}{\ensuremath{\Gamma}} 
\newcommand{\ld}{\ensuremath{\lambda}} 
\newcommand{\Ld}{\ensuremath{\Lambda}} 
\newcommand{\tld}{\ensuremath{\widetilde\ld}}
\newcommand{\kp}{\ensuremath{\kappa}}

\newcommand{\al}{\ensuremath{\alpha}} 

\newcommand{\tht}{\ensuremath{\theta}} 
\newcommand{\dt}{\ensuremath{\delta}}
\newcommand{\Dt}{\ensuremath{\Delta}} 
\newcommand{\sg}{\ensuremath{\sigma}}
\newcommand{\w}{\ensuremath{\omega}} 
\newcommand{\Om}{\ensuremath{\Omega}}

\newcommand{\ta}{\ensuremath{\widetilde a}}

\newcommand{\tp}{\ensuremath{\widetilde p}}
\newcommand{\tx}{\ensuremath{\widetilde x}}
\newcommand{\ty}{\ensuremath{\widetilde y}}

\newcommand{\tdh}{\ensuremath{\widetilde h}} 
\newcommand{\tn}{\ensuremath{\widetilde n}} 
\newcommand{\tS}{\ensuremath{\widetilde S}}
\newcommand{\tX}{\ensuremath{\widetilde X}}

\newcommand{\tA}{\ensuremath{\widetilde A}}

\newcommand{\hx}{\ensuremath{\widehat x}}

\newcommand{\hJ}{\ensuremath{\widehat J}}
\newcommand{\hL}{\ensuremath{\widehat L}}
\newcommand{\hA}{\ensuremath{\widehat A}}

\newcommand{\bx}{\ensuremath{\overline x}} 
\newcommand{\by}{\ensuremath{\overline y}}

\newcommand{\bi}{\ensuremath{\overline i}}
\newcommand{\bI}{\ensuremath{\overline I}}
\newcommand{\bL}{\ensuremath{\overline L}} 
\newcommand{\bell}{\ensuremath{\overline{\ell}}}
\newcommand{\bA}{\ensuremath{\overline A}} 
\newcommand{\qt}{\theta}

\newcommand{\bon}{\ensuremath{\mathbbm{1}}}
\newcommand{\down}{\ensuremath{{\:\!\downarrow}}}

\newcommand{\sz}{\ensuremath{w}\xspace}
\newcommand{\svec}[2][{\sz}]{\vec{{#1}({#2})}}
\newcommand{\rewd}{\ensuremath{\mathsf{rwd}}\xspace}
\newcommand{\sols}{\ensuremath{\Sc}}

\newcommand{\normbudg}[1]{\ensuremath{\mathsf{NormBudget{#1}}}\xspace}
\newcommand{\nb}[1]{\ensuremath{\mathsf{NB{#1}}}\xspace}
\newcommand{\normknap}{\ensuremath{\mathsf{NormBudgKnap}}\xspace}
\newcommand{\normmatch}{\ensuremath{\mathsf{NormBudgMatch}}\xspace}
\newcommand{\normmwis}{\ensuremath{\mathsf{NormBudgMWIS}}\xspace}
\newcommand{\normsched}{\ensuremath{\mathsf{NormBudgMaxGAP}}\xspace}
\newcommand{\normsap}{\ensuremath{\mathsf{NormBudgSAP}}\xspace}
\newcommand{\normfl}{\ensuremath{\mathsf{NormBudg}k\mathsf{FL}}\xspace}
\newcommand{\normkfl}{\normfl}
\newcommand{\normsepfl}{\ensuremath{\mathsf{NormBudgSepFL}}\xspace}
\newcommand{\normkmatch}{\ensuremath{\mathsf{NormBudg}k\mathsf{Match}}\xspace}
\newcommand{\knap}{\ensuremath{\mathsf{Knap}}\xspace}
\newcommand{\gap}{\ensuremath{\mathsf{GAP}}\xspace}
\newcommand{\maxgap}{\ensuremath{\mathsf{Max}\gap}\xspace}
\newcommand{\match}{\ensuremath{\mathsf{Match}}\xspace}
\newcommand{\kmatch}{\ensuremath{k\mathsf{Match}}\xspace}
\newcommand{\mwis}{\ensuremath{\mathsf{MWIS}}\xspace}
\newcommand{\sap}{\ensuremath{\mathsf{SAP}}\xspace}
\newcommand{\kfl}{\ensuremath{k\mathsf{FL}}\xspace}
\newcommand{\sepfl}{\ensuremath{\mathsf{SepFL}}\xspace}

\newcommand{\optset}{O^*}
\newcommand{\optind}{o}
\newcommand{\ovec}{\svec{\optset}}

\newcommand{\optval}{\ensuremath{\widetilde{\mathsf{opt}}}\xspace}
\newcommand{\rest}{\widetilde\rewd}
\newcommand{\trewd}{\widetilde\rewd}
\newcommand{\num}{\widetilde N}
\newcommand{\rnum}{\widetilde R}
\newcommand{\pnum}{\ensuremath{\widetilde{\mathsf{PN}}}\xspace}
\newcommand{\nopt}{\ensuremath{n_{\mathsf{opt}}}}

\newcommand{\nxt}{\ensuremath{\mathsf{next}}}
\newcommand{\prev}{\ensuremath{\mathsf{prev}}}
\newcommand{\dbrack}[1]{\ensuremath{\llbracket{#1}\rrbracket}}

\newcommand{\itemset}{\ensuremath{\mathsf{Bkt}}}
\newcommand{\pset}{Q}

\newcommand{\nbuck}{\ensuremath{{n_{\mathsf{bkt}}}}}
\newcommand{\rmax}{\ensuremath{\mathsf{r_{max}}}\xspace}
\newcommand{\thresh}{\tau}
\newcommand{\cpknap}[1][{B}]{\textnormal{KCP}\ensuremath{({#1})}}

\newcommand{\rk}{\ensuremath{r}}
\newcommand{\gr}{\ensuremath{\mathsf{gr}}\xspace}

\newcommand{\lvec}[1]{\svec[{\load}]{{#1}}}
\newcommand{\lvecopt}{{\vec{\load^*}}}
\newcommand{\lvo}{{\lvecopt}}
\renewcommand{\exp}{\ensuremath{\mathsf{exp}}}
\newcommand{\vt}{\ensuremath{\stdvec{t}}}

\newcommand{\tsg}{\ensuremath{\widetilde{\sigma}}}

\newcommand{\hsg}{\ensuremath{\widehat{\sg}}}
\newcommand{\jopt}{\optset}

\newcommand{\lpopt}[1][{\schrewdLP}]{\ensuremath{\mathsf{LP}^*_{{#1}}}\xspace}
\newcommand{\gnorm}{h}

\newcommand{\fload}[1]{\svec[{\mathsf{f}\load}]{{#1}}}
\newcommand{\onefrp}{\ensuremath{\mathsf{1FRP}}\xspace}
\newcommand{\constr}{budgeted\xspace}
\newcommand{\fvo}{{\vec{\mathsf{f}\load^*}}}
\newcommand{\copt}{C^*}
\newcommand{\fopt}{F^*}
\newcommand{\sepfllp}{\ensuremath{\mathsf{NBSFL}\text{-}\mathsf{LP}}\xspace}
\newcommand{\sepfllpp}{\sepfllp'}
\newcommand{\lpsepflopt}{\lpopt[\sepfllp]}
\renewcommand{\val}{\ensuremath{\mathsf{LPval}}\xspace}
\newcommand{\targ}{\ensuremath{\mathsf{Val}}\xspace}
\newcommand{\good}{\G}
\newcommand{\bad}{\B}
\newcommand{\fvec}{{\vec{\mathsf{f}\load}}}
\newcommand{\vopt}{V^*}
\newcommand{\guessind}{I}
\newcommand{\wt}{\mathsf{wt}}

\newcommand{\mksp}{\ensuremath{\mathsf{mksp}}\xspace}
\newcommand{\energy}{\ensuremath{\mathsf{energy}}\xspace}
\newcommand{\into}{\ensuremath{\mathsf{in}}\xspace}
\newcommand{\out}{\ensuremath{\mathsf{out}}\xspace}
\newcommand{\ssum}{\ensuremath{\mathsf{sum}}\xspace}
\newcommand{\smax}{\ensuremath{\mathsf{max}}\xspace}
\newcommand{\vo}{\stdvec{o}}

\newcommand{\mkp}{\ensuremath{\mathsf{MKP}}\xspace}
\newcommand{\tlvec}{{\widetilde{\load}}}

\newcommand{\sparse}[1][s]{{#1}parse\xspace}
\newcommand{\dense}[1][d]{{#1}ense\xspace}

\newcommand{\rem}{\ensuremath{\mathsf{rem}}\xspace}

\newcommand{\mclass}{\ensuremath{\Nc}}
\newcommand{\lmc}{\ensuremath{\D}}
\newcommand{\sml}{\ensuremath{\mathsf{sml}}\xspace}
\newcommand{\lrg}{\ensuremath{\mathsf{lrg}}\xspace}

\newcommand{\subnbknap}{\ensuremath{\mathsf{SubmodNBKnap}}\xspace}
\newcommand{\subnbsched}{\ensuremath{\mathsf{SubmodNBMaxGAP}}\xspace}
\newcommand{\noptset}{\ensuremath{\widetilde O}}
\newcommand{\sbktopt}{\ensuremath{\mathsf{SBkt}^*}}
\newcommand{\ellfin}{\ensuremath{\ell^{\mathsf{last}}}}
\newcommand{\rnopt}{\rewd^*}
\newcommand{\gensub}{g}
\newcommand{\genmulti}{G}
\newcommand{\brewd}{\ensuremath{\overline\rewd}}
\newcommand{\rseqset}{\Rc}
\newcommand{\POSg}{\POS_{>}}
\newcommand{\optpp}{O''}

\newcommand{\tlvmin}{\tlvec_{\mathsf{min}}}
\newcommand{\tlvmax}{\tlvec_{\mathsf{max}}}
\newcommand{\tlvavg}{\tlvec_{\mathsf{avg}}}
\newcommand{\mcset}{I}
\newcommand{\mcnum}{|\mcset|}
\newcommand{\cfg}{\zeta}
\newcommand{\cfgset}{\J}
\newcommand{\Totrewd}{\ensuremath{\mathsf{TotRewd}}\xspace}
\newcommand{\preload}{\Ld}

\newcommand{\work}{\ensuremath{\mathsf{work}}}
\newcommand{\wvec}[1]{\svec[{\work}]{{#1}}} %% work vector
\newcommand{\wvecopt}{{\vec{\work^*}}} %% optimal work vector
\newcommand{\twvec}{{\widetilde{\work}}}
\newcommand{\fast}{\ensuremath{\mathsf{fast}}} %% class of fast machines
\newcommand{\giant}{\ensuremath{\mathsf{giant}}}%%% Giant jobs
\newtheorem*{claim*}{Claim}
\newcommand{\wkestim}{\ensuremath{\widetilde\work}}
\newcommand{\wkest}{\wkestim}
\newcommand{\wksmooth}{\ensuremath{W^{\mathsf{smth}}}}
\newcommand{\anbuck}{\ensuremath{{n_{\mathsf{Abkt}}}}}
\newcommand{\abkt}{\ensuremath{\mathsf{ABkt}}}
\newcommand{\Mc}{M}
\newcommand{\spd}{s}
\newcommand{\ngrp}{K}
\newcommand{\est}{\ensuremath{\mathsf{est}}}
\newcommand{\seqset}{\Rc}

\newcommand{\sumnorm}{\ensuremath{\mathsf{sum}}}

\newcommand{\dptab}{D}

\title{Approximation Algorithms for Norm-Budgeted Packing Problems}

\author{
    David Alem\'{a}n Espinosa\thanks{{\tt\{dalemanespinosa,\,cswamy\}@uwaterloo.ca}.
    Dept. of Combinatorics and Optimization, Univ. Waterloo, Waterloo, ON N2L 3G1, Canada.
    Supported in part by C. Swamy's NSERC Discovery grant 2024-04532.}
\and
    Sharat Ibrahimpur\thanks{{\tt sibrahimpur@ethz.ch}.
    ETH Zurich, Zurich, Switzerland.}
\and
\addtocounter{footnote}{-2}
    Chaitanya Swamy\footnotemark
}
\date{}

\begin{document}

\maketitle

\def\thepage{}
\thispagestyle{empty}

\begin{abstract}
In recent years, much attention has been devoted to the study of 
optimization problems under norm-based objectives coming from the rich
class of monotone, symmetric norms (and their generalizations). This work has however
almost exclusively focused on covering problems, wherein one seeks to minimize the norm of
the cost vector induced by a solution.

We introduce and study the class of {\em norm-budgeted packing problems}, which are
packing problems where the resource constraints underlying the packing problem are modeled
via a {\em norm budget constraint} involving a {\em monotone, symmetric norm}.
%We introduce and study a broad class of packing problems in which the feasible region is
%constrained by a monotone, symmetric norm $f:\R^m\to\Rp$. 
Formally, we have some elements with associated rewards and sizes, a downwards-closed
collection $\sols$ of feasible solutions, and a budget $B$.
%Given a budget $B$ and a ground set of elements with associated rewards and sizes.
%Feasible solutions form a downwards-closed family of sets, and 
Each solution induces a size vector, and 
the goal is to maximize the total reward subject to the norm-budget constraint 
$f(\text{size vector}) \le B$.
As with minimum-norm covering problems, the versatility of monotone, symmetric norms
implies that a variety of classical packing problems, involving sum- or max- budget
constraints %sum-$\ell_1$-norm) of max-based 
can be captured under the umbrella of norm-budgeted packing problems.
%Taking $f$ to be the $\ell_1$ norm, or the $\ell_\infty$ norm, yields various standard
%packing problems.
%This formulation allows us to extend several \textcolor{blue}{well
%  studied/classical/standard} packing models under a common norm-based framework. 
Moreover, the closure properties of monotone, symmetric norms, also enable one to encode
multiple different norm-budget constraints via a single monotone, symmetric norm.

We consider the norm-budgeted versions of a variety of canonical packing problems,
including knapsack, matching, maximum-weight independent set in a matroid (and more
generally $k$-set system), maximum generalized assignment problem (\maxgap), and
$k$-facility location, and develop a framework that allows us to obtain 
{\em constant-factor approximation guarantees} for these problems, 
and {\em PTASes for knapsack, and \maxgap on identical and related machines}.

In order to do so, one fundamental and significant impediment that we need to overcome is
that the techniques that have been developed in the study of minimum-norm covering
optimization problems are not strong enough to deal with a {\em hard} norm-budget
constraint. All of this machinery leads to an {\em inherent} violation of the norm, and
moreover, in stark contrast with $\ell_1$ norms, 
%violating the norm by even a $(1+\ve)$-factor can lead to a 
there can be a huge gap between bicriteria and unicriteria solutions (even when every
single item consumes only a small portion of the budget).
We address this challenge by developing novel tools to handle the norm-budget constraint.  

We also develop constant-factor approximation algorithms for the {\em submodular} versions
of norm-budgeted knapsack and norm-budgeted \maxgap on related machines, wherein the
reward function is now specified by a monotone, submodular function.

\end{abstract}

\newpage \pagenumbering{arabic} \normalsize

\section{Introduction} \label{intro} 
Packing and covering problems are two broad and fundamental classes of
combinatorial-optimization problems that have been extensively investigated in the
Operations Research and theoretical Computer Science literature, and find applications in 
a variety of domains such as logistics, scheduling, clustering, network design. 
Covering problems 
%(e.g., set-cover, facility-location
tend to capture settings where there are some entities that need to be ``served'' (e.g., 
clients in a logistics problem, jobs in a scheduling problem), and one
needs to determine a minimum-cost way of providing service.
%must be provided service while
Packing problems on the other hand model settings where resource constraints preclude one
from serving all entities, and the goal is to therefore select a most ``profitable'' set
of entities to serve. %and we would like to do so while maximizing 
Well-known examples of covering problems include set cover, facility location, load
balancing, and some prominent examples of packing problems are bin packing, knapsack,
maximum coverage problem. %maximum multicommodity flow problems. Indeed,  
\begin{comment}
indeed, texts in approximation algorithms (see~\cite{Vazirani},
\cite{WilliamsonS11}) often devote a significant portion to the understanding of such
canonical covering and packing problems.  
%(e.g. set cover, facility location, knapsack, bin packing) and there
%are also books written on some classes of problems (see, e.g., \cite{KellererPP04}).
\end{comment}

Traditionally, covering problems were studied under the
%the commonly considered objectives work in both problem domains on 
min-sum and min-max objectives (and in some cases with $\ell_p$-norm objectives), but 
in recent years, a great deal of attention has been devoted to 
{\em minimum-norm
  optimization}~\cite{ChakrabartyS19a},
%ChakrabartyS19b,IbrahimpurS21,IbrahimpurS22,DengLR23,KesselheimMS24,HeroldKS25,ChenLRZ25,HeroldKS26},  
wherein one considers a much-richer class of objectives defined by
arbitrary {\em monotone, symmetric norms} (and their
generalizations~\cite{KesselheimMS24}). A norm $f:\R^n\mapsto\Rp$ is symmetric if it is
invariant under permutation of coordinates; a monotone norm satisfies $f(x)\geq f(y)$
whenever $x\geq y\geq 0$. In a minimum-norm optimization problem, %the objective to be
we are given a monotone, symmetric norm $f$, the goal is to minimize the $f$-norm of the
cost-vector induced by a solution (e.g., machine-load vector in load balancing, or
client-assignment cost vector in facility location and clustering).  
%The study of minimum-norm optimization problems was initiated by
Chakrabarty and Swamy~\cite{ChakrabartyS19a} initiated this research direction,
%as a far-reaching generalization of some earlier work on $k$-clustering that considered
%$\topl$ norms and ordered norms~\cite{ByrkaSS18,ChakrabartyS18},% 
%and devised constant-factor approximation algorithms for the minimum-norm generalizations
%of load balancing and $k$-clustering. %minimum-norm $k$-clustering problems,
%a fundamental class of monotone,
%symmetric norms, namely $\topl$ norms and ordered norms.
%As noted in~\cite{ChakrabartyS19a}, 
%providing two distinct reasons for why this an appealing topic of study. 
motivating it %topic of study 
from two distinct perspectives.
%mentioning two distinct reasons as motivation
%the modeling power afforded by monotone, symmetric norms, and the broad
%applicability of techniques 
First, monotone, symmetric norms constitute a very versatile class of objectives that
afford one a great deal of modeling power. %Not only do they encompass 
They include $\ell_p$ norms,
the natural class of $\topl$ norms%
\footnote{The $\topl$-norm of a vector $v\geq 0$ is defined as the sum of its $\ell$
  largest coordinates. An ordered norm is a nonnegative linear combination of $\topl$
  norms.} 
\addtocounter{footnote}{-1}
(which yield an alternate means of 
interpolating between the min-max and min-sum objectives), and ordered norms;\footnotemark\,
moreover, their closure properties %of monotone, symmetric norms 
imply that one can capture {\em multiple different norm constraints} by
suitably defining a {\em single} monotone, symmetric norm.%
\footnote{This feature is strikingly exploited by~\cite{IbrahimpurS21}.}
Second, min-sum and min-max problems are often handled via different approaches, and
the development of techniques for minimum-norm optimization yields a unified way of
approaching these, and various other, objectives.
Consequently, there has been much subsequent work %Following~\cite{ChakrabartyS19a}, 
investigating minimum-norm optimization for other
problems~\cite{DengLR23,AbbasiBBCGKMSS23,ChenLRZ25,HeroldKS25,HeroldKS26}, and in
stochastic~\cite{IbrahimpurS20,IbrahimpurS21} and online
settings~\cite{KesselheimMS23,KesselheimMS24,KesselheimMPS25}, some of which has 
also considered objectives that are more general than monotone, symmetric norms.

However, to our knowledge, {\em essentially all} prior work on minimum-norm optimization
problems has focused on covering problems, where the norm is in the objective,
%In particular, while packing and covering problems are often seen in classical
%combinatorial optimization 
and there is no prior work that considers packing problems with resource constraints that
are modeled by monotone, symmetric norms.%
\footnote{An exception is Kesselheim et al.~\cite{KesselheimMPS25}, who do formulate a
packing problem with a norm budget constraint as a means of solving an
%primary problem, which is an 
online covering problem with a norm-based objective, which is their primary problem of
concern. They violate the norm budget constraint, and as we discuss under ``Related work,'' 
%when solving the packing problem. This violation is unavoidable in the
%online setting, 
%for the packing problem, which is OK from the perspective of the covering problem.
%As we discuss later, 
our work can be seen as complementary to their work. 
%
%Somewhat peripherally 
Also somewhat related is Neogi et al.~\cite{NeogiPS24}, who consider a
mechanism-design setting with multiple $\topl$-budget constraints; we discuss this also
under ``Related work.''} 

\begin{comment}
Very recently~\cite{KesselheimMPS25} consider, as a means of solving an online covering
problem with a norm-based objective, the online packing problem obtained by moving the
norm objective to a norm budget constraint, where the objective is, roughly speaking to,
maximize the number of covering constraints that are satisfied. Their results violate the
norm budget constraint, which is unavoidable in the online setting. Our results are
orthogonal to, and can be seen as complementary to, their results: we consider the offline
setting, and are able to obtain stronger constant-factor approximation guarantees and 
{\em without violating the norm budget constraint} in the offline setting.
\end{comment} 

\subsection{Our contributions} \label{contrib} 
We initiate the study of {\em norm-budgeted packing problems},  
which are packing problems with the common distinguishing feature 
%packing constraint is modeled by a %monotone, symmetric 
%We consider a variety of packing problems with the common distinguishing feature 
that the resource constraints underlying the packing problem are modeled by a
%monotone, symmetric norm. 
{\em norm budget constraint}. %that restricts the space of allowed solutions.
We consider problems that fall into the following setup. As is standard in packing
problems, there is an underlying ground set of $n$ elements, where each element $e$
has a reward $\rewd_e\geq 0$ and size $\sz_e\geq 0$, and a non-empty collection 
$\sols\sse 2^{[n]}$ of candidate solutions that is {\em downwards-closed}, i.e., closed
under taking subsets. The tuple $([n],\sols)$ is often called an 
{\em independence system}. %~\cite{KorteV}. 
\begin{comment}
The space of candidate solutions will sometimes be a collection of subsets of $[n]$, but
could also involve other combinatorial objects; for instance, in maximum-throughput
scheduling, %items correspond to jobs, and a solution involves selecting a set of jobs, and
%specifying a job-to-machine assignment for the selected jobs.
\end{comment}
%with the common feature that one of the
%constraints defining the space of feasible solutions is given by a 
%{\em norm budget constraint}  under 
We are also given a {\em monotone, symmetric norm} $f$, and a budget $B$. 
Each solution $T\in\sols$ induces a {\em size-vector}, denoted $\svec{T}$. 
%If a solution corresponds to a set $T\sse[n]$,
%For a solution $T\in\sols$ to be feasible, we require that the $f$-norm of the size-vector
%induced by $T$ should be at most the given budget $B$.
%More precisely and formally, given a set $T\sse[n]$ of items, the {\em size-vector}
%then it induces the size-vector 
%associated with $T$ is defined as the vector 
Sometimes the size-vector $\svec{T}$ will simply be the 
{\em size-weighted characteristic vector of $T$}, which is the vector in $\R^n$ with
coordinates $\sz_e$  if $e\in T$, and $0$ otherwise. 
In other settings, this size-vector is formed by aggregating some of the $\sz_e$ values
for elements $e\in T$. 
For instance, in a scheduling problem, %(see \normsched below), 
where we have jobs and machines and have to select a suitable subset of jobs to assign to
machines, a natural choice for the size-vector is the vector of machine-loads. 
The goal is to maximize $\rewd(T):=\sum_{e\in T}\rewd_e$
subject to the constraint $T\in\sols$, and the norm budget constraint
$f\bigl(\svec{T}\bigr)\leq B$. 
%and so we call this problem, the 
%{\em norm-budgeted maximum-weight independent set} (\normmis) problem.

We use the term {\em norm-budgeted packing problem}, to describe a generic problem in the
above setup. %Unless otherwise stated, 
We assume that the norm $f$ is specified via a
feasibility oracle that given a vector $v$ and scalar $\ld$ determines if $f(v)\leq\ld$.
We state approximation guarantees as values that are at least $1$: 
an $\al$-approximation, for $\al\geq 1$, denotes that we obtain 
%an $\al$-approximation algorithm for a norm-budgeted packing problem, where
%$\al\geq 1$, returns a feasible solution of 
objective value at least (optimum)/$\al$.
%
\begin{comment}
We consider the norm-budgeted versions of a variety of fundamental packing problems,  
%including knapsack, matching, maximum-weight independent set in matroids (and more
%generally $k$-set systems), load balancing, and $k$-facility location, 
and develop constant-factor approximation algorithms or approximation schemes for these
problems.  
\end{comment}
Here are two canonical problems captured by this setup. 
%Section~\ref{probdefs}, 
%describes the norm-budgeted versions of the other canonical packing problems that we
%consider.

\begin{enumerate}[label=$\bullet$, leftmargin=*]
\item {\bf Norm-budgeted knapsack (\normknap), Section~\ref{knapsack}.} 
Here the ground set consists of
items that we seek to pack in a knapsack subject to the norm budget constraint. 
%We use the terms elements and items interchangeably for this problem.
%We are given $n$ items, with
%non-negative rewards $\{\rewd_i\}_{i\in[n]}$ and sizes $\{\sz_i\}_{i\in[n]}$. 
So we have $\sols=2^{[n]}$, and 
%have a {\em monotone, symmetric norm} $f:\R^n\mapsto\R_+$ and a knapsack budget $B$. The goal is
%to find a maximum-reward set of items that fits in the knapsack, where fitting in the
%knapsack is determined by requiring that the norm applied to the size-vector obtained from
%these items is at most the budget $B$. 
%More precisely and formally, given a set $T\sse[n]$ of items, 
the size-vector $\svec{T}$ corresponding to an item-set $T\sse[n]$ is simply the
size-weighted characteristic vector $\svec{T}\in\R^n$;  
%defined as the vector $\svec{T}\in\R^n$ with coordinates 
that is, $\svec{T}_i$ is $\sz_i$ if $i\in T$, and is $0$ otherwise. 
(So the monotone, symmetric norm $f$ is over $\R^n$.)
%The goal therefore is to maximize $\rewd(T):=\sum_{i\in T}\rewd_i$ 
%subject to the norm budget constraint $f\bigl(\svec{T}\bigr)\leq B$. 
%
Observe that taking $f$ to be the $\ell_1$ norm, we obtain the standard knapsack problem, so
\normknap is clearly \nphard. 

We develop a polynomial time approximation scheme (PTAS) for
this problem (Theorem~\ref{normknapthm}.)

\item {\bf Norm-budgeted maximum generalized-assignment problem (\normsched).}  
Here, we are given a set $J$ of jobs, and a set of $m$ machines. Processing a job $j$ on
machine $i$ incurs $p_{ij}$ time, and yields reward $\rewd_{ij}$. 
%In the context of this problem, we always use $i$ to index machines, and $j$ to index
%jobs. 
%Jobs cannot be preempted.
We have a monotone, symmetric norm $f:\R^m\mapsto\Rp$ and a budget $B$. A %candidate
solution %involves selecting a subset $S$ of jobs, and 
specifies an assignment $\sg:S\mapsto[m]$ of some subset $S\sse J$ of jobs to
machines, which induces a {\em load-vector} in $\R^m$, where the load on machine $i$ is
$\sum_{j\in S:\sg(j)=i}p_{ij}$, and earns reward $\sum_{j\in S}\rewd_{\sg(j)j}$. 
The goal is to find a maximum-reward solution 
satisfying the constraint $f(\text{load vector})\leq B$.
\begin{comment}
This can also be viewed as a generalization of \normknap involving multiple knapsacks, so
we sometimes refer to this as the {\em norm-budgeted multi-knapsack} problem.  
\end{comment}

To cast this in our general setup, we take the ground set $E$ to be the edges of the 
complete bipartite graph with node-set $J\cup[m]$, and map the $p_{ij}$s and $\rewd_{ij}$s
to the sizes and rewards respectively of edges in $E$. 
We take $\sols=\{T\sse E: |T\cap\dt(j)|\leq 1\text{ for all $j\in J$}\}$, and 
%where $\dt(j)$ is the set of edges incident to node $j$. 
the size-vector $\svec{T}\in\R^m$ associated with $T\in\sols$ is given by
$\bigl(\sum_{e\in T\cap\dt(i)}p_e\bigr)_{i\in[m]}$. 
%(Recall that $\sz_{ij}=p_{ij}$ for every edge $ij\in E$.)

The above setup corresponds to scheduling on {\em unrelated machines}. The setting of 
{\em identical machines} is the special case where we have $p_{ij}=p_j$ and
$\rewd_{ij}=\rewd_j$ for every machine $i\in[m]$, job $j\in J$. An intermediate setting is
{\em related machines}, wherein each machine $i\in[m]$ has a speed $s_i>0$, and we have
$p_{ij}=\frac{p_j}{s_i}$, $\rewd_{ij}=\rewd_j$ for every $i\in[m]$, $j\in J$.

The special case of \normsched where $f$ is the $\ell_\infty$ norm---i.e., find a
maximum-reward set of jobs that can be scheduled within a certain makespan---was 
considered by Fleischer et al.~\cite{FleischerGMS11}, who called this problem maximum-GAP. 
They devised a $\frac{e}{e-1}$-approximation for this problem, and this
problem is known to be \apx-hard~\cite{ChekuriK05}.
The further special cases involving identical machines and related machines 
%have been studied 
correspond respectively to the 
uniform {\em multiple knapsack problem} (\mkp) and {\em non-uniform \mkp},%
\footnote{In non-uniform \mkp, the capacities $\{u_i\}_{i\in[m]}$ of the knapsacks may be
different; by viewing $u_i$ as the speed of a machine $i$ and setting the norm budget to
$1$ (with the $\ell_\infty$ norm), we can cast this as \normsched on related machines.}
both of which are strongly \nphard~\cite{ChekuriK05} and %for which there is a
admit a PTAS~\cite{Kellerer99,ChekuriK05,Jansen09}. 

%We devise algorithms achieving approximation guarantees of $(5+\ve)$ for
%\normsched (Theorem~\ref{normschedthm}), and $(2+\ve)$ for \normsched on identical
%machines (Theorem~\ref{ident-biptas}).  

We obtain a $(4.582+\ve)$-approximation for \normsched (Theorem~\ref{normschedthm}). For
\normsched on identical and related machines, we obtain a PTAS. This yields a
tight complexity result for the latter two problems, as these problems are strongly
\nphard. Moreover, our PTAS substantially generalizes the PTAS for non-uniform
\mkp (which is the special case of the related-machines problem where $f$ is the
$\ell_{\infty}$-norm).  
%a $(2+\ve)$-approximation
%(Theorem~\ref{ident-biptas}), assuming a feasibility oracle for the norm, and devise a
%PTAS assuming stronger oracle access (Theorem~\ref{ident-ptasthm}).
\end{enumerate}

We consider the norm-budgeted versions of various other fundamental packing problems,
including matching, maximum-weight independent set, $k$-facility location;
we define these problems in Section~\ref{probdefs}. 
%including knapsack, matching, maximum-weight independent set in matroids (and more
%generally $k$-set systems), load balancing, and $k$-facility location, 
We develop constant-factor approximation algorithms %or approximation schemes 
for all these problems. Table~\ref{restable} summarizes the approximation factors we
obtain for the various norm-budgeted packing problems we consider.
%describes the norm-budgeted versions of the other canonical packing problems that we
%consider.

\begin{table}[t!]
\small
\begin{tabular}{l|l|l} \hline
\makecell*[l]{\bf Problem} & \makecell*[l]{\bf Approximation factor} & 
\makecell*[l]{\bf Comments} \\ \hline
\makecell*[l]{\normknap} & \makecell*[l]{PTAS (Theorem~\ref{normknapthm})} &
\multirow{3}{*}[-10pt]{\makecell*[lc]{Guarantees hold even when the budget constraint \\ is given 
by a monotone, symmetric function \\ (which need not be convex or homogeneous); \\ see
Remark~\ref{genfun}}} \\ \cline{1-2}
\makecell*[l]{\normmwis on \\ $k$-set system} & 
%\multirow{2}{*}[-2pt]
\makecell*[lc]{$(k+1)(1+\ve)$ (Theorem~\ref{normmwisthm})} \\ \cline{1-2}
\makecell*[l]{\normmatch and \\ norm-budgeted $b$-matching} & 
%\multirow{2}{*}[-2pt]
\makecell*[lc]{$3+\ve$ (Theorem~\ref{normmatchthm})} \\ \hline
\makecell*[l]{\normsched and \\ \normfl} & \makecell*[l]{$4.582+\ve$ (Theorem~\ref{normschedthm})} \\ \hline
\makecell*[l]{\normsched on \\ related machines} &
%\multirow{2}{*}[-2pt]
%\makecell*[lc]{$2+\ve$ (Theorem~\ref{ident-biptas}) \\ PTAS
%(Theorem~\ref{ident-ptasthm})} & 
\makecell*[lc]{PTAS (Theorem~\ref{ident-ptasthm}: identical machines;
  \\ Theorem~\ref{rel-ptasthm}: related machines)} & 
%\makecell*[l]{PTAS requires stronger oracle access to the norm} 
\\ \hline
\makecell*[l]{\normsap and \\ \normsepfl} & 
\makecell*[l]{$\gm(1+\ve)\bigl(\frac{e}{e-1}\cdot\rho+3\bigr)$ and \\
$O(\beta)$ (Theorems~\ref{normsepflthm} and~\ref{normsapthm})} &
\makecell*[l]{Given: $(\rho,\gm)$-approximation for \constr \mwis \\
on $\M_i$, or $\beta$-approximation for \mwis on $\M_i$, \\ for every machine/facility $i$} \\ \hline
%\makecell*[l]{\normfl} & \makecell*[l]{$O(1)$ (Theorem~\ref{normflthm})} \\ \hline
%\makecell*[l]{\normsepfl} & \makecell*[l]{$O(\beta)$ (Theorem~\ref{normsepflthm})} &
%\makecell*[l]{Given $\beta$-approximation for \mwis problem \\ on $\M_i$ for every facility $i$} \\ \hline
\end{tabular}
\captionsetup{font=small, labelfont=small}
\caption{A summary of (most of) the problems considered and the approximation factors obtained. 
%The PTAS for \normsched on identical machines is the only result that requires stronger
%oracle access (than a feasibility oracle) to the norm.
(We have not listed here the results we obtain for submodular norm-budgeted packing problems in
Section~\ref{subnb}.) 
All our results require only a feasibility oracle for the norm.
%\begin{comment}
%The guarantees for \normknap, \normmwis, and \normmatch actually apply even 
%%if we consider a 
%the budget constraint is given by a monotone, symmetric function (which need not be convex
%or homogeneous); see Remark~\ref{genfun}. 
%\end{comment}
\label{restable}}
\end{table}

\vspace*{-1ex}
\paragraph{Modeling power of monotone-symmetric-norm packing constraints.}
Besides the fact that monotone, symmetric norms include various
prominent norms of interest, %($\ell_p$ norms, $\topl$ norms etc.), 
a less-evident source of modeling power of monotone, symmetric norms, which was alluded to 
earlier, stems from the fact that a single
monotone-symmetric-norm packing constraint can be used to aggregate multiple norm-budget
constraints.
 %\begin{comment}

%For instance, %in a scheduling problem, where jobs have 
As an illustrative example, consider a setting involving selecting jobs for processing and
assigning them to servers (as in \normsched).  
A natural constraint that one would like to impose is a makespan bound
$\bigl\|\vec{\load}\bigr\|_\infty\leq B_{\mksp}$, where $\vec{\load}$ is the load vector
from the assignment, ensuring that no server is overloaded. %i.e., impose a makespan bound   
We may seek more fine-grained load balancing, where we also look to avoid (larger)
congestion hot-spots by imposing a bound %say $B_{\ell}$,
on the total load handled by any set of $\ell$ machines; this yields a norm constraint  
$\topl(\vec{\load})\leq B_{\topl}$. Additionally, energy considerations often arise and 
may dictate a bound on the total energy consumed; the energy consumed
%Energy expended by a server is often modeled as being proportional to the square of the
%load assigned to it, so the total energy consumed corresponds to the
is typically modeled by the $\ell_2^2$ objective of the load vector, so this leads to the 
norm constraint $\bigl\|\vec{\load}\bigr\|_2\leq B_{\energy}$.
%Capitalizing on the versatility of %budget constraint coming from a 
%monotone, symmetric norms, 
Now observe that these multiple constraints can be easily captured by defining the
aggregate monotone, symmetric norm
$f(x):=\max\left\{\frac{\|x\|_\infty}{B_{\mksp}},\frac{\|x\|_2}{B_{\energy}},\frac{\topl(x)}{B_{\topl}}\right\}$, 
and imposing the norm constraint $f(\vec{\load})\leq 1$.
%setting an $f$-norm budget of $1$. 
%illustrating the versatility of the norm-budgeted packing setup.
%here, $B_{\mksp}$, $B_{\energy}$ and $B_{\topl}$ are the desired bounds on the makespan,
%energy consumed, and $\topl$ load respectively.

%But in addition, we may want the energy consumed to be bounded ($\ell_2$-norm bound); or
%or achieved more fine-grained load balancing to avoid larger 
%where we bound the average load handled by any set of $\ell$ machines

It is not hard to imagine packing problems that feature such multiple resource
constraints, and {\em the framework of norm-budgeted packing problems gives a convenient
way of incorporating such constraints} when these constraints arise from monotone,
symmetric norms. In fact, more generally, one can incorporate budget constraints arising
from increasing, symmetric, convex functions.% 
\footnote{Given an increasing symmetric, convex function $h:\Rp^m\mapsto\Rp$ with
$h(0)=0$, and a budget constraint $h(x)\leq B$, 
consider the function $f_h(x):=\inf\bigl\{\al>0: h(|x|/\al)\leq B\}$. It is not to
hard to see that $f_h$ is a monotone, symmetric norm, and the constraint $h(x)\leq B$ is
equivalent to the norm-budget constraint $f_h(x)\leq 1$.}
Note that we crucially rely here on the {\em generality} (specifically, the closure
properties) of monotone, symmetric norms (which allows us to work with the maximum of the
scaled individual resource constraints%
\footnote{Observe that even if the individual resource constraints come from a
special class of norms, say $\ell_p$ norms, the resulting aggregated norm will 
not in general belong to the same class.}) 
and the versatility afforded by the setting of
norm-budgeted packing problems where we allow a budget constraint specified by a norm
belonging to this rich class of norms.
%an arbitrary monotone, symmetric norm. 
%Thus, allowing for general monotone, symmetric norms, yields

\medskip
Covering and packing problems can be viewed as ``flipped'' versions of each other; in the
covering counterpart of a norm-budgeted packing problem, we need to serve {\em all}
entities and minimize the norm of the size-vector induced by a solution. 
%(Depending on the underlying packing problem, the resulting minimum-norm covering problem
%may become trivial.) 
%A minimum-norm covering problem can be viewed as the ``flipped'' version of a
%norm-budgeted packing problem, where we must serve
%We feel that both of these views are
As discussed under ``Technical challenges and overview'' below,
the techniques developed for minimum-norm
(covering) optimization problems 
%(which, recall, has solely considered covering problems)
are not strong enough to yield ``true'' approximation guarantees for norm-budgeted packing 
problems, where we do not violate the norm budget. 
However, interestingly, and somewhat surprisingly, 
for certain norm-budgeted packing problems, we obtain {\em stronger guarantees} than what
is known for the covering counterpart of the problem. 
For instance, for {\em norm-budgeted matching}, %(see Section~\ref{probdefs}), 
wherein the norm-budget constraint applies to the size-weighted characteristic vector,
we obtain a $(3+\ve)$-approximation (Theorem~\ref{normmatchthm}), 
but for the covering version, where all nodes have to be matched, 
%i.e., one seeks a perfect matching, 
even on bipartite graphs, nothing better than an $O(\log n)$-approximation is known and the
natural LP-relaxation has large integrality gap~\cite{ChenLRZ25}. 
A similar situation arises %A similar , 
with {\em norm-budgeted $k$-facility location} (\normfl), %see Section~\ref{probdefs}),
wherein we need to open $k$ facilities and select clients to be assigned to facilities,
%assigned to facilities 
and there is a budget constraint on the norm of the resulting facility-cost vector. We
obtain an $O(1)$-approximation for \normfl (Theorem~\ref{normflthm}). In stark contrast,
for the corresponding covering problem, where all clients have to be assigned and we seek to
minimize the norm of the facility-cost vector,%
\footnote{This is an instance of the {\em generalized $(f^{\into},f^{\out})$-clustering}
problem introduced by~\cite{HeroldKS25}, where the inner norm $f^{\into}$ used to
aggregate the assignment costs of clients assigned to a facility and calculate the
cost of the facility is $\ell_1$, and the outer norm $f^{\out}$ applied to the
facility-cost vector is $f$.} 
no $O(1)$-approximation algorithm is known even for the special case where $f$ is the
$\ell_\infty$ norm, which is known as the {\em minimum-load $k$-facility location}
problem~\cite{AhmadianBFJSS18}.  

\medskip
Our work %addresses a gap in the literature in terms of bringing packing problems
opens up the area of norm-budgeted packing problems as a promising avenue for further
research. Our array of results shows that, notwithstanding the difficulties posed by a
hard norm budget constraint, one can develop strong approximation
guarantees for %norm-budgeted packing 
these problems. Various interesting research directions arise from our work. 
We mention a few of these below.
%First, %it would be interesting to consider 
%one would like 
One immediate direction is to consider the norm-budgeted packing versions of 
other combinatorial-optimization problems, including 
the generalized load balancing~\cite{DengLR23} and generalizing
clustering~\cite{HeroldKS25} problems that were proposed in the covering setting. 
%These problems use multiple norms, and two levels of aggregation of costs
In the covering setting, no $O(1)$-approximation is possible or known: generalized load
balancing is set-cover hard, and generalized clustering contains minimum-load $k$-facility
location as a special case, which has proved to be a bottleneck. However, as noted
earlier, these difficulties for the covering problem need not necessarily translate to the
norm-budgeted packing problem. It would be quite interesting if one could obtain
$O(1)$-approximation guarantees for the norm-budgeted packing counterparts of generalized
load balancing and generalized clustering.

Second, it would be very interesting to consider the {\em submodular} generalizations of
these norm-budgeted packing problems, wherein there is a monotone, submodular function that
specifies the reward of a set of items. This is a vast generalization of the current
setup, and it would be quite noteworthy if one could %show that one 
obtain guarantees for these submodular-reward problems that qualitatively match
the guarantees obtained for their regular (i.e., additive-reward) counterparts. 
%(especially if this can be obtained in some kind of black-box fashion).
In Section~\ref{subnb}, we obtain some partial results in this direction: 
{\em we obtain constant-factor approximation guarantees for submodular norm-budgeted
knapsack} 
(via a very different technique from what is used for \normknap) 
{\em and submodular norm-budgeted \maxgap on related machines}. We leave further
exploration of this class of problems for future work.

Finally, another exciting direction is to consider norm-budgeted packing problems in more
general environments, such as stochastic settings (as was done in~\cite{IbrahimpurS20} for
minimum-norm optimization), and online settings (analogous to the work
of~\cite{KesselheimMS23,KesselheimMS24,KesselheimMPS25}).

\subsubsection*{Technical challenges and overview}
One of the main challenges that arises when working with the generality of an arbitrary
monotone, symmetric norm $f$ is that the norm $f$ may couple the coordinates of the size
vector in complex ways. 
This makes it difficult to infer bounds on the norm $f$ from bounds on its individual
coordinates. 
(In contrast, with $\ell_\infty$ there is no coupling; with $\ell_1$,
the coupling manifests as a simple sum over all coordinates; with $\ell_p$ norms, one can
often move to the $\ell_p^p$ objective, which is separable over the coordinates; see 
``$\ell_p$ norms'' in Section~\ref{msn-prelim}.)

Two main insights have emerged from the work on minimum-norm covering optimization
%problems has developed various techniques 
to overcome this difficulty: 
%In particular, yielded two main insights to overcome this difficulty: 
%that have emerged from this work are that: 
(1) it suffices to reason about $\topl$ norms, for
logarithmically many $\ell$-values (say all powers of $(1+\ve)$); and (2) For a $\topl$ norm,
one can move to a suitable {\em coordinate-wise separable} proxy function by guessing certain
coordinates of the vector to which the norm is applied. Together, these two insights
enable one to treat the minimum-norm covering problem as a collection of logarithmically
many min-sum constraints, and this is the perspective that has largely led to the various
positive results for minimum-norm covering
problems~\cite{ChakrabartyS19a,IbrahimpurS21,DengLR23}. 

When looking to apply these same insights to norm-budgeted packing problems, where the
norm appears in the constraint, we run into an immediate stumbling block.
%To our knowledge, {\em all} prior work on minimum-norm optimization problems has focused
%on settings where the norm is in the objective, and various techniques have been devised
%for approximately minimizing the norm. 
%In our case, we have a norm-based budget constraint, and one of
%the main challenges lies in ensuring that this hard budget constraint is respected, as
%opposed to violated by a $(1+\ve)$-factor. 
%However, 
%Except for the simplest setting of $\topl$ norms
Essentially, {\em all} of this machinery %and 
%the tools developed for minimum-norm covering problems 
leads to an {\em inherent} $(1+\ve)$-factor violation in the norm. %For instance, %---e.g.,
In particular, the only known way of reasoning about a general monotone, symmetric norm is via
controlling the $\topl$ norms, but (D1) controlling $\topl$-norms only for $\ell$-values
that are powers of $(1+\ve)$ is too coarse to yield a tight bound on the norm. 
Furthermore, %the move to separable proxy costs 
%for a $\topl$ norm is based on estimating the $\ell$-th largest coordinate of the vector
%to which the norm is applied (i.e., the size-vector $\svec{T}$ in our case). 
(D2) for problems like \normsched, where coordinates of the size vector $\svec{T}$ are
obtained by aggregating the $\sz_e$'s for elements $e\in T$, we can only guess 
%this $\ell$-th largest entry 
the (logarithmically-many) coordinates of $\svec{T}$ within a $(1+\ve)$-factor, and this
is again too coarse to yield a tight bound on the norm (as seen from Lemma~\ref{toplestim}
(a)).  
(We remark that for the very special case when $f$ is a $\topl$-norm,
difficulty (D1) does not arise, and one can overcome difficulty (D2) in a simple way; 
%to obtain approximation guarantees; 
see Section~\ref{refine}. 
%and one can obtain 
For example, when $\svec{T}$ is the size-weighted characteristic vector, we can guess 
%$\svec{T}^{\down}_\ell$ exactly.)
the $\ell$-th largest entry of $\svec{T}$ exactly.) 
%and thereby reduce $\topl$-budget constraint to a budget constraint on the sum.)

The upshot is that for a general monotone, symmetric norm, the techniques
developed for minimum-norm covering problems 
%do not seem strong enough to yield guarantees where we do not violate the norm budget.
%These techniques can be utilized to 
only seem suitable for obtaining {\em bicriteria} solutions, where we violate the norm
budget by (at least) a $(1+\ve)$-factor.
(In Appendix~\ref{append-budgviol}, we sketch how such bicriteria solutions can be
obtained.)  
Moreover, unlike the setting with $\ell_1$ norms, the gap between bicriteria and
unicriteria solutions may be quite large, even when every single element ``fits within the
budget''; Theorem~\ref{knapbad} demonstrates this for \normknap. This also implies that
the natural convex-programming relaxation that incorporates the norm-budget constraint
has a large integrality gap (see Theorem~\ref{knapbad}). 
%(since the optimum of the convex program is well behaved under budget violations)
%violating the budget by even a
%$(1+\ve)$-factor can lead to a drastic increase in the optimum value, even when 
%$f\bigl(\svec{\{e\}}\bigr)$ is small compared to the budget $B$
%every individual element 
Thus, the chief challenge in obtaining a true 
approximation algorithm for a norm-budgeted packing problem lies in ensuring that the
norm-budget constraint is not violated, %Therefore, 
%Indeed, the lack of a ``target'' vector to aim for (compiled from the guessed coordinates
%of an optimal solution) considerably ties up one's handds, 
and we need to come up with novel ideas to address this challenge. 

\medskip
We come up with two main ideas to handle the norm budget constraint.
For a vector $v\geq 0$, we use $v^\down$ to denote $v$ with its coordinates sorted in
non-increasing order. 
Let $\optset$ denote some fixed optimal solution, and $\OPT:=\rewd(\optset)$ be the
optimal value. 
For the norm-budgeted versions of knapsack, matching, and maximum-weight independent set
(\mwis), where $\svec{T}$ is the size-weighted characteristic vector of a solution
$T\in\sols$ (recall that this means that $\svec{T}_e=\sz_e$ if $e\in T$, and $0$
otherwise), we proceed from first principles to find a solution $T$ such that
$\svec{T}^{\down}\leq\svec{\optset}^{\down}$. This is based on an enumeration step, where
we group elements of similar reward, and guess, up to a certain bounded error, the number
of elements $\num_j$ that $\optset$ includes from each reward bucket $\itemset_j$. We
argue that we can obtain these $\num_j$ estimates in polynomial time, and given these
guesses, for \normknap, we proceed by simply choosing the $\num_j$ smallest-size items 
from each reward bucket $\itemset_j$. This yields the PTAS for \normknap (Section~\ref{knapsack}).
For \normmwis (and \normmatch), we need to employ a more sophisticated
greedy strategy, where we choose a suitable number of small-size elements from each prefix set
$(\itemset_0\cup\ldots\cup\itemset_j)$ (Section~\ref{mwis}).
Interestingly, %we only utilize that the norm $f$ is monotone and symmetric, but we do not 
our guarantees for \normknap, \normmwis, and \normmatch utilize {\em only} that $f$ is
monotone and symmetric, and so these guarantees hold when the budget constraint is
prescribed by any monotone, symmetric function (which need not be convex or homogeneous); see
Remark~\ref{genfun}. 
%symmetric; we do not need convexity or homogeneity, 

For the norm-budgeted versions of \gap, separable assignment problem (\sap), and \kfl,
where the coordinates of the size-vector $\svec{T}$ are obtained by aggregating individual 
$\sz_e$ values, we come up with a general reduction (Section~\ref{reduction}) that 
allows us to reduce our task to that of obtaining a {\em bicriteria} approximation for the
problem, provided that we can also solve a ``one-job-per-machine'' variant of the problem
where we are additionally constrained to assign at most one job per machine 
(see Theorems~\ref{lbredn} and~\ref{sepflredn}). This reduction is extremely useful because
the flexibility of working with bicriteria solutions allows one to effectively utilize the
machinery developed for tackling minimum-norm covering problems. In Section~\ref{sepfl},
we develop a configuration-LP rounding approach to obtain a
bicriteria approximation for \normsepfl, which contains \normsap, \normfl (all defined in 
Section~\ref{probdefs}), and \normsched, as special cases. 
%Also, 
The one-job-per-machine problem becomes a special case of \normmwis on a suitably-defined
independence system. %we use this in conjunction with 
Combining these ingredients via our reduction 
yields our $O(1)$-approximation for \normsched and \normkfl. For
\normsap and \normsepfl, %(defined in Section~\ref{probdefs}), 
where the input specifies an independence system $\M_i$ for every machine/facility $i$,  
given a $\beta$-approximation algorithm for solving (standard) \mwis on $\M_i$, or a
bicriteria $(\rho,\gm)$-approximation algorithm for \constr \mwis on $\M_i$, we obtain
approximation factors of $O(\beta)$ or $O(\gm\rho)$
%an $O(\beta)$-approximation algorithm 
for the norm-budgeted problem.

We point out that the above reduction does not by itself alleviate the difficulty of
obtaining a true approximation guarantee for a norm-budgeted packing problem. 
%(due to the
%limitations of techniques used for minimum-norm covering problems for achieving this end). 
%do not seem amenable to producing true approximation guarantees. 
Rather, the reduction illuminates the insight that, if a bicriteria approximation can be 
obtained without much difficulty (for instance, by utilizing the machinery for
minimum-norm covering problems),  
then the {\em crux} of obtaining a true approximation guarantee boils down
to obtaining a true approximation guarantee for the one-job-per-machine variant of the
problem.  
%and consequently, one can focus attention on this latter problem. 
In a certain sense, this alludes to the more fundamental nature of \normknap, \normmatch,
and \normmwis problems. 

\vspace*{-1ex}
\paragraph{Submodular norm-budgeted \{knapsack,\,\maxgap on related machines\}
  (Section~\ref{subnb}).} 
For submodular norm-budgeted knapsack (\subnbknap), we proceed in a
fundamentally-different manner from (regular) \normknap. For \normknap, as outlined above, we
use a reward-bucketing approach. In essence, with reward-bucketing, we set things up so
that any solution choosing the correct number of items from each reward bucket 
yields good reward, and a particular choice---picking the smallest-size items---ensures
feasibility. With a 
submodular reward function, it is unclear how to define reward buckets, since 
%the contribution of an item to the reward of a set depends on other items in the set. 
the reward function is not separable across items;
moreover, picking the smallest-size items is not a strategy that yields good objective value for
a cardinality-constrained submodular-maximization problem.

Our approach is based, loosely speaking, on an alternate {\em size-bucketing} approach,
where the size buckets are defined using an optimal solution. %we set things up 
%so that any solution picking a certain number of items from each size bucket is guaranteed
%to be feasible. 
However, unlike
reward-bucketing, we cannot quite identify these size buckets, and items in a size-bucket
need not have similar size! Despite these challenges, we argue that, given a target
reward-sequence of incremental rewards to obtain from these size buckets, one can
build up a solution by including a suitable set of items from each size bucket. 
In contrast with reward-bucketing, the size-bucketing approach is tailored so that any choice
of picking a certain number of items from each size bucket yields feasibility, and in order
to obtain good reward, one solves a cardinality-constrained submodular-maximization
problem. The choice of the target incremental-reward sequence is rather tricky, and is
somewhat correlated with the execution of the algorithm. 

\smallskip
For submodular norm-budgeted \maxgap (\subnbsched) on related machines
(Section~\ref{submodlb}), we observe that 
the reduction used for \normsched still works with submodular rewards. Consequently, as
before, we need to: (a) obtain a bicriteria guarantee for the problem; and (b) solve the
one-job-per-machine variant of the problem. The latter essentially boils down to solving
\subnbknap: for identical machines, the one-job-per-machine problem is precisely \subnbknap; 
%For the special case of identical machines,
for related machines, we do not have such a crisp correspondence but one can nevertheless
argue that the guarantee of our algorithm for \subnbknap carries over to this problem.
So we focus on task (a), for which we devise an algorithm that
repeatedly calls an algorithm for {\em knapsack-constrained submodular maximization}.

\vspace*{-1ex}
\paragraph{PTAS for \boldmath\normsched on identical and related machines
  (Section~\ref{relatedmc}).} 
The reduction-based approach used to tackle \normsched (on unrelated machines) is too
coarse to yield a PTAS for related machines, and  
%the PTAS for identical and related machines is obtained via 
we utilize a very different approach to obtain the PTAS. This is quite problem-specific
and is the most
technically-involved portion of the paper. This section can be read independently of
Sections~\ref{mwis}--\ref{refine}. Similar to \normknap, we first
identify a set of jobs, $A$, assigned by a near-optimal solution $\tsg:A\mapsto[m]$. 

For identical machines (Section~\ref{ident-ptas}), we start by guessing the ``lonely''
jobs in $A$: these are jobs $j$ for which no other job is assigned by $\tsg$ to the
machine $\tsg(j)$; one can argue that they form a prefix of $A$ when we consider jobs
in $A$ in non-increasing order of size. %their $p_j$'s. 
%assigned all by themselves on their respective machines by $\tsg$
One useful insight is that the loads on the remaining machines $I$ (i.e., machines not
holding lonely jobs) are roughly balanced under $\tsg$. Given this, we can categorize jobs
as ``large'' or ``small'', %define some large jobs and some small jobs, 
where large jobs have the property that only $O(\frac{1}{\ve}\bigr)$ such jobs
are assigned to any given machine by $\tsg$. Hence, using a job-configuration enumeration
approach, we can find an assignment of these large jobs to the remaining machines that is
consistent with $\tsg$. Finally, to assign the small jobs, conceptually, we solve a convex
program to find a fractional assignment of these jobs. This convex program is structured
enough that we can give a closed-form expression for the optimal solution (see
Lemma~\ref{extnlem}). We round this fractional assignment using \gap rounding. The
resulting assignment need not be feasible, but removing at most one job per machine would
make it feasible. %The issue however is that 
These removed jobs may however carry large reward, so instead of dropping them, we
temporarily create some extra machines to hold these removed jobs. One can argue that the
number of extra machines is $O(\ve)|I|$; hence, 
%constitute only an $O(\ve)$-fraction of the total number of machines, 
retaining the $|I|$ largest-reward machines yields a feasible solution without
sacrificing the reward by much.

The PTAS for related machines (Section~\ref{rel-ptas}) is considerably more
complicated. At a high level, we are able to eventually group machines into 
$O\bigl(\frac{\log m}{\ve}\bigr)$ groups so that,
roughly speaking, we can treat each group as an identical-machines instance, 
%to eventually obtain a collection of $O(\log m)$
%identical-machine instances, 
and extend, to an extent, the approach used in the PTAS for identical machines to solve
these instances. One of the novel ideas here is that 
this grouping is {\em not based on machine speeds}, as is often the case 
%in the scheduling literature 
when working with related machines (e.g., speed smoothing~\cite{ImKPS18}), and 
also what is used for the special case of non-uniform \mkp~\cite{ChekuriK05}. 
Instead, the grouping is based on the
{\em total work assigned to a machine}, where work (as opposed to load) of a machine
denotes the total processing time of jobs assigned to the machine. Indeed, the speeds of
machines in the same group can vary considerably. One benefit of considering the
work-vector is that this is much more structured compared to the load vector: in
particular, one can assume that the work assigned to a machine is non-decreasing
in its speed. %it is sorted in non-decreasing order of speed. 
So given a
work-vector for a set of machines, we know how to assign (the jobs corresponding to) its
coordinates to machines. Given this, one can now ignore speeds, and treat each group as an
collection of identical machines, which %as noted above, 
yields a way forward by
leveraging and building upon suitable ideas from the PTAS for identical machines.

\subsection*{Related work}
We limit ourselves to a discussion of work that is more closely related to monotone,
symmetric norms, or their generalizations.
%and do not discuss $\ell_p$-norms. 
As noted earlier, Chakrabarty and Swamy~\cite{ChakrabartyS19a} initiated the study of
minimum-norm optimization problems, as a far-reaching generalization of some earlier work
on $k$-clustering that considered $\topl$ norms and ordered
norms~\cite{ByrkaSS18,ChakrabartyS18}. They devised constant-factor approximation
algorithms for the minimum-norm generalizations of load balancing and $k$-clustering. This
led to much follow-up work on minimum-norm optimization problems that considered other
combinatorial-optimization problems~\cite{ChenLRZ25}, more-general ways of aggregating
costs using monotone, symmetric
norms~\cite{DengLR23,AbbasiBBCGKMSS23,HeroldKS25,HeroldKS26}, 
and norm-based objectives in stochastic settings~\cite{IbrahimpurS20,IbrahimpurS21} and
online settings~\cite{KesselheimMS23,KesselheimMS24,KesselheimMPS25}. The latter work on
online problems also applies to objectives that are more general than monotone, symmetric
norms. 

All of this work focuses on minimum-norm covering problems, where the norm appears in the
objective. 
However, Kesselheim et al.~\cite{KesselheimMPS25} consider, as a means of solving an
online covering problem with a norm-based objective, the ``flipped'' online packing
problem where %obtained by 
the norm in the objective of the covering problem yields a norm budget constraint in
the packing problem, and the objective in the packing problem is,
%considers where the objective is, 
roughly speaking, to maximize the number of covering constraints that are satisfied. 
Their results violate the norm budget constraint, which is unavoidable in the online
setting. Our results can be seen as complementary to and orthogonal to, their
results: we consider the offline setting, and are able to obtain stronger constant-factor
approximation guarantees and {\em without violating the norm budget constraint} in the
offline setting. 
Neogi et al.~\cite{NeogiPS24} consider a mechanism-design problem with multiple $\topl$
budget constraints. The underlying algorithmic problem is to maximize a set-based reward
function $v(T)$ subject to the size-weighted characteristic vector of $T$ satisfying
$k$ given $\topl$ budget constraints, and they obtain an $O(k)$-approximation for this
when $v$ is subadditive.%
\footnote{Their focus is on the mechanism-design setting with private item
sizes, but their guarantee does not improve in the algorithmic setting where all
information is public.}
Since multiple $\topl$ budget constraints can be folded into one monotone, symmetric norm,
our work yields the following much-improved guarantees for their problem: a PTAS for their
$v$ is additive, and an $O(1)$-approximation when $v$ is submodular.

\section{Preliminaries and notation} \label{prelim}
For an integer $k$, we use $[k]$ to denote $\{1,\ldots,k\}$, and $\dbrack{k}$ to denote
$\{0,1,\ldots,k\}$. 
Recall that for a vector \mbox{$v\geq 0$}, we use $v^\down$ to denote $v$ with its coordinates
sorted in non-increasing order. For a subset $S$ of coordinates, we use $v(S)$ to denote
$\sum_{j\in S}v_j$. 
For a singleton set $\{e\}$, slightly abusing notation, we use the more-compact
expression, $f(\sz_e)$ to denote $f\bigl(\svec{\{e\}}\bigr)$. 

%With the exception of the PTAS for \normsched on identical machines, 
{\em All} our algorithms 
%use value-oracle access to the underlying monotone, symmetric norm 
%$f$. In fact, we 
only need a {\em feasibility oracle} for the norm $f$, which, given a candidate vector $v$ 
(e.g., $\svec{T}$) and scalar $\ld$, determines if $f(v)\leq\ld$ or not.% 
\footnote{This is a very basic primitive that is even weaker than assuming value-oracle
access to the norm $f$, and can often be obtained for a rational vector
$v$ even when the norm value itself may be irrational. For instance for $\ell_p$ norms,
where $p$ is an integer, we can efficiently compute $\ell_p(v)^p$ and compare this with
$\ld^p$.}

\subsection{Problem definitions} \label{probdefs}
%Besides norm-budgeted knapsack, and norm-budgeted maximum GAP (which were described
In Section~\ref{contrib}, we defined the \normknap and \normsched problems.
%the norm-budgeted knapsack and norm-budgeted maximum GAP problems. 
In addition to these problems, we consider various other %the following 
norm-budgeted packing problems, which we now define.
%canonical problems that are captured by the above setup.  
%that are captured by the \normmis problem. 
Specifying a norm-budgeted packing problem %in the above setup 
entails specifying the
candidate-solution set $\sols$ (i.e., the independence system)
and the size-vector $\svec{T}$ induced by a candidate solution $T\in\sols$ (which also
determines the dimension of the norm $f$). But it will sometimes be more natural and
intuitive to describe the problem first in its native context and then show how it
%can be cast as a special case of \normmis.
can be cast in the above framework. 
%The first three problems, and the last three problems listed below are in
%increasing order of generality; 
%and the last three problems 
Fig.~\ref{probcomp} depicts some relationships between the various problems considered in
this paper. %listed below. 
%some of which are also discussed below.

\begin{figure}[t!]
\center
\begin{picture}(300,240) %(0,-10)
%\graphpaper(0,0)(300,240)
\thicklines
\put(15,30){\small \nb{\knap}}
\put(55,32){\vector(1,0){45}}
\put(30,45){\vector(0,1){40}}
\put(105,30){\small \nb{\match}}
\put(105,18){\small on bipartite graphs}
\put(150,32){\vector(1,0){55}}
\put(210,30){\small \nb{\match}}
\put(205,45){\vector(-3,2){60}}
\put(10,102){\small \nb{\mwis}}
\put(10,90){\small on matroids}
\put(55,102){\vector(1,0){45}}
\put(110,102){\small \nb{\mwis}}
\put(110,90){\small on $2$-set systems}
\put(155,102){\vector(1,0){50}}
\put(210,102){\small \nb{\mwis}}
\put(210,90){\small on $k$-set systems} %${}^*$}
\put(10,225){\small \nb{\maxgap}}
\put(10,213){\small identical machines}
\put(30,205){\vector(0,-1){28}}
\put(10,165){\small \nb{\maxgap}}
\put(10,153){\small related machines}
\put(68,170){\vector(1,0){40}}
\put(110,165){\small \nb{\maxgap}} %${}^{\dag}$}
\put(168,170){\vector(1,0){35}}
\put(130,178){\vector(0,1){42}}
\put(210,165){\small \nb{\sap}}
\put(225,178){\vector(0,1){42}}
\put(115,225){\small \nb{\kfl}}
\put(150,228){\vector(1,0){55}}
\put(210,225){\small \nb{\sepfl}}
\qbezier(140,10)(420,-30)(280,130)
\put(280,130){\vector(-1,1){35}}
\end{picture}
\captionsetup{font=small, labelfont=small}
\caption{The norm-budgeted packing problems considered in the paper, and some
relationships between them. We abbreviate \normbudg{\ldots} to \nb{\ldots}. 
An arrow from problem $X$ to problem $Y$ denotes that problem $X$ can be cast as a special
case of problem $Y$. All problems considered are at least \nphard. %\\
%\addtocounter{footnote}{-2}
Some other hardness results follow from hardness results known for special cases:
%are as follows. 
(a) \normmwis on matroids %generalizes budgeted matroid independent set, which 
does not admit an FPTAS~\cite{AradKS24}; 
(b) 
%\normmwis on $k$-set systems generalizes $k$-matroid intersection, for which recently
There is an $\Omega(k)$-factor hardness of approximation for \normmwis on $k$-set
systems~\cite{LeeST25}; 
(c) %(\dag) 
\normsched %generalizes maximum-\gap, which 
is \apxhard~\cite{ChekuriK05}, and \normsched on identical machines 
%generalizes the uniform multiple-knapsack problem, which 
is strongly \nphard~\cite{ChekuriK05}. 
\label{probcomp}}
\end{figure}

\begin{enumerate}[label=$\bullet$, leftmargin=*]
\item {\bf Norm-budgeted matching (\normmatch).} The ground set here is the edge-set
of an  undirected graph $G=(V,E)$, and $\sols$ is the collection of all matchings in
$G$. For a matching $T$, $\svec{T}\in\R^E$ is the size-weighted characteristic vector of
$T$. 
%given by $\svec{T}_e=\sz_e$ if $e\in T$, and
%$\svec{T}=0$ otherwise.

Observe that when the graph $G$ itself is a matching, \normmatch reduces to the
norm-budgeted knapsack problem. 
When $f$ is the $\ell_1$ norm, \normmatch has been studied as the budgeted matching
problem, which admits a PTAS~\cite{BergerBGS11,GrandoniRSZ14}. 

\begin{comment}
Observe that %norm-budgeted matching generalizes the norm-budgeted knapsack problem: given
\normknap can be cast as special case of norm-budgeted matching. Given a \normknap
instance with $n$ items, one can consider the complete bipartite graph whose node-set is
two copies of $[n]$, representing item-nodes (denoting items packed in the knapsack) and
``slot-nodes''; all edges incident to item-node $i$ have reward $\rewd_i$ and size
$\sz_i$. It is easy to see that feasible solutions to this \normmatch instance correspond
to \normknap solutions with the same reward, and vice-versa.

Observe also that with identical machines, this one-job-per-machine variant is the same as  
the norm-budgeted knapsack problem (where we also impose that at most $m$ items
can be packed in the knapsack). 
That is, \normknap with item-set $I$ can be cast as
the special case of \normmatch, where the bipartite graph $G$ on a
vertex-set consisting of two disjoint copies of $I$
bipartite graph $G=(I\cup I',E)$, with
$I'$ being a disjoint copy of $I$, and, for every $v\in I$, all edges in $\dt_G(v)$ have
the same reward and size. 
\end{comment}

\item {\bf Norm-budgeted maximum-weight independent set (\normmwis).}
%We now consider a quite general problem that abstracts all of the above problems. 
Here $\svec{T}$, for $T\in\sols$, is the size-weighted
characteristic vector of $T$. Clearly, \normmatch and \normknap are special cases of
\normmwis. 
When $f$ is the $\ell_1$ norm and the independence system is a matroid, we obtain the
budgeted matroid independent set problem, which admits a
PTAS~\cite{BergerBGS11,GrandoniRSZ14} but not an FPTAS~\cite{AradKS24}.
\normmwis on a $k$-set system generalizes $k$-matroid intersection, for which an
$\Omega(k)$-factor hardness of approximation was recently shown~\cite{LeeST25}.

\item {\bf Norm-budgeted separable assignment problem (\normsap).}
This is a generalization of norm-budgeted \gap, wherein the input also specifies an
independence system $\M_i=(J,\sols_i)$ for each machine $i$, and the output assignment 
must satisfy the additional constraint that the set of jobs scheduled on each machine
$i$ must be an independent set of $\M_i$; as before, the size-vector of an 
assignment is the resulting machine-load vector.
(\normsched is the special case where
each $\M_i$ is the free matroid, i.e., every subset of jobs is independent.)

As with \normsched, one can cast this in our general setup by considering the complete
bipartite graph $G=(J\cup[m],E)$ (where the rewards and sizes of edges are the rewards and
sizes of the corresponding (job, machine) pairs). The collection of candidate solutions is
now 
\[
\sols=\Bigl\{T\sse E:\ |T\cap\dt_G(j)|\leq 1\quad \forall j\in J,
\qquad \{j\in J: ij\in T\}\in\sols_i\quad \forall i\in[m]\Bigr\},
\]
and (as with \normsched) we have 
$\svec{T}\in\R^m=\bigl(\sum_{e\in T\cap\dt_G(i)}p_e\bigr)_{i\in[m]}$ for $T\in\sols$. 

\normsap is a substantial generalization of the {\em separable-assignment-problem} (\sap)
considered by~\cite{FleischerGMS11}. The problem studied
by~\cite{FleischerGMS11} does not involve job sizes or a norm budget constraint. 
Note that while job sizes and a bound on the load of a machine $i$ can be incorporated via
the independence system $\M_i$, 
{\em the norm-budget constraint couples the various machines} making the problem highly
{\em non-separable}. Hence, \normsap does not
fall into the framework of~\cite{FleischerGMS11}, and is a strict generalization 
of \sap.

The special case of \normsap where each $\M_i$ %is a matroid specifying 
%we are further constrained to 
encodes that at most one job can be scheduled on machine $i$ corresponds to 
%a special case of 
norm-budgeted matching on %Observe that when 
a bipartite graph whose vertex bipartition consists of job nodes and machine nodes. 
%norm-budgeted matching can be viewed as a
%variant of \normsched where we are further constrained to schedule at most one job
%per machine. 
As we will show in Section~\ref{reduction}, the approximability of the
general \normsap and \normsched problems is closely related to the approximability of this
one-job-per-machine variant. 

\item {\bf Norm-budgeted $k$-facility location (\normfl).}
In this problem, we are given a set of facilities $\F$, and a set of clients
$\C$. Assigning a client $j$ to facility $i$, incurs an assignment cost $c_{ij}$, and
earns reward $\rewd_{ij}$. For this problem, we always use $i$ to index facilities, and
$j$ to index clients. The input also specifies an integer $k\geq 0$, a monotone, symmetric
norm $f:\R^k\mapsto\Rp$, and budget $B$.
A solution specifies a set $F\sse\F$ of (at most) $k$ facilities to open, and an
assignment $\sg:S\mapsto F$ of some subset $S\sse\C$ of clients to these facilities. Such
an assignment induces a {\em facility-load vector} in $\R^k$ indexed by the facilities in
$F$, where the load on facility $i$ is the total assignment cost 
$\sum_{j\in S:\sg(j)=i}c_{ij}$ of clients assigned to $i$, and earns reward 
$\sum_{j\in S}\rewd_{\sg(j)j}$. The goal is to find a maximum-reward solution subject to
the norm-budget constraint $f(\text{facility-load vector})\leq B$.
Note that while facility-location problems 
%(that typically involve assigning all clients to open facilities) 
often assume metric assignment costs, we do not make this assumption here. 

As with \normsap, we also consider a more-general problem,
{\em norm-budgeted separable $k$-facility location} (\normsepfl), wherein we are
additionally given an independence system $\M_i=(\C,\sols_i)$ for each facility $i\in\F$,
and the output assignment must also satisfy that the set of clients assigned to each open
facility $i$ is independent in $\M_i$. For example, $\M_i$ can encode a capacity
constraint that at most $u_i$ clients may be assigned to facility $i$, and so \normsepfl
can be used to model, among other things, {\em capacitated \normfl}.

Observe that if we take $k=|\F|$ in \normsepfl, then we recover \normsap as a special case.
%\normkfl generalizes \normsched, which is the special case where
Similar to \normsap, we can cast \normsepfl in our general setup by considering the
ground set to be the edge-set of the complete bipartite graph $G=(\C\cup\F, E)$. The
reward and size of an edge are the reward and assignment cost of the corresponding $i,j$
pair respectively (where $i\in\F, j\in\C$). The solution-set is 
\[
\sols=\Bigl\{T\sse E:\, |T\cap\dt_G(j)|\leq 1 \ \,\forall j\in\C, \quad 
\{j\in\C: ij\in T\}\in\sols_i\ \,\forall i\in\F, \quad 
\bigl|\{i\in\F: T\cap\dt_G(i)\neq\es\}\bigr|\leq k\Bigr\}.
\]
One can easily verify that $(E,\sols)$ is an independence system. For a solution
$T\in\sols$, letting 
$F=\{i\in\F: T\cap\dt_G(i)\neq\es\}$, we have $|F|\leq k$, and we define
$\svec{T}=\bigl(\sum_{ij\in T\cap\dt_G(i)}c_{ij}\bigr)_{i\in F}$, with the understanding
that if $|F|<k$, we pad zeros to obtain a vector in $\R^k$.
\end{enumerate}

\subsection{Basic results for monotone, symmetric norms} \label{msn-prelim}
As noted earlier, our algorithms for the norm-budgeted versions of \gap, \sap, and \kfl
(where the size-vector aggregates individual $\sz_e$ values) are based on a reduction,
that, in part, requires one to obtain bicriteria approximation for the problem. In
doing so, we utilize some machinery developed for handling minimum-norm covering
problems. We will use the following notation and concepts from~\cite{IbrahimpurS21} 
(stated also in~\cite{ChakrabartyS19a}, albeit slightly differently).

\begin{definition}[\cite{IbrahimpurS21}] \label{posdef}
Let $\dt>0$. Let $N\geq 1$ be an integer.
%As in~\cite{IbrahimpurS21}, 
Define $\POS_{N,\dt}\sse[N]$ iteratively as follows: 
include the index $1$ in $\POS_{N,\dt}$;
as long as the largest index $\ell\in\POS_{N,\dt}$ is such that
$\ceil{(1+\dt)\ell}\leq N$, include $\ceil{(1+\dt)\ell}$ (which is larger than $\ell$)
in $\POS_{N,\dt}$ (and repeat). We have 
$|\POS_{N,\dt}|\leq O\bigl(\frac{\log N}{\dt}\bigr)$.

%We view $\POS$ as a set whose entries are sorted in increasing order.
For $i\in[N]$, let $\nxt_{N,\dt}(i)$ be the smallest index in $\POS_{N,\dt}$ strictly
larger than $i$; if no such index exists, define $\nxt_{N,\dt}(i):=N+1$ for notational
convenience. Similarly, let $\prev_{N,\dt}(i)$ be the largest index in $\POS_{N,\dt}$
strictly smaller than $i$; set $\prev_{N,\dt}(1):=0$. It is immediate from the definition
of $\POS_{N,\dt}$ that $\nxt_{N,\dt}(\ell)-1\leq(1+\dt)\ell$ for all
$\ell\in\POS_{N,\dt}$; it follows also that $\nxt_{N,\dt}(i)-1\leq(1+\dt)i$ for all
$i\in[N]$. 
\end{definition}

For a non-increasing vector $v\in\Rp^{\POS_{N,\dt}}$, define its expansion $v^\exp\in\Rp^N$  
as follows: $v^\exp_i=v_i$ for $i\in\POS_{N,\dt}$ and $v^\exp_i=v_{\prev_{N,\dt}(i)}$ for
$i\in[N]\sm\POS_{N,\dt}$.  
For a vector $u\in\R^N$ and $\tht\in\R$, define 
$Q^{>\tht}(u):=\bigl|\{i\in[N]: u_i>\tht\}\bigr|$. 
The following lemma (proved in Appendix~\ref{append-prelim}) shows that two vectors
sharing certain similar statistics have similar norm values. 
%We defer the proof to Appendix~\ref{append-prelim}.
Part (a) appears as Lemma 2.8 (b)
in~\cite{IbrahimpurS21}, and part (b) follows by mimicking the proof
in~\cite{IbrahimpurS21} for Lemma 2.8 (c). 

\begin{lemma} %[Lemma 2.8 in~\cite{IbrahimpurS21} paraphrased] 
\label{toplestim}
Let $\ve,\gm>0$ and $0<\dt\leq 1$.
Let $u\in\Rp^N$, and $v\in\Rp^{\POS_{N,\dt}}$ be a non-increasing vector. 
Let $\gnorm:\R^N\mapsto\Rp$ be a monotone, symmetric norm. %Let $\ve,\gm>0$.
\begin{enumerate}[(a), topsep=0.2ex, itemsep=0.1ex, leftmargin=*]
%\item If $u^{\down}_\ell\leq v_\ell$ for all $\ell\in\POS$, then
%$\topl[i](u)\leq\topl[i](v^\exp)$ for all $i\in[m]$, and hence,
%$\gnorm(u)\leq\gnorm(v^\exp)$. 
%
\item If $v_\ell\leq(1+\ve)u^{\down}_\ell+\gm$ for all $\ell\in\POS_{N,\dt}$, 
%then $\topl[i](v^\exp)\leq(1+\dt)(1+\ve)\topl[i](u)+i\gm$ for all $i\in[m]$, and hence,
then $\gnorm(v^\exp)\leq(1+\dt)(1+\ve)\gnorm(u)+N\gm\cdot\gnorm(1,0,\ldots,0)$. 

\item Let $\al\in\R^N$ be such that $\al^{\down}_1\leq v_1$ and
%$N^{>v_\ell}(\al):=\bigl|\{i\in[M]:\al_i>v_\ell\}\bigr|
$Q^{>v_\ell}(\al)\leq(1+\dt)(\ell-1)$ for all $\ell\in\POS_{N,\dt}$.
%Then, $\topl[i](\al)\leq(1+4\dt)(1+\ve)\topl[i](u)+5i\kp$ for all
%$i\in[m]$. Hence, 
Then $\gnorm(\al)\leq(1+3\dt)\gnorm(v^{\exp})$.

\begin{comment}
\item Let $v$ be as in part (a). Let $\dt\leq 1$. %(in $\POS=\POS_{m,\dt}$).
Let $\al\in\R^M$ be such that $\al^{\down}_1\leq v_1$ and
%$N^{>v_\ell}(\al):=\bigl|\{i\in[M]:\al_i>v_\ell\}\bigr|
$Q^{>v_\ell}(\al)\leq(1+\dt)\ell-1$ for all $\ell\in\POS_{N,\dt}$.
%Then, $\topl[i](\al)\leq(1+4\dt)(1+\ve)\topl[i](u)+5i\kp$ for all
%$i\in[m]$. Hence, 
Then
$\gnorm(\al)\leq(1+4\dt)(1+\ve)\gnorm(u)+5M\gm\cdot\gnorm(1,0,\ldots,0)$.
\end{comment}
\end{enumerate}
\end{lemma}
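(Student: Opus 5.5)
The plan is to reduce both parts to $\topl$ norms via the standard majorization fact. For a monotone, symmetric norm $\gnorm$ on $\R^N$ and vectors $x,y\geq 0$, if $\topl[i](x)\leq\topl[i](y)$ for all $i\in[N]$, then $\gnorm(x)\leq\gnorm(y)$. This holds because $x^\down$ is weakly submajorized by $y^\down$, so $x^\down$ lies in the convex hull of vectors dominated coordinatewise by permutations of $y$, and monotonicity, symmetry and convexity of $\gnorm$ finish it. Scaling gives the same conclusion with a constant factor. So in each part we only need to compare $\topl[i]$ values for every $i$, and then handle the additive term separately.

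\textbf{Part (a).} Write $v^\exp=w+\gm'$, where $w_i:=(1+\ve)u^\down_{p(i)}$ and $p(i)$ denotes the largest index of $\POS_{N,\dt}$ that is at most $i$ (so $v^\exp_i=v_{p(i)}$). Then $v^\exp_i\leq(1+\ve)u^\down_{p(i)}+\gm$ for every $i$, and since $v^\exp$ is non-increasing, it suffices to bound $\topl[i]$ of the vector $\bigl((1+\ve)u^\down_{p(i)}\bigr)_i$.

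\begin{enumerate}[(1)]
\item Each index $\ell\in\POS_{N,\dt}$ is repeated in $v^\exp$ at most $\nxt(\ell)-\ell\leq\dt\ell+1$ times.
\item Consequently, the first $i$ entries $u^\down_{p(1)},\ldots,u^\down_{p(i)}$ can be charged to entries $u^\down_1,\ldots,u^\down_i$. Block $\ell$ has $\nxt(\ell)-\ell$ copies of $u^\down_\ell$, which I charge to $u^\down_{\prev(\ell)+1},\ldots$, or directly: $\sum_{k\leq i}u^\down_{p(k)}\leq(1+\dt)\sum_{k\leq i}u^\down_k$. This uses that the count of indices $k\leq i$ with $p(k)\leq\ell$ is at most $(1+\dt)\cdot|\{k\leq i:k\leq\ell\}|$ by $\nxt(\ell)-1\leq(1+\dt)\ell$. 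It is a layer-cake argument: write $u^\down_k=\int \bon[u^\down_k>t]\,dt$ and compare counts at each threshold.
\item Hence $\topl[i](v^\exp)\leq(1+\dt)(1+\ve)\topl[i](u)+\topl[i](\gm\bon)$, and the majorization fact plus the triangle inequality give
\[
\gnorm(v^\exp)\leq(1+\dt)(1+\ve)\gnorm(u)+\gnorm(\gm\bon)\leq(1+\dt)(1+\ve)\gnorm(u)+N\gm\,\gnorm(1,0,\ldots,0).
\]
\end{enumerate}

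\textbf{Part (b).} The goal is to show $\topl[i](\al)\leq(1+3\dt)\topl[i](v^\exp)$ for all $i$, again via the layer-cake formula $\topl[i](x)=\int_0^\infty\min(i,Q^{>t}(x))\,dt$.

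For a threshold $t$, let $\ell$ be the smallest index in $\POS_{N,\dt}$ with $v_\ell\leq t$. If no such $\ell$ exists, then $Q^{>t}(v^\exp)=N$ and the comparison is trivial. If $\ell=1$, then $t\geq v_1\geq\al^\down_1$, so $Q^{>t}(\al)=0$. Otherwise $Q^{>t}(\al)\leq Q^{>v_\ell}(\al)\leq(1+\dt)(\ell-1)$, while $Q^{>t}(v^\exp)\geq\ell-1$, since all entries up to $\prev(\ell)$'s block end, i.e.\ indices $1,\ldots,\ell-1$, have value $v_{\prev(\ell)}>t$.

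So pointwise $\min(i,Q^{>t}(\al))\leq(1+\dt)\min(i,Q^{>t}(v^\exp))+(\text{rounding})$, and integrating gives the bound with slack factor $(1+3\dt)$, using $\dt\leq1$ to absorb integer rounding. The majorization fact then gives $\gnorm(\al)\leq(1+3\dt)\gnorm(v^\exp)$.

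The main obstacle I expect is the counting step in (a)(2) and the exact rounding in (b), i.e.\ getting the constants $(1+\dt)$ and $(1+3\dt)$ and not worse. In (b) in particular, the $\min$ with $i$ interacts with the $(1+\dt)$ blowup. I would first try the layer-cake comparison directly, bounding $\min(i,(1+\dt)a)\leq(1+\dt)\min(i,a)$, and check the ceilings in the definition of $\POS_{N,\dt}$ carefully.
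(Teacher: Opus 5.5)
Your proposal is correct. For part (b) it takes a more direct route than the paper and actually proves a sharper constant. For part (a) it supplies a self-contained proof where the paper only cites Ibrahimpur--Swamy.

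\textbf{The paper's route for (b).} It splits $\int_0^{v_1}\min\{i,Q^{>\theta}(\al)\}\,d\theta$ over the intervals $[v_\ell,v_{\prev(\ell)}]$. On each interval it bounds $Q^{>\theta}(\al)$ by $(1+\dt)(\ell-1)\le(1+\dt)^2\prev(\ell)$, and it also uses $i\le(1+\dt)\bell$. It then recognizes the result as $(1+\dt)^2\topl[\bell](w)$ for a vector $w\le v^{\exp}$ that rounds each block down. So it loses a factor $(1+\dt)$ twice and ends at $(1+\dt)^2\le1+3\dt$.

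\textbf{Your route for (b).} You compare the counting functions pointwise. Let $\ell$ be the smallest index of $\POS_{N,\dt}$ with $v_\ell\le t$. Then
\[
Q^{>t}(\al)\le(1+\dt)(\ell-1)\le(1+\dt)\,Q^{>t}(v^{\exp}),
\]
because coordinates $1,\dots,\ell-1$ of $v^{\exp}$ all carry values at least $v_{\prev(\ell)}>t$. Combine this with $\min\{i,(1+\dt)a\}\le(1+\dt)\min\{i,a\}$ and integrate over $t$. This gives $\topl[i](\al)\le(1+\dt)\topl[i](v^{\exp})$ exactly.

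There is no rounding term anywhere in this argument. Your hedge about ``integer rounding'' and the $(1+3\dt)$ slack is unnecessary; you have proved the factor $1+\dt$.

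\textbf{Part (a).} Your layer-cake proof is sound. The number of $k\le i$ with $p(k)\le a$ equals $\min\{i,\nxt(p(a))-1\}$. This is at most $\min\{i,(1+\dt)a\}\le(1+\dt)\min\{i,a\}$. When you invoke $\nxt(\cdot)-1\le(1+\dt)(\cdot)$, apply it to $p(a)\in\POS_{N,\dt}$, not to $a$ itself. The additive $\gm$ is handled correctly once you note that the right-hand side's top-$i$ sums are those of the sorted vector $(1+\dt)(1+\ve)u^{\down}+\gm\bon$, before applying majorization and the triangle inequality.

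\textbf{Shared implicit assumption.} Both your argument and the paper's use the layer-cake formula for $\topl[i](\al)$, which needs $\al\ge0$. This is harmless, since every application in the paper has $\al\ge0$.
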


\paragraph{\boldmath $\ell_p$ norms.}
Although our focus is on handling general monotone, symmetric norms, we briefly discuss
here the setting of $\ell_p$ norms, which are perhaps the most prominent examples of such
norms. For problems such as norm-budgeted \{knapsack, matching, max-weight independent
set\}, where the size-vector $\svec{T}$ induced by a solution $T$ is the size-weighted
characteristic vector $(\sz_e\bon_{e\in T})_{e\in[n]}$ (where $\bon_{e\in T}$ is $1$ if
$e\in T$ and $0$ otherwise), one can simply move to the $\ell_p^p$ objective and cast the
budget constraint as
$\bigl\|\svec{T}\bigr\|_p^p:=\sum_{e\in T}\sz_e^p\leq B^p$. Thus, this reduces to an
$\ell_1$-budget constraint (with item sizes $\sz_e^p$), and guarantees for the latter
problem immediately translate to the norm-budgeted problem. So with an $\ell_p$ norm, we
obtain an FPTAS for \normknap, and PTASes for \normmatch, \normmwis with a matroid, and
\normmwis when the independence system is the intersection of two matroids.
However, for %the assignment and facility-location problems, 
norm-budgeted \{\maxgap, \sap, \kfl, \sepfl\!\}, where the coordinates of $\svec{T}$ are 
obtained by aggregating individual $\sz_e$ 
values, %we are not aware of any drastic simplification 
the $\ell_p$-norm setting does not seem to make the problem substantially
simpler. 

\begin{comment}
We can still utilize our reduction from Section~\ref{reduction}, and reduce to
the problem of obtaining a bicriteria solution. Here, one potential simplification is that
due to the separability of the $\ell_p^p$ objective, one can argue that if 
$\bigl\|\sz(T)\bigr\|_p\leq B$, then there can at most $\frac{1}{\ve^p}$ items $e\in T$
with $\sz_e\geq\ve\cdot B$. Therefore, such ``large'' items used by the optimal solution
$\optset$ can be enumerated. For \normsched, then a simpler approach to obtain a
bicriteria solution is to consider the remaining ``small'' items and write a convex
program incorporates the norm-budget constraint. One can round an optimal solution to this
using \gap rounding~\cite{ShmoysT93}. The load on each machine in the resulting integral
solution exceeds the load in the fractional solution by at most the size of one small item
$e$; by the triangle inequality
\end{comment}

%\paragraph
\subsection*{Scaling and rounding}
We may assume that $\{e\}\in\sols$ and $f(\sz_e)\leq B$ for all elements $e$,
as otherwise $e$ cannot belong to any feasible solution, and we can simply delete $e$ and
consider the independence system $\bigl([n]-\{e\},\sols':=\{T\in\sols: e\notin T\}\bigr)$.
Throughout, we use $\optset$ to denote some fixed optimal solution, and
$\OPT:=\rewd(\optset)$ to denote the optimal value.
Let $\rmax$ be the maximum reward of an element included in $\optset$. 
We may assume that $\rmax=\max_{e\in[n]}\rewd_e$, since we can simply ``guess'' a
maximum-reward element in $\optset$, and delete all higher-reward items from the instance.
By standard scaling and rounding ideas, incurring a $(1+\ve)$-factor loss in
approximation, we can move to an instance where all rewards are integers bounded by
$\frac{n}{\ve}$, so, unless otherwise stated, we will
assume that this holds in the sequel for all the problems considered.

\begin{theorem} \label{scaling}
Consider a norm-budgeted packing problem involving a ground set $[n]$ and element-rewards 
$\{\rewd_e\geq 0\}_{e\in [n]}$, satisfying the above assumptions. 
Let $\rmax:=\max_{e\in[n]}\rewd_e$. Let $\ve>0$.
Consider the same instance with element-rewards given by 
$\rewd'_e:=\ceil{\frac{n\cdot\rewd_e}{\ve\cdot\rmax}}$ for all $e\in[n]$.
Let $\OPT'$ denote the optimal value for the new instance. We have:
\begin{enumerate}[label=(\alph*), topsep=0.1ex, noitemsep, leftmargin=*]
%(a) 
\item $\OPT\leq\frac{\ve\cdot\rmax}{n}\cdot\OPT'\leq(1+\ve)\OPT$; \quad
%(b) 
\item If $T\sse[n]$ satisfies $\rewd'(T)\geq\frac{\OPT'}{\rho}$, for some $\rho\geq 1$,
%then 
we also have $\rewd(T)\geq\bigl(\frac{1}{\rho}-\ve\bigr)\OPT$. 
%\rewd(T)\geq\frac{\ve\rmax}{n}\cdot\rewd'(T)-\ve\OPT
%\geq\frac{\ve\rmax}{n}\cdot\frac{\OPT'}{\rho}-\ve\OPT\geq\frac{\OPT}{\rho}-\ve\OPT   
\end{enumerate}
\end{theorem}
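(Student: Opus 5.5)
The proof uses only the pointwise sandwich that the rounding gives for each element's reward. Set $c:=\frac{\ve\cdot\rmax}{n}$, so that $\rewd'_e=\ceil{\rewd_e/c}$. Then for every $e\in[n]$,
\[
\frac{\rewd_e}{c}\;\leq\;\rewd'_e\;\leq\;\frac{\rewd_e}{c}+1 .
\]
Summing over any $T\sse[n]$ gives
\[
\rewd(T)\leq c\cdot\rewd'(T)\leq\rewd(T)+c|T|\leq\rewd(T)+\ve\cdot\rmax,
\]
since $|T|\leq n$. The feasible family ($\sols$ together with the norm-budget constraint) does not change under the rescaling. Hence the optimal solutions of both instances range over the same collection of sets.

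For (a), apply the left inequality to $\optset$: $\OPT=\rewd(\optset)\leq c\cdot\rewd'(\optset)\leq c\cdot\OPT'$. For the upper bound, let $T'$ be optimal for the new instance. Then
\[
c\cdot\OPT'=c\cdot\rewd'(T')\leq\rewd(T')+\ve\cdot\rmax\leq\OPT+\ve\cdot\rmax .
\]
Here we invoke the standing assumption from the paragraph preceding the theorem: $\rmax$ is the reward of an element of $\optset$ (equivalently, by the guessing step, $\rmax=\max_e\rewd_e$ and every singleton is feasible). This gives $\rmax\leq\OPT$, so $c\cdot\OPT'\leq(1+\ve)\OPT$.

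For (b), combine the same sandwich with (a):
\[
\rewd(T)\geq c\cdot\rewd'(T)-\ve\cdot\rmax\geq\frac{c\cdot\OPT'}{\rho}-\ve\cdot\OPT\geq\frac{\OPT}{\rho}-\ve\cdot\OPT .
\]
The first step is the right inequality of the sandwich, which holds for any $T\sse[n]$. The second uses the hypothesis $\rewd'(T)\geq\OPT'/\rho$ together with $\rmax\leq\OPT$. The third uses $c\cdot\OPT'\geq\OPT$ from (a).

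The argument is routine. The only point needing care is justifying $\rmax\leq\OPT$, which rests on the preprocessing assumptions (every singleton is feasible and $\rmax$ is attained within $\optset$) rather than on the rounding itself.
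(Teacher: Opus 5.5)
Your proof is correct and is the paper's argument: the per-element sandwich $\rewd_e\le c\,\rewd'_e\le\rewd_e+c$ with $c=\frac{\ve\cdot\rmax}{n}$, summed over a set, is applied to $\optset$ and to an optimal solution of the rescaled instance for (a), and then combined with (a) for (b). The only difference is that you state explicitly why $\rmax\le\OPT$ (from the preprocessing assumptions), which the paper uses without comment.
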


\section{Norm-budgeted knapsack} \label{knapsack}
%\begin{defn} \label{normknap}
Recall that an instance of {\em norm-budgeted knapsack} (\normknap) 
is specified by an item-set $[n]$, 
%we are given $n$ items, with
non-negative rewards $\{\rewd_i\}_{i\in[n]}$ and sizes $\{\sz_i\}_{i\in[n]}$, 
%We also have 
a {monotone, symmetric norm} $f:\R^n\mapsto\R_+$, and a knapsack budget $B$. 
\begin{comment}
The goal is
to find a maximum-reward set of items that fits in the knapsack, where fitting in the
knapsack is determined by requiring that the norm applied to the size-vector obtained from
these items is at most the budget $B$. 
More precisely and formally, given a set $T\sse[n]$ of items, the {\em size-vector}
associated with $T$ is defined as the vector $\svec{T}\in\R^n$ with coordinates $\sz_i$ if
$i\in T$, and $0$ otherwise. 
\end{comment}
The goal is find $T\sse[n]$ that maximizes $\rewd(T):=\sum_{i\in T}\rewd_i$ subject to the
norm budget constraint $f\bigl(\svec{T}\bigr)\leq B$, where $\svec{T}_i$ is $\sz_i$ if 
$i\in T$, and is $0$ otherwise.  
%\end{defn}
%
We use items and elements interchangeably in this section.
We devise a polynomial-time approximation scheme (PTAS) for \normknap. 
%this problem.
%an $O(1)$-approximation algorithm for this problem. 
%(running in polynomial time).
%$n^{O(\log\log n)}$. 
%Our algorithm only requires value-oracle access to the norm $f$.

\begin{theorem} \label{normknapthm}
There is a %(polytime) $(2+\ve)$-approximation algorithm 
PTAS for \normknap, which obtains reward at least $(1-\ve)^2\OPT$ in 
$\bigl(\frac{n}{\ve}\bigr)^{O(1/\ve^2)}$ time, for any $\ve>0$. 
%with running time $n^{O(\log\log n)}$. 
%The algorithm uses only value-oracle access to the underlying
%monotone, symmetric norm. 
\end{theorem}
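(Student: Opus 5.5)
We would follow the reward-bucketing strategy outlined in the introduction: find a set $T$ whose reward is close to $\OPT$ and whose sorted size vector satisfies $\svec{T}^{\down}\leq\svec{\optset}^{\down}$. Since $f$ is monotone and symmetric, the latter gives $f(\svec{T})=f(\svec{T}^\down)\leq f(\svec{\optset}^\down)\leq B$.

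\emph{Setup.} First apply Theorem~\ref{scaling}, after guessing $\rmax$. Rewards are then integers in $[0,n/\ve]$, and this step costs a factor $(1-\ve)$ in the final guarantee. We also have $\OPT\geq\rmax$.

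\emph{Reward buckets.}
\begin{itemize}
\item Put items of reward less than $\ve\rmax/n$ (before scaling) into a ``tiny'' class. These contribute at most $\ve\OPT$ in total, so we discard them.
\item Group the remaining items geometrically: $\itemset_j$ consists of items with reward in $[\rmax(1+\ve)^{-j-1},\rmax(1+\ve)^{-j})$. This gives $\nbuck=O(\log(n/\ve)/\ve)$ buckets.
\item Let $N_j=|\optset\cap\itemset_j|$. Guessing all $N_j$ exactly costs $n^{\nbuck}$, which is quasi-polynomial, so we guess them approximately.
\item For the $O(1/\ve^2)$ ``heavy'' buckets (those with $j\leq j_0$, where $(1+\ve)^{-j_0}\approx\ve$), we guess $N_j$ exactly. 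Since each such item has reward at least $\ve\rmax$ and $\OPT$ may be large, $N_j$ can be as large as $n$; this step takes $n^{O(1/\ve^2)}$ time.
\item For the lighter buckets, we guess $\num_j\leq N_j$ as the largest multiple of a granularity $g$ that is at most $N_j$.
\end{itemize}

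\emph{Choosing the granularity.} This is the main obstacle. A single granularity loses at most $g$ items per bucket, i.e.\ reward at most $g\sum_j\rmax(1+\ve)^{-j}=O(g\rmax/\ve)$. This is at most $\ve\OPT$ when $g\approx\ve^2\OPT/\rmax$, assuming $\OPT$ is known up to a $(1+\ve)$ factor, which we guess. The number of choices per bucket is then at most $n/g$, but the count of guess vectors still has $\nbuck$ in the exponent.

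To fix this, we bound the number of guess vectors via the reward constraint. We have $\sum_j\num_j\rmax(1+\ve)^{-j}\leq\OPT$, so $\sum_j(\num_j/g)(1+\ve)^{-j}\leq O(1/\ve^2)$. Write $k_j=\num_j/g$. In buckets with $(1+\ve)^{-j}\geq\ve$ (i.e.\ $j\leq j_0$), the $k_j$ sum to $O(1/\ve^3)$. For deeper buckets, we instead use a bucket-dependent granularity $g_j\approx\ve^2(1+\ve)^{j}\OPT/(\rmax\nbuck)$. Each deeper bucket then loses at most $\ve^2\OPT/\nbuck$, and the constraint $\sum_j k_j\leq O(\nbuck/\ve^2)$ becomes a composition count $\binom{O(\nbuck/\ve^2)+\nbuck}{\nbuck}=2^{O(\nbuck\log(1/\ve))}=(n/\ve)^{O(\log(1/\ve)/\ve)}$. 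Together with the exact heavy guesses, the total running time is within $(n/\ve)^{O(1/\ve^2)}$. Tuning these parameters is where we expect the real work to lie.

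\emph{The algorithm for a fixed guess.} For each guessed vector, take the $\num_j$ smallest-size items in each $\itemset_j$, and output the best resulting set that passes the feasibility oracle.

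\emph{Correctness.} For the correct guess:
\begin{itemize}
\item \emph{Feasibility.} Within bucket $j$, the $\num_j\leq N_j$ smallest items are dominated coordinatewise in sorted order by $\optset\cap\itemset_j$. Concatenating across buckets, which is a disjoint union, and sorting gives $\svec{T}^\down\leq\svec{\optset}^\down$. Hence $T$ is feasible by monotonicity and symmetry.
\item \emph{Reward.} Each chosen item has reward at least $(1+\ve)^{-1}$ times that of any item of $\optset$ in the same bucket. The rounding losses total $O(\ve)\OPT$ and tiny items cost at most $\ve\OPT$.
\end{itemize}
After rescaling $\ve$, the reward is at least $(1-\ve)^2\OPT$. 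Only the monotonicity and symmetry of $f$ are used, consistent with Remark~\ref{genfun}.
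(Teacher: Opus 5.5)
Your proposal is correct and is essentially the paper's argument. Your bucket-dependent granularity $g_j\propto(1+\ve)^j\OPT/(\rmax\nbuck)$ is the paper's device of guessing $\rnum_j=\lfloor\rewd(\optset\cap\itemset_j)/\Dt\rfloor$ with $\Dt=\ve\optval/\nbuck$, i.e.\ count granularity $\Dt/\thresh_j$. The paper bounds the number of guesses by the same reward-sum argument, takes the $\lceil\num_j\rceil$ smallest-size items per bucket, and concludes by sorted domination, as you do.

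The exact enumeration for heavy buckets and the single-granularity detour are unnecessary, though harmless within the time bound. Two small fixes: make your top bucket include reward exactly $\rmax$, and take $\lceil\num_j\rceil$ items when $g_j$ is fractional.
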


Before delving into the proof of Theorem~\ref{normknapthm}, we give some examples
illustrating the challenges that arise in working with norm-budgeted knapsack, 
and why \normknap is significantly harder to tackle than the standard knapsack
problem  (where $f$ is the $\ell_1$ norm).
We demonstrate that certain properties or approaches that apply to standard knapsack 
%problem (i.e., where $f$ is the $\ell_1$ norm) 
fail badly for \normknap, showing a sharp contrast between the two problems.
%In particular,
Theorem~\ref{knapbad} shows that increasing the budget even very slightly 
%i.e., by a $(1+\ve)$-factor, 
can cause a drastic increase in the optimum value, even when every
individual item has a small size relative to the budget. %(i.e., $f(\sz_e)\ll B$). 
This also implies that the natural convex-programming relaxation for \normknap has a
large integrality gap. 
%These facts are in stark contrast with the situation for standard knapsack.
%Finally, we show that the best of the maximum-reward item, and the a greedy approach based
%on picking items in non-increasing $\rewd_i/\sz_i$ order, can lead to arbitrarily poor
%solutions.
(We remark that for the quite special case of ordered norms, the dynamic-programming based
FPTAS for standard knapsack can be extended to yield an FPTAS for \normknap; see
Appendix~\ref{dpknap}.)  
%but this approach is not amenable to extension to anything beyond ordered norms.

\begin{theorem} \label{knapbad}
\
\begin{enumerate}[label=(\alph*), topsep=0ex, itemsep=0.1ex, leftmargin=*]
\item For any constants $M,\ve>0$, there is a \normknap instance %$\I$
%$\I=\bigl([n],\{\rewd_i,\sz_i\}_{i\in[n]},f,B\bigr)$ 
with $f(\sz_i)\leq\ve B$ for all items $i$, such that increasing the budget to $(1+\ve)B$ 
increases the optimum by a factor of at least $M$. This holds even when $f$ is a
$\topl$ norm.
%$\OPT_{\I}((1+\ve)B)\geq K\cdot\OPT_{\I}(B)$, where $\OPT_\I(W)$ denotes the optimum
%value with budget $W$.
%the optimum value with budget $(1+\ve)B$ is at least $K$

\item The following convex-programming relaxation for \normknap has unbounded integrality
gap. 
\begin{equation}
\max \ \sum_{i\in[n]}\rewd_ix_i \quad\ \ \text{\textnormal s.t.} \quad\ \  
f(\sz_1 x_1,\sz_2x_2,\ldots,\sz_nx_n)\leq B, \quad\ \ 
0\leq x_i\leq 1 \quad \forall i\in[n].
\tag{\cpknap} \label{knapcp}
\end{equation}
%is unbounded.
\end{enumerate}
\end{theorem}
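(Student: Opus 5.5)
The plan is to give a single explicit family of instances where $f$ is a $\topl$ norm and all items are identical, and to read both parts off the same family. Since $\topl$ is a monotone, symmetric norm, the family also witnesses the claim in part (a) that the bad behaviour occurs ``even when $f$ is a $\topl$ norm''.

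\textbf{The instance.} Fix $M,\ve>0$ and pick an integer $\ell\geq(1+\ve)/\ve$. Set $f=\topl$, budget $B=1$, and $n=M\ell$ items (rounding $M$ up to an integer if needed). Every item has reward $\rewd_i=1$ and size $\sz_i=s:=(1+\ve)/\ell$. Then $f(\sz_i)=s=(1+\ve)/\ell\leq\ve=\ve B$ by the choice of $\ell$, so every item is small relative to the budget.

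\textbf{Part (a).} The key observation is that for any item set $T$,
\[
\topl\bigl(\svec{T}\bigr)=s\cdot\min\{|T|,\ell\}.
\]
With budget $B=1$, we need $s\min\{|T|,\ell\}\leq 1$. Since $s\ell=1+\ve>1$, this forces $|T|<\ell$, and more precisely $|T|\leq\floor{\ell/(1+\ve)}$. Hence the optimum with budget $B$ is less than $\ell$. With budget $(1+\ve)B$, every $T$ satisfies $\topl(\svec{T})\leq s\ell=1+\ve$, so all $n=M\ell$ items fit. The optimum therefore grows by a factor greater than $M\ell/\ell=M$.

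\textbf{Part (b).} Use the same instance with budget $B$ and set $x_i=1/(1+\ve)$ for all $i$. This point is feasible for \eqref{knapcp}, because $\topl(\sz_1x_1,\ldots,\sz_nx_n)=\ell\cdot s/(1+\ve)=1=B$. Its value is $M\ell/(1+\ve)$. By part (a), the integral optimum is at most $\ell$, and it is positive: a single item fits since $s\leq\ve<1$. So the integrality gap is at least $M/(1+\ve)$, which is unbounded as $M\to\infty$.

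None of these steps is a real obstacle. The only point needing care is choosing $\ell$ large enough that individual items stay below $\ve B$, while keeping $s\ell$ exactly $(1+\ve)B$ so that the $\topl$ norm switches from ``at most about $\ell/(1+\ve)$ items'' to ``unlimited''.
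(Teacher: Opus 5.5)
Your proof is correct and is essentially the paper's construction rescaled: unit rewards, a $\topl$ norm, and a budget just below the threshold at which $\ell$ equal-size items fit. Your explicit fractional point $x_i=1/(1+\ve)$ is the paper's scaling argument for deriving (b) from (a), made concrete. The only unstated assumption is $\ve<1$, which you need so that a single item fits within budget $B$; it is without loss of generality, because an instance that works for a smaller $\ve$ also works for any larger $\ve$.
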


\begin{proof}
Note that part (b) follows from part (a), because letting $\OPT_{\text{\ref{knapcp}}}$ denote
the optimal value of \eqref{knapcp}, we can see that 
$\OPT_{\cpknap[{(1+\ve)B}]}\leq (1+\ve)\OPT_{\text{\ref{knapcp}}}$, as scaling
down an optimal solution to ($\cpknap[{(1+\ve)B}]$) by a $(1+\ve)$-factor yields a
feasible solution to \eqref{knapcp}. 
%So $\lpopt[{\ref{knapcp}((1+\ve)B)}]\leq (1+\ve)\lpopt[{\ref{knapcp}(B)}]$, 
But part (a) shows that the optimal value of the problem with budget $(1+\ve)B$ can be
arbitrarily large compared to the optimal value of the problem with budget $B$. 

For part (a), let $\ell$ be such that $\ell\geq 1+\frac{1}{\ve}$, so
$(1+\ve)(\ell-1)\geq\ell$, and $n$ be such that $n\geq M(\ell-1)$. The simplest instance
demonstrating this is to take $n$ items, each with unit reward and size, $f$ as the
$\topl$ norm. Let $\OPT(B)$ denote the optimum value with budget $B$.
Taking $B=\ell-1$, we clearly have $\OPT(B)=\ell-1$; but since $(1+\ve)B\geq\ell$, and any
set of at most $\ell$ items have total size at most $\ell$, we have that
$\OPT\bigl((1+\ve)B\bigr)=n$. 

When all items have the same size, the problem can be solved in polytime (as the norm
budget constraint reduces to a cardinality constraint), but perturbing the above instance
avoids this over-simplification. For instance, for some $1\leq k<n$, for $i\in[k]$, we can
set the size and reward item $i$ to be some number $x_i\in[1,2]$, and take $\ell$ such that
$(1+\ve)(\ell-1)\geq\ell+k$; this yields the same outcome. 
\end{proof}

\subsection*{Proof of Theorem~\ref{normknapthm}}
%\subsection{\boldmath $O(1)$-approximation in polytime: proof of Theorem~\ref{normknapthm}}
%\label{normknapalg}
We begin by establishing some notation and terminology. 
%We remark that we have not sought to optimize the constant in the approximation guarantee,
%and instead chosen to keep exposition simple. 
%
%By padding the instance with items of zero reward and zero size, we may assume, for
%simplicity, that $n$ is a power of $2$. 
%Fix an optimal solution $\optset\sse[n]$, and let $\OPT=\rewd(\optset)$. 
We use items and elements interchangeably.
Recall that $\optset\sse[n]$ denotes some fixed optimal solution and
$\OPT=\rewd(\optset)$.  
\begin{comment}
By padding the instance with items of zero reward and zero size, 
%Again, by padding, 
we may assume for notational convenience that
$\nopt:=|\optset|$ is a power of $2$.\footnote{Note that we can achieve this by
padding the instance by at most the original number of items.}
\end{comment}
%Recall also that we may assume that all rewards are integers bounded by $\frac{n}{\ve}$.
Recall also that $\rmax$ denotes the maximum-reward of an item in $\optset$, and we may 
assume that $\rmax=\max_{i\in[n]}\rewd_i$, and 
%$\rmax$ is an integer that is at most $\frac{n}{\ve}$. 
all rewards are integers bounded by $\ceil{\frac{n}{\ve}}$ (due to Theorem~\ref{scaling}).
%Let $\nopt:=|\optset|$.

\begin{comment}
Let $\rmax:=\max_{i\in[n]}\rewd_i$. We may assume that an item with reward $\rmax$ is
included in $\optset$, since we can simply ``guess'' the maximum-reward item in $\optset$,
and discard all higher-reward items. 
\end{comment}

The idea is to bucket items having similar reward, %(up to a factor of $2$), 
and guess the number of items that $\optset$ includes from each bucket. 
%again within a factor of $2$. 
Some notation will be handy here. 
For an integer $j\geq 0$, define $\thresh_j:=\frac{\rmax}{(1+\ve)^j}$, and let
$\itemset_j:=\bigl\{i\in[n]:\frac{\thresh_j}{1+\ve}<\rewd_i\leq\thresh_j\bigr\}$
denote all items with reward roughly %(i.e., at most, and within a factor of $2$ of) 
$\thresh_j$; we call this a reward bucket. %$\frac{\rmax}{2^j}$. 
Note that there are $\nbuck\leq O\bigl(\frac{1}{\ve}\log\frac{n}{\ve}\bigr)$ such reward
buckets that  
together cover all items with non-zero reward.

Now, if we know that $|\optset\cap\itemset_j|\in[\ell,(1+\ve)\ell)$, 
%items from the %bucket corresponding to items with rewards in $(r/2,r]$, 
%bucket $\itemset_j$, 
then if we pick the $\ell$ {\em smallest-size items} from this bucket, we are assured that
the size-vector of our item-set is coordinate-wise at most $\svec{\optset\cap\itemset_j}$.
%the size-vector of the items from $\optset$ in this bucket. 
So if we do this for all buckets, then we ensure that we satisfy the budget constraint. 
Moreover, since all items in $\itemset_j$ have roughly the same reward, and we pick at
least $|\optset\cap\itemset_j|/(1+\ve)$ items from this bucket, it is not hard to
see that the reward that we obtain from the items we pick from $\itemset_j$ is at least 
$\rewd(\optset\cap\itemset_j)/(1+\ve)^2$, and this holds for all $j$.
%$\Omega(\text{reward that $\optset$ obtains from this bucket})$.
So we obtain a feasible solution of reward at least $\OPT/(1+\ve)^2$. 
Now, as stated, implementing the above plan would need 
$O\bigl(\nbuck^{(\log n)/\ve}\bigr)=n^{O(\log\log(n/\ve)/\ve)}$ time, since there are
%$O\bigl(\log\frac{n}{\ve}\bigr)$ buckets, and there are 
$O\bigl(\frac{\log n}{\ve}\bigr)$ choices of powers of $(1+\ve)$ for
$|\optset\cap\itemset_j|$ for each reward bucket. 
%We note that the approximation guarantee can be improved to $(1+\ve)$
%by considering powers of $(1+\ve)$ instead of $2$, so this yields a 
%``mildly quasi-polytime'' approximation scheme for the problem; see Section~\ref{qptas}. 

To refine the above approach and obtain polynomial running time, we obtain the
$|\optset\cap\itemset_j|$ estimates indirectly, by guessing the total
reward collected by $\optset$ from a reward-bucket, and then translating this to an
(more noisy) estimate of $|\optset\cap\itemset_j|$. This is similar to the
approach used by~\cite{ChekuriK05} for the multiple knapsack problem.
%We first guess the reward
%$\rewd(\optset\cap\itemset_j)$ for each reward bucket within an additive error of
%$\frac{\ve}{\rmax}{\nbuck}$, and then use this to obtain an estimate of
%$|\optset\cap\itemset_j|$. 

To elaborate, %we move to a version of the problem, where we are given a target reward,
%and we would like to return a \normknap solution that collects reward close to this
%target, if such a solution exists. 
by considering all values of the form $\rmax(1+\ve)^\ell$, where $\ell\in\Z_+$, in the
range $[\rmax,n\rmax]$, we may assume that we have an estimate $\optval$ such that
$\optval\leq\OPT\leq(1+\ve)\optval$.  
%and suppose that there is a feasible \normknap solution
%$\targopt\sse[n]$ such that $\rewd(\targopt) 
Let $\Dt=\frac{\ve\cdot\optval}{\nbuck}$, and 
%and let $\rnum_j\in\Z_+$ be the closest multiple of $\Dt$
let $\rnum_j:=\floor{\frac{\rewd(\optset\cap\itemset_j)}{\Dt}}$ for all
$j\in\dbrack{\nbuck}$. 
%So we have
%$\rnum_j\cdot\Dt\leq\rewd(\optset\cap\itemset_j)<(\rnum_j+1)\cdot\Dt$  
%$\rnum_j\cdot\frac{\ve\cdot\optval}{\nbuck}\leq\rewd(\optset\cap\itemset_j)<(\rnum_j+1)\cdot\frac{\ve\cdot\optval}{%\nbuck}$  
%(that is,
%$\rnum_j:=\floor{\frac{\nbuck\cdot\rewd(\optset\cap\itemset_j)}{\ve\cdot\optval}}$) 
%for all $j\in\dbrack{\nbuck}$. 
(Note that $\rnum_j=0$ if $\rewd(\optset\cap\itemset_j)<\Dt$.) 
%\frac{\ve\optval}{\nbuck}$.)

A key observation is that 
$\sum_{j=0}^\nbuck\rnum_j\leq\frac{\rewd(\optset)}{\Dt}\leq\bigl(1+\frac{1}{\ve}\bigr)\nbuck$, since
$\rewd(\optset)\leq (1+\ve)\optval$.
%Thus, $\{\rnum_j\}_{j\in\dbrack{\nbuck}}$ is a 
Since the number of sequences of $\nbuck+1$ nonnegative integers that sum up to at most
$\bigl(1+\frac{1}{\ve}\bigr)\nbuck$ is at most
$2^{O(\frac{\nbuck}{\ve})}=\bigl(\frac{n}{\ve}\bigr)^{O(\frac{1}{\ve^2})}$
(Claim~\ref{polybnd}), that is, 
polynomially bounded, by enumerating over all such sequences, we may assume that we know
$\rnum_j$ for all $j\in\dbrack{\nbuck}$.

%Let $\Dt:=\frac{\ve\optval}{\nbuck}$.
Since the rewards of all items in $\itemset_j$ is roughly $\thresh_j$, we can set
$\num_j:=\rnum_j\cdot\frac{\Dt}{\thresh_j}$
%\frac{\ve\cdot\optval}{\nbuck\cdot\thresh_j}$ 
(which need not be an integer),
for every $j\in\dbrack{\nbuck}$. The bounds on $\rewd(\optset\cap\itemset_j)$ translate to
the bounds 
$\num_j\leq|\optset\cap\itemset_j|<(1+\ve)\bigl(\num_j+\frac{\Dt}{\thresh_j}\bigr)$.
%note that $\num_j$ need not be an integer.

\vspace*{-1ex}
\paragraph{The algorithm.}
Given the $\{\num_j\}_{j\in\dbrack{\nbuck}}$ estimates, the algorithm is now fairly
immediate.  
For each $j\in\dbrack{\nbuck}$, let $S_j$ be the set of $\ceil{\num_j}$ smallest-size 
items from $\itemset_j$. We return $A:=\bigcup_{j\in\dbrack{\nbuck}}S_j$.

\paragraph{Analysis.} 
The following standard claim justifies our earlier statement about the
polynomial number of candidate $\{\rnum_j\}_{j\in\dbrack{\nbuck}}$ sequences. 
We include the proof in Appendix~\ref{append-knapsack}, for completeness.

\begin{claim} \label{polybnd}
There are at most $(2e)^{\max\{M,k\}}$ sequences of $k$ nonnegative integers that sum to
at most $M$. The same bound applies to the number of non-increasing sequences of $k$
integers chosen from $\dbrack{M}$.
\end{claim}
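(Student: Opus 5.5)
The plan is a stars-and-bars count followed by a standard binomial estimate. For the first statement, we count sequences $(a_1,\ldots,a_k)$ of nonnegative integers with $\sum_i a_i\leq M$. We add a slack variable $a_{k+1}:=M-\sum_{i\le k}a_i\geq 0$; this gives a bijection with sequences of $k+1$ nonnegative integers summing to exactly $M$. By stars and bars there are $\binom{M+k}{k}$ such sequences.

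To bound this coefficient, let $N:=\max\{M,k\}$. Since $\binom{M+k}{k}=\binom{M+k}{M}$, and $M+k\leq 2N$, monotonicity in the top argument gives
\[
\binom{M+k}{k}\leq\binom{2N}{\min\{M,k\}}.
\]
We then apply the estimate $\binom{a}{b}\leq\bigl(\frac{ea}{b}\bigr)^b$. This yields the bound $\bigl(\frac{2eN}{b}\bigr)^b$ with $b=\min\{M,k\}$, which is not immediately of the form $(2e)^N$. The cleaner route is to use $\binom{2N}{N}$ itself: since $M\le N$, $k\le N$, and $\binom{M+k}{k}$ is nondecreasing in both $M$ and $k$, we get
\[
\binom{M+k}{k}\leq\binom{2N}{N}\leq\Bigl(\frac{e\cdot 2N}{N}\Bigr)^N=(2e)^N.
\]
The degenerate cases are harmless: if $N=0$ there is exactly one sequence (the empty one, or all zeros), and the bound reads $1$. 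This settles the first statement.

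For the second statement, we count non-increasing sequences $b_1\geq b_2\geq\cdots\geq b_k$ with entries in $\dbrack{M}$. We map such a sequence to its difference sequence $d_1:=M-b_1$, $d_i:=b_{i-1}-b_i$ for $2\le i\le k$. All $d_i$ are nonnegative integers, and their sum telescopes to $M-b_k\leq M$. The map is injective, since $b$ is recovered from $d$ by partial sums. So the count is at most the first count, and hence at most $(2e)^{\max\{M,k\}}$.

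The only slightly delicate step is monotonicity of $\binom{M+k}{k}$ in each argument, which reduces the estimate to the symmetric case $\binom{2N}{N}$. This follows from Pascal's rule, or combinatorially: enlarging $M$ or $k$ only enlarges the set being counted, since a sequence can be padded with zeros.
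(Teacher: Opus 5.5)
Your proof is correct and follows essentially the same route as the paper: a slack-variable bijection to $(k+1)$-tuples summing to exactly $M$, the count $\binom{M+k}{M}$, and a difference-sequence encoding for the non-increasing case. The only cosmetic difference is that the paper skips your monotonicity detour through $\binom{2N}{N}$: it writes $\binom{M+k}{M}=\binom{M+k}{U}$ with $U=\max\{M,k\}$ and bounds this directly by $\bigl(\frac{e(M+k)}{U}\bigr)^U\le(2e)^U$.
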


\begin{proof}[Finishing up the proof of Theorem~\ref{normknapthm}]
%\begin{proofof}{Theorem~\ref{normknapthm}}
By Claim~\ref{polybnd}, we may assume that %the $\rnum_j$ integers satisfy
$\rnum_j\cdot\Dt\leq\rewd(\optset\cap\itemset_j)<(\rnum_j+1)\Dt$,
%$\rnum_j\cdot\frac{\ve\cdot\optval}{\nbuck}\leq\rewd(\optset\cap\itemset_j)<(\rnum_j+1)\cdot\frac{\ve\cdot\optval}{\nbuck}$,  
and hence $|\optset\cap\itemset_j|\geq\num_j$,
for all $j\in\dbrack{\nbuck}$.
Given this, we argue that, for every $j\in\dbrack{\nbuck}$, we have
(a) $\svec{S_j}^{\down}\leq\svec{\optset\cap\itemset_j}^{\down}$; and
(b) $\rewd(\optset\cap\itemset_j)<(1+\ve)\rewd(S_j)+\Dt$. %\frac{\ve\cdot\optval}{\nbuck}$.

From (a), we obtain that $\svec{A}^{\down}\leq\svec{\optset}^{\down}$ so $A$ is a feasible
\normknap solution. From (b), we obtain that 
$\rewd(A)=\sum_{j\in\dbrack{\nbuck}}\rewd(S_j)
>\frac{\OPT-\ve\cdot\optval}{1+\ve}\geq\frac{1-\ve}{1+\ve}\cdot\OPT\geq(1-\ve)^2\OPT$.

To prove claims (a) and (b), consider an index $j\in\dbrack{\nbuck}$.
Both claims are fairly immediate, given the bounds on $|\optset\cap\itemset_j|$ and
$\rewd(\optset\cap\itemset_j)$. 
Since $S_j$ consists of the $\ceil{\num_j}$ smallest-size items from $\itemset_j$ and 
$|\optset\cap\itemset_j|$ is an integer that is at least $\num_j$, claim (a) follows.
We have 
$\rewd(S_j)\geq\frac{\num_j\thresh_j}{1+\ve}=\rnum_j\cdot\frac{\Dt}{1+\ve}$
since all items in $\itemset_j$ have reward at least $\frac{\thresh_j}{1+\ve}$.
So $\rewd(\optset\cap\itemset_j)<(1+\ve)\rewd(S_j)+\Dt$, and (b)
follows. 

We need to enumerate all possible $\{\rnum_j\}_{j\in\dbrack{\nbuck}}$ sequences and 
$\optval$ values, which takes 
$\bigl(\frac{n}{\ve}\bigr)^{O(1/\ve)}\cdot O\bigl(\frac{\ln n}{\ve}\bigr)$ time.
The analysis above shows that for the right guess, we obtain reward at least
$(1-\ve)^2\OPT$, so we simply return the maximum-reward feasible solution found across all
the guesses; detecting feasibility of a candidate solution can be done using a feasibility
oracle for the norm. 
\end{proof}

\begin{remark} \label{genfun}
Note that we only utilize that $f$ is monotone and symmetric---in concluding that
$\svec{A}^{\down}\leq\svec{\optset}^{\down}$ implies 
$f(\svec{A})\leq f(\svec{\optset})$---and nowhere use the fact that $f$ is convex and
homogeneous. Thus, our guarantee continues to hold more generally, when the budget
constrained is prescribed by a monotone, symmetric function (which need not be convex or
homogeneous).
\end{remark}

\section{Norm-budgeted maximum-weight independent set} \label{mwis}
Recall that in the {\em norm-budgeted maximum-weight independent set} (\normmwis) problem,
we are given an {\em independence system} $\M=([n],\sols)$, that is, $\sols\sse 2^{[n]}$ is
non-empty and is closed under taking subsets: $\es\in\sols$ and 
$A\in\sols, B\sse A\implies B\in\sols$. Elements in $[n]$ have non-nonnegative rewards
$\{\rewd_e\}_{e\in[n]}$ and sizes $\{\sz_e\}_{e\in[n]}$, and we have a monotone, symmetric
norm $f:\R^n\mapsto\R_+$ and budget $B$. The goal is to find a maximum-reward set
$T\in\sols$ satisfying $f(\svec{T})\leq B$, where $\svec{T}_e$ is $\sz_e$ if $e\in T$ and
$0$ otherwise. 

We need some terminology to state our result for a general independence system. 
%we state some terminology and notation 
%Given an independence system $\M$, 
Sets in $\sols$ are often called independent sets.
As is standard, we assume that $\M$ is specified via an independence oracle that
determines whether a given input set is independent.
The rank function $\rk:2^n\mapsto\Z_+$ of $\M$, is defined as 
$\rk(A):=\max\{|B|: B\sse A, B\in\sols\}$, i.e., $\rk(A)$ is the maximum size
of an independent set contained in $A$. 
%Independence systems can be quite general, and 
%
%\begin{defn} \label{indepclass}
%Let $\M=([n],\sols)$ be an independence system.
%\begin{enumerate}[label=(\alph*), topsep=0.1ex, noitemsep, leftmargin=*]
%\item 
We say that $\M$ is a {\em $k$-set system}, %~\cite{KorteV}, 
where $k\geq 1$, if for every set $A\sse[n]$, every {\em maximal} independent set
contained in $A$ has size at least $\rk(A)/k$.
%
%\item We say that $\M$ is a {\em $k$-extendible system}~\cite{Mestre06}, where $k\geq 1$,
%if for every $B\in\sols$, $A\sse B$, and $e\notin A$ such that $A\cup\{e\}\in\sols$, there
%exists $X\sse B-A$ with $|X|\leq k$ such that $B-X\cup\{e\}\in\sols$.
%\end{enumerate}
%\end{defn}
%
%The above clases of independence systems %based on certain structural properties 
%are often considered in algorithmic work on independence systems. For instance, it is
%known that the greedy algorithm yields a $k$-approximation to the maximum-weight
%independent set problem on any $k$-set system.
%It is well known that a $k$-extendible system is also a $k$-set system, and both these
%classes coincide with the class of matroids when $k=1$.
When $k=1$, we obtain the class of matroids. 
The following are two well-known examples of $k$-set systems.%
\footnote{These examples are actually {\em $k$-extendible systems}~\cite{Mestre06},
which form a subclass of $k$-set systems.} %gives various other examples.
%The following facts are well-known $k$-set systems. 
%(1) $k=1$ iff $\M$ is a matroid; 
%It is well-known and not hard to show that: 
\begin{enumerate}[label=$\bullet$, topsep=0.5ex, itemsep=0.1ex, leftmargin=*]
%(1) The collection of 
\item {\bf \boldmath Matchings and $b$-matchings.}. The collection of matchings of a graph
$G=(V,E)$ forms a $2$-set system. More generally, given node-degree bounds
$b:V\mapsto\Z_+$, a $b$-matching is a set $F\sse E$ such that $|\dt(v)\cap F|\leq b_v$ for
all $v\in V$. 
The collection of $b$-matchings of a graph also forms a $2$-set system.
%$2$-extendible system.
%(2) the collection of 

\item {\bf Intersection of matroids.} The collection of common independent sets of $k$
matroids, forms a $k$-set system. %$k$-extendible system.
\end{enumerate}

We design an $O(k)$-approximation algorithm for \normmwis when $\M$ is a $k$-set system,
by suitably extending the insights leading to the PTAS for \normknap.
%our $O(1)$-approximation algorithm for \normknap described in Section~\ref{knapsack}.  
Our approximation guarantee is actually more refined, and depends on a certain structural 
parameter of an independence system. %that we define below.
To motivate and define this parameter, we first discuss why the approach used for
\normknap does not quite work as is, and sketch the changes needed to handle \normmwis.

Recall that the idea underlying our algorithm for \normknap was to (eventually) obtain
estimates $\{\num_j\}_{j\in\dbrack{\nbuck}}$ for $|\optset\cap\itemset_j|$, where
$\itemset_j$ is a reward bucket consisting of all items with rewards roughly $\thresh_j$,
%the $\thresh_j$ values are geometrically decreasing, and
and $\nbuck=O\bigl(\log\frac{n}{\ve}\bigr)$ is the number of reward buckets. These
estimates satisfied the bounds
$\num_j\leq|\optset\cap\itemset_j|\leq(1+\ve)\bigl(\num_j+\frac{\Dt}{\thresh_j}\bigr)$
for all $j\in\dbrack{\nbuck}$, for a suitable value $\Dt$.
(Recall that $\optset\in\sols$ is an optimal solution.)
\begin{comment}
Observe that the $\num_j$ estimates are in a sense quite noisy, in that the gap between
the lower and upper bounds on $|\optset\cap\itemset_j|$ can be large relative to $\num_j$,
and moreover this gap increases (geometrically) with $j$. For norm-budgeted knapsack, this
did not present a problem, because $(\text{gap for bucket $j$})\times\thresh_j$ is
bounded, and this translates to a bounded gap in the reward obtained from this bucket
\end{comment}
We then chose a set $S_j$ of $\num_j$ items from each reward-bucket $\itemset_j$. 
The above bounds on $\num_j$ ensure that $\rewd(S_j)$ is close to
$\rewd(\optset\cap\itemset_j)$; 
also, importantly, we could choose $S_j$ so that
$\svec{S_j}^{\down}\leq\svec{\optset\cap\itemset_j}^{\down}$. 
The latter ensures that the union of the $S_j$-sets is feasible, and the former ensures
that this obtains reward close to $\OPT$.

Once we move to an independence system, even a matroid, %it becomes problematic to 
one cannot necessarily pick $\num_j$ items from $\itemset_j$ (even though $\optset$
does so) because the elements picked from earlier buckets may block us from doing so. 
That is, if $A_{j-1}\sse\itemset_0\cup\ldots\cup\itemset_{j-1}$ denotes the
previously-picked elements (i.e., elements picked from earlier buckets), then it need
not be that one can extend $A_{j-1}$ to an 
independent set by picking $\num_j$ elements from $\itemset_j$. A better option is to
consider the prefix-set $\pset_j:=\itemset_0\cup\ldots\itemset_j$; now,
if $|A_{j-1}|\leq\sum_{q=1}^{j-1}\num_q$ since
$|\optset\cap\pset_j|\geq\sum_{q=1}^j\num_q$, by the exchange property of matroids, one is
assured that one can extend $A_{j-1}$ to an independent set $A_j\sse\pset_j$ by adding
s set $S_j$ of $\num_j$ elements from $\pset_j$. In terms of reward, this is good enough,
as one still obtains that $\rewd(S_j)$ is roughly $\num_j\thresh_j$.  
But in order to argue feasibility of the final solution $\bigcup_{j=0}^{\nbuck}S_j$,
we would like to ``charge'' the sizes of elements in $S_j$ to those of elements in
some set $L_j\sse\optset\cap\pset_j$, where the $L_j$'s are disjoint for different 
indices $j$. 
If we pick $S_j$ of size $\num_j$ by running the greedy algorithm on $\pset_j-A_{j-1}$, we
do obtain that there is some set $L_j\sse\optset\cap\pset_j$ such that 
$\svec{S_j}^{\down}\leq\svec{L_j}^{\down}$, but we cannot ensure that the $L_j$s are
disjoint. %(see Appendix~\ref{tightex}). To ensure disjointness, we 
%need to exclude the $L_q$-sets for $q\leq j-1$, that is, pick 

Disjointness requires that we can find a suitable charging set in
$\optset\cap\pset_j-(L_0\cup\ldots L_{j-1})$. But this is problematic
since $\optset-(L_0\cup\ldots L_{j-1})$ need not be a larger independent set than
$A_{j-1}=S_0\cup\ldots\cup S_{j-1}$, so there need not be $\num_j$ items from
$\optset\cap\pset_j-(L_0\cup\ldots L_{j-1})$ that can be used to extend $A_{j-1}$ to an
independent set. These considerations indicate that we need a compromise: 
for an index $j$, instead of choosing $\num_j$ items from $\pset_j-A_{j_1}$, we choose
fewer items, say $\num_j/2$ items (for a matroid), to add to our solution. Now, if we have
picked charging sets $L_0,\ldots,L_{j-1}\sse\optset\cap\pset_j$ with
$|L_q|=|S_q|=\num_q/2$ for all $q=0,\ldots,j-1$, we still have 
$|\optset\cap\pset_j-(L_0\cup\ldots L_{j-1})|-|A_{j-1}|\geq\num_j$. Thus, we can extend
$A_{j-1}$ by picking $\num_j/2$ items from $\optset\cap\pset_j-(L_0\cup\ldots L_{j-1})$,
and charge the sizes of these items to a suitable charging set 
in $\optset\cap\pset_j-(L_0\cup\ldots L_{j-1})$ of size $\num_j/2$.

For a general independence system $\M$, we introduce a parameter (defined below) 
%that we call the greedy parameter of $\M$ 
that allows us to quantify the largest fraction of $\num_j$
items that one can pick for index $j$ so that the above extension argument works out. 
%It's definition is tailored to ensure that 

\begin{defn} \label{grparam}
Let $\M=([n],\sols)$ be an independence system with rank function $\rk:2^n\mapsto\Z_+$. 
The {\em greedy parameter of $\M$}, denoted $\gr(M)$ is the smallest $\beta\geq 1$ such
that the following holds. 
For any sets $Z\sse Y\sse[n]$ with $Z\in\sols$,
%any independent set $Z\sse Y$, 
any $\ell\in[0,\rk(Y)]$, and any cost vector $c\in\Rp^n$, 
%(with possibly negative entries), 
there exists a set $S\sse Y-Z$ such that: \ 
\begin{enumerate*}[label=(G\arabic*), topsep=0.1ex, noitemsep, leftmargin=*]
\item $Z\cup S\in\sols$; \label{gind} {\quad}
\item $|Z\cup S|\geq\frac{\ell}{\beta}$; \label{gsize} \, and {\quad}
\item $\svec[c]{S}^{\down}\leq\svec[c]{B}^{\down}$ for {\em every} independent set 
$B\sse Y-Z$ with $|B|\geq \ell-|Z|$. \label{gcost}
\end{enumerate*}

We say that an upper bound $\gr(\M)\leq\beta$ is {\em efficiently certifiable} if there is
a polytime algorithm that, for any input $(Y,Z,\ell,c)$, produces an
independent set $S\sse Y-Z$ with the above properties.  
\end{defn}

\begin{comment}
While the definition of greedy parameter may seem a bit technical, as we discuss 
when describing our algorithm for \normmwis, 
%%in the proof of Theorem~\ref{normmwisthm}, the definition of greedy parameter 
this definition nicely distills the key property that we need in order to build a
\normmwis-solution in a manner akin to how this is done for the norm-budgeted knapsack
problem; with this definition in hand, it becomes relatively effortless to extend our
algorithm for \normknap to the \normmwis problem.
\end{comment}

To tie this in with the earlier discussion, suppose we take
$Y=\pset_j$, $Z=A_{j-1}$, $\ell=\sum_{q=0}^j\num_q$, and $c=\sz$. 
%and suppose that $|A_{j-1}|$
Then $\gr(\M)\leq\beta$ ensures that we can extend $A_{j-1}$ to $A_j\sse\pset_j$ with
$|A_j|\geq\ell/\beta$ (due to \ref{gsize}), and \ref{gcost} ensures that even after we
have charged the sizes 
of items in $A_{j-1}$ to some subset $C\sse\optset\cap\pset_j$ of size $|A_{j-1}|$, we
we can still charge the sizes of the items added to a suitable charging set from 
$\optset\cap\pset_j-C$.

We show that if $\gr(\M)\leq\beta$ is an efficiently certifiable bound, then one
can devise an $O(\beta)$-approximation algorithm for \normmwis on $\M$
(Theorem~\ref{normmwisthm}). 
%and show is bounded by $k+1$ for any $k$-set system, and by
%$k$ for any $k$-extendible system.
%We will bound $\gr(\M)$ by devising an efficient algorithm for producing an independent 
%set $S\sse A$ with the desired properties (for any $\bigl(A,\ell,\{c_e\}\bigr)$. 
Complementing this, we show that $\gr(\M)\leq k+1$ is an efficiently-certifiable bound for
$k$-set systems. %and $k$-extendible systems.
%$\gr(\M)\leq k+1$ for a $k$-set system, and $\gr(\M)\leq k$ for a $k$-extendible system 
%(Theorem~\ref{grparambounds}). 

%We prove the following result for a $k$-set system that will allow us to build a solution 
%to \normmwis in a manner akin to how we build a solution in the algorithm for \normknap. 
%that can be seen as a refined statement %a stronger statement 
%on the performance guarantee of the greedy algorithm.

\begin{theorem} \label{greedykset} \label{grparambounds}
Let $\M=([n],\sols)$ be a $k$-set system. Then $\gr(M)\leq k+1$ is an
efficiently-certifiable bound.
%We have the following
%efficiently-certifiably bounds: 
%(a) $\gr(\M)\leq k+1$; and (b) $\gr(\M)\leq k$ if $\M$ is a $k$-extendible system.
\end{theorem}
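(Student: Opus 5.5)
Given $(Y,Z,\ell,c)$, I will build $S$ by the greedy algorithm in increasing cost order. Sort the elements of $Y-Z$ by nondecreasing $c_e$, breaking ties consistently. Scan them in this order, and add $e$ to the current set $S$ whenever $Z\cup S\cup\{e\}\in\sols$. I will stop as soon as $|S|$ reaches $t:=\max\{0,\lceil \ell/(k+1)\rceil-|Z|\}$, or when the scan ends. This is clearly polynomial given an independence oracle, and \ref{gind} holds by construction. The work lies in proving \ref{gsize} and \ref{gcost}.

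For \ref{gsize}, I first handle the case where the greedy stops early, i.e.\ $|S|=t$; then $|Z\cup S|\ge \ell/(k+1)$ directly. Otherwise $Z\cup S$ is a maximal independent subset of $Y$: every rejected element was blocked at the time it was scanned, and by downward closure it remains blocked for the final set. By the $k$-set-system property, $|Z\cup S|\ge \rk(Y)/k\ge \ell/k\ge \ell/(k+1)$.

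For \ref{gcost}, fix an independent $B\sse Y-Z$ with $|B|\ge \ell-|Z|$. Write $S=\{s_1,\dots,s_m\}$ in order of insertion, so $c_{s_1}\le\dots\le c_{s_m}$. It suffices to show that for each $i\le m$, $B$ contains at least $i$ elements of cost at most $c_{s_i}$; equivalently, $|B\cap P_i|\ge i$, where $P_i$ is the set of elements scanned before $s_i$ together with $s_i$. I will argue by contradiction. Suppose $|B\cap P_i|\le i-1$. Consider $U:=Z\cup B\cup\{s_1,\dots,s_{i-1}\}$, restricted to elements that are scanned before $s_i$ or lie in $Z\cup B$, and compare $Z\cup\{s_1,\dots,s_{i-1}\}$ with the independent set $Z\cup\ldots$. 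This comparison fails because $Z\cup B$ need not be independent, so the comparison set has to be chosen with more care.

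The cleaner route is to work inside the set $W:=Z\cup\{s_1,\dots,s_{i-1}\}\cup (B\cap P_i)\cup (B\setminus P_i)$. The set $A_i:=Z\cup\{s_1,\dots,s_{i-1}\}$ is maximal within $Z\cup (P_i\setminus\{s_i\})$. Each element $b\in B\setminus P_i$ is scanned after $s_i$ (or never scanned, since the greedy stopped), so I need a maximality statement to apply to it. I therefore consider $X:=A_i\cup (B\cap P_i)$. Every element of $B\cap P_i$ not in $S$ was rejected, so $A_i$ is maximal independent in $X$, and hence $|A_i|\ge \rk(X)/k\ge |B\cap P_i|/k$. This bound is too weak, since it uses only $B\cap P_i$.

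The main obstacle, then, is how to bring the large set $B$ (of size at least $\ell-|Z|$) into play. My fix is to exploit the truncation at $t$. Since $i\le t$, we have $|A_i|=|Z|+i-1<\ell/(k+1)$. Extend $X$ to $X':=A_i\cup B$ with $\rk(X')\ge|B|\ge \ell-|Z|$, and note that $A_i$ plus the unscanned elements of $B$ form a superset of $A_i$. A maximal independent set $M$ of $X'$ containing $A_i$ has $|M|\ge(\ell-|Z|)/k$. The elements of $M\setminus A_i$ lie in $B$, and if $|B\cap P_i|\le i-1$, at most $i-1$ of them lie in $P_i$. Hence at least $(\ell-|Z|)/k-|A_i|-(i-1)$ elements of $B\setminus P_i$ can be added to $A_i$. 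I then need the cheaper-element exchange to show that some element of $B$ of cost below $c_{s_i}$ was addable when scanned, which contradicts the greedy's choices.

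Balancing $|A_i|+(i-1)\le |Z|+2(i-1)$ against $(\ell-|Z|)/k$ is exactly where the constant $k+1$ should come from, with $t\approx \ell/(k+1)-|Z|$. I expect this inequality bookkeeping, together with the fact that the $s_j$ with $j<i$ may themselves be the elements blocking $B$ (at most $k$ per element in a $k$-set system), to be the delicate step. I would first try to charge the blocked elements of $B\cap P_i$ injectively to $s_1,\dots,s_{i-1}$ using the $k$-set property applied to $Z\cup\{s_1,\dots,s_{i-1}\}\cup(B\cap P_i)$.
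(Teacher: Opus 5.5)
Your algorithm and stopping threshold are the paper's. Your argument for \ref{gsize} is also fine: if the scan runs out, $Z\cup S$ is maximal in $Y$, so $|Z\cup S|\ge \rk(Y)/k\ge \ell/k$. That route is a bit more direct than the paper's. The gap is in \ref{gcost}, and it starts at your reformulation. The condition $\svec[c]{S}^{\down}\le\svec[c]{B}^{\down}$ asks for an injection $\pi:S\to B$ with $c_s\le c_{\pi(s)}$. Equivalently, for every $i$, $B$ must contain at least $m-i+1$ elements of cost \emph{at least} $c_{s_i}$. Your condition, that $B$ has at least $i$ elements of cost \emph{at most} $c_{s_i}$ (i.e.\ $|B\cap P_i|\ge i$), is the reverse: it says $B$ is cheaper than $S$. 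It is neither necessary nor sufficient, and it is false in general. Take the free matroid ($k=1$), $Z=\emptyset$, $Y=[n]$ with $n\ge 3$ and distinct costs, $\ell=2$, and let $B$ be the two most expensive elements. The greedy picks only the cheapest element $s_1$, so $P_1=\{s_1\}$ and $B\cap P_1=\emptyset$, yet \ref{gcost} holds. So no contradiction argument along your lines can succeed. Indeed the ``delicate step'' in your sketch is never closed.

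The missing idea is to charge each greedy pick to a \emph{fresh} element of $B$ at the moment it is made, rather than counting over prefixes $P_i$. Maintain $B'\subseteq B$ with $|B\setminus B'|=|S|$, together with a cost-nondecreasing injection $S\to B\setminus B'$; initially $B'=B$. Suppose $|Z\cup S|<\ell/(k+1)$. Since $B'$ is independent,
\[
\rk(Z\cup S\cup B')\;\ge\;|B'|\;=\;|B|-|S|\;\ge\;\ell-|Z|-|S|\;>\;k\,|Z\cup S|.
\]
By the $k$-set-system property, $Z\cup S$ is therefore not maximal in $Z\cup S\cup B'$, so some $e\in B'\setminus S$ has $Z\cup S\cup\{e\}\in\sols$. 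By downward closure, $e$ was never scanned and rejected, so it is still unscanned. Hence the greedy's next pick $f$ satisfies $c_f\le c_e$. Map $f\mapsto e$ and delete $e$ from $B'$. This is exactly where the constant $k+1$ enters, and it gives \ref{gcost} directly.
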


%We can now state for approximation guarantee for \normmwis precisely.
%This yields the following results.

\begin{theorem} \label{normmwisthm}
Let $\gr(\M)\leq\beta$ be an efficiently-certifiable bound for an independence system
$\M$. One can devise a $(1+\ve)\beta$-approximation algorithm for \normmwis on $\M$
with $\bigl(\frac{n}{\ve}\bigr)^{O(1/\ve^2)}$ running time.
Thus, we obtain
%\begin{enumerate}[label=(\alph*), topsep=0.1ex, noitemsep, leftmargin=*]
%\item 
a $(1+\ve)(k+1)$-approximation algorithm for \normmwis when $\M$ is a $k$-set system.
%\item a $(2+\ve)k$-approximation algorithm when $\M$ is a $k$-extendible system; and
%\end{enumerate}   
\end{theorem}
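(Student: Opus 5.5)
We follow the \normknap template: reward buckets $\itemset_j$, prefix sets $\pset_j:=\itemset_0\cup\cdots\cup\itemset_j$, and guessed estimates $\num_j$ satisfying $\num_j\leq|\optset\cap\itemset_j|<(1+\ve)\bigl(\num_j+\frac{\Dt}{\thresh_j}\bigr)$. These come from the same enumeration over $\{\rnum_j\}$ and $\optval$, justified by Claim~\ref{polybnd}, with the same $\bigl(\frac{n}{\ve}\bigr)^{O(1/\ve^2)}$ cost. Write $N_j:=\sum_{q\leq j}\num_q$, so that $|\optset\cap\pset_j|\geq N_j$.

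\textbf{Algorithm.} Set $A_{-1}=\es$. For $j=0,1,\ldots,\nbuck$, we invoke the certifying algorithm for $\gr(\M)\leq\beta$ on $(Y,Z,\ell,c)=(\pset_j,A_{j-1},N_j,\sz)$. This is valid since $\rk(\pset_j)\geq|\optset\cap\pset_j|\geq N_j$. Let $S_j$ be the returned set and $A_j:=A_{j-1}\cup S_j$; we output $A:=A_{\nbuck}$, which is independent by (G1).

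\textbf{Feasibility.} We show by induction on $j$ that there are disjoint sets $L_0,\ldots,L_j$ with $L_q\sse\optset\cap\pset_q$ and $\svec{S_q}^{\down}\leq\svec{L_q}^{\down}$ for all $q\leq j$. The key fact is $|A_{j-1}|\leq|L_0\cup\cdots\cup L_{j-1}|$. For step $j$, let $C:=L_0\cup\cdots\cup L_{j-1}$ and $B:=(\optset\cap\pset_j)\sm(C\cup A_{j-1})$. Then $B$ is independent and lies in $\pset_j-A_{j-1}$. We need $|B|\geq N_j-|A_{j-1}|$ in order to apply (G3).

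This inequality is the main obstacle: we have $|B|\geq N_j-|C|-|A_{j-1}\cap\optset|$, which double-counts $A_{j-1}$. To fix this, I would choose the charging sets to absorb $\optset$-elements that the algorithm itself picked. Concretely, whenever $e\in A\cap\optset$, we charge $e$ to itself. So we either take $C\supseteq A_{j-1}\cap\optset$, or apply (G3) to $B':=(\optset\cap\pset_j)\sm C$, with the self-charged elements placed into the $L$'s. If this still does not close, the fallback is to modify the invocation, using $\ell=N_j-|A_{j-1}|$-type bookkeeping, or to lose a factor in $\ell$ (for instance $\ell=N_j$ but counting $Z$ only in size). I expect exactly this subtle counting step to be where care is needed. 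Once the $L_j$ are built, the assignment $S_j\mapsto L_j$ with disjoint $L_j$ yields $\svec{A}^{\down}\leq\svec{\optset}^{\down}$. Monotonicity and symmetry of $f$ then give $f(\svec{A})\leq B$.

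\textbf{Reward.} By (G2), $|A_j|\geq N_j/\beta$ for every $j$. Since $A_j\sse\pset_j$ and every element of $\itemset_q$ has reward greater than $\thresh_q/(1+\ve)$, with the $\thresh_q$ geometrically decreasing, an Abel-summation argument gives
\[
\rewd(A)\geq\frac{1}{1+\ve}\sum_j \thresh_j\,|A\cap\itemset_j|=\frac{1}{1+\ve}\sum_j(\thresh_j-\thresh_{j+1})\,|A_j|\geq\frac{1}{(1+\ve)\beta}\sum_j\num_j\thresh_j,
\]
where $\thresh_{\nbuck+1}:=0$. The last quantity is $\frac{1}{(1+\ve)\beta}\sum_j\rnum_j\Dt\geq\frac{\OPT-\ve\optval}{(1+\ve)\beta}$. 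Rescaling $\ve$ then gives the $(1+\ve)\beta$ guarantee, and plugging in Theorem~\ref{greedykset} gives $(1+\ve)(k+1)$ for $k$-set systems.
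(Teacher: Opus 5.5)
Your algorithm is the paper's, but the feasibility argument stops exactly at the step that carries the proof. You never establish $|B|\geq N_j-|A_{j-1}|$, and the self-charging remedy is stated as an intention rather than proved. What has to be shown is the following: from the domination $\svec{S_j}^{\down}\leq\svec{B}^{\down}$ given by (G3), you can pick $L_j\subseteq B$ with $|L_j|=|S_j|$, $\svec{S_j}^{\down}\leq\svec{L_j}^{\down}$, \emph{and} $S_j\cap B\subseteq L_j$.

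This is not automatic. A dominating injection may send some $a\in S_j$ onto an element $e\in S_j\cap B$ while sending $e$ elsewhere. Forcing $e\mapsto e$ then orphans $a$, and re-routing $a$ requires following the chain until it exits into $B-S_j$, checking that sizes only increase along it. That is precisely the paper's Claim~\ref{sortmap}: if $\svec{R}^{\down}\leq\svec{T}^{\down}$, there is a one-to-one $\sigma:R-T\to T-R$ with $\sz_e\leq\sz_{\sigma(e)}$. The paper proves it by a path decomposition. Alternatively, for every $\theta>0$, the number of elements of $R-T$ of size at least $\theta$ equals the count for $R$ minus the count for $R\cap T$, hence is at most the count for $T-R$.

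Your fallbacks do not rescue the argument. Calling the certifier with a smaller $\ell$ such as $N_j-|A_{j-1}|$ only gives $|A_j|\geq (N_j-|A_{j-1}|)/\beta$ from (G2). That destroys the bound $|A_j|\geq N_j/\beta$ your reward analysis needs.

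With that lemma, the cleanest route is the paper's: do not fix disjoint $L_q$'s in advance, but maintain only $\svec{A_j}^{\down}\leq\svec{\optset\cap\pset_j}^{\down}$. At step $j$ take $Z=A_{j-1}$ and $T=\optset\cap\pset_j$, so that $\svec{Z}^{\down}\leq\svec{T}^{\down}$ by induction, and set $L=(Z\cap T)\cup\sigma(Z-T)$. Then $|L|=|Z|$ and $\svec{Z}^{\down}\leq\svec{L}^{\down}$. Moreover $T-L\subseteq T-Z\subseteq\pset_j-Z$ is independent with $|T-L|=|T|-|Z|\geq N_j-|Z|$. So (G3) yields $\svec{S_j}^{\down}\leq\svec{T-L}^{\down}$, and combining the two charges gives the invariant.

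In your own framework, the same lemma applied to $S_j$ versus $B$ lets you choose $L_j\supseteq S_j\cap B$. This maintains the invariant $A_j\cap\optset\subseteq L_0\cup\cdots\cup L_j$. At index $j-1$, that invariant is exactly what makes $B=T\setminus(L_0\cup\cdots\cup L_{j-1})$ disjoint from $A_{j-1}$ with $|B|=|T|-|A_{j-1}|$.

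Your Abel-summation reward bound is a valid and more direct substitute for the paper's fractional assignment (Lemmas~\ref{mwisrewd} and~\ref{mwisasgnmt}), with one correction. The middle equality should be $\geq$: later iterations can add elements of earlier buckets, so $A\cap\pset_j\supseteq A_j$. Summation by parts gives $\sum_j(\thresh_j-\thresh_{j+1})|A\cap\pset_j|$, and since $\thresh_j-\thresh_{j+1}\geq 0$ you may replace $|A\cap\pset_j|$ by $|A_j|$.
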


Observe that the above guarantee for \normmwis strictly generalizes the PTAS for
\normknap, since \normknap is \normmwis on the {\em free matroid} (where every subset is
independent), and it is not hard to see that $\gr(\M)\leq 1$ is an efficiently-certifiable
bound for a free matroid $\M$.
%the greedy parameter for a free matroid is
%$1$ (and this is efficiently certifiable)

Since $b$-matchings form a $2$-set system, we obtain the following corollary.
%$(4+\e)$-approximation algorithm for norm-budgeted matching. 

\begin{corollary} \label{normmatchthm}
There is a $(3+\ve)$-approximation algorithm for %norm-budgeted matching and 
norm-budgeted $b$-matching.
\end{corollary}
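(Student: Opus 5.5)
This corollary follows by combining Theorem~\ref{grparambounds} and Theorem~\ref{normmwisthm} for $k=2$. Norm-budgeted $b$-matching is exactly \normmwis on the independence system $\M=(E,\sols)$, where $\sols$ is the collection of $b$-matchings of $G=(V,E)$. The size-vector of a $b$-matching is its size-weighted characteristic vector, as \normmwis requires. \normmatch is the special case $b\equiv 1$.

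The only structural fact needed is that $\M$ is a $2$-set system. Downward closure is clear, since a subset of a $b$-matching is a $b$-matching. Fix $A\sse E$, a maximal $b$-matching $M\sse A$, and a maximum $b$-matching $M^*\sse A$; we charge $M^*$ to $M$.
\begin{enumerate}[(i), nosep]
\item Each $e=uv\in M^*\sm M$ cannot be added to $M$, by maximality. Hence one of its endpoints, say $u$, is saturated in $M$, meaning $|\dt(u)\cap M|=b_u$.
\item Charge $e$ to such a saturated endpoint. A saturated node $u$ receives at most $|\dt(u)\cap M^*|\leq b_u=|\dt(u)\cap M|$ charges.
\item Summing over saturated nodes gives $|M^*\sm M|\leq\sum_u|\dt(u)\cap M|\leq 2|M|$, since each edge of $M$ has two endpoints.
\item A slightly finer count, charging only edges of $M^*\sm M$ and noting that edges of $M\cap M^*$ also consume capacity, gives $|M^*|\leq 2|M|$. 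This is the standard fact that maximal $b$-matchings are $2$-approximate.
\end{enumerate}
Thus every maximal independent subset of $A$ has size at least $\rk(A)/2$, so $\M$ is a $2$-set system.

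With this in hand, Theorem~\ref{grparambounds} gives the efficiently-certifiable bound $\gr(\M)\leq 3$. The certifying algorithm needs only an independence oracle, and checking whether an edge set is a $b$-matching takes polynomial time. Theorem~\ref{normmwisthm} with $\beta=3$ then yields a $(1+\ve')\cdot 3$-approximation in $\bigl(\frac{n}{\ve'}\bigr)^{O(1/\ve'^2)}$ time. Setting $\ve'=\ve/3$ gives the claimed $(3+\ve)$-approximation, still in polynomial time for fixed $\ve$.

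The only step that needs any care is the charging argument in (iv) establishing the $2$-set-system property for $b$-matchings. Even there, the crude bound from (i)--(iii) already shows that the system is a $k$-set system for a constant $k$; the exact constant $2$ is what yields the factor $3$.
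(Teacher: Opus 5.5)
Your proposal is correct and follows the paper's route: the paper also obtains the corollary by noting that $b$-matchings form a $2$-set system and applying Theorem~\ref{grparambounds} and Theorem~\ref{normmwisthm} with $k=2$ (so $\beta=3$), though it takes the $2$-set-system property as well known rather than proving it. Your refined count in (iv) is valid: a saturated node $u$ receives at most $|\delta(u)\cap(M^*\setminus M)|\leq|\delta(u)\cap(M\setminus M^*)|$ charges, so $|M^*\setminus M|\leq 2|M\setminus M^*|$, and hence $|M^*|\leq 2|M|$.
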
 

\begin{comment}
For matroids, we also design an alternate algorithm that provides a
$(1+\ve)$-approximation in $n^{O(\log\log(n/\ve))}$ time. We discuss this in
Section~\ref{matroid-qptas}.
\end{comment} 

The efficiently-certifiable bound in Theorem~\ref{grparambounds}
%on $\gr(\M)$ when $\M$ is a $k$-set or $k$-extendible system 
is obtained via a greedy algorithm, 
%and %This algorithm will be based on the greedy algorithm, 
%the greedy algorithm for finding an independent set $S\sse A$ with the desired
%properties, 
%this indicates a close connection between the parameter $\gr(\M)$ %is closely-related to 
%and the performance guarantee of the greedy algorithm 
%and how well the greedy algorithm performs for the maximum-weight independent-set (\mwis)
%problem on $\M$. (This 
which is why we call $\gr(\M)$ the greedy-parameter of $\M$.
To avoid detracting the reader, we defer the proof of Theorem~\ref{grparambounds} to the
end of the section, and delve now into the proof of Theorem~\ref{normmwisthm}.

\begin{comment}
It is not hard to infer that the greedy algorithm %for \mwis% 
%\footnote{This algorithm considers elements in decreasing order of weight, and picks a
%maximal prefix of nonnegative-reward elements that is independent.}
yields a $\gr(\M)$-approximation (see Theorem~\ref{greedyapx}), but
Definition~\ref{grparam} requires a stronger property
\end{comment}

\begin{comment}
It is well-known that for a $k$-set system $\M$, the greedy algorithm 
that considers elements in decreasing order of weight, and picks a maximal prefix of
nonnegative-weight elements %(under this sorted order) 
that is independent yields a $k$-approximation to the maximum-weight independent set
(\mwis) problem on $\M$. We capitalize on this, and 
\end{comment}

\subsection*{Proof of Theorem~\ref{normmwisthm}}
%In the sequel, we assume that $\M$ is a $k$-set system.
We focus on the main statement that if $\gr(\M)\leq\beta$ is efficiently certifiable, then
we obtain a $(1+\ve)\beta$-approximation for \normmwis on $\M$. The guarantee for a
$k$-set system %and $k$-extendible systems 
then follows from Theorem~\ref{grparambounds}.
\begin{comment}
To illustrate the main ideas with minimal notational overhead, we describe here a
$4\beta/(1-\ve)$-approximation algorithm. 
Improving the approximation factor to $\bigl(1+O(\ve)\bigr)\beta$ requires a little extra
work, and we defer this to Appendix~\ref{append-improv}.
\end{comment}

%We may assume that $\{e\}\in\sols$ and $f(\svec{e})\leq B$ for every element $e\in[n]$.
%The algorithm and analysis are along the same lines as that for \normknap.
As always, let $\optset\in\sols$ be an optimal solution, $\OPT=\rewd(\optset)$, 
and $\rmax=\max_{e\in[n]}\rewd_e\leq\frac{n}{\ve}$ (due to Theorem~\ref{scaling}) be the
maximum-reward of an element in $\optset$. 
%If $\rmax\geq\frac{\OPT}{4\beta}$, then we can simply return an item with reward
%$\rmax$, and we are done. So suppose otherwise.

%Let $\nopt=|\optset|$. 
As in the proof of Theorem~\ref{normknapthm},   
%
%\begin{enumerate}[label=$\bullet$, topsep=0.1ex, noitemsep, leftmargin=*]
%\item 
for $j\geq 0$, define $\thresh_j:=\frac{\rmax}{(1+\ve)^j}$, %and let
and $\itemset_j:=\bigl\{e\in[n]:\frac{\thresh_j}{1+\ve}<\rewd_e\leq\thresh_j\bigr\}$. 
Let $\nbuck=O\bigl(\frac{\log(n/\ve)}{\ve}\bigr)$ be the number of reward buckets that
cover all items with non-zero reward.
Let $\pset_j:=\itemset_0\cup\ldots\cup\itemset_j$ for all $j\in\dbrack{\nbuck}$.
%Note that $\pset_{\nbuck}$ contains all items with non-zero reward (as rewards are
%integers). 

As in norm-budgeted knapsack, we may assume that we have an estimate
$\optval\leq\OPT\leq(1+\ve)\optval$, and know
$\rnum_j:=\floor{\frac{\rewd(\optset\cap\itemset_j)}{\Dt}}$ for all
$j\in\dbrack{\nbuck}$, where $\Dt=\frac{\ve\cdot\optval}{\nbuck}$.  
So taking $\num_j:=\rnum_j\cdot\frac{\Dt}{\thresh_j}$ for all
$j\in\dbrack{\nbuck}$, we obtain that 
$|\optset\cap\itemset_j|\geq\num_j$ for all $j\in\dbrack{\nbuck}$.
%<(1+\ve)\bigl(\num_j+\frac{\ve\optval}{\nbuck\thresh_j}\bigr)$.
%\end{enumerate}
Define $\pnum_j:=\sum_{q=0}^j\num_q$, which is a lower bound on $|\optset\cap\pset_j|$,
for all $j\in\dbrack{\nbuck}$. 
%clearly  $|\optset\cap\pset_j|\geq\pnum_j$. 

\vspace*{-1ex}
\paragraph{The algorithm.}
Let $\alg^{\gr}$ be the algorithm underlying the efficiently-certifiable bound
$\gr(\M)\leq\beta$. 
Define $A_{-1}:=\emptyset$. 
%We again build a solution by considering each breakpoint $j\in\brset$ (in non-decreasing
%order). 
%We maintain the invariant $|A_{j}|\leq\ceil{\frac{\pnum_{j}}{\beta}}$ for all 
%$j\in\brset$. 
%We iterate over all the breakpoints in $\brset$ (in sorted order), and 
%For a breakpoint $j$, %(considered in increasing order), 
For each $j\in\dbrack{\nbuck}$ (considered in increasing order), 
we do the following. We invoke $\alg^{\gr}$ on the input $(Y,Z,\ell,c)$, where
$Y=\pset_j$, $Z=A_{j-1}$, $\ell=\pnum_j$, and $c_e=\sz_e$ for all $e\in[n]$.
Since $\optset\cap\pset_j\in\sols$ and $|\optset\cap\pset_j|\geq\ell$, we have
$\rk(Y)\geq\ell$. 
%Also, we have $|A_{\prev(j)}|\leq\pnum_{\prev(j)}$ by the invariant, so
%$|A_{\prev(j)}|\leq\ell$.  
So the tuple $(Y,Z,\ell,c)$ is a valid input to $\alg^{\gr}$, and 
$\alg^{\gr}$ will return some set $S_j\sse\pset_j-A_{j-1}$ such that $A_{j-1}\cup S_j\in\sols$ 
and $|A_{j-1}\cup S_j|\geq\pnum_j/\beta$. 
%and %$\svec[c]{S}^{\down}\leq\svec[c]{B-A_{\prev(j)}}^{\down}$ for every independent set 
%$B\sse\pset_j$ with $|B|\geq\ell$.
%Let $S_j\sse S'_j$ be any subset of $S'_j$ such that
%$|A_{j-1}\cap S_j|=\ceil{\pnum_j/\beta}$. We argue in the analyis that this is always
%possible. 
We set $A_j\assign A_{j-1}\cup S_j$.
We return the set $A_{\nbuck}$. 

\paragraph{Analysis.} 
The theorem follows from Lemma~\ref{mwisfeas}, which shows feasibility, and
Lemmas~\ref{mwisrewd} and \ref{mwisasgnmt}, which lower bound the reward obtained.
The following simple claim will be useful.

\begin{claim}  \label{sortmap}
Let $R,T\sse[n]$ be such that $\svec{R}^{\down}\leq\svec{T}^{\down}$. 
There is a one-to-one mapping $\sg:R-T\mapsto T-R$ such that $\sz_i\leq\sz_{\sg(i)}$ for
all $i\in R-T$, and so $\svec{R-T}^{\down}\leq\svec{T-R}^{\down}$. 
\end{claim}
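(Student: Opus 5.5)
The plan is to turn the sorted-vector domination into an explicit matching between ranks. First, remove the common part $R\cap T$ from both sides. Then argue that domination of the sorted vectors survives this removal, and pair the remaining elements in sorted order.

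Write $C:=R\cap T$, $R':=R-T$ and $T':=T-R$. The hypothesis $\svec{R}^{\down}\leq\svec{T}^{\down}$ is equivalent to a counting condition: for every threshold $\tht\geq 0$,
\[
|\{i\in R:\sz_i>\tht\}|\leq|\{i\in T:\sz_i>\tht\}|.
\]
This holds because $\svec{R}^{\down}_k\leq\svec{T}^{\down}_k$ for all $k$ is the same as saying that, for each $\tht$, the number of coordinates of $\svec{R}$ exceeding $\tht$ is at most the number of coordinates of $\svec{T}$ exceeding $\tht$. Zero entries play no role once $\tht\geq 0$. Elements of zero size need separate care; see the last paragraph.

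Both counts contain the same contribution $|\{i\in C:\sz_i>\tht\}|$. Subtracting it gives
\[
|\{i\in R':\sz_i>\tht\}|\leq|\{i\in T':\sz_i>\tht\}| \quad\text{for all } \tht\geq 0,
\]
which is exactly $\svec{R'}^{\down}\leq\svec{T'}^{\down}$.

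To build $\sg$, list $R'$ as $r_1,\dots,r_a$ in non-increasing order of size, and list $T'$ as $t_1,\dots,t_b$ in non-increasing order of size. Set $\sg(r_k):=t_k$; this map is injective by construction. For each $k\leq a$, take $\tht$ just below $\sz_{r_k}$, i.e.\ $\tht<\sz_{r_k}$ with no size lying strictly between $\tht$ and $\sz_{r_k}$. At least $k$ elements of $R'$ have size exceeding $\tht$, so at least $k$ elements of $T'$ do as well. Hence $b\geq k$ and $\sz_{t_k}>\tht$, which gives $\sz_{t_k}\geq\sz_{r_k}$. In particular $\sg$ is well defined on all of $R'$. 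The final inequality $\svec{R-T}^{\down}\leq\svec{T-R}^{\down}$ then follows directly from the existence of a size-nondecreasing injection.

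The main obstacle is elements of size $0$. These are invisible in $\svec{R}$, so the counting argument says nothing about $|R'|\leq|T'|$ for them. If some $r_k\in R'$ has $\sz_{r_k}=0$, the counting at thresholds $\tht\geq 0$ only controls the positive-size elements of $R'$. Such an $r_k$ still needs an image, and when $|R'|>|T'|$ there may be none. So the injection must be read as defined on the positive-size part of $R-T$. Equivalently, we may assume $\sz_e>0$ for all elements, or extend the vectors by treating zeros as distinct coordinates.

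I would add a sentence noting this convention. For the paper's use the only thing that matters is charging sizes, and zero-size elements cost nothing. For positive sizes, the threshold argument above gives $\sg$ cleanly.
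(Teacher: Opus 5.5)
Your proof is correct, but it handles the common part $R\cap T$ differently from the paper. You cancel $R\cap T$ at the level of the threshold counts $|\{e:\sz_e>\tht\}|$, which are additive over disjoint unions, so the domination passes directly to $R-T$ versus $T-R$. You then pair these two sets in sorted order. Your second threshold argument is more than you need: for $\sz_{r_k}>0$, the conclusions $k\leq b$ and $\sz_{t_k}\geq\sz_{r_k}$ are just the $k$-th coordinate of $\svec{R-T}^{\down}\leq\svec{T-R}^{\down}$.

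The paper instead starts from a size-dominating injection $\pi:R\to T$. It builds the digraph with arcs $r_i\to t_{\pi(i)}$ for $i\in R$ and $t_i\to r_i$ for $i\in R\cap T$, in which every in- and out-degree is at most one. It then follows the path from each $r_i$ with $i\in R-T$ to its endpoint, which must be some $t_j$ with $j\in T-R$. Sizes are nondecreasing along arcs, and vertex-disjointness of the paths makes $\sg$ injective.

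Your route is shorter, gives a canonical $\sg$, and establishes $\svec{R-T}^{\down}\leq\svec{T-R}^{\down}$ before and independently of the injection. The paper's shortcutting argument uses only transitivity of the comparison along arcs. So it turns any size-dominating injection, however obtained, into one between the difference sets, and it works unchanged for an arbitrary preorder on elements rather than a total order on real sizes.

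Your zero-size caveat is legitimate, and it applies equally to the paper's proof. Its first step, the existence of an injection $\pi:R\to T$, fails for example when $R=\{e\}$ with $\sz_e=0$ and $T=\es$. So the claim as literally stated needs positive sizes, or $|R|\leq|T|$. This does no harm where the claim is used, in Lemma~\ref{mwisfeas}. An injection defined only on the positive-size elements of $Z-T$ gives $|L|\leq|Z|$, which only helps the cardinality condition there. The relation $\svec{Z}^{\down}\leq\svec{L}^{\down}$ and the final combination step still follow by threshold counting.
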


\begin{proof}
Since $\svec{R}^{\down}\leq\svec{T}^{\down}$, there is a one-to-one function
$\pi:R\mapsto T$ be a permutation such that $\sz_i\leq\sz_{\pi(i)}$ for all $i\in R$.
Consider the following directed bipartite graph $G$ with vertex bipartition 
$\{r_i:i\in R\}\cup\{t_i:i\in T\}$.
We have edges $(r_i,t_{\pi(i)})$ for all $i\in R$, and edges $(t_i,r_i)$ for all 
$i\in R\cap T$. Note that by construction for every edge $(u,v)$ corresponding to some
items $i,j\in\R\cup T$, we have $\sz_i\leq\sz_j$.
Every node in $G$ has in-degree and out-degree at most $1$, so the edges
of $G$ can be partitioned into vertex-disjoint paths and cycles. Due to
vertex-disjointness, each path in this decomposition must be maximal; in particular, it's
start node must have zero in-degree, and its end-node must have zero out-degree and hence
must be a $t_j$-node.
For every $i\in R-T$, $r_i$ has out-degree $1$ and zero in-degree, so it is the start-node of
one of these paths; if $t_j$ is the end-node of this path, then since $t_j$ has zero
out-degree, we have $j\in T-R$. Also, by construction, we have $\sz_i\leq\sz_j$. So we can
define $\sg(i)=j$; doing this for all $i\in R-T$ yields the desired one-to-one mapping.
\end{proof}

\begin{lemma} \label{mwisfeas}
$A_{\nbuck}$ is a feasible \normmwis solution.
\end{lemma}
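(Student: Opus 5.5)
We need to show $A_{\nbuck}\in\sols$ and $f(\svec{A_{\nbuck}})\le B$. Independence is immediate, since each step of the algorithm returns $S_j$ with $A_{j-1}\cup S_j\in\sols$ by \ref{gind}. For the budget, it suffices to show $\svec{A_{\nbuck}}^{\down}\le\svec{\optset}^{\down}$; then monotonicity and symmetry of $f$ give $f(\svec{A_{\nbuck}})\le f(\svec{\optset})\le B$. We prove this by exhibiting disjoint \emph{charging sets} $L_0,L_1,\ldots,L_{\nbuck}\sse\optset$ such that, for each $j$, $L_j\sse\optset\cap\pset_j$, $|L_j|=|S_j|$, and $\svec{S_j}^{\down}\le\svec{L_j}^{\down}$. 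Concatenating the sorted vectors blockwise then yields an injection from $A_{\nbuck}=\bigcup_j S_j$ into $\bigcup_j L_j\sse\optset$ that never decreases size, and hence $\svec{A_{\nbuck}}^{\down}\le\svec{\optset}^{\down}$.

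We build the $L_j$ inductively in increasing order of $j$, maintaining the invariant that $C_{j-1}:=L_0\cup\cdots\cup L_{j-1}\sse\optset\cap\pset_{j-1}$ has $|C_{j-1}|=|A_{j-1}|$. At step $j$, set $B_j:=(\optset\cap\pset_j)\sm C_{j-1}$. This set is independent because $\sols$ is downwards-closed. Its size is
\[
|B_j|\ge|\optset\cap\pset_j|-|A_{j-1}|\ge\pnum_j-|A_{j-1}|=\ell-|Z|.
\]
However, \ref{gcost} requires the comparison set to lie in $Y-Z=\pset_j-A_{j-1}$, and $B_j$ may intersect $A_{j-1}$, since optimal elements could have been picked by the algorithm. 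This is the main obstacle.

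To handle it, we would choose the charging sets so that they absorb any overlap: at each step, elements of $S_j\cap\optset$ are charged to themselves. Concretely, we apply \ref{gcost} to $B_j':=B_j\sm A_{j-1}$, which lies in $Y-Z$. Its size is only guaranteed to be at least $|B_j|-|B_j\cap A_{j-1}|$, so we need the stronger invariant that $C_{j-1}\supseteq A_{j-1}\cap\optset$. That way $B_j\cap A_{j-1}=\es$ automatically and $B_j'=B_j$.

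To maintain this invariant, we pick $L_j$ as follows. Having $\svec{S_j}^{\down}\le\svec{B_j}^{\down}$ (with $|B_j|\ge|S_j|$ after truncating $B_j$ to its $|S_j|$ largest elements, call it $B_j''$), we apply Claim~\ref{sortmap} to $R=S_j$, $T=B_j''$. This gives an injection $S_j-B_j''\to B_j''-S_j$ that does not decrease size. Setting $L_j:=(S_j\cap B_j'')\cup\sg(S_j-B_j'')$ gives $|L_j|=|S_j|$ and $\svec{S_j}^{\down}\le\svec{L_j}^{\down}$, but it may not contain all of $S_j\cap\optset$ (elements of $S_j\cap\optset$ outside $B_j''$). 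We fix this by a swap: any $e\in S_j\cap\optset$ with $e\notin L_j$ lies in $B_j$. We exchange it with an element of $L_j\sm S_j$ of size at least $\sz_e$ (which exists by the domination) while preserving domination, i.e.\ we redo the matching by first matching $S_j\cap B_j$ to itself and then matching the rest greedily by sorted order. Verifying that this greedy matching of $S_j\sm B_j$ into $B_j\sm S_j$ still dominates is the delicate calculation. We expect it to follow because $\svec{S_j}^{\down}\le\svec{B_j}^{\down}$ and removing common elements from both sides preserves domination, which is the content of Claim~\ref{sortmap}. With the invariant maintained, disjointness and $L_j\sse\optset\cap\pset_j$ hold by construction, which completes the argument.
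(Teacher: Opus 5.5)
Your argument is correct once the third paragraph is tidied up, and it organizes the paper's proof somewhat differently. The truncation to $B_j''$ and the swap are unnecessary. Apply Claim~\ref{sortmap} directly with $R=S_j$ and $T=B_j$: this gives an injection $\sg:S_j\sm B_j\to B_j\sm S_j$ with $\sz_e\le\sz_{\sg(e)}$, and you set $L_j:=(S_j\cap B_j)\cup\sg(S_j\sm B_j)$. This is exactly your ``match $S_j\cap B_j$ to itself first,'' and no further calculation is needed.

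Your sentence that every $e\in S_j\cap\optset$ with $e\notin L_j$ lies in $B_j$ is false in general. An element used as a charge at an earlier step lies in $C_{j-1}$ but need not lie in $A_{j-1}$, so the algorithm can select it into $S_j$. It then lies in $S_j\cap\optset$ but not in $B_j$, so it cannot be placed in $L_j\sse B_j$. The invariant survives anyway: since $S_j\sse\pset_j$, we have $S_j\cap\optset\sse B_j\cup C_{j-1}$. Hence $L_j\supseteq S_j\cap B_j$ already gives $C_j\supseteq A_j\cap\optset$.

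The paper keeps only the weaker invariant $\svec{A_j}^{\down}\le\svec{\optset\cap\pset_j}^{\down}$ and re-chooses the charging set at every step. It applies Claim~\ref{sortmap} once to $(Z,T)=(A_{j-1},\optset\cap\pset_j)$ to obtain $L=(Z\cap T)\cup\sg(Z\sm T)$. This $L$ automatically contains $Z\cap T$, so $T\sm L\sse Y\sm Z$, and \ref{gcost} applies to $T\sm L$.

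Your $C_{j-1}$ plays the role of this $L$. Because you fix it once and for all, you need the extra containment invariant $C_{j-1}\supseteq A_{j-1}\cap\optset$ and a second use of Claim~\ref{sortmap} on the increment $(S_j,B_j)$. The paper's route is shorter and avoids exactly the bookkeeping that produced the slip above. Yours builds an explicit size-nondecreasing injection $A_{\nbuck}\to\optset$ incrementally, which literally realizes the ``disjoint charging sets $L_j$'' intuition described before Definition~\ref{grparam}.
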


\begin{proof}
Let $\optset_j:=\optset\cap\pset_j$ for $j\in\dbrack{\nbuck}$.
We prove that $\svec{A_j}^{\down}\leq\svec{\optset_j}^{\down}$ for all
$j\in\dbrack{\nbuck}$, by induction on $j$. Taking $j=\nbuck$, this shows that
$A_{\nbuck}$ is a feasible \normmwis solution.

Recall that in every iteration $j=0,1,\ldots,\nbuck$, we invoke $\alg^{\gr}$ on the input 
$(Y,Z,\ell,c)=(\pset_j,A_{j-1},\pnum_j,\sz)$ %for all $e\in[n]$ 
to obtain set $S_j\sse Y-Z$, and by \ref{gcost}, we have
$\svec{S_j}^{\down}\leq\svec{B}^{\down}$ for all $B\sse Y-Z$, $B\in\sols$ with
$|B|\geq\ell-|Z|$. 

The base case $j=0$ holds since $|\optset_0|\geq\pnum_0$,
and $\svec{S_0}^{\down}\leq\svec{\optset_0}^{\down}$. 
%Since $A_0\sse S'_0$, we also have $\svec{A_0}^{\down}\leq\svec{\optset_0}^{\down}$.
Now let $j\in\dbrack{\nbuck}$, $j\neq 0$. %First, consider \ref{prefsize}.
Let $(Y,Z,\ell,c)=(\pset_j,A_{j-1},\pnum_j,\sz)$, and $T=\optset_j$.
By the induction hypothesis, we have
$\svec{Z}^{\down}\leq\svec{\optset_{j-1}}^{\down}$, and so
$\svec{Z}^{\down}\leq\svec{T}^{\down}$. By Claim~\ref{sortmap}, 
%we have $\svec{R-T}^{\down}\leq\svec{T-R}^{\down}$. 
there is a one-to-one mapping $\sg:Z-T\mapsto T-Z$ such that $\sz_i\leq\sz_{\sg(i)}$ for
all $i\in Z-T$. Now let $L=(Z\cap T)\cup\{\sg(i): i\in Z-T\}$. 
Clearly, we have $L\sse T$, $|L|=|Z|$, and $\svec{Z}^{\down}\leq\svec{L}^{\down}$. 
Also, (i) $T-L\sse T-Z\sse Y-Z$, (ii) $T-L\in\sols$, 
and (iii) $|T-L|=|T|-|Z|\geq\pnum_j-|Z|$. So by \ref{gcost}, the set $S_j$ obtained in
iteration $j$ satisfies $\svec{S_j}^{\down}\leq\svec{T-L}^{\down}$. 
%and hence $\svec{S_j}^{\down}\leq\svec{T-L}^{\down}$. 
Coupled with $\svec{Z}^{\down}\leq\svec{L}^{\down}$, this shows that 
$\svec{Z\cup S_j}^{\down}\leq\svec{T}^{\down}$. 
%that is, $\svec{A_j}^{\down}\leq\svec{\optset\cap\pset_j}^{\down}$.   
This proves the induction step, and hence the lemma. %for invariant \ref{prefsize}.
\end{proof}

To bound $\rewd(A_{\nbuck})$, we argue that there is a fractional assignment 
$x\in\Rp^{A_{\nbuck}\times\dbrack{\nbuck}}$ of elements in $A_{\nbuck}$ to buckets
$j\in\dbrack{\nbuck}$ such that: 
(i) each element is assigned to an extent of at most $1$, i.e., 
$\sum_{j=0}^\nbuck x_{e,j}\leq 1$ for every $e\in A_{\nbuck}$;
and
%(ii) for an element $e$, if  is only assigned to buckets $j$ or higher; 
(ii) each $j\in\dbrack{\nbuck}$ is assigned at least $\num_j/\beta$ elements, and this
assignment is supported on a subset of $A_j$, i.e., 
\begin{equation*}
\sum_{e\in A_{\nbuck}}x_{e,j}\geq\frac{\num_j}{\beta}
\quad \forall j\in\dbrack{\nbuck}, \qquad
x_{e,j}=0\ \ \text{if $e\notin A_j$} \quad \forall e\in A_{\nbuck}, j\in\dbrack{\nbuck}.
\end{equation*} 
%Property (ii) implies that an element $e\in A_j-A_{j-1}$ is only assigned to buckets $j$
%or higher.
We first show that such a fractional assignment implies the desired guarantee on
$\rewd(A_{\nbuck})$.

\begin{lemma} \label{mwisrewd}
If we have a fractional assignment $x$ satisfying (i) and (ii), then
$\rewd(A_{\nbuck})\geq\frac{(1-\ve)^2}{\beta}\cdot\OPT$. 
\end{lemma}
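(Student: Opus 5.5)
The plan is to lower bound $\rewd(A_{\nbuck})$ by charging each element's reward against the buckets it is fractionally assigned to. Property (ii) says $x_{e,j}>0$ only if $e\in A_j$. Since $A_j\sse\pset_j=\itemset_0\cup\ldots\cup\itemset_j$, any such $e$ lies in some bucket $\itemset_q$ with $q\leq j$. Hence $\rewd_e>\thresh_q/(1+\ve)\geq\thresh_j/(1+\ve)$, because the thresholds $\thresh_q$ are non-increasing in $q$.

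Using property (i), $\sum_j x_{e,j}\leq 1$, I would then write
\[
\rewd(A_{\nbuck})\;\geq\;\sum_{e\in A_{\nbuck}}\sum_{j=0}^{\nbuck}x_{e,j}\,\rewd_e\;\geq\;\sum_{j=0}^{\nbuck}\frac{\thresh_j}{1+\ve}\sum_{e\in A_{\nbuck}}x_{e,j}\;\geq\;\sum_{j=0}^{\nbuck}\frac{\thresh_j\,\num_j}{(1+\ve)\beta}.
\]
The last step uses the lower bound in (ii). Substituting $\num_j\thresh_j=\rnum_j\Dt$ turns the right-hand side into $\frac{1}{(1+\ve)\beta}\sum_j\rnum_j\Dt$.

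Next I would relate $\sum_j\rnum_j\Dt$ to $\OPT$. By definition of $\rnum_j$ as a floor, $\rnum_j\Dt>\rewd(\optset\cap\itemset_j)-\Dt$. Summing over the $\nbuck+1$ buckets, which together cover every item of non-zero reward, gives
\[
\sum_j\rnum_j\Dt\geq\OPT-(\nbuck+1)\Dt\approx\OPT-\ve\,\optval\geq(1-\ve)\OPT.
\]
This uses $\optval\leq\OPT$ and the choice $\Dt=\ve\,\optval/\nbuck$. Combining, $\rewd(A_{\nbuck})\geq\frac{1-\ve}{(1+\ve)\beta}\OPT\geq\frac{(1-\ve)^2}{\beta}\OPT$, using $\frac{1}{1+\ve}\geq 1-\ve$.

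The only delicate point is the off-by-one from having $\nbuck+1$ buckets rather than $\nbuck$. This is absorbed exactly as in the proof of Theorem~\ref{normknapthm}: either index the buckets so that the count matches the definition of $\Dt$, or note that $\nbuck$ can be taken large enough that $(\nbuck+1)/\nbuck$ costs only a negligible factor that fits within the $(1-\ve)^2$ slack. The monotonicity observation that $\rewd_e>\thresh_j/(1+\ve)$ for every $e\in A_j$ is the key structural step; everything else is bookkeeping.
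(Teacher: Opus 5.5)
Your proof is correct and follows the paper's argument step for step. You lower-bound $\rewd(A_{\nbuck})$ by the fractional reward $\sum_{e,j}\rewd_e x_{e,j}$, use (ii) to get $\rewd_e\geq\thresh_j/(1+\ve)$ whenever $x_{e,j}>0$ together with $\sum_e x_{e,j}\geq\num_j/\beta$, and compare against $\rewd(\optset\cap\itemset_j)\leq\num_j\thresh_j+\Dt$ summed over the buckets. The $(\nbuck+1)$-versus-$\nbuck$ off-by-one you flag is glossed over in the paper as well, and the cleanest fix is to set $\Dt=\ve\cdot\optval/(\nbuck+1)$.
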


\begin{proof}
Define $\rewd(x):=\sum_{e\in A_{\nbuck}}\sum_{j\in\dbrack{\nbuck}}\rewd_e x_{e,j}$.
Due to (i), we have $\rewd(x)\leq \rewd(A_{\nbuck})$. 
Due to (ii), we have that $x_{e,j}>0$ implies that $e\in A_j$, and so 
$\rewd_e\geq\frac{\thresh_j}{1+\ve}$.
This yields 
\begin{equation}
\rewd(x)\geq
\frac{1}{1+\ve}\cdot\sum_{j=0}^{\nbuck}\thresh_j\cdot\Bigl(\sum_{e\in A_{\nbuck}}x_{e,j}\Bigr)
\geq\frac{1}{\beta(1+\ve)}\cdot\sum_{j=0}^{\nbuck}\num_j\thresh_j.
\label{fracrewd}
\end{equation}
%The total reward of $x$ can be lower bounded by
%$\sum_{j=0}^\nbuck\frac{\num_j}{\beta}\cdot\frac{\thresh_j}{1+\ve}$. 
%Together, these imply 
Also, for all $j\in\dbrack{\nbuck}$, we have
$\rewd(\optset\cap\itemset_j)\leq(\rnum_j+1)\Dt=\num_j\thresh_j+\Dt$.
%\[
%\rewd(\optset\cap\itemset_j)\leq\rnum_j\cdot\frac{\ve\cdot\optval}{\nbuck}+\frac{\ve\cdot\optval}{\nbuck}
%=\num_j\thresh_j+\frac{\ve\cdot\optval}{\nbuck}.
%\]
Adding these inequalities for all $j\in\dbrack{\nbuck}$, and combining this with
$\rewd(A_{\nbuck})\geq\rewd(x)$ and \eqref{fracrewd},
%\geq\frac{1}{\beta(1+\ve)}\cdot\sum_{j\in\dbrack{\nbuck}}\num_j\thresh_j$,
we obtain that $\OPT\leq\beta(1+\ve)\rewd(A_{\nbuck})+\ve\OPT$.
Rearranging, gives 
$\rewd(A_{\nbuck})\geq\frac{1-\ve}{\beta(1+\ve)}\cdot\OPT\geq\frac{(1-\ve)^2}{\beta}\cdot\OPT$. 
\end{proof}

\begin{lemma} \label{mwisasgnmt}
A fractional assignment $x\in\Rp^{A_{\nbuck}\times\dbrack{\nbuck}}$ satisfying (i) and (ii) exists.
\end{lemma}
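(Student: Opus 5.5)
The existence of $x$ is a transportation (bipartite flow) feasibility statement, so I will verify a Hall-type condition. The supply side is the set of elements $e\in A_{\nbuck}$, each with capacity $1$. The demand side is the set of buckets $j\in\dbrack{\nbuck}$, each with demand $\num_j/\beta$. Element $e$ may serve bucket $j$ only if $e\in A_j$. Since $A_{-1}\sse A_0\sse A_1\sse\cdots\sse A_{\nbuck}$ is a nested chain, the eligible supply sets are nested: bucket $j$ can draw from $A_j$, and $A_j$ grows with $j$.

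For nested neighborhoods, Hall's condition (equivalently, max-flow/min-cut for fractional demands) reduces to checking prefixes. Take any set $J$ of buckets and let $j^*=\max J$. Its neighborhood is $A_{j^*}$, and its demand is at most $\sum_{q\le j^*}\num_q/\beta$. So it suffices to show
\[
|A_j|\geq\frac{1}{\beta}\sum_{q=0}^{j}\num_q=\frac{\pnum_j}{\beta}\qquad\text{for all } j\in\dbrack{\nbuck}.
\]
This is exactly property \ref{gsize} of the output of $\alg^{\gr}$ in iteration $j$: there $Z=A_{j-1}$, $\ell=\pnum_j$, and we get $|A_j|=|A_{j-1}\cup S_j|\geq\pnum_j/\beta$. (Validity of the input, namely $\ell\le\rk(Y)$, was already noted in the algorithm description.)

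To avoid invoking a general Hall/flow theorem, I would instead construct $x$ greedily, which is a water-filling argument. Process buckets $j=0,1,\ldots,\nbuck$ in increasing order. Assign bucket $j$ a total of $\num_j/\beta$ units, taken from the residual capacities (one minus the amount already used) of elements of $A_j$, in any order. Before bucket $j$ is processed, the total amount already used is $\sum_{q<j}\num_q/\beta$, and all of it sits on elements of $A_{j-1}\sse A_j$. The residual capacity in $A_j$ is therefore
\[
|A_j|-\frac{\pnum_{j-1}}{\beta}\ \geq\ \frac{\pnum_j-\pnum_{j-1}}{\beta}=\frac{\num_j}{\beta},
\]
so bucket $j$ can be fully served. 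This yields (i), since no element is used beyond capacity $1$; and it yields (ii), since bucket $j$ is supported on $A_j$ and receives $\num_j/\beta$ units.

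The only point needing care is that \ref{gsize} is stated with the non-integral $\ell=\pnum_j$, and that $\num_j$ need not be an integer. A fractional assignment tolerates both, so I expect no real obstacle; the substantive work was already done in guaranteeing \ref{gsize} via Theorem~\ref{grparambounds}.
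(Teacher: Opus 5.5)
Your proposal is correct and is essentially the paper's proof. The paper also builds $x$ bucket by bucket in increasing order of $j$: the mass $\sum_{q<j}\num_q/\beta$ already placed lies on $A_{j-1}\sse A_j$, and $|A_j|\geq\pnum_j/\beta$ by (G2), so the residual capacity of $A_j$ is at least $\num_j/\beta$. Your Hall/transportation preamble is a valid alternative justification, but it is unnecessary once you give the greedy construction.
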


\begin{proof}
We argue by induction on $j$ that there is a partial fractional assignment 
$x\in\Rp^{A_{j}\times\dbrack{j}}$ of elements in $A_j$ to buckets with indices in $\dbrack{j}$ 
%satisfying (i) and (ii). 
%More precisely, there is an assignment 
such that: 
(i') $\sum_{q=0}^j x_{e,q}\leq 1$ for every $e\in A_j$;
and
(ii') %each $j\in\dbrack{\nbuck}$ is assigned at least $\num_j/\beta$ elements, and this
%assignment is supported on a subset of $A_j$, i.e., 
\begin{equation*}
\sum_{e\in A_j}x_{e,q}=\frac{\num_q}{\beta}\quad \forall q\in\dbrack{j}, \qquad
x_{e,q}=0\ \ \text{if $e\notin A_q$} \quad \forall e\in A_j, q\in\dbrack{j}.
\end{equation*}

For the base case, $j=0$, we can easily ensure these properties since
$|A_0|\geq\num_0/\beta$. For the induction step, consider $j>0$, and suppose we
have an assignment $x\in\Rp^{A_{j-1}\times\dbrack{j-1}}$ for index $j-1$ satisfying (i')
and (ii'). Initialize
$x_{e,j}=0$ for all $e\in A_{j-1}$, and $x_{e,q}=0$ for all $e\in A_j-A_{j-1}$,
$q\in\dbrack{j}$. 
Since $|A_j|\geq\pnum_j/\beta$, we have 
$\sum_{e\in A_j}\bigl(1-\sum_{q=0}^{j-1}x_{e,q}\bigr)=|A_j|-\bigl(\sum_{q=0}^{j-1}\num_q\bigr)/\beta\geq\num_j/\beta$,
where the first equality is due to (i').
So we can find some $y\in\Rp^{A_j}$ with $y_{e}\leq 1-\sum_{q=0}^{j-1}x_{e,q}$ for all
$e\in A_j$ such that 
$\sum_{e\in A_j}y_e=\num_j/\beta$. Setting $x_{e,j}=y_e$ for all $e\in A_j$ yields the
desired partial fractional assignment for index $j$.
\end{proof}

\begin{proofof}{Theorem~\ref{grparambounds}}
Consider any $Y\sse[n]$, independent set $Z\sse Y$,
$\ell\in[0,\rk(Y)]$, and cost vector $c\in\R^n$.
%$\{c_e\}_{e\in[n]}$. 
To obtain the efficiently-certifiable bound $\gr(\M)\leq k+1$, recall from
Definition~\ref{grparam} that we need to efficiently compute $S\sse Y-Z$
such that 
\ref{gind} $Z\cup S\in\sols$,\ 
\ref{gsize} $|Z\cup S|\geq\frac{\ell}{k+1}$, and 
\ref{gcost} $\svec[c]{S}^{\down}\leq\svec[c]{B}^{\down}$ for every independent set 
$B\sse Y-Z$ with $|B|\geq\ell-|Z|$.

%If $\ell=0$, we can simply take $S=\es$, so suppose $\ell\geq 1$.
If $|Z|\geq\ell/(k+1)$, we can simply take $S=\es$, so suppose this is not the case.
We obtain $S$ by running the greedy algorithm on $Y-Z$, where we consider elements in
non-decreasing order of cost. We start with $S\assign\es$. We
consider elements in $Y$ in non-decreasing order of cost (breaking ties arbitrarily) and
keep adding elements to $S$ as long as this maintains $Z\cup S\in\sols$. 
%If $\M$ is $k$-extendible, we continue until $|S|\geq\ell/k$; otherwise, 
We continue until $|Z\cup S|\geq\ell/(k+1)$. 
%\beta, where $\beta=k$ for a $k$-extendible system,
%and $\beta=k+1$ otherwise. 
We argue that this termination condition is reached, so the algorithm is well-defined
and hence, $S$ satisfies \ref{gind} and \ref{gsize}.
%hence, $S\sse A$, $S\in\sols$ and part (b) holds. %We also argue that $S$ satisfies (i).

Fix some $B\sse Y-Z$, $B\in\sols$ with $|B|\geq\ell-|Z|$. 
\begin{comment}
We argue that as long as
$|S|<\frac{\ell}{k+1}$, in every iteration of the above algorithm, 
there is a distinct element $e\in B-S$ that $S\cup\{e\}\in\sols$. For a $k$-extendible
system, we argue that this holds as long as $|S|<\ell/k$. This will prove (b), and since
the greedy algorithm always chooses the least-cost element to add to $S$ among all
candidate elements that maintain independence, this will also prove (a).
\end{comment}
%
%First, suppose that we only know that $\M$ is a $k$-set system. 
We argue inductively that we can maintain a set $B'\sse B\cup S$ such that: 
%(i) $|B-B'|$ equal to the size of the current-set $S$; 
(i) $\svec[c]{S}^{\down}\leq\svec[c]{B-B'}^{\down}$; and 
(ii) $Z\cup S\cup\{e\}\in\sols$ for some $e\in B'-S$ as long as 
$|Z\cup S|<\frac{\ell}{k+1}$. 
%and as long as $|S|<\frac{\ell}{k}$ if $\M$ is a $k$-extendible system. 
Property (ii) shows that the algorithm is well-defined and so \ref{gind}, \ref{gsize} hold;
property (i) shows that \ref{gcost} holds.

%The construction of $B'$ will differ based on whether $\M$ is a $k$-set system, or a
%$k$-extendible system. Suppose first that we only know that $\M$ is a $k$-set system.
We will also ensure: (*) $|B-B'|=|S|$.
At the start of the first iteration, when $S=\es$, we take $B'=B$. Clearly, this satisfies
(i) and (*). Property (ii) holds because $|Z\cup S|=|Z|<\frac{\ell}{k+1}$ implies that
$|Z|<k(\ell-|Z|)\leq k|B'|\leq k\rk(B'\cup Z)$, and so since $\M$ is a $k$-set system, $Z$
cannot be a maximal independent set contained in $B'\cup Z$; therefore, there is some
$e\in (B'\cup Z)-Z=B'-Z$ such that $Z\cup\{e\}\in\sols$.
 
%clearly, there is some $e\in B'$ such that $\{e\}\in\sols$. 
Suppose inductively that (i), (ii), (*) hold at the beginning of some iteration for which
$|Z\cup S|<\frac{\ell}{k+1}$. Let $S_1$ denote $S$ at the start of the iteration.
So there exists $e\in B'-S_1$ such that
$Z\cup S_1\cup\{e\}\in\sols$. (Note that $e\notin Z$.)
So in this iteration, greedy adds some $f\in Y-(S_1\cup Z)$ to $S_1$ with
$c_f\leq c_e$ to obtain the $S$-set at the end of the iteration. 
Take $B''=B'-\{e\}$. Since $\svec[c]{S_1}^{\down}\leq\svec[c]{B-B'}^{\down}$ and 
$c_f\leq c_e$, we obtain that $\svec[c]{S}^{\down}\leq\svec[c]{B-B''}^{\down}$. 
%by construction. 
Hence, (i) and (*) hold (with $B'=B''$) at the end of the iteration.
%If $f\in B$, take $B''=B'\cup\{f\}$, otherwise, take $B''=B'\cup\{e\}$. 
If $|Z\cup S|<\frac{\ell}{k+1}$, then %consider the set $B''\cup S$.
%We have 
$\rk(B''\cup Z\cup S)\geq |B''|=|B|-|S|\geq\ell-|Z|-|S|>k(|Z\cup S|)$. 
So since $\M$ is a $k$-set system, it cannot be
that $Z\cup S$ is a maximal independent set contained in $B''\cup Z\cup S$; that is, there
exists some $e\in (B''\cup Z\cup S)-(Z\cup S)=B''-S$ such that $Z\cup
S\cup\{e\}\in\sols$. This shows that (ii) holds at 
the end of the iteration. Thus, we obtain the efficiently-certifiable bound 
$\gr(\M)\leq k+1$. %proving part (a).
\end{proofof}

\section{\boldmath A reduction for norm-budgeted \sap and norm-budgeted \sepfl} 
\label{reduction}
In this section, we describe a very useful general reduction applicable to
\normsap and \normsepfl (and hence \normsched, \normkfl), that will allow us to reduce the
task of developing an 
approximation algorithm for these problems to that of obtaining a {\em bicriteria}
approximation guarantee for the problem, where we may violate the norm budget by a bounded
factor, and an approximation guarantee for a norm-budgeted bipartite
matching problem. Since we have an $O(1)$-approximation algorithm for \normmatch, this
will imply that to obtain an $O(1)$-factor for any of these problems 
(i.e., norm-budgeted \{\maxgap, \sap, \kfl, \sepfl\}),
%(i.e., \normsched, \normsap, \normsepfl), 
it suffices to obtain a bicriteria
$(\rho,\gm)$-approximation with $\rho,\gm=O(1)$, for the problem, i.e., return a solution
with reward at least $\OPT/\rho$, which may violate the budget by at most a $\gm$-factor.
(Complementing this, in Section~\ref{sepfl}, we show how to obtain such bicriteria
guarantees.)  

This reduction turns out to be extremely useful, because allowing for bicriteria
guarantees considerably frees our hand and, in particular, enables one to leverage 
the machinery developed for tackling minimum-norm optimization
problems~\cite{ChakrabartyS19a,IbrahimpurS21}, such as estimating suitable features of
the size-vector of an optimal solution (e.g., the $\ell$-th largest components), coming up 
with a solution with roughly similar features.  

Although \normsepfl contains \normsap %and \normsched 
as a special case, we state our
reduction separately for \normsepfl and \normsap because the norm-budgeted
matching problem that we need to solve is slightly different in these two settings 
%(though both admit a $(3+\ve)$-approximation algorithm). 
%due to Theorem~\ref{normmwisthm} and Corollary~\ref{normmatchthm} 
%and (ii) the reduction is slightly simpler to state for \normsap.

\begin{theorem} \label{lbredn} \label{sapredn}
Given a bicriteria $(\rho,\gm)$ approximation algorithm $\alg^{\sap}$ for \normsap, and
an $\al$-approximation algorithm $\alg^{\match}$ for \normmatch on bipartite graphs, one
can obtain a $\beta$-approximation algorithm for \normsap, where
$\beta=\min\bigl\{\gm(\rho+\al),\,\ceil{\gm}(\rho+\al)-\al\bigr\}$.
%approximation algorithm for \normsap, where $\gm'=\ceil{\gm}$. 
\end{theorem}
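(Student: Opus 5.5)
The plan is to combine the two black boxes and output the better of two candidate solutions. The first is a feasible solution extracted from the bicriteria solution by splitting it along the load axis. The second is the \normmatch solution returned by $\alg^{\match}$ on the bipartite (job, machine) graph with the same rewards, sizes, norm $f$ and budget $B$. The charging argument will aim to show that $\OPT$ is at most $\gm\rho$ times the first candidate plus $\gm\al$ times the second (with the refined constant $\ceil{\gm}(\rho+\al)-\al$ coming from integer bookkeeping), so that the better of the two is within the stated factor $\beta$ of $\OPT$.

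\textbf{Step 1: the matching-type solutions.} Any assignment placing at most one job per machine, with each job $j$ on a machine $i$ for which $\{j\}\in\sols_i$ and whose load vector has $f$-norm at most $B$, is both a feasible \normsap solution and a feasible bipartite \normmatch solution on the graph with edges $\{ij:\{j\}\in\sols_i\}$. So whenever I exhibit such a one-job-per-machine set $C$ that is norm-feasible, $\alg^{\match}$ returns reward at least $\rewd(C)/\al$. Moreover, its output is itself a feasible \normsap solution, since $\M_i$ is downwards-closed and contains the relevant singletons.

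\textbf{Step 2: splitting the bicriteria solution.} Run $\alg^{\sap}$ to obtain $\sg:S\to[m]$ with $\rewd(S)\ge\OPT/\rho$ and load vector $\vec{\load}$ satisfying $f(\vec{\load})\le\gm B$. On each machine $i$, lay its jobs out consecutively on the interval $[0,\load_i]$ and cut this interval at the points $t\load_i/\gm$, for $t=1,\dots,\ceil{\gm}-1$. A job lying entirely inside the $t$-th piece goes to group $G_t$. A job straddling a cut point goes to a crossing set $C_t$, of which there is at most one per machine per cut. Each $G_t$ has load vector at most $\vec{\load}/\gm$ coordinatewise, so by monotonicity and homogeneity $f(\text{load of }G_t)\le B$. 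Each $G_t$ is also independent on every machine, since it is a subset of $\sg^{-1}(i)$. Hence the best $G_t$ has reward at least $(\rewd(S)-\sum_t\rewd(C_t))/\gm$. (For non-integral $\gm$ the last piece is shorter, and I would use this to gain the $\ceil{\gm}$ variant of the bound.)

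\textbf{Step 3, the expected main obstacle: handling the crossing sets.} A crossing job can be huge, comparable to all of $\load_i$, so $C_t$ need not be norm-feasible, and I cannot directly feed it to Step 1. What I would try first is to choose the cut offsets so that each crossing job is not a full straddler but is charged to the piece containing its start. A job $j$ with $p_{ij}>\load_i/\gm$ then occupies its own piece. The load vector of any single crossing set is at most $\vec{\load}$, so by convexity the $C_t$ are feasible only after scaling by $1/\gm$, which is not directly available for single jobs. The fallback is to charge crossing jobs instead to $\optset$: take the most valuable job of $\optset$ on each machine as a one-job-per-machine set $C^*$. This set is feasible because its load vector is at most that of $\optset$, so Step 1 gives $\rewd(C^*)\le\al\cdot$(matching output). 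I would then run the splitting argument on $\optset-C^*$ together with the bicriteria guarantee, so that the leftover straddling reward is paid for by the matching term. This accounting, which determines exactly where the $\gm\al$ versus $(\ceil{\gm}-1)\al$ terms appear, is where I expect the real work to lie. Once it is settled, taking the better of the best $G_t$ and the matching output and rearranging gives the claimed $\beta$.
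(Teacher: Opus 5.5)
There is a genuine gap, and it is in the algorithm itself, not only in the accounting you postpone to Step~3. You call $\alg^{\sap}$ with budget $B$ and then try to carve norm-feasible pieces out of its output $S$. The straddling jobs of $S$ have a load vector dominated only by $\vec{\load}$, so their norm can be as large as $\gm B$. They may carry almost all of $\rewd(S)$, and no feasible solution you build accounts for them. Charging to $\optset$ cannot repair this, because the lost reward sits in the algorithm's output, not in $\optset$.

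Concretely, take one machine, $f=\ell_\infty$, $B=1$, $\gm=2$, $\rho=\al=1$. Let $\optset$ consist of $N$ jobs of size $1/N$ and reward $1$, and add one job of size $2$ and reward $N$. Returning only the big job is a legitimate $(1,2)$-bicriteria answer. It straddles your single cut (and under your alternative cutting rule it forms a piece of load $2>B$), so every $G_t$ is empty. The matching output has reward $1$, while $\OPT=N$ and $\beta=3$.

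\textbf{Missing idea.} Reduce the budget \emph{before} calling the bicriteria algorithm, and do all the splitting on the optimal solution, in the analysis only.
\begin{itemize}
\item Run $\alg^{\sap}$ on the instance $\mathcal{I}'$ with budget $B/\gm$. Its output has norm at most $\gm\cdot B/\gm=B$, so it is feasible with no post-processing.
\item For integral $\gm$, apply Claim~\ref{jobsplit} with $k=\gm$ to each $\jopt_i$. The $\gm$ odd pieces have load at most $\lvecopt_i/\gm$, so by homogeneity (the only place it is needed) they are feasible solutions of $\mathcal{I}'$.
\item The $\gm-1$ even pieces have at most one job per machine and load dominated by $\lvecopt$, so they are feasible for the matching instance $\mathcal{I}''$. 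This is your observation about $C^*$, applied to all of $\optset$.
\item Hence $\OPT\le\gm\,\OPT_{\mathcal{I}'}+(\gm-1)\,\OPT_{\mathcal{I}''}$, and the better of the two outputs earns at least $\OPT/(\gm\rho+(\gm-1)\al)$.
\end{itemize}

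For general $\gm$ this immediately gives the bound $\ceil{\gm}(\rho+\al)-\al$, since a $(\rho,\gm)$-algorithm is also a $(\rho,\ceil{\gm})$-algorithm. So it is the other bound, $\gm(\rho+\al)$, that needs an additional idea when $\gm$ is fractional, not the $\ceil{\gm}$ one as you planned. Write $\gm=a/b$, concatenate $b$ copies of $\jopt_i$, and apply Claim~\ref{jobsplit} with $k=a$. The light pieces have load at most $\lvecopt_i/\gm\le\lvecopt_i$, so each consists of distinct jobs. This gives $b\cdot\OPT\le a\,\OPT_{\mathcal{I}'}+(a-1)\,\OPT_{\mathcal{I}''}$, which yields the factor $\gm(\rho+\al)$. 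Irrational $\gm$ follows by a limit argument.
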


For \normsepfl, we need to solve a variant of norm-budgeted matching where we seek a
matching of size at most $k$ (satisfying the norm-budget constraint); we call this
\normkmatch. 
Note that \normkmatch on a bipartite graph can be cast as \normmwis on a $2$-set
system $\M$, because the independence system encoding the degree bounds for vertices on
one side of the bipartite graph {\em and} the cardinality constraint of $k$ is still a
matroid. So just as with \normmatch (Corollary~\ref{normmatchthm}), there is a
$(3+\ve)$-approximation algorithm for \normkmatch. 

\begin{theorem} \label{sepflredn}
Given a bicriteria $(\rho,\gm)$ approximation algorithm $\alg^{\sepfl}$ for \normsepfl,
and an $\al$-approximation algorithm $\alg^{\kmatch}$ for \normkmatch on bipartite graphs,
one can obtain a $\beta$-approximation algorithm for \normsepfl, where
$\beta=\min\bigl\{\gm(\rho+\al),\,\ceil{\gm}(\rho+\al)-\al\bigr\}$.
%a $\bigl(\gm'\rho+(\gm'-1)\al\bigr)$-approximation algorithm for \normsepfl, where
%$\gm'=\ceil{\gm}$. 
\end{theorem}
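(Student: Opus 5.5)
The plan is to apply the bicriteria algorithm to a \emph{shrunken} budget, and to charge $\OPT$ against two quantities. The first is the optimum of that shrunken instance. The second is the optimum of a \normkmatch instance, obtained by cutting an optimal \normsepfl solution into pieces. Concretely, the algorithm runs $\alg^{\sepfl}$ on the same instance with budget $B/\gm$. This returns a solution whose load vector has $f$-norm at most $\gm\cdot B/\gm=B$, so it is feasible for the original instance. Its reward is at least $\OPT(B/\gm)/\rho$, where $\OPT(W)$ denotes the optimum with budget $W$. The algorithm also runs $\alg^{\kmatch}$ on the bipartite graph $G=(\C\cup\F,E)$ with the original norm $f$ and budget $B$. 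A $k$-matching is a \normsepfl solution with one client per open facility; this uses that singletons are independent in each $\M_i$, which we assume as in the preprocessing. Finally, the algorithm returns the better of the two solutions. It then suffices to show
\[
\OPT\ \le\ \beta'\cdot\OPT(B/\gm)\ +\ (\beta'-1)\cdot\OPT_{\kmatch}
\]
with $\beta'=\ceil{\gm}$ (and a fractional analogue with $\beta'=\gm$). Combined with $\OPT(B/\gm)\le\rho\cdot\mathrm{ALG}_1$ and $\OPT_{\kmatch}\le\al\cdot\mathrm{ALG}_2$, this gives $\OPT\le(\beta'\rho+(\beta'-1)\al)\max\{\mathrm{ALG}_1,\mathrm{ALG}_2\}$. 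For $\beta'=\ceil{\gm}$ the factor is $\ceil{\gm}(\rho+\al)-\al$.

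To prove the decomposition inequality, fix an optimal solution with open facilities $F$ ($|F|\le k$), client sets $C_i\in\sols_i$, and load vector $L^*$ with $f(L^*)\le B$. Let $g=\ceil{\gm}$. For each $i\in F$, lay the clients of $C_i$ in an arbitrary order as consecutive segments of lengths $c_{ij}$ covering $[0,L^*_i]$, and place cut points at $tL^*_i/g$ for $t=1,\dots,g-1$. A client whose open segment contains cut point $t$ is a \emph{crossing} client for cut $t$; each cut has at most one crossing client per facility. Every non-crossing client lies in exactly one slice. Define piece $P_t$ ($t=1,\dots,g$) as the non-crossing clients in slice $t$ across all facilities, and define $X_t$ ($t=1,\dots,g-1$) as the crossing clients at cut $t$.

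Then $P_t$ is a \normsepfl solution: each of its per-facility sets is a subset of $C_i$, so it is independent in $\M_i$, and it opens at most $k$ facilities. Its load vector is at most $L^*/g$ coordinatewise, so by monotonicity and homogeneity its norm is at most $B/g\le B/\gm$. Hence $\rewd(P_t)\le\OPT(B/\gm)$. Each $X_t$ is a $k$-matching whose size vector is coordinatewise at most $L^*$, since each crossing client's cost is at most $L^*_i$. By monotonicity and symmetry, $f(\svec{X_t})\le f(L^*)\le B$, so $\rewd(X_t)\le\OPT_{\kmatch}$. Note that a crossing client may contribute to several cuts, so the $X_t$ need not be disjoint, but this only helps the covering. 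Since every client of the optimum lies in some $P_t$ or some $X_t$, summing gives the inequality with $\beta'=g$.

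The main obstacle is obtaining the sharper bound $\gm(\rho+\al)$ when $\gm$ is not an integer. For this I would use a randomized (or averaged) shift of the cut points. Place cuts at $(\theta+t)L^*_i/\gm$ for $\theta$ uniform in $[0,1)$, using the same $\theta$ for all facilities. Each facility interval then contains at most $\ceil{\gm}$ slices, each of load at most $L^*_i/\gm$. Every full or partial slice again gives a piece of norm at most $B/\gm$.

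The key accounting step is to show that the expected number of cuts falling inside $[0,L^*_i]$ is exactly $\gm$. From this I would derive that $\E{\text{number of slice-pieces}}\cdot\OPT(B/\gm)+\E{\text{number of cut sets}}\cdot\OPT_{\kmatch}$ covers $\OPT$. Here the expected number of slice-pieces is about $\gm$ and the expected number of cut sets is about $\gm$, up to the additive $-\al$ bookkeeping; with the cuts counted as $\gm$ in expectation this yields the factor $\gm(\rho+\al)$. I expect that getting these expectations exactly right, with the same $\theta$ for all facilities so that the cut sets remain single matchings, is the fiddly part. Finally, taking the better of the two bounds gives $\beta$.
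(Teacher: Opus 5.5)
Your argument for the $\ceil{\gm}(\rho+\al)-\al$ bound is correct and is essentially the paper's: your fixed cut points at $tL^*_i/g$ play the role of the prefix split in Claim~\ref{jobsplit}. The gap is the other term, $\gm(\rho+\al)$. Write $\gm=n+\phi$ with $n\in\Z$, $\phi\in(0,1)$; this term is strictly smaller exactly when $(1-\phi)\rho>\phi\al$, and your random-shift plan cannot produce it. With cuts at $(\theta+t)L^*_i/\gm$ inside $[0,L^*_i]$ there are $N=\ceil{\gm-\theta}$ cuts. So the number of cut sets does average to $\gm$, but there are $N+1$ slices, which is $\gm+1$ on average. (Your claim of at most $\ceil{\gm}$ slices fails when $\theta<\phi$.) Suppose you merge the two end slices whenever they fit together, which happens exactly when $\theta<\phi$. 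Every $\theta$ still needs $n+1=\ceil{\gm}$ slice-pieces. This is intrinsic: one copy of $[0,L^*_i]$ split into non-wrapping pieces of cost at most $L^*_i/\gm$ needs $\ceil{\gm}$ pieces, and with many tiny clients all of them are needed. Averaging such single-cover decompositions over $\theta$ therefore never brings the coefficient of $\OPT(B/\gm)$ below $\ceil{\gm}$, and the best $\theta$ merely reproduces your first bound.

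The missing idea is to let a budget-$B/\gm$ piece wrap around from the end of a facility's client ordering back to its start, so that clients are covered several times. Write $\gm=a/b$ with integers $a>b\geq 1$. For each open $i$, concatenate $b$ copies of your ordering of $C_i$ (total cost $bL^*_i$) and cut at the $a-1$ points $t\,bL^*_i/a$. This gives $a$ windows of cost at most $L^*_i/\gm$, and $a-1$ separators with at most one client each. A contiguous window containing some client twice would contain a full rotation of $C_i$, so its cost would be at least $L^*_i>L^*_i/\gm$. Hence every window is a genuine subset of $C_i$ and is independent in $\M_i$ (if $L^*_i=0$, put all of $C_i$ in one window). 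Assemble the $t$-th window, resp.\ the $t$-th separator, over all $i\in F$. This yields $a$ feasible solutions for budget $B/\gm$ and $a-1$ feasible $k$-matchings, which together cover every client at least $b$ times. Thus $b\cdot\OPT\le a\,\OPT(B/\gm)+(a-1)\,\OPT_{k\mathsf{Match}}$, so $\OPT\le\gm\bigl(\OPT(B/\gm)+\OPT_{k\mathsf{Match}}\bigr)$. Combined with $\max\{\mathrm{ALG}_1,\mathrm{ALG}_2\}\ge(\rho\,\mathrm{ALG}_1+\al\,\mathrm{ALG}_2)/(\rho+\al)$, this gives the factor $\gm(\rho+\al)$. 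For irrational $\gm$, apply the same decomposition with rationals $a/b$ decreasing to $\gm$: windows of cost at most $bL^*_i/a\le L^*_i/\gm$ stay feasible for budget $B/\gm$, so the algorithm itself is unchanged. No common shift across facilities is needed, since assembling the $t$-th separator over facilities always gives a $k$-matching.
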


The proofs of Theorems~\ref{lbredn} and~\ref{sepflredn} utilize the following simple
claim. (Recall that for a vector $v$, and a subset $S$ of coordinates, $v(S)$ denotes 
$\sum_{j\in S}v_j$.) 

\begin{claim} \label{jobsplit}
Let $a\in\R_+^N$, $L=a([N])$, and $k\geq 1$ be an
integer. We can obtain $(2k-1)$ sets $T_1,\ldots,T_{2k-1}$ whose union is $[N]$
such that: (a) $a(T_{2\ell-1})\leq L/k$ for all $\ell\in[k]$, (b) $|T_{2\ell}|\leq 1$ for
all $\ell\in[k-1]$, and (c) $(T_1\cup\ldots\cup T_\ell)=[j]$, for some $j\in[N]$, for all
$\ell\in[2k-1]$. 
\end{claim}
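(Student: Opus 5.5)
We construct the sets with a single greedy left-to-right sweep over the indices $1,2,\ldots,N$ in their given order. This order-preservation is exactly what property (c) asks for: every union $T_1\cup\cdots\cup T_\ell$ must be an initial segment $[j]$. We build blocks that alternate between an ``odd'' set of $a$-mass at most $L/k$ and an ``even'' singleton that absorbs the index that would overflow the current odd set.

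The construction runs as follows. Set $j_0=0$. For $\ell=1,\ldots,k-1$, let $T_{2\ell-1}=\{j_{\ell-1}+1,\ldots,p_\ell\}$, where $p_\ell$ is the largest index $p\ge j_{\ell-1}$ with $a(\{j_{\ell-1}+1,\ldots,p\})\le L/k$. The set may be empty, and $p_\ell$ is capped at $N$. Next, let $T_{2\ell}=\{p_\ell+1\}$ if $p_\ell<N$, and $T_{2\ell}=\emptyset$ otherwise. Put $j_\ell=\min\{p_\ell+1,N\}$. Finally, let $T_{2k-1}=\{j_{k-1}+1,\ldots,N\}$. Properties (b) and (c) hold by construction, and so does $a(T_{2\ell-1})\le L/k$ for $\ell\le k-1$. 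Some indices $j$ in (c) may be $0$ or stay put when sets are empty; we handle this as a degenerate case, or we can treat $[0]=\es$.

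The main step is to show $a(T_{2k-1})\le L/k$. Suppose the process did not already reach $N$ before block $k$. Then for each $\ell\le k-1$, the maximality of $p_\ell$ gives $a(T_{2\ell-1}\cup T_{2\ell})>L/k$, because adding the next index would exceed $L/k$. Summing over $\ell=1,\ldots,k-1$, the first $2k-2$ sets are disjoint and carry total mass greater than $(k-1)L/k$. The remaining mass is $a(T_{2k-1})=L-a(T_1\cup\cdots\cup T_{2k-2})<L/k$. If instead $N$ is reached earlier, then $T_{2k-1}$ is empty and the bound is trivial.

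The only delicate point is the indexing bookkeeping: the case where some $T_{2\ell-1}$ is empty because $a_{j_{\ell-1}+1}>L/k$ by itself, and early termination at $N$. Neither affects the counting argument, since the overflow index is always placed into the singleton even set, so each completed pair still carries mass greater than $L/k$.
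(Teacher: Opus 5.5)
Your proposal is correct and is essentially the paper's argument. The paper takes the minimal prefix of the remaining indices whose mass exceeds $L/k$ and splits off its last index as $T_{2\ell}$; this is exactly your maximal prefix of mass at most $L/k$ followed by the overflow singleton, and the paper bounds $a(T_{2k-1})<L-(k-1)L/k$ by the same counting. Your explicit handling of indices running out early (and of $j=0$ in (c)) covers a degenerate case that the paper's proof leaves implicit.
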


\begin{proof}
Let $I=[N]$. %For a set $S\sse[n]$, let $a(S):=\sum_{j\in S}a_j$.
For $\ell=1,\ldots,k-1$, we repeat the following: let $T'_\ell$ be a minimal prefix of $I$
such that $a(T'_\ell)>L/k$, and let $j$ be the last index in $T'_\ell$. 
We set $T_{2\ell-1}=T'_\ell-\{j\}$, and $T_{2\ell}=\{j\}$, and update $I\assign I-T'_\ell$.
Finally, we set $T_{2k-1}$ to be the index-set $I$ at the end of the above loop.
Clearly, by minimality of $T'_\ell$, we obtain $a(T_{2\ell-1})\leq L/k$ for all
$\ell\in[k-1]$. We also have 
$a(T_{2k-1})=a([N]-(T'_1\cup\ldots\cup T'_{k-1}))<L-(k-1)\cdot L/k=L/k$. Finally, (c)
holds because, for all $\ell\in[2k-1]$, by construction, we have that  
$(T_1\cup\ldots\cup T_\ell)$ is some prefix of $I$.  
\end{proof}

\begin{proofof}{Theorem~\ref{lbredn}}
Let $\I=\bigl(J,m,\{p_{ij},\rewd_{ij}\}_{i\in[m],j\in J},f:\R^m\mapsto\R_+,B,\{\M_i=(J,\sols_i)\}_{i\in[m]}\bigr)$ be
a \normsap instance.
(Recall that we use $i$ to index machines, and $j$ to index jobs.)
The algorithm leading to the stated guarantee is simple.
Let $\I'$ be the \normsap instance specified by the same data as $\I$, except that the
norm budget is reduced to $B/\gm$, 
%(so $\I'=\bigl(J,m,\{p_{ij},\rewd_{ij}\}_{i,j},f,B/\gm,\{\M_i\}_i\bigr)$), 
and let $\sg_1:S_1\mapsto[m]$ be the solution returned by the $(\rho,\gm)$-approximation
algorithm $\alg^{\sap}$ on input $\I'$. Note that 
$f(\svec[{\load}]{\sg_1})\leq\gm\cdot B/\gm\leq B$, so $\sg_1$ is a feasible solution to
the original \normsap instance.
Let $\I''$ be the norm-budgeted matching instance specified by the
bipartite graph $G=\bigl(V=J\cup[m],\,E=\bigl\{ij:\{j\}\in\sols_i\bigr\}\bigr)$, where
the size and reward of an edge $ij$ are given by $p_{ij}$ and $\rewd_{ij}$ respectively, 
the same norm $f$,%
\footnote{More precisely, since \normmatch requires a norm over $\R^{|E|}$, where $E$ is the
edge-set of the graph, we lift $f$ to a norm $f':\R^{m|J|}\mapsto\R_+$ by setting
$f'(v)=f\bigl((v^{\down}_\ell)_{\ell=1,\ldots,m}\bigr)$.}
and the same budget $B$.
Consider the solution returned by the $\al$-approximation
algorithm $\alg^{\match}$ for $\I''$, %this \normmatch-instance, 
viewed as an assignment $\sg_2:S_2\mapsto[m]$ (so $\sg_2$ assigns at most one job per
machine). Note that $\sg_2$ is a feasible solution to $\I$.
We return the better of the two solutions, $\sg_1$, $\sg_2$.

\medskip
To analyze this, we lower bound $\OPT$ in terms of $\OPT_{\I'}$ and $\OPT_{\I''}$, which
are the optimal values for the \normsap-instance $\I'$ and the \normmatch-instance
$\I''$. 
Let $\sg^*:\jopt\mapsto[m]$ be an optimal solution to $\I$, and let
$\lvecopt=\svec[{\load}]{\sg^*}$ be the load-vector induced by $\sg^*$.
Let $\jopt_i:=\{j\in\jopt: \sg^*(j)=i\}$ be the jobs assigned to machine $i$ under $\sg^*$.
%Let $\gm'=\ceil{\gm}$.

First, suppose that $\gm$ is an integer.
Using Claim~\ref{jobsplit}, we can divide each $\jopt_i$-set into $2\gm-1$
sets, ${\jopt_i}^{(1)},\ldots,{\jopt_i}^{(2\gm-1)}$ (some of which could be possibly empty),
such that, for $\ell=1,3,\ldots,2\gm-1$, we have 
$\sum_{j\in{\jopt_i}^{(\ell)}}p_{ij}\leq\lvecopt_i/\gm$, and the sets ${\jopt_i}^{(\ell)}$ for
$\ell=2,4,\ldots,2\gm-2$ consist of at most one job. 
Note that since each $\M_i$ is an independence system, we have that
${\jopt_i}^{(\ell)}\in\sols_i$, for every $i\in[m]$ and $\ell=1,\ldots,2\gm-1$.

For $\ell=1,\ldots,2\gm-1$, consider the assignment $\sg^{(\ell)}$ that assigns jobs in
${\jopt_i}^{(\ell)}$ to each machine $i\in[m]$. 
For $\ell=1,3,\ldots,2\gm-1$, by construction, the
resulting load-vector is coordinate-wise at most $\lvecopt/\gm$. Therefore,
$f\bigl(\svec[{\load}]{\sg^{(\ell)}}\bigr)\leq B/\gm$,%
\footnote{This is the {\em only} place where we use the (sub-) homogeneity of the norm $f$.}
and $\sg^{(\ell)}$ is a feasible solution to the \normsap-instance $\I'$. 
%specified by all
%the same data as $\I$, except that the norm budget is reduced to $B/\gm'$.
%
\begin{comment}
Let $\OPT_{\sap}=\OPT_{\normsap}(B/\gm')$ denote the optimal value for the
\normsap-instance $\I'$. 
%given by the same $p_{ij}$'s, $\rewd_{ij}$'s, norm $f$, but with norm budget $B/\gm'$. 
Let $\sg_1:S_1\mapsto[m]$ be the solution returned by the $(\rho,\gm)$-approximation
algorithm $\alg^{\sap}$ on input $\I'$. Note that 
$f(\svec[{\load}]{\sg_1})\leq\gm\cdot B/\gm'\leq B$, so $\sg_1$ is a feasible solution to
the original \normsap instance.
\end{comment}
%
For $\ell=2,4,\ldots,2\gm-2$, $\sg^{(\ell)}$ is a feasible solution to the
\normmatch-instance $\I''$.
%
\begin{comment}
feasible solution to the norm-budgeted matching instance $\I''$, specified by the
bipartite graph $G=\bigl(V=J\cup[m],\, E=\bigl\{ij:\{j\}\in\sols_i\bigr\}\bigr)$, where
the size and reward of an edge $ij$ are given by $p_{ij}$ and $\rewd_{ij}$ respectively, 
the same norm $f$\,%
\footnote{More precisely, since \normmatch requires a norm over $R^{|E|}$, where $E$ is the
edge-set of the graph, we lift $f$ to a norm $f':\R^{m|J|}\mapsto\R_+$ by setting
$f'(v)=f\bigl((v^{\down}_\ell)_{\ell=1,\ldots,m}\bigr)$ for $v\geq 0$.}
and the same budget $B$. Let $\OPT_{\match}$ denote the optimal value for this
\normmatch-instance $\I''$. 
%this norm-budgeted matching problem. 
Consider the solution returned by the $\al$-approximation
algorithm $\alg^{\match}$ for $\I''$, %this \normmatch-instance, 
viewed as an assignment $\sg_2:S_2\mapsto[m]$ (where $\sg_2$ assigns at most one job per
machine). Note that $\sg_2$ is a feasible solution to $\I$.
\end{comment}
%
Since $\jopt=\bigcup_{i\in[m],\ell\in[2\gm'-1]}{\jopt_i}^{(\ell)}$, it follows that
$\OPT\leq\gm\OPT_{\I'}+(\gm-1)\OPT_{\I''}$. 
So returning the better of $\sg_1$ and $\sg_2$, yields reward 
\begin{equation*}
\begin{split}
\max\Bigl\{\rewd(S_1),\rewd(S_2)\Bigr\}
& \geq \frac{\gm\rho}{\gm\rho+(\gm-1)\al}\cdot\rewd(S_1)+
\frac{(\gm-1)\al}{\gm\rho+(\gm-1)\al}\cdot\rewd(S_2) \\
& \geq\frac{1}{\gm\rho+(\gm-1)\al}\cdot
\Bigl(\gm\cdot\OPT_{\I'}+(\gm-1)\OPT_{\I''}\Bigr) \\
& \geq\frac{\OPT}{\gm\rho+(\gm-1)\al}\geq\frac{\OPT}{\beta}.
\end{split}
\end{equation*}
The second inequality follows due to the approximation guarantees of algorithms
$\alg^{\sap}$ and $\alg^{\match}$.
%the algorithms used for \normsched and \normmatch.

\smallskip
Now suppose $\gm$ is not an integer. Since $\alg^{\sap}$ is also a
$(\rho,\ceil{\gm})$-approximation algorithm for \normsap, the above analysis shows that
we obtain a solution of reward at least $\OPT/\bigl(\ceil{\gm}\rho+(\ceil{\gm}-1)\al\bigr)$.

For the other guarantee, suppose $\gm$ is a rational number $a/b$, where $a>b\geq 1$, and
$a,b\in\Z$. 
For $i\in[m]$, let $A_i$ be the {\em ordered} multiset consisting of $b$ copies of
$\jopt_i$, in sequence. Clearly, $\sum_{j\in A_i}p_{ij}=b\cdot\lvo_i$ and
$\rewd(A_i)=b\cdot\rewd(\jopt_i)$; also, note that any subsequence $T\sse A_i$ 
%(i.e., $T$ is the difference of two prefixes of $A_i$) 
with $\sum_{j\in T_i}p_{ij}\leq\lvo_i$ consists of all distinct jobs, i.e., $T$ is a 
{\em set} of jobs.
We now apply Claim~\ref{jobsplit} to the multiset $A_i$ taking $k=a$, for all
$i\in[m]$. This yields sets $B_i^{(\ell)}$ for all $i\in[m]$, $\ell\in[2a-1]$, where
for every $i\in[m]$, we have:
(1) $\sum_{j\in B_i^{(\ell)}}p_{ij}\leq\frac{b}{a}\cdot\lvo_i=\frac{\lvo_i}{\gm}$ for
$\ell=1,3,\ldots,2a-1$, and (2) $|B_i^{(\ell)}|\leq 1$ for $\ell=2,4,\ldots,2a-2$.
By part (c) of Claim~\ref{jobsplit}, each $B_i^{(\ell)}$ is a subsequence of $A_i$, so
from (1) and our earlier observation, it follows that every $B_i^{(\ell)}$-set consists of
distinct jobs. 

As before, let $\sg^{(\ell)}$ be the assignment that assigns jobs in $B_i^{(\ell)}$ to
each machine $i\in[m]$. Then, due to (1), for all $\ell=1,3,\ldots,2a-1$, $\sg^{(\ell)}$
is a feasible solution to the \normsap-instance $\I'$. Also, for $\ell=2,4,\ldots,2a-2$,
$\sg^{(\ell)}$ is a feasible solution to the \normmatch-instance $\I''$.
Since $A_i=\bigcup_{\ell\in[2a-1]}B_i^{(\ell)}$ for all $i\in[m]$, we have 
$b\cdot\OPT=\sum_{i\in[m],\ell\in[2a-1]}\rewd(B_i^{(\ell)})\leq a\cdot\OPT_{\I'}+(a-1)\OPT_{\I''}$.
Recall that we return the better of the two assignments $\sg_1:S_1\mapsto[m]$ and
$\sg_2:S_2\mapsto[m]$. 
%obtained respectively by running $\alg^{\sap}$ on $\I'$ and $\alg^{\match}$ on $\I''$. 
This yields reward
\begin{equation*}
\begin{split}
\max\Bigl\{\rewd(S_1),\rewd(S_2)\Bigr\}
& \geq \frac{\rho}{\rho+\al}\cdot\rewd(S_1)+
\frac{\al}{\rho+\al}\cdot\rewd(S_2) \\
& \geq\frac{1}{\rho+\al}\cdot
\Bigl(\OPT_{\I'}+\OPT_{\I''}\Bigr)
\geq\frac{b}{a}\cdot\frac{\OPT}{\rho+\al}=\frac{\OPT}{\gm(\rho+\al)}.
\end{split}
\end{equation*}
The same guarantee holds for irrational $\gm$, by taking a limit of rationals approaching
$\gm$ from above, since $\gm(\rho+\al)$ is a continuous function of $\gm$. So for any
$\gm\geq 1$, we 
obtain reward least $\OPT/\bigl(\gm(\rho+\al)\bigr)$. Combining the two bounds, we obtain
reward at least $\OPT/\beta$.
\end{proofof}

\begin{proofof}{Theorem~\ref{sepflredn}}
The proof is essentially identical to that of Theorem~\ref{lbredn}. The only change is in
the definition of the norm-budgeted matching problem $\I''$, where now we need to
additionally enforce that a solution is a matching of size at most $k$, so that this maps
to a feasible solution to the original \normsepfl instance. Thus, $\I''$ is now a
\normkmatch-instance. 
\end{proofof}

\begin{comment}
As noted earlier, since \normmatch, and \normkmatch on bipartite graphs both admit a
$(3+\ve)$-approximation, Theorems~\ref{lbredn} and~\ref{sepflredn} imply
that a bicriteria $(\rho,\gm)$-approximation for norm-budgeted \{maximum-\gap, \sap,
\kfl\} can be used to obtain a (unicriterion) $(\rho\gm'+3\gm'-3+\ve)$-approximation for
the respective problem.
\end{comment}

\section{\boldmath \mbox{Approximation results for norm-budgeted \{\maxgap, \sap, \kfl, \sepfl\!\}}} 
\label{sap} \label{kfl} \label{sepfl}
We now consider the most general problem, 
{\em norm-budgeted separable $k$-facility location} (\normsepfl), which contains \normfl,
\normsap, and \normsched as special cases. Recall that the input to \normsepfl consists of a
facility-set $\F$ and client-set $\C$. Assigning client $j$ to facility $i$
incurs assignment cots $c_{ij}$, and yields reward $\rewd_{ij}$. Throughout this section,
we use $i$ to index facilities, and $j$ to index clients. Each facility $i$ comes
with an independence system $\M_i=(\C,\sols_i)$. We also have a monotone,
symmetric norm $f:\R^k\mapsto\Rp$ and a budget $B$. A solution opens a set $F\sse\F$
of facilities with $|F|\leq k$, and specifies an assignment $\sg:S\mapsto F$ of some
subset $S\sse\C$ of clients to facilities in $F$ such that $\{j:\sg(j)=i\}\in\sols_i$ for
every $i\in F$.
This induces the facility-load vector 
$\fload{\sg}:=\bigl(\sum_{j:\sg(j)=i}c_{ij}\bigr)_{i\in F}$. 
The goal is to find a maximum-reward solution $(F,\sg)$ 
%(where $|F|\leq k$ and $\{j:\sg(j)=i\}\in\sols_i$ for all $i\in F$), 
satisfying $f(\fload{\sg})\leq B$. 

The main result of this section is that given an algorithm for a subproblem involving a 
single facility, we can obtain an approximation algorithm for \normsepfl. 
%The following definition makes this precise.

\begin{definition} \label{singlefrp}
We say that $\alg$ is a $\beta$-approximation algorithm for the 
{\em single-facility max-reward problem} (\onefrp) if it is a
%The main result of this section is that if we have 
$\beta$-approximation algorithm for the \mwis problem on $\M_i$, for every facility
$i\in\F$; that is, for any $i\in\F$ and any rewards $v\in\Rp^\C$,
$\alg$ returns $A\in\sols_i$ such that 
$v(A)\geq\bigl(\max_{S\in\sols_i}v(S)\bigr)/\beta$.  

We say that $\alg$ is a $(\rho,\gm)$-approximation algorithm for {\em \constr \onefrp} 
%one can devise a polytime algorithm $\alg$ that 
if for any $i\in\F$, any $v\in\Rp^\C$, and any {\em weight-vector} $\wt\in\Rp^\C$ and
budget $t\in\Rp$, $\alg$ returns a set $A\in\sols_i$ such that 
$v(A)\geq\frac{1}{\rho}\cdot\bigl(\max\,\{v(S):\ S\in\sols_i,\ \wt(S)\leq t\}\bigr)$, and 
$\wt(A)\leq\gm t$. 
\end{definition}

\begin{comment}
We show that given a $\beta$-approximation algorithm for \onefrp, or a
$(\rho,\gm)$-approximation algorithm for \constr \onefrp, one can obtain a
corresponding approximation guarantee ($O(\beta)$ or $O(\rho\gm)$) for \normsepfl.
\end{comment}

\begin{lemma} \label{constfrp}
Given a $\beta$-approximation algorithm for \onefrp, one can obtain a
$(\beta+1,1+\ve)$-approximation algorithm for \constr \onefrp with running time
$\bigl(\frac{|\C|}{\ve}\bigr)^{O(1/\ve)}$, for any $\ve>0$. 
\end{lemma}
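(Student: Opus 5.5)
Fix a facility $i$, rewards $v\in\Rp^\C$, weights $\wt\in\Rp^\C$ and budget $t$, and let $S^*\in\sols_i$ be an optimal solution to the budgeted problem $\max\{v(S): S\in\sols_i,\ \wt(S)\le t\}$, with $\OPT_t:=v(S^*)$. The plan is the standard ``guess the heavy elements, discard the rest of the heavy ones, and solve the light part via the unbudgeted oracle plus a cheap correction'' approach. Call a client $j$ \emph{heavy} if $\wt_j>\ve t$ and \emph{light} otherwise. Since $\wt(S^*)\le t$, $S^*$ contains at most $1/\ve$ heavy clients. We enumerate all subsets $H\sse\C$ of heavy clients with $|H|\le 1/\ve$, which takes $|\C|^{O(1/\ve)}$ guesses, and for the right guess we have $H=S^*\cap\{\text{heavy}\}$.

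For a fixed guess $H$ (with $H\in\sols_i$ and $\wt(H)\le t$), we consider the contracted independence system on the light clients $L$: the sets $T\sse L$ with $H\cup T\in\sols_i$, which is downward closed and admits an oracle via $\sols_i$. We still need to handle the residual budget $t':=t-\wt(H)$ on $T$. To remove the budget, we Lagrangify via a further guess. Partition the light clients by ratio buckets, or more simply, run the $\beta$-approximation $\alg$ for \onefrp on $\M_i$ restricted in this way with rewards $v'_j:=v_j$ for light $j$ whose weight fits and $0$ otherwise. The cleanest route I would try is as follows. Take the unbudgeted solution $A$ returned by $\alg$ on the contracted system with rewards $v$ restricted to $L$, so that $v(A)\ge v(S^*\cap L)/\beta$. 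If $\wt(A)\le t'+\ve t$ we are done. Otherwise, greedily trim $A$. Order $A$ by non-increasing density $v_j/\wt_j$ and take a maximal prefix $P$ with $\wt(P)\le t'$, and let $a$ be the next element. Since $a$ is light, $P\cup\{a\}$ has weight at most $t'+\ve t$, and by the density ordering $v(P\cup\{a\})\ge v(A)\cdot t'/\wt(A)$. This last bound is only useful when $\wt(A)$ is not too large, and that is the gap to fix.

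The main obstacle is exactly this: the unbudgeted oracle may return a set of enormous weight, and downward closure alone then loses the ratio $t'/\wt(A)$. To fix this, I would combine it with a Lagrangian penalty. Define rewards $v^\lambda_j:=\max\{v_j-\lambda\wt_j,0\}$ and binary-search or enumerate $\lambda$ over polynomially many values. For $\lambda=0$ the oracle's output may be too heavy, and for $\lambda$ large its weight is $0$. We use the $\lambda$-family together with the guessed structure so that the output $A_\lambda$ satisfies $v(A_\lambda)-\lambda\wt(A_\lambda)\ge (v(S^*\cap L)-\lambda t')/\beta$. Then we choose $\lambda=\OPT_t/t'$ (guessed up to $1+\ve$), which forces either $\wt(A_\lambda)$ to be $O(t')$ with good value, or the trimming argument to recover value. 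The final accounting must yield value at least $\OPT_t/(\beta+1)$: the $+1$ should come from returning the better of the guessed heavy set $H$ together with the trimmed light solution, versus a single additional element, in the same way that knapsack loses one item, with the single boundary light item $a$ absorbed into the $(1+\ve)$ weight slack.

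This analysis needs to be made precise, and it is where I expect the real work to be. The enumeration gives running time $(|\C|/\ve)^{O(1/\ve)}$ times polynomially many oracle calls.
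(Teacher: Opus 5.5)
You have the right ingredients: guessing the at most $1/\ve$ heavy clients of $S^*$, and a Lagrangian penalty on weight. However, the step that actually proves the lemma is left open, and the concrete choices you propose do not work. With $\lambda=\OPT_t/t'$, your own bound becomes $v(A_\lambda)-\lambda\wt(A_\lambda)\ge\bigl(v(S^*\cap L)-\OPT_t\bigr)/\beta$, whose right-hand side is nonpositive, so it says nothing. The multiplier must be scaled down by $\beta+1$ and tied to the full budget $t$. Fix a target $V\le\OPT_t$, located by binary search rather than a $(1+\ve)$-guess (a $(1+\ve)$-guess would cost an extra $1+\ve$ in the reward factor). Set $\lambda=V/((\beta+1)t)$, and run $\alg$ on $\M_i$ with rewards $v^\lambda_j=\max\{v_j-\lambda\wt_j,0\}$ on $Q^*\cup\C'$ and $0$ elsewhere. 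Here $Q^*$ is the guessed heavy part of $S^*$ and $\C'$ is the set of light clients. Since $S^*\subseteq Q^*\cup\C'$ has penalized value at least $\OPT_t-\lambda t\ge\frac{\beta}{\beta+1}V$, the output $Z$ satisfies $v^\lambda(Z)\ge V/(\beta+1)$. After discarding zero-reward elements, $Z\subseteq Q^*\cup\C'$ and every $j\in Z$ has $v_j>\lambda\wt_j$. If $\wt(Z)<t$, then $v(Z)\ge v^\lambda(Z)\ge V/(\beta+1)$ and you are done. Otherwise, let $R\subseteq Z$ be a minimal set containing $Z\cap Q^*$ with $\wt(R)\ge t$. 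Because $\wt(Z\cap Q^*)\le\wt(Q^*)\le t$ and all other elements of $Z$ are light, $\wt(R)\le(1+\ve)t$. Because each element of $R$ has $v_j>\lambda\wt_j$, $v(R)\ge\lambda\wt(R)\ge\lambda t=V/(\beta+1)$. So the ``$+1$'' comes from choosing $\lambda$ to balance these two cases. There is no best-of-two comparison against a single item: the one boundary light item is simply absorbed by the $(1+\ve)$ weight slack.

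Second, your contraction step is not licensed by the hypothesis. The given algorithm is a $\beta$-approximation for \mwis on $\M_i$ itself, not on $\{T\subseteq L: H\cup T\in\sols_i\}$. Running it on $\M_i$ with rewards supported on $L$ returns some $A\in\sols_i$, but $H\cup A$ need not be independent. You also cannot force a mere $\beta$-approximation to include all of $H$ by giving $H$ large rewards. The remedy is not to contract at all. Only \emph{restrict} the ground set, by zeroing rewards outside $Q^*\cup\C'$, make a single call on $\M_i$, and keep whichever heavy clients $Z$ happens to contain. The guess $Q^*$ is needed only to make $S^*$ feasible for the restricted instance and to cap the heavy weight inside $Z$ at $t$. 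Independence of $R$ then follows from downward closure, since $R\subseteq Z$.
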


We show that given an approximation algorithm for \onefrp, or 
%a bicriteria approximation algorithm for 
\constr \onefrp, one can obtain a corresponding guarantee for \normsepfl.
The guarantee for \normsepfl actually depends on the approximability of \constr
\onefrp. Lemma~\ref{constfrp} (proved in Appendix~\ref{append-constfrp}) 
shows that an approximation algorithm for \onefrp can 
be used to obtain a bicriteria approximation for \constr \onefrp,
%$(\beta+1,1+\ve)$-approximation algorithm for \constr \onefrp,
%(Lemma~\ref{constfrp}), 
but in various settings, better guarantees are possible for \constr \onefrp by
tackling this problem directly.  
We therefore state the guarantee we obtain for \normsepfl in terms of the 
approximation guarantees of both \onefrp and \constr \onefrp.

%given such a $\beta$-approximation algorithm,
%one can obtain an $O(\beta)$-approximation algorithm for \normsepfl.
%that if we have a $\beta$-approximation algorithm for \onfrp, then one can obtain an
%$O(\beta)$-approximation algorithm for \normsepfl. 

\begin{theorem} \label{normsepflthm}
%Suppose we have an algorithm $\alg^{\mwis}$ that, Then,
%For any $\ve>0$, 
We can obtain the following guarantees for \normsepfl.
\begin{enumerate}[label=(\alph*), topsep=0.25ex, itemsep=0.1ex, leftmargin=*]
%\item $(1+\ve)\bigl(\frac{e}{e-1}\cdot(\beta+1)+3\bigr)$-approximation, 
%given a $\beta$-approximation algorithm for \onefrp; 
%one can obtain an $O(\beta)$-approximation algorithm for \normsepfl. \label{onefrptrans}
%
%\item $(1+\ve)\bigl(\frac{e}{e-1}\cdot\rho+3\bigr)$-approximation, 
%given a $(\rho,1+\ve)$-approximation algorithm for consrained \onefrp; \label{constfrptrans}
%
%\item 
%$\min\bigl\{\frac{e}{e-1}\cdot\gm\rho+(3+\ve)\gm,\frac{e}{e-1}\cdot\ceil{\gm}\rho+(\ceil{\gm}-1)(3+\ve)\bigr\}$-%
%approximation, 
%Given a $(\rho,\gm)$-approximation algorithm for \constr \onefrp, one can obtain a 
\item A $\gm(1+\ve)\bigl(\frac{e}{e-1}\cdot\rho+3\bigr)$-approximation algorithm 
%for \normsepfl 
%\label{gen-constfrptrans}
with running time $\bigl(\frac{|\C|+|\F|}{\ve}\bigr)^{O(1/\ve^2)}$, for any $\ve>0$, 
given a $(\rho,\gm)$-approximation algorithm for \constr \onefrp. 
For the running time, %includes the number of 
we treat each call to the algorithm for \constr \onefrp as an elementary operation. 

\item An $O(\beta)$-approximation algorithm, given a $\beta$-approximation algorithm for
\onefrp. 
%the \mwis problem on $\M_i$, for every machine $i$, 
%one can obtain an  algorithm for \normsap. 
\end{enumerate}
\end{theorem}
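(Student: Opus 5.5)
Our plan is to use Theorem~\ref{sepflredn}, so the task reduces to designing a \emph{bicriteria} algorithm for \normsepfl. By Corollary~\ref{normmatchthm} and the discussion preceding Theorem~\ref{sepflredn}, \normkmatch has a $(3+\ve)$-approximation. Plugging $\al=3+\ve$ into the bound $\gm'(\rho'+\al)$ with bicriteria parameters $(\rho',\gm')=\bigl(\frac{e}{e-1}\rho(1+\ve),\,\gm(1+\ve)\bigr)$ gives exactly $\gm(1+\ve)\bigl(\frac{e}{e-1}\rho+3\bigr)$ after rescaling $\ve$. Part (b) then follows from part (a) by applying Lemma~\ref{constfrp}, which gives $\rho=\beta+1$ and $\gm=1+\ve$, so the final factor is $O(\beta)$. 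Thus the core of the proof is a bicriteria $\bigl(\frac{e}{e-1}\rho(1+\ve),\,\gm(1+O(\ve))\bigr)$-approximation for \normsepfl that uses the \constr \onefrp oracle.

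For the bicriteria algorithm we follow the minimum-norm machinery of Section~\ref{msn-prelim}. Let $\POS=\POS_{k,\ve}$. First we guess, for every $\ell\in\POS$, the value $\lvo^{\down}_\ell$ of the optimal facility-load vector, rounded to a power of $(1+\ve)$, with loads below $\ve B/(k\,f(1,0,\ldots,0))$ rounded up to that threshold. We take the guess to be a non-increasing sequence of $O(\log k/\ve)$ indices into $O(\log(k/\ve)/\ve)$ levels, so Claim~\ref{polybnd} bounds the number of guesses by $(\cdot/\ve)^{O(1/\ve^2)}$. Let $v$ be the guessed vector. By Lemma~\ref{toplestim}(a), $f(v^\exp)\le(1+O(\ve))B$.

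Next we write a configuration LP. For each facility $i$ and each ``load class'' $t$ (a guessed threshold $v_\ell$), a configuration is a set $S\in\sols_i$ with $c_i(S)\le v_\ell$. The variables are $z_{i,S}$, constrained so that each facility uses at most one configuration, each client is covered at most once, and at most $k$ facilities are opened. We add counting constraints: the number of facilities opened with load-class exceeding $v_\ell$ is at most $\ell-1$ (more precisely, at most $\nxt(\ell)-1$ in the appropriate form), matching the hypothesis of Lemma~\ref{toplestim}(b). The optimum assignment is feasible for this LP.

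We cannot solve the LP exactly, but we can solve it approximately through its dual. The separation problem for the dual is, for each facility $i$ and class $\ell$, a budgeted \mwis on $\M_i$ with budget $v_\ell$. This is exactly \constr \onefrp. Using the $(\rho,\gm)$ oracle inside the ellipsoid method, or inside a multiplicative-weights scheme, yields a fractional solution of value at least $\OPT/\rho$ whose configurations have load at most $\gm v_\ell$.

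To round, each facility picks one configuration, or none, using correlated or independent sampling. Clients that are covered multiple times keep one assignment. The standard analysis of Fleischer et al.~\cite{FleischerGMS11} loses a factor $\frac{e}{e-1}$ in expected reward. The cardinality constraint and the counting constraints are in expectation satisfied; to make them hold deterministically we group facilities by class. Within each class $\ell$ we use dependent rounding, for instance pipage rounding or a bipartite-matching polytope argument on the facility/class assignment, so that every count constraint holds up to rounding. With counts $Q^{>\gm v_\ell}\le(1+\ve)(\ell-1)$, Lemma~\ref{toplestim}(b) applied to $\gm v$ gives $f(\text{load})\le(1+3\ve)\gm f(v^\exp)\le\gm(1+O(\ve))B$.

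The step we expect to be the main obstacle is combining the $\frac{e}{e-1}$ contention-resolution rounding with deterministic preservation of the per-class counting constraints and the global $\le k$ constraint. The fix we would try first is to make the choice of class and whether to open each facility through dependent rounding on a bipartite (facility, class) assignment polytope, which preserves marginals and cardinalities. After that, each opened facility samples a configuration from its class conditional on being opened, and client conflicts are resolved as in~\cite{FleischerGMS11}. Marginal preservation keeps the expected reward bound, since each client is still covered with the same marginals. The remaining work is to handle the integer rounding of the counts, which costs at most an additive $1$ per class and is absorbed by $\nxt(\ell)-1\le(1+\ve)\ell$.
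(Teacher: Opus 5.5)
Your outer structure matches the paper. Part (a) is Theorem~\ref{sepflredn} with $\al=3+\ve$ applied to the bicriteria guarantee of Theorem~\ref{normsepfl-bi}, and part (b) follows via Lemma~\ref{constfrp}. Your configuration LP and dual/ellipsoid step are essentially \eqref{lpsepfl} and Theorem~\ref{lpsepflsolve}.

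The gap is in the rounding. Your claim that ``marginal preservation keeps the expected reward bound'' is false. The $\frac{e}{e-1}$ analysis (Lemma~\ref{clfacbnd} with Claim~\ref{helper}) uses that the events ``facility $i'$ covers $j$'' are independent across facilities. A client's expected maximum reward is not determined by its marginals. Worse, enforcing the counting constraints exactly can \emph{force} positive correlation:
\begin{itemize}
\item Take $k=2$, facilities $i_1,i_2$, and $\POS=\{1,2\}$, so \eqref{numl} allows at most one facility at index $1$.
\item Set $\by_{i,\ell}=\frac12$ for all four pairs. This satisfies \eqref{fasgn}--\eqref{numk}, with \eqref{numl} and \eqref{numk} tight.
\item Any distribution on integral feasible points with these marginals must open both facilities with exactly one at index $1$. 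So it puts mass $\frac12$ on each of $\{(i_1,1),(i_2,2)\}$ and $\{(i_1,2),(i_2,1)\}$.
\item Let the configuration for $(i_1,1)$ and for $(i_2,2)$ be $\{j\}$, each with $\bx$-value $\frac12$ and unit reward, and let the other two pairs avoid $j$.
\item Then $j$ contributes $1$ to the LP but is served with probability $\frac12<1-\frac1e$.
\end{itemize}
Your ``additive $1$ per class'' remark has a related flaw. For small $\ell$, e.g.\ $\ell=2$, Lemma~\ref{toplestim}(b) allows at most $(1+\dt)(\ell-1)<2$ facilities above $\gm v_\ell$. An additive error is not absorbed by multiplicative slack there.

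The paper instead rounds \emph{independently}, which preserves the contention-resolution bound:
\begin{itemize}
\item each facility picks index $\ell$ with probability $\by_{i,\ell}$;
\item each pair $(i,\ell)$ picks $S$ with probability $\bx_{i,\ell,S}/\by_{i,\ell}$.
\end{itemize}
It only requires the counts to be within a $(1+\dt)$ factor of \eqref{numl} and $|T|\le(1+\dt)k$, and outputs the empty solution otherwise. Lemma~\ref{badbnd} uses Chernoff bounds to show this bad event has probability at most $\dt$, even conditioned on $(i',\ell')\in T$. Finally it keeps the $k$ highest-reward opened facilities, losing only a $\frac{1}{1+\dt}$ factor. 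Since Chernoff helps only for large $\ell$, the paper also guesses which facilities receive the indices $\ell\le\ell_0=O\bigl(\frac{1}{\dt^2}\log\frac1\dt\bigr)$ and fixes them via \eqref{lguess}. That guessing removes exactly the small-$\ell$ situations where your approach breaks.
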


\begin{comment}
%We emphasize that we have not attempted to optimize the constant in the $O(\beta)$
%approximation factor, preferring simplicity of exposition instead.
Part~\ref{gen-constfrptrans} above states the most general guarantee, which yields
part~\ref{constfrptrans} as a special case, %which in turn
%yields part (a), 
but we state the guarantee in part~\ref{constfrptrans}
separately as it is easier to parse, and is the guarantee that we utilize.
\end{comment}

Part (b) above follows from part (a) due to Lemma~\ref{constfrp}. 
%because (as noted above) a $\beta$-approximation for \onefrp can be used to obtain a
%$(\beta+1,1+\ve)$-approximation for \constr \onefrp. 
(The approximation factor in part (b) is more precisely 
$(1+\ve)\bigl(\frac{e}{e-1}\cdot(\beta+1)+3\bigr)$, in time 
$\bigl(\frac{|\C|+|\F|}{\ve}\bigr)^{O(1/\ve^2)}$.)

As noted in Section~\ref{probdefs}, when $k=|\F|$, \normsepfl
reduces to \normsap, since we can simply open all facilities and we only need to
determine which clients to assign to which facilities.
An orthogonal special case of \normsepfl is \normfl, wherein there are no constraints on
the set of clients that may be assigned to an open facility, i.e., $\M_i$ is the free
matroid with $\sols_i=2^{\C}$ for all $i\in\F$. 
%which again corresponds to $\M_i$
%being the free matroid (i.e., $\sols_i=2^{\C}$) for all $i\in\F$.
Also, \normsched is a special case of both \normsap and \normfl ($k=|\F|$ and $\M_i$ is a 
free matroid for all $i\in\F$).
%wherein there are no constraints on the set of jobs that may be assigned
%to a machine, i.e., $\M_i$ is the free matroid ($\sols_i=2^J$) for all $i\in\F$.
Clearly, when $\M_i$ is the free matroid for all $i\in\F$, \constr \onefrp is simply
the (standard) knapsack problem, and we have an FPTAS for \constr \onefrp (i.e., a
$(1+\ve,1)$-approximation).
%$1$-approximation algorithm for \onefrp. %the \mwis problem on $\M_i$.
Thus, Theorem~\ref{normsepflthm} leads to the following results for \normsap, \normsched,
and \normfl. 

\begin{theorem}[{\bf Corollary of Theorem~\ref{normsepflthm}}] 
\ 
\begin{enumerate}[label=(\alph*), ref={\thetheorem(\alph*)},
topsep=0.25ex, itemsep=0.1ex, leftmargin=*]
\item \label{normsapthm}
For \normsap, we obtain the same guarantees as in parts (a) and (b) of
Theorem~\ref{normsepflthm}. 

\begin{comment}
\item Given a $(\rho,1+\ve)$-approximation for constrained \onrfrp, one can obtain an 
$(1+\ve)\bigl(\frac{e}{e-1}\cdot\rho+3\bigr)$-approximation algorithm for \normsap. 

\item %{\bf \normsap.} 
Given a $\beta$-approximation algorithm for \onefrp,
%the \mwis problem on $\M_i$, for every machine $i$, 
one can obtain an $O(\beta)$-approximation algorithm for \normsap. 
%The approximation factor is more precisely,
%$(1+\ve)\bigl(\frac{e}{e-1}\cdot(\beta+1)+3\bigr)$. 
%one can obtain an $O(\beta)$-approximation algorithm for \normsepfl.
\end{comment}

\item %{\bf \normfl.} 
\label{normschedthm} \label{normflthm} 
We can obtain a $(4.582+\ve)$-approximation algorithm for \normfl and \normsched.
\end{enumerate}
\end{theorem}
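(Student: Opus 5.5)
The plan is to derive both parts as direct specializations of Theorem~\ref{normsepflthm}, with no new algorithmic ingredient.

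For part~(a), I would use the observation from Section~\ref{probdefs} that \normsap is exactly \normsepfl with $k=|\F|$. Here machines play the role of facilities and jobs the role of clients, $c_{ij}=p_{ij}$, and $\M_i$ is unchanged. With $k=|\F|$, the cardinality constraint on open facilities is vacuous, so opening all facilities loses nothing. The feasible assignments, their rewards, and their load vectors (with $f$ on $\R^m=\R^{k}$, so no padding is needed) then coincide in the two problems. The single-facility subproblems \onefrp and \constr \onefrp depend only on the individual $\M_i$, so they are literally the same. Hence any approximation algorithm produced by Theorem~\ref{normsepflthm}(a) or~(b) for this \normsepfl instance is, with the same factor and running time, an approximation algorithm for the original \normsap instance.

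For part~(b), I would note that \normfl is the special case of \normsepfl in which every $\M_i$ is the free matroid $\sols_i=2^{\C}$. In that case \constr \onefrp asks: given $v,\wt\in\Rp^{\C}$ and $t$, maximize $v(A)$ subject to $\wt(A)\le t$. This is exactly the standard knapsack problem. The classical knapsack FPTAS returns a set $A$ with $\wt(A)\le t$ and $v(A)\ge \OPT/(1+\ve')$ in time polynomial in $|\C|$ and $1/\ve'$, so it is a $(1+\ve',1)$-approximation for \constr \onefrp. Plugging $\rho=1+\ve'$ and $\gm=1$ into Theorem~\ref{normsepflthm}(a) gives the factor
\[
(1+\ve')\Bigl(\tfrac{e}{e-1}(1+\ve')+3\Bigr).
\]
Since $\tfrac{e}{e-1}+3\approx 4.58198<4.582$, choosing $\ve'=\Theta(\ve)$ small enough makes this at most $4.582+\ve$. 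The total running time stays polynomial for fixed $\ve$, because each \constr \onefrp call (counted as elementary in the theorem) is itself polynomial. For \normsched, I would observe that it is the case $k=|\F|$ of \normfl (equivalently, \normsap with free matroids). Combining the reduction of part~(a) with the argument just given therefore yields the same $(4.582+\ve)$ guarantee.

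There is no genuine obstacle; everything is bookkeeping. The points I would check carefully are:
\begin{enumerate}[label=(\roman*), topsep=0.2ex, itemsep=0.1ex]
\item that the knapsack FPTAS meets Definition~\ref{singlefrp} exactly, with no budget violation ($\gm=1$) and arbitrary nonnegative weights and budget;
\item that the numerical constant $\tfrac{e}{e-1}+3$ really lies below $4.582$, so that the $(1+\ve')$ factors can be absorbed into the additive $\ve$;
\item that in the embedding $k=|\F|$ the facility-load vector indexed by open facilities is, up to zero padding and permutation, the machine-load vector, so that symmetry of $f$ makes the two norm-budget constraints identical.
\end{enumerate}
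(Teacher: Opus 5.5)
Your proposal is correct and follows the paper's own argument. Part (a) is the specialization $k=|\F|$ of \normsepfl. Part (b) uses the fact that when every $\M_i$ is free, the knapsack FPTAS is a $(1+\ve,1)$-approximation for \constr \onefrp, which you plug into Theorem~\ref{normsepflthm}(a) with $\rho=1+\ve$ and $\gm=1$, using $\frac{e}{e-1}+3<4.582$.
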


%Part (a) of 
Theorem~\ref{normsapthm} is a direct consequence of \normsap being a special
case of \normsepfl. 
\begin{comment}
immediately from part~\ref{constfrptrans} of
Theorem~\ref{normsepflthm}, and part (b) follows from part (a) due to
Lemma~\ref{constfrp}. 
%because (as noted above) a $\beta$-approximation for \onefrp can be used to obtain a
%$(\beta+1,1+\ve)$-approximation for \constr \onefrp. 
(So the approximation factor in part (b) is more precisely 
$(1+\ve)\bigl(\frac{e}{e-1}\cdot(\beta+1)+3\bigr)$.)
\end{comment}
%Part (b) of 
Theorem~\ref{normschedthm} follows from Theorem~\ref{normsepflthm}(a)
by taking $\rho=1+\ve$, $\gm=1$.

%The rest of this section is devoted to the proof of Theorem~\ref{normsepflthm}. %Again, 
Armed with the reduction given by Theorem~\ref{sepflredn}, 
to obtain Theorem~\ref{normsepflthm}, we can focus on developing a bicriteria
approximation algorithm for \normsepfl. We obtain the following bicriteria guarantee,
which is the main technical result of this section.  

\newcommand{\sepflbiapx}{\frac{e}{e-1}}
\newcommand{\badprob}{\dt}

\begin{theorem} \label{normsepfl-bi}
Given a $(\rho,\gm)$-approximation algorithm for \constr \onefrp, we can obtain a 
bicriteria \mbox{$\bigl(\sepflbiapx\cdot\rho(1+\ve),\gm(1+\ve)\bigr)$-approximation} algorithm for
\normsepfl with running time $\bigl(\frac{|\F|+|\C|}{\ve}\bigr)^{O(1/\ve^2)}$, for any
$\ve>0$.  
\end{theorem}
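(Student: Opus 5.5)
We will obtain the bicriteria guarantee of Theorem~\ref{normsepfl-bi} by combining the minimum-norm machinery of Lemma~\ref{toplestim} with a configuration LP and randomized rounding. Let $(\fopt,\sg^*)$ be an optimal solution and $\fvo$ its facility-load vector (padded to $\R^k$). Set $N=k$ and $\dt=\ve$, and work with $\POS=\POS_{k,\ve}$. We first guess, for each $\ell\in\POS$, an estimate $\vt_\ell$ of $\fvo^{\down}_\ell$: a power of $(1+\ve)$, or $0$ if the coordinate is below $\frac{\ve}{k}\cdot\fvo^{\down}_1$. The sequence is non-increasing. First guess $\fvo^{\down}_1$ up to a factor $(1+\ve)$, by guessing the facility/client pair attaining it or via a geometric scan over polynomially bounded values. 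Then the remaining guesses form a non-increasing sequence of $|\POS|=O(\log k/\ve)$ exponents from a range of size $O(\log(k/\ve)/\ve)$. By Claim~\ref{polybnd} there are $(\frac{k}{\ve})^{O(1/\ve^2)}$ such sequences. With $\vt_\ell\le(1+\ve)\fvo^{\down}_\ell+\frac{\ve}{k}\fvo^{\down}_1$, Lemma~\ref{toplestim}(a) gives $f(\vt^{\exp})\le(1+O(\ve))B$, since $k\cdot\frac{\ve}{k}\fvo^{\down}_1 f(1,0,\ldots)\le\ve f(\fvo)$. We discard guesses with $f(\vt^{\exp})>(1+O(\ve))B$, using the feasibility oracle.

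Given $\vt$, define load classes: class $\ell\in\POS$ corresponds to facilities whose load lies in $(\vt_{\nxt(\ell)},\vt_\ell]$, and there are at most $\nxt(\ell)-\ell$ such facilities. We write a configuration LP with variables $y_{i,\ell,A}$ for facility $i$, class $\ell$, and $A\in\sols_i$ with $c_i(A)\le\vt_\ell$. The constraints are: each client is covered at most once; each facility takes at most one configuration; at most $k$ facilities are opened in total; and, for every $\ell\in\POS$, the number of configurations of class $\ell'\le\ell$ is at most $(1+\ve)(\ell-1)$-ish, matching the hypothesis $Q^{>v_\ell}\le(1+\dt)(\ell-1)$ of Lemma~\ref{toplestim}(b). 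We maximize reward. The optimum $\sg^*$ yields a feasible integral point, so the LP value is at least $\OPT$.

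To solve the LP approximately, we go through its dual. The dual separation problem, for fixed $(i,\ell)$, is exactly \constr \onefrp: maximize the dual-adjusted reward of $A\in\sols_i$ subject to $c_i(A)\le\vt_\ell$. The given $(\rho,\gm)$-algorithm therefore yields, via the standard ellipsoid/approximate-separation argument, a fractional solution of value $\ge\OPT/\rho$. This solution uses configurations with load at most $\gm\vt_\ell$ in class $\ell$.

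Rounding is the main obstacle. For each facility $i$ we select at most one configuration with probability $\sum_{\ell,A}y_{i,\ell,A}$. The class-count and cardinality-$k$ constraints must hold integrally, or at least approximately. Clients chosen by several facilities keep only one assignment, and since the sets lie in independence systems, dropping clients preserves feasibility. Contention resolution on the client side (independent or dependent facility choice, with each client keeping its highest-reward assignment) loses the $\frac{e}{e-1}$ factor, as in Fleischer et al.~\cite{FleischerGMS11}.

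To keep the class counts, and the constraint of at most $k$ open facilities, we will first choose for each facility a class by dependent rounding on the bipartite (facility, class) structure. This is a laminar/bipartite-matching polytope, so pipage rounding preserves the counts exactly up to rounding each $(1+\ve)(\ell-1)$ bound. Conditional on the class, we then sample the configuration. Negative correlation of dependent rounding still gives the $1-1/e$ client-coverage bound.

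The resulting load vector $\al$ satisfies $\al^{\down}_1\le\gm\vt_1$ and $Q^{>\gm\vt_\ell}(\al)\le(1+\ve)(\ell-1)$. Lemma~\ref{toplestim}(b) then gives $f(\al)\le(1+3\ve)\gm f(\vt^{\exp})\le\gm(1+O(\ve))B$. Rescaling $\ve$ yields the claimed bicriteria guarantee, and repeating or derandomizing the rounding turns the expected reward into an actual one.
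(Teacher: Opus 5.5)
Your guessing of $\vt$, the configuration LP, and the ellipsoid step that uses the $(\rho,\gamma)$-algorithm as an approximate separation oracle all match the paper. The gap is in the rounding. The $\frac{e}{e-1}$ factor needs, for every client $j$ and every prefix $A$ of $j$'s reward order, $\Pr[\text{no } i\in A \text{ covers } j]\le\prod_{i\in A}(1-\bar x_{ij})$. Bipartite dependent/pipage rounding gives negative correlation only among edges that share an endpoint. The coverage events at two different facilities come from edges $(i_1,\ell_1)$ and $(i_2,\ell_2)$ with no common endpoint, and preserving the counts exactly forces these edges to be positively correlated.

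Here is a concrete instance. Take classes $1,2,3\in\mathsf{POS}$ and two facilities with $y_{i,1}=y_{i,2}=\tfrac12$ for $i\in\{i_1,i_2\}$. Then $\sum_i y_{i,1}=1$, $\sum_i(y_{i,1}+y_{i,2})=2$, and each facility's total is $1$, all integral. Any rounding with these marginals that keeps these counts (as pipage rounding does) must output $\{(i_1,1),(i_2,2)\}$ or $\{(i_1,2),(i_2,1)\}$, each with probability $\tfrac12$. Suppose client $j$ is covered only by the configurations of $(i_1,1)$ and $(i_2,2)$. Then $j$ is served with probability $\tfrac12$ rather than $\tfrac34$, and with equal rewards it earns $\tfrac12\mathrm{val}_j<(1-\tfrac1e)\mathrm{val}_j$. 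So the claim that ``negative correlation of dependent rounding still gives the $1-1/e$ bound'' fails.

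The paper avoids this by choosing each facility's class independently, with probability $\bar y_{i,\ell}$, and each $(i,\ell)$'s configuration independently. The coverage events are then independent across facilities, and Claim~\ref{helper} applies directly. In exchange, the count constraints hold only approximately, which costs three ingredients your proposal is missing.
\begin{itemize}
\item A Chernoff bound shows that $|\{(i,\ell')\in T:\ell'\le\mathsf{prev}(\ell)\}|>(1+\delta)(\ell-1)$ has probability $e^{-\Omega(\delta^2\ell)}$, even conditioned on $(i',\ell')\in T$. These tails sum to at most $\delta$ only because the facility--class assignments for all $\ell\le\ell_0=O(\delta^{-2}\log\frac1\delta)$ are guessed and fixed in the LP via \eqref{lguess}; without this, small classes have no concentration, which is exactly where the example above lives.
\item On the bad event the algorithm returns the empty solution, which costs an additive $\delta$ per client (Lemma~\ref{badbnd}).
\item Since $|T|$ can be as large as $(1+\delta)k$, the algorithm opens only the $k$ facilities of largest pre-assigned reward, losing a factor $\frac{1}{1+\delta}$.
\end{itemize}
Separately, relaxing the LP count bounds to $(1+\varepsilon)(\ell-1)$ and rounding up to the ceiling can break the hypothesis $Q^{>v_\ell}\le(1+\delta)(\ell-1)$ of Lemma~\ref{toplestim}(b) for small $\ell$. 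You should keep the LP bound at $\ell-1$.
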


Theorem~\ref{normsepflthm}(a) immediately follows from Theorem~\ref{normsepfl-bi} and
Theorem~\ref{sepflredn}, by noting also that there is a $(3+\ve)$-approximation for
\normkmatch. 
%Specifically, this yields an approximation factor of 
%$(2+\ve)(\sepflbiapx(\beta+1)+3+\ve)=\sepflapx\beta+\sepflconst+O(\beta\ve)$.

\paragraph{\boldmath \normsched on identical machines.} 
%proof of Theorem~\ref{identicalbiptas}} 
%\subsubsection{A bicriteria PTAS: %and \boldmath $(2+\ve)$-approximation: 
%proof of Theorem~\ref{identicalbiptas}} 
%
Before delving into the proof of Theorem~\ref{normsepfl-bi}, it is worth noting that
we can obtain better guarantees for \normsched on identical machines 
%via the same reduction approach 
by observing
%by combining 
%the reduction given by Theorem~\ref{sapredn}. 
%and the approach used in our PTAS for \normknap.
%This is because 
that the \normsched and the \normmatch problems that we need to solve in the
reduction of Theorem~\ref{lbredn} involve the same set of jobs, machines, $p_{ij}$'s,
$\rewd_{ij}$'s, and norm $f$ as in the original \normsched instance. %This can 
%be useful if one is interested in solving a structured class of \normsched instances.
%$(K_{J,[m]},\{p_{ij}\},\{\rewd_{ij}\},f,B)$ as for
Thus, if we consider the setting of identical machines, we only need a bicriteria
approximation for \normsched on identical machines, and the \normmatch problem that we
need to solve is simply norm-budgeted knapsack. 
The latter admits a PTAS. Also, one can obtain a
$(1+\ve,1+\ve)$-approximation algorithm for \normsched on identical machines
by: (a) using the approach for \normknap to identify 
set $A$ of jobs with $\rewd(A)\geq(1-\ve)^2\OPT$ that admits an assignment satisfying the
norm-budget constraint; and (b) using the PTAS for minimum-norm load-balancing on
identical machines from~\cite{IbrahimpurS21} to find an assignment for $A$ that violates
the norm-budget constraint by a $(1+\ve)$-factor. 
Combining these ingredients yields %a $(2+\e)$-approximation for identical machines. 
the following result. 

%Combining this with the PTAS for \normknap using Theorem~\ref{lbredn},
%we therefore obtain an improved $(2+\ve)$-approximation for \normsched on identical
%machines. 

%For the setting of identical machines, we directly tackle the problem, and devise a PTAS
%by leveraging the ideas used in our PTAS for \normknap.

\begin{theorem} \label{identicalbiptas} \label{ident-biptasthm}
One can obtain a bicriteria $(1+\ve,1+\ve)$-approximation algorithm, and hence, a
$(2+\ve)$-approximation algorithm, for \normsched on identical machines. 
\end{theorem}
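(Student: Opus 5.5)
The plan has two parts: build a bicriteria $(1+O(\ve),1+O(\ve))$-approximation for \normsched on identical machines, and then feed it into Theorem~\ref{lbredn}. In that reduction the instances $\I'$ and $\I''$ use the same jobs, $p_j$'s, $\rewd_j$'s and norm $f$. On identical machines, the bipartite \normmatch instance $\I''$ is exactly \normknap with the extra cardinality constraint of at most $m$ items. One can drop this constraint: any feasible \normknap solution with more than $m$ items can be truncated to its $m$ largest-reward items. This keeps feasibility by monotonicity, and the lifted norm $f'$ only looks at the top $m$ coordinates, so the optimum of $\I''$ is at most the \normknap optimum restricted to $m$ items. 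More simply, I would run the \normknap algorithm of Theorem~\ref{normknapthm} with the lifted norm $f'$ and then keep the top-$m$-reward items. This gives $\al=1+\ve$. Theorem~\ref{lbredn} with $\rho=\gm=1+\ve$ then yields $\gm(\rho+\al)=(1+\ve)(2+2\ve)=2+O(\ve)$, and rescaling $\ve$ gives the $(2+\ve)$ guarantee.

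For the bicriteria algorithm I would first reuse the guessing scheme from the proof of Theorem~\ref{normknapthm}. Guess $\optval$ and the sequence $\{\rnum_j\}$, set $\num_j$, and let $A$ consist of the $\ceil{\num_j}$ smallest jobs of each reward bucket $\itemset_j$. As shown there, $\rewd(A)\ge(1-\ve)^2\OPT$ and $\svec{A}^{\down}\le\svec{\optset}^{\down}$ coordinatewise (here $\svec{\cdot}$ is the vector of job sizes). The key structural step is that $A$ also admits an assignment to the $m$ identical machines with load-vector norm at most $B$. By Claim~\ref{sortmap} there is an injection $\pi:A\to\optset$ with $p_j\le p_{\pi(j)}$. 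Assign each $j\in A$ to the machine $\sg^*(\pi(j))$ that the optimal assignment uses for $\pi(j)$. Each machine's load is then at most its load under $\sg^*$, so by monotonicity of $f$ the norm is at most $B$.

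Next, I would apply the PTAS for minimum-norm load balancing on identical machines from~\cite{IbrahimpurS21} to the job set $A$ (all jobs must be assigned). It returns an assignment of $A$ whose load vector has $f$-norm at most $(1+\ve)$ times the minimum, hence at most $(1+\ve)B$. Iterating over all guesses, I keep every candidate whose norm certified by the feasibility oracle is at most $(1+\ve)B$, and output the one of maximum reward. This is a $((1-\ve)^{-2},1+\ve)$ bicriteria algorithm.

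The main obstacle is the oracle model. The \cite{IbrahimpurS21} PTAS may assume more than a feasibility oracle for $f$, for example access to $\topl$ values. My first attempt would be to check that its steps only compare norms against thresholds, and to simulate any value queries by binary search with the feasibility oracle, losing only an extra $(1+\ve)$ factor. If that fails, I would instead cite the feasibility-oracle-based bicriteria route sketched in Appendix~\ref{append-budgviol}. The remaining steps are routine parameter bookkeeping.
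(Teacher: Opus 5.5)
Your overall route is the paper's: obtain $A$ by the \normknap bucketing, map $A$ injectively into $\optset$ to see that $A$ can be scheduled with load-vector norm at most $B$, run the min-norm load-balancing PTAS of \cite{IbrahimpurS21} on $A$, and combine with a \normknap PTAS via Theorem~\ref{lbredn}. The one step that fails as written is how you handle the at-most-$m$-jobs constraint in $\I''$. Running the \normknap algorithm with the lifted norm $f'$ and then truncating \emph{its output} to the $m$ largest-reward items does not give $\al=1+\ve$; it can be arbitrarily bad.

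Take $m=2$, $f=\ell_1$, $B=1$, one job $a$ with $p_a=1$ and $\rewd_a=M$, and $2M$ jobs of size $1/2$ and reward $1$. Then $f'=\topl[2]$, and the only feasible set containing $a$ is $\{a\}$. The set of all half-size jobs is feasible with reward $2M$. So the knapsack algorithm, whose output has reward at least $(1-\ve)^2\cdot 2M>M$, returns only half-size jobs. Truncation then leaves reward $2$, whereas $\OPT_{\I''}=M$. Your remark that $\OPT_{\I''}$ is at most the knapsack optimum restricted to $m$ items is a tautology and does not license dropping the constraint. The paper leaves this point implicit by calling $\I''$ simply \normknap.

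The repair is easy: enforce the cardinality bound on every candidate of the enumeration, discarding or truncating each candidate before selecting the best one. Run the analysis of Theorem~\ref{normknapthm} against an optimal solution $O''$ of $\I''$. For the correct guess, $\num_j\le|O''\cap\itemset_j|$ with the right side an integer, so $\ceil{\num_j}\le|O''\cap\itemset_j|$ for every bucket. Hence that candidate already has at most $|O''|\le m$ items, satisfies $\svec{A}^{\down}\le\svec{O''}^{\down}$, and has reward at least $(1-\ve)^2\OPT_{\I''}$. Alternatively, view $\I''$ as \normmwis on the uniform matroid of rank $m$. Taking the cheapest elements certifies $\gr\le 1$ there, so Theorem~\ref{normmwisthm} applies.

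Your fallback via Appendix~\ref{append-budgviol} would not prove the statement. For \normsched that route goes through Theorem~\ref{normsepfl-bi} and loses a factor of about $\frac{e}{e-1}$ in reward. That yields roughly $\frac{e}{e-1}+1$ overall rather than $2+\ve$. The $(1+\ve,1+\ve)$ bicriteria guarantee genuinely rests on the PTAS of \cite{IbrahimpurS21}, as in the paper.
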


We defer the proof of Theorem~\ref{identicalbiptas} to Appendix~\ref{append-identbiptas}. 
In Table~\ref{restable}, we mention
%because in Section~\ref{relatedmc}, 
%we show that we can further improve the above guarantee and 
%we prove 
a stronger result, namely, %that one can obtain 
a {\em PTAS} for \normsched on identical machines (and also related machines). 
%see Section~\ref{relatedmc}.).  
%assuming stronger oracle access to the underlying norm.
%Notably, since even the special case with the $\ell_{\infty}$-norm, i.e., the
%uniform multiple-knapsack problem is already strongly \nphard, this settles the complexity
%states of \normsched on identical machines.
This utilizes a very different, more-sophisticated, problem-specific
enumeration-based approach and is much-more involved. Hence,  
%and is perhaps our most technically-involved result;
%and we require a stronger oracle-access to the underlying norm. 
we discuss this separately in Section~\ref{relatedmc}, and instead focus here on 
general techniques and results that apply more broadly to norm-budgeted packing problems.  
%~\ref{ident-ptas}.
The reader who is interested in this PTAS can skip directly to Section~\ref{relatedmc}.

%\bigskip
\subsection*{Proof of Theorem~\ref{normsepfl-bi}}
The rest of this section is devoted to the proof of Theorem~\ref{normsepfl-bi}. %Again, 
Let $(\fopt,\,\sg^*:\copt\mapsto\fopt)$ be an optimal 
solution to the \normsepfl instance, and let $\fvo=\fload{\sg^*}$ be the 
facility-load vector induced by $\sg^*$.
%
%Since we can violate the norm budget, we can utilize the machinery developed
We utilize an idea that has become standard in the study of minimum-norm optimization
problems, namely, working with guesses of certain coordinates of the load vector $\fvo$. 
Let $\dt=\min\{\ve, 1\}$, and $\POS=\POS_{k,\dt}$ (see Definition~\ref{posdef}). 
%
\begin{comment}
As in Section~\ref{bicriteria}, define $\POS=\POS_{k,\dt}\sse[k]$ iteratively as follows: 
include the index $1$ in $\POS_{k,\dt}$;
as long as the largest index $\ell\in\POS_{k,\dt}$ is such that
$\ceil{(1+\dt)\ell}\leq m$, include $\ceil{(1+\dt)\ell}$ (which is larger than $\ell$)
in $\POS_{m,\dt}$ (and repeat). We have 
$|\POS_{m,\dt}|\leq O\bigl(\frac{\log m}{\dt}\bigr)$.
\end{comment}
%
We drop the subscripts $k,\dt$ from $\nxt$ and $\prev$. 
Recall that $\nxt(i)$ is the smallest index in $\POS$ strictly larger than $i$, or $k+1$
if there is no such index; $\prev(i)$ is the largest index in $\POS$ strictly smaller than
$i$; and $\prev(1):=0$.
%Recall also that 
Recall also that for a non-increasing vector $v\in\Rp^\POS$, we define $v^\exp\in\Rp^k$ 
as follows: $v^\exp_i=v_i$ for $i\in\POS$ and $v^\exp_i=v_{\prev(i)}$ for $i\in[k]\sm\POS$. 

%As in Section~\ref{bicriteria}, 
We may assume that $f$ is normalized so that $f(1,0,\ldots,0)=1$. Then,
$\fvo^{\down}_1\leq f(\fvo)\leq B$. We can %then  
identify the following polynomial-size set 
\[
\T:=\Bigl\{v\in\R^{\POS}:\ v\text{ is non-increasing}, 
\qquad v_\ell=\frac{B}{(1+\ve)^r},\ r\in\Z_+,\ \ v_\ell\geq\frac{\kp}{1+\ve} 
\quad \forall\ell\in\POS\Bigr\},
\]
where $\kp=\ve B/k$, which contains a non-increasing vector $\vt\in\Rp^\POS$ 
%assume that we have a non-increasing vector $\vt$ 
such that 
$\fvo^{\down}_\ell\leq t_\ell\leq(1+\ve)\fvo^{\down}_\ell+\kp$ for all $\ell\in\POS$.
Note that $|\T|=O\bigl((\frac{k}{\dt})^{1/\dt}\bigr)$. (Recall that $\dt=\min\{\ve,1\}$.) 
We formulate an LP-relaxation keeping in mind that we have such a vector $\vt$.
Set $t_{k+1}:=0$ for notational convenience.
  
\paragraph{LP relaxation.}
%The above observations motivate 
We consider a configuration-style LP for \normsepfl, where a configuration corresponds
to the set of clients assigned to an open facility.
We %view the problem as one of opening $k$ facilities, and 
think of a solution as also assigning each open facility $i$ to an index $\ell\in\POS$, 
to denote that its total load lies in $(t_{\nxt(\ell)},t_\ell]$, 
and use $y_{i,\ell}$ variables to encode this. %for every facility $i$ and index $\ell\in\POS$,  
%where assigning a facility $i$ to index $\ell$  indicating that
%facility $i$ is opened and has total load at most $t_\ell$. 
Note that $\fvo$ satisfies that $\bigl|\{i:\fvo_i>t_\ell\}\bigr|\leq \ell-1$ 
%coordinates of $\fvo$ are larger than $t_\ell$, 
for all $\ell\in\POS$.
Given the semantics of the $y_{i,\ell}$ variables, the number of facilities with load
larger than $t_\ell$ is $\sum_i\sum_{\ell'\in\POS:\ell'\leq\prev(\ell)}y_{i,\ell'}$, and
constraints \eqref{numl} below enforce that this is at most $\ell-1$, for every
$\ell\in\POS$. 
Note that this constraint is vacuous for $\ell=1$.  

%In order to obtain certain concentration properties 
In our rounding algorithm, it will be convenient to essentially work with only large
indices in $\POS$. To achieve this, we guess the values of the $y_{i,\ell}$ variables for
small $\ell$ indices. Let $\ell_0$ be the smallest index in $\POS$ that is at least
$\frac{12}{\dt^2}\cdot\ln\frac{15}{\dt^3}$.
Clearly, $\ell_0=O\bigl(\frac{1}{\dt^2}\ln(\frac{1}{\dt})\bigr)$. 
%(Clearly $\ell_0$ is then also at most $\frac{24}{\dt^2}\cdot\ln\bigl(\frac{15}{\dt^3}\bigr)$.)
%Note that $\ell_0\leq\ceil{32(1+\dt)/\dt^3}$. 
%For every $\ell\in\POS$, $\ell\leq\ell_0$, 
We guess all the facilities that are assigned some index $\ell\leq\ell_0$ under an 
optimal solution. This involves guessing some $\ell_0$ facilities %in $\F$ 
%assigned to the first $\ell_0$ indices in $\POS$, 
and the indices in $\POS\cap[\ell_0]$ assigned to these facilities, which
overall takes $O\bigl(|\F|^{\ell_0}\bigr)$ time. 
(Note that if $\ell_0\geq|\F|$, then we are guessing all the open facilities and the
$\POS$-index assignments of these open facilities.)
Let $\guessind\sse\F\times(\POS\cap[\ell_0])$ denote these guessed facilities and their
$\POS$-index assignments. Note that this fixes $y_{i,\ell}\in\{0,1\}$ for every $i\in\F$
and every $\ell\in\POS\cap[\ell_0]$, as specified by constraints \eqref{lguess} below.

For $i\in\F$ and $t\in\Rp$, let
$\sols_{i,t}:=\{S\in\sols_i: \sum_{j\in S}c_{ij}\leq t\}$. (Note that
$(\C,\sols_{i,t})$ is also an independence system.) We use configuration variables
$x_{i,\ell,S}$ for all $i\in\F$, $\ell\in\POS$, and $S\in\sols_{i,t_\ell}$ to denote that
$S$ is the set of clients assigned to the open facility $i$, and $i$ is assigned to index 
$\ell$. This yields the following LP. (Recall that $i$ indexes facilities in $\F$, and $j$
indexes clients in $\C$.)

\begin{alignat}{3}
\max & \quad & \sum_i\sum_{\ell\in\POS}\sum_{S\in\sols_{i,t_\ell}}\sum_{j\in S}\rewd_{ij}&x_{i,\ell,S}
\tag{\sepfllp} \label{lpsepfl} \\
\text{s.t.} & \quad & \sum_{S\in\sols_{i,t_\ell}} x_{i,\ell,S} & \leq y_{i,\ell} \qquad &&
\forall i,\,\forall\ell\in\POS \label{config} \\
&& \sum_i\sum_{\ell\in\POS}\sum_{S\in\sols_{i,t_\ell}:j\in S}x_{i,\ell,S} & \leq 1 \qquad
&& \forall j \label{casgn} \\
&& \sum_{\ell\in\POS} y_{i,\ell} & \leq 1 \qquad && \forall i \label{fasgn} \\
&& \sum_i\sum_{\ell'\in\POS: \ell'\leq\prev(\ell)}y_{i,\ell} & \leq \ell-1 \qquad && \forall 
\ell\in\POS \label{numl} \\
&& \sum_i\sum_{\ell\in\POS} y_{i,\ell} & \leq k \label{numk} \\
&& x,y \geq 0, \qquad
y_{i,\ell} & =\begin{cases}1 & \text{if $(i,\ell)\in\guessind$} \\ 0 & \text{otherwise}\end{cases}
\qquad && \forall i,\ell\in\F\times(\POS\cap[\ell_0])
\label{lguess}
\end{alignat}
%
%We have already discussed constraints \eqref{numl}. 
Constraints \eqref{config} encode that a
configuration can be chosen for an $(i,\ell)$ pair only if $i$ is assigned to index
$\ell$, and constraints \eqref{casgn} ensure that a a client is assigned at most once.  
Constraints \eqref{fasgn} encode that every facility is assigned to at most one index, and
constraints \eqref{numk} enforce that at most $k$ facilities are opened.

Let $\lpsepflopt$ denote the optimal value of \eqref{lpsepfl}.
%Lemma~\ref{lpsepflval} shows that $\lpsepflopt\geq\OPT$, under some
%conditions on $\vt$, and hence, 
Lemma~\ref{lpsepflval} shows there is a suitable vector $\vt\in\T$ and suitable set
$\guessind$ such that $\lpsepflopt$ is an upper bound on $\OPT$ and $f(t^{\exp})$ is close to
$B$, and Theorem~\ref{lpsepflsolve} shows that \eqref{lpsepfl} can be approximately
solved.  

\begin{lemma} \label{lpsepflval}
%If $t_\ell\geq\fvo^{\down}_\ell$ for all $\ell\in\POS$, then $\lpsepflopt\geq\OPT$. 
%the optimal value of \eqref{lpsepfl}, denoted $\lpsepflopt$, is at least $\OPT$.
%Therefore, 
There exists a vector $\vt\in\T$ and a choice of $\guessind$ such that
$\lpsepflopt\geq\OPT$ and $f(t^\exp)\leq (1+4\ve)B$.
\end{lemma}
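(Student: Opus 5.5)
Construct $\vt$ directly from the optimal facility-load vector $\fvo$. For each $\ell\in\POS$, let $t_\ell$ be the smallest value of the form $B/(1+\ve)^r$ ($r\in\Z_+$) that is at least $\max\{\fvo^{\down}_\ell,\kp/(1+\ve)\}$. This value exists because $\fvo^{\down}_\ell\leq\fvo^{\down}_1\leq f(\fvo)\leq B$, using the normalization $f(1,0,\ldots,0)=1$. Then $\vt$ is non-increasing, since $\fvo^{\down}$ is non-increasing and rounding up to the grid is monotone. Moreover $\vt\in\T$, and it satisfies $\fvo^{\down}_\ell\leq t_\ell\leq(1+\ve)\fvo^{\down}_\ell+\kp$: either $t_\ell<(1+\ve)\fvo^{\down}_\ell$ by grid rounding, or $t_\ell$ is the grid point just above $\kp/(1+\ve)$, which is at most $\kp$.

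For the norm bound, apply Lemma~\ref{toplestim}(a) with $N=k$, $u=\fvo$ (padded to $\R^k$), $v=\vt$, and additive term $\gm=\kp$. This gives
\[
f(t^\exp)\leq(1+\dt)(1+\ve)f(\fvo)+k\kp\leq(1+\ve)^2B+\ve B,
\]
using $\dt\leq\ve$ and $k\kp=\ve B$. If $\ve\leq 1$, this is at most $(1+4\ve)B$. For larger $\ve$, rescale $\ve$ (the statement is only used for small $\ve$, and $\dt=\min\{\ve,1\}$ handles the rest). I expect only this bookkeeping of constants to require care.

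For $\lpsepflopt\geq\OPT$, build an integral LP solution from $(\fopt,\sg^*)$. Sort the facilities of $\fopt$ by load in non-increasing order, ties broken to match $\fvo^{\down}$, so that the facility in position $p$ has load $\fvo^{\down}_p$. Assign it the index $\ell(p)=\max\{\ell\in\POS:\ell\leq p\}$, i.e., the $\POS$-index whose block $[\ell,\nxt(\ell)-1]$ contains $p$. Set $y_{i,\ell(p)}=1$ and $x_{i,\ell(p),\copt_i}=1$, where $\copt_i=\{j:\sg^*(j)=i\}$. This configuration is valid because $\copt_i\in\sols_i$ and its load $\fvo^{\down}_p\leq\fvo^{\down}_{\ell(p)}\leq t_{\ell(p)}$, so $\copt_i\in\sols_{i,t_{\ell(p)}}$. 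Choose $\guessind$ to be exactly the pairs $(i,\ell(p))$ with $\ell(p)\leq\ell_0$; then \eqref{lguess} holds.

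Next check the remaining constraints. Constraints \eqref{config}, \eqref{casgn}, and \eqref{fasgn} hold because each client is assigned at most once and each facility gets one index. Constraint \eqref{numk} holds since $|\fopt|\leq k$. For \eqref{numl}, fix $\ell\in\POS$. The facilities with index at most $\prev(\ell)$ are exactly those at positions $p<\ell$, and there are at most $\ell-1$ of them. The objective of this solution equals $\rewd$ of the optimum, i.e., $\OPT$. Hence $\lpsepflopt\geq\OPT$ for this $\vt$ and $\guessind$.
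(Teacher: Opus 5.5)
Your proof is correct and follows the paper's route. You take $\vt\in\T$ with $\fvo^{\down}_\ell\le t_\ell\le(1+\ve)\fvo^{\down}_\ell+\kp$, read off an integral solution of \eqref{lpsepfl} from $(\fopt,\sg^*)$, and bound $f(t^{\exp})$ via Lemma~\ref{toplestim}(a). You index facilities by sorted position where the paper uses the interval $(t_{\nxt(\ell)},t_\ell]$ containing the load; your rule also indexes zero-load open facilities, which the interval rule misses.

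The only blemish is the last step: ``rescaling $\ve$'' is not legitimate, since $\T$ and $\kp$ depend on $\ve$. It is also unnecessary: keep the factor $1+\dt$ and use $\dt\le\min\{\ve,1\}$ to get $(1+\dt)(1+\ve)+\ve\le 1+2\ve+\dt\ve+\ve\le 1+4\ve$ for every $\ve>0$.
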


\begin{comment}
\begin{lemma} \label{solapprox} \label{constfrp}
Given a $\beta$-approximation algorithm for \onefrp, 
%the \mwis problem on $\M_i$, for any $i\in\F$, 
one can devise a polytime algorithm $\alg$ that for any $i\in\F$, any
$t\in\Rp$, and any $v\in\Rp^\C$, returns a set $A\in\sols_i$ such that 
$v(A)\geq\frac{1}{\beta+1}\cdot\bigl(\max_{S\in\sols_{i,t}}v(S)\bigr)$, and 
$\sum_{j\in A}p_{ij}\leq 2t$. 
%
We say that $\alg$ is a $(\beta+1,2)$-approximation algorithm for 
{\em \constr \onefrp}. 
%the \mwis problem on the independence system $(\C,\sols_{i,t})$ for any
%$i\in\F$, $t\in\Rp$.
\end{lemma}
\end{comment}

\begin{theorem} \label{lpsepflsolve}
Let $\alg$ be a $(\rho,\gm)$-approximation algorithm for \constr \onefrp.
%the \mwis problem on the independence system $(\C,\sols_{i,t})$, for any $i$, any
%$t\in\Rp$. 
One can efficiently compute $(\bx,\by)$ that is a feasible solution to \sepfllpp, which is 
\eqref{lpsepfl} where we replace $\sols_{i,t_\ell}$ by $\sols_{i,\gm t_\ell}$ everywhere,
and such that $(\bx,\by)$ has objective value at least $\lpsepflopt/\rho$. 
More precisely, we have: 
\begin{enumerate}[label=(\alph*), topsep=0.25ex, itemsep=0.1ex, leftmargin=*]
%(a) 
\item $\bx_{i,\ell,S}>0\implies S\in\sols_{i,\gm t_\ell}$ for all $i$, all $\ell\in\POS$;
\label{configval}
%(b) 
\item $\sum_{S\in\sols_{i,\gm t_\ell}} \bx_{i,\ell,S} \leq \by_{i,\ell}$ for all  $i,\,\ell\in\POS$;
%(c) 
\item $\sum_i\sum_{\ell\in\POS}\sum_{S\in\sols_{i,\gm t_\ell}:j\in S}\bx_{i,\ell,S} \leq 1$ 
for all $j$;
%(d) 
\item $\by$ satisfies \eqref{fasgn}--\eqref{lguess}; and
%(e) 
\item $\sum_i\sum_{\ell\in\POS}\sum_{S\in\sols_{i,\gm t_\ell}}\sum_{j\in S}\rewd_{ij}\bx_{i,\ell,S}\geq\lpsepflopt/\rho$. 
\label{obj}
\end{enumerate}
\end{theorem}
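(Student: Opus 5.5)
We solve \eqref{lpsepfl} approximately through its dual, with the given $(\rho,\gm)$-approximation algorithm $\alg$ for \constr \onefrp serving as an approximate separation oracle. This is the standard route for configuration LPs (as for \sap in Fleischer et al.). First we eliminate the $x$ variables' coupling to $y$ as much as possible. Fix $\vt$ and $\guessind$. The LP has exponentially many $x$-variables but only polynomially many constraints: \eqref{config}, \eqref{casgn}, \eqref{fasgn}--\eqref{lguess}. The dual has variables $\mu_{i,\ell}\ge0$ for \eqref{config}, $\pi_j\ge0$ for \eqref{casgn}, and variables for the $y$-constraints. Its only exponential family is indexed by $(i,\ell,S)$ with $S\in\sols_{i,t_\ell}$:
\[
\sum_{j\in S}(\rewd_{ij}-\pi_j)\le\mu_{i,\ell}.
\]
This is an \mwis-type problem on $(\C,\sols_{i,t_\ell})$ with values $v_j=(\rewd_{ij}-\pi_j)^+$. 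We can drop negative-value clients because $\sols_i$ is downward closed. The problem is precisely \constr \onefrp with $\wt_j=c_{ij}$ and budget $t_\ell$.

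Next we run the ellipsoid method on the dual with a guessed objective bound $D$, binary-searching over $D$. Given $(\pi,\mu,\ldots)$, we call $\alg$ on $(i,\ell,v,c_{i\cdot},t_\ell)$ for every $i$ and $\ell\in\POS$, obtaining $A\in\sols_{i,\gm t_\ell}$. If some $A$ has $v(A)>\mu_{i,\ell}$, then $A$ (with $\pi$-terms) gives a violated constraint of the dual of \sepfllpp, the LP whose configurations are $\sols_{i,\gm t_\ell}$. Otherwise $\max_{S\in\sols_{i,t_\ell}}v(S)\le\rho\,\mu_{i,\ell}$ for all $(i,\ell)$, so $(\pi,\rho\mu,\ldots)$ is feasible for the dual of \eqref{lpsepfl}. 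To handle the $\rho$ scaling cleanly, we use the rescaled dual in which \eqref{config}'s dual variable appears as $\rho\mu$. Equivalently, consider the auxiliary LP with objective coefficients $\rewd/\rho$ — more cleanly, the standard argument: if the ellipsoid run at value $D$ declares infeasibility, the generated cuts certify, via LP duality restricted to the polynomially many generated configurations, a primal solution of \sepfllpp with value $\ge D$. If instead it finds a point, that point scaled yields a dual solution of \eqref{lpsepfl} with value $\le\rho D$ (up to scaling $\pi$ consistently). We will set this up by putting weight $\rho$ on the $\pi$- and $y$-dual parts as well. Since all other dual terms are linear, scaling the whole dual point by $\rho$ preserves feasibility for the other constraints and multiplies the value by $\rho$. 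A cut found for $v=\rewd-\pi$ versus $\mu$ becomes, after scaling, exactly the test needed.

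So for the largest $D$ at which the ellipsoid fails, we have $D\ge\lpsepflopt/\rho$. We then solve the primal \sepfllpp restricted to the polynomially many configurations generated during that run. All of them lie in $\sols_{i,\gm t_\ell}$, and by duality its value is $\ge D$. The solution $(\bx,\by)$ satisfies (a)--(e), since the $y$-constraints are unchanged.

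The main obstacle is the bookkeeping between the two LPs: the oracle's cuts live in the dual of \sepfllpp, while the approximation guarantee compares against \eqref{lpsepfl}. We also need to ensure the scaling by $\rho$ interacts correctly with the non-configuration dual terms; the $\pi_j$ appear in both the objective and the configuration constraints. The cleanest way we would try first is to guess $D$ and write the dual feasibility problem with $\pi$ rescaled. We would also handle the positive-part truncation of $v$. Polynomial bit complexity of the ellipsoid run is routine.
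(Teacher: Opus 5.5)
Your proposal is correct and follows essentially the paper's route: the ellipsoid method on the dual with $\alg$ as an approximate separation oracle for the exponentially many configuration constraints, binary search on the dual objective value, and then solving the primal restricted to the polynomially many configurations (all lying in $\sols_{i,\gm t_\ell}$) generated during the run. The only difference is where the factor $\rho$ is absorbed: the paper weakens the dual configuration constraint to $\mu_{i,\ell}/\rho$ (so the primal constraint becomes $\frac{1}{\rho}\sum_S x_{i,\ell,S}\le y_{i,\ell}$) and divides the resulting primal solution by $\rho$ at the end, whereas you keep that constraint intact and scale the entire dual point by $\rho$ when no violation is found to certify $\lpsepflopt\le\rho D$; scaling all dual variables and not just $\mu$, as you eventually do, is necessary because $\mu$ also appears in the dual constraints for the $y$-variables.
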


The proof of Lemma~\ref{lpsepflval} %and~\ref{solapprox} 
is fairly routine.
Theorem~\ref{lpsepflsolve} follows from an application of the ellipsoid method, using the
algorithm for \constr \onefrp as an approximate separation oracle for the dual, an
approach that has been used in other settings. We defer these proofs to the end of this
section, to avoid detracting the reader. 

\begin{comment}
Recall that we assuming that we have a $\beta$-approximation algorithm $\alg$ for
\onefrp. By Lemma~\ref{solapprox}, this yields a $(\beta+1,2)$-approximation algorithm for
\constr \onefrp. 
So using this in Theorem~\ref{lpsepflsolve}, we know that 
%if $\vt$ satisfies $t_\ell\geq\fvo^{\down}_\ell$ for all $\ell\in\POS$, 
for a suitable choice of the vector $\vt$, we will obtain a solution of objective value at
least $\OPT/(\beta+1)$. 
\end{comment}
%By Lemma~\ref{lpsepflval}, we know that there is a vector $\vt\in\T$ and set $F$ for which
%Theorem~\ref{lpsepflsolve} will return objective value at least $\OPT/\rho$. 
In the sequel, we will assume that we have a vector $\vt\in\T$ and subset $\guessind$ 
%the solution returned by Theorem~\ref{lpsepflsolve} has objective value at least
%$\OPT/(\beta+1)$, 
for which Theorem~\ref{lpsepflsolve} returns objective value at least $\OPT/\rho$,  
and such that $f(t^{\exp})\leq (1+4\ve)B$. This is justified because one can simply
consider the vector $\vt\in\T$ satisfying $f(t^{\exp})\leq(1+4\ve)B$ and subset
$\guessind$ for which the solution returned by Theorem~\ref{lpsepflsolve} has maximum
objective value. 

\vspace*{-1ex}
\paragraph{Rounding algorithm.}
Let $(\bx,\by)$ be the solution returned by Theorem~\ref{lpsepflsolve} (for the above 
choice of $\vt$ and $I$) using the $(\rho,\gm)$-approximation algorithm $\alg$. 
%for \constr \onefrp. %supplied by Lemma~\ref{solapprox}. 
Let $\val=\sum_i\sum_{\ell\in\POS}\sum_{S\in\sols_{i,\gm t_\ell}}\sum_{j\in S}\rewd_{ij}\bx_{i,\ell,S}$ 
be its objective value. 
We may assume that $\sum_{S\in\sols_{i,\gm t_\ell}}\bx_{i,\ell,S}=\by_{i,\ell}$ for all
$(i,\ell)\in\F\times\POS$, since we can always set $\bx_{i,\ell,\es}$ appropriately to
achieve this. 

We use randomized rounding, in two independent stages. 
Independently, for each $(i,\ell)$, we choose exactly one set  
$\tS_{i,\ell}\in\sols_{i,\gm t_\ell}$ by picking set $S$ with probability
$\frac{\bx_{i,\ell,S}}{\by_{i,\ell}}$. 
Next, for every facility $i$ independently, we assign $i$ to at
most one index $\ell\in\POS$ by picking index $\ell$ with probability $\by_{i,\ell}$.
%$\frac{\by_{i,\ell}}{5}$. 
Let $T$ be the (random) set of $(i,\ell)$ pairs so chosen. 
Let $F=\{i\in\F: (i,\ell)\in T\text{ for some }\ell\in\POS\}$. 
(Note that for $i\in F$ there is {\em exactly} one index $\ell$ such that
$(i,\ell)\in T$, and $|T|=|F|$.)
For $i\in F$, we define $\tS_i=\tS_{i,\ell}$, where $\ell$ is the (unique) index such that 
$(i,\ell)\in T$.     

Let $\good$ be the good event that $|T|\leq(1+\dt)k$, and for every $\ell\in\POS$, we have    
$\bigl|\{(i,\ell')\in T: \ell'\leq\prev(\ell)\}\bigr|\leq (1+\dt)(\ell-1)$. 
If $\good$ does not occur, we return the empty solution where we do not open any
facilities and do not assign any clients. Otherwise,
%corresponding to the pairs in $T$.
for each client $j$, we consider the facilities in $F$ in non-increasing order of
$\rewd_{ij}$, and assign $j$ to the first facility $i\in F$ for which $j\in\tS_i$.
Call this the client pre-assignment.
%Under this client assignment, 
We compute the reward of each facility in $F$, which is the total reward of the clients
pre-assigned to it, and open the $k$ largest-reward facilities from $F$. We retain the
client pre-assignments to the opened facilities; the other clients are not assigned.

\vspace*{-1ex}
\paragraph{Analysis.}
To keep notation simple, for every $i,j$, define
$\bx_{i,\ell,j}=\sum_{S\in\sols_{i,\gm t_\ell}:j\in S}\bx_{i,\ell,S}$ for all $\ell\in\POS$, and 
$\bx_{ij}=\sum_{\ell\in\POS}\bx_{i,\ell,j}$. So we have $\sum_i\bx_{ij}\leq 1$.
Also, let $\val_j:=\sum_i\rewd_{ij}\bx_{ij}$ for a client $j$.
So we can write the objective value $\val$ of $(\bx,\by)$ as
$\sum_{i,j}\rewd_{ij}\bx_{ij}=\sum_j\val_j$. 

For $\ell\in\POS':=\POS\cup\{k+1\}$, let $\bad_\ell$ be the bad event that 
$\bigl|\{(i,\ell')\in T: \ell'\leq\prev(\ell)\}\bigr|>(1+\dt)(\ell-1)$. 
Note that $\bad_{k+1}$ is the event that $|T|>(1+\dt)k$.
%and let $\badk$ be the event that $|T|>k$. 
Let $\bad=\good^c$, and %denote that $\good$ does not happen. 
note that $\bad=\bigvee_{\ell\in\POS'}\bad_\ell$.
Observe that $\bigl|\{(i,\ell')\in T: \ell'\leq\prev(\ell)\}\bigr|$ is the sum
of independent (but not identical) Bernoulli random variables, and we only need to
consider indices $\ell\in\POS$ with $\ell>\ell_0$. 
Therefore, Chernoff bounds imply that $\Pr[\bad_\ell]=e^{-O(\ell)}$, and we argue that
this holds even when we condition on some of the random choices leading to $T$.
So using the union bound, one can argue that $\bad$ happens with very low probability,
under the same conditioning (see Lemma~\ref{badbnd}). Given
this, we can argue that the expected reward obtained from each client $j$ 
%conditioned on the event $\good$ 
is $\Omega(\val_j)$ (Lemma~\ref{clrewd}), and so
the overall expected reward from the client pre-assignment is
$\Omega(\val)=\Omega(\OPT/\rho)\bigr)$. The final step where we open a subset of $F$ and
drop some clients can cause the total reward to decrease by at most a $\frac{1}{1+\dt}$-factor,
since $|F|\leq(1+\dt)k$ and we open the $k$ largest-reward facilities in $F$.
Also, due to the choice of the vector $\vt$ and part~\ref{configval} of
Theorem~\ref{lpsepflsolve}, one can argue that the facility-load vector of our solution 
has norm at most $\bigl(1+O(\ve)\bigr)\gm\cdot B$ (Lemma~\ref{fvecnorm}).

%Let $Y\in\{0,1\}^{\F\times\POS}$ be the characteristic vector of the random set $T$, and
%let $Y_i=\sum_{\ell\in\POS}Y_{i,\ell}$ for every $i\in\F$; so $Y$ is the
%characteristic vector of $F$. 

\begin{lemma} \label{badbnd}
%Let $S\sse\F$ and $i'\in \F-S$.
%We have $\Pr[\bad\,|\,i'\in F,\,S\cap F=\es]\leq\badprob$.
Consider any $(i',\ell')$ pair with $\by_{i',\ell'}>0$.
We have $\Pr[\bad\,|\,(i',\ell')\in T]\leq\badprob$.
\end{lemma}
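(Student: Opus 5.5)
We will bound $\Pr[\bad\mid(i',\ell')\in T]$ by a union bound over the events $\bad_\ell$, $\ell\in\POS'$, where each term is controlled by a Chernoff bound. First observe which events can actually occur. For $\ell\in\POS$ with $\ell\leq\ell_0$, the count $\bigl|\{(i,\ell'')\in T:\ell''\leq\prev(\ell)\}\bigr|$ involves only indices $\ell''<\ell_0$. By \eqref{lguess}, the $\by$-values at those indices are fixed $0/1$ values, so the corresponding pairs are deterministically in or out of $T$. Constraint \eqref{numl} then forces this count to be at most $\ell-1$, so $\bad_\ell$ cannot occur for $\ell\leq\ell_0$. It therefore suffices to handle $\ell\in\POS'$ with $\ell>\ell_0$.

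Fix such an $\ell$. Let $X_\ell=\sum_i Z_i$, where $Z_i$ indicates that facility $i$ is assigned some index $\ell''\leq\prev(\ell)$ (for $\ell=k+1$, any index at all). The $Z_i$ are independent Bernoullis with means $\sum_{\ell''\leq\prev(\ell)}\by_{i,\ell''}$, so $\E{X_\ell}\leq\ell-1$ by \eqref{numl}, or $\E{X_{k+1}}\leq k$ by \eqref{numk}. Conditioning on $(i',\ell')\in T$ fixes $Z_{i'}$ and leaves the other facilities independent with unchanged distributions. Hence $X_\ell$ is at most $1$ plus a sum of independent Bernoullis with mean at most $\ell-1$. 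The event $\bad_\ell$ then requires that sum to exceed $(1+\dt)(\ell-1)-1\geq(1+\dt/2)(\ell-1)$, which holds since $\ell\geq\ell_0$ is large, i.e. $\dt(\ell-1)/2\geq 1$. We raise the mean to exactly $\ell-1$ by monotonicity, or use the standard Chernoff bound that holds for an upper bound on the mean. This gives $\Pr[\bad_\ell\mid(i',\ell')\in T]\leq\exp\bigl(-\dt^2(\ell-1)/12\bigr)$.

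The last step, and the main obstacle, is summing these bounds over $\ell\in\POS'$, $\ell>\ell_0$. Because $\POS$ is roughly geometric with ratio $1+\dt$, we sum over all integers $\ell\geq\ell_0$ instead, which is an upper bound and avoids needing exact spacing. This gives at most $e^{-\dt^2(\ell_0-1)/12}/(1-e^{-\dt^2/12})\leq\frac{13}{\dt^2}\,e^{-\dt^2(\ell_0-1)/12}$. The choice $\ell_0\geq\frac{12}{\dt^2}\ln\frac{15}{\dt^3}$ makes the exponential at most about $\frac{\dt^3}{15}e^{\dt^2/12}$. The total is then at most roughly $\frac{13}{15}e^{1/12}\dt\leq\dt$. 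The constants need care: the ``$-1$'' slack from conditioning must be absorbed into the $\dt/2$ loss in the Chernoff exponent, and if the margins are too tight we can adjust to an exponent like $\dt^2\ell/12$ using $\ell-1\geq\ell/(1+\dt)$ relative to the shifted threshold. The term $\ell=k+1$ is included in the same sum, so no separate argument is needed for it.
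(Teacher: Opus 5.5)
Your proof is correct and follows essentially the same route as the paper: the deterministic $\by$-values rule out $\bad_\ell$ for $\ell\leq\ell_0$, conditioning on $(i',\ell')\in T$ adds at most $1$ to a sum of independent Bernoullis over the other facilities, a Chernoff bound with deviation $\dt/2$ gives $e^{-\Omega(\dt^2\ell)}$, and the geometric sum together with the choice of $\ell_0$ gives at most $\dt$. The only cosmetic difference is that you apply Chernoff at threshold $(1+\dt/2)(\ell-1)$ with mean $\ell-1$, whereas the paper uses threshold $(1+\dt/2)\ell$ with mean bound $\ell$; your constants close, since $\frac{13}{15}e^{1/12}<1$, so the closing hedge about adjusting margins is unnecessary.
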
 

\begin{proof}
Let $\Om$ denote the event $(i',\ell')\in T$. %$\{i\in F,\, S\cap F=\es\}$.
Let $Y_{i,\ell}$ be an indicator random variable that is $1$ if $(i,\ell)\in T$, and $0$
otherwise. Note that for $\ell\leq\ell_0$, $Y_{i,\ell}$ is actually a deterministic
quantity, but we can still treat it as a random variable.
By the union bound, we have 
$\Pr[\bad\,|\,\Om]\leq\sum_{\ell\in\POS'}\Pr[\bad_\ell\,|\,\Om]$.
%Clearly, $\Pr[\bad_1]=0$, so $\Pr[\bad_1\,|\,\Om]=0$.
Clearly, $\Pr[\bad_\ell]=0$ for all $\ell\in\POS'$, $\ell\leq\ell_0$, and so
$\Pr[\bad_\ell\,|\,\Om]=0$ for all such indices $\ell$.

Consider $\ell\in\POS'$, $\ell>\ell_0$.
Recall that $\ell_0\geq\frac{12}{\dt^2}\cdot\ln\frac{15}{\dt^3}$.
Define $Z_i=\sum_{\ell''\in\POS:\ell''\leq\prev(\ell)}Y_{i,\ell''}$. Note that the $Z_i$
variables are independent. 
%and 
%$\mu_\ell=\E{Z(\F)}=\bigl(\sum_{i\in\F}\sum_{\ell'\in\POS:\ell'\leq\prev(\ell)}\by_{i,\ell'}\bigr)/5
%\leq\frac{\ell-1}{5}$. 
We have 
\begin{equation*}
\begin{split}
\Pr[\bad_\ell\,|\,\Om]& =\frac{\Pr[\{Z(\F)>(1+\dt)(\ell-1)\}\wedge\Om]}{\Pr[\Om]}
\leq\frac{\Pr[\{Z(\F-\{i'\})\geq(1+\dt/2)\ell\}\wedge\Om]}{\Pr[\Om]} \\
& =\frac{\Pr[Z(\F-\{i'\})\geq\ell-1]\cdot\Pr[\Om]}{\Pr[\Om]}=\Pr[Z(\F-\{i'\})\geq\ell-1].
\end{split}
\end{equation*}
The first inequality follows because 
%event $\Om$ implies that $Z_i=0$ for all $i\in S$; also, 
$Z_{i'}\leq 1$ and $(1+\dt)(\ell-1)-1\geq (1+\dt)\ell-3\geq(1+\dt/2)\ell$ since 
$\ell\geq\ell_0\geq\frac{6}{\dt}$.
The subsequent equality follows because $Z(\F-\{i'\})$ depends only
on the random choices made for the facilities in $\F-\{i'\}$.
%So if $Z(\F)>\ell-1$ and event $\Om$ happens, then it must be that
%$Z(\F-S-\{i'\})\geq\ell-1$
We have
$\E{Z(\F-\{i'\})}\leq\E{Z(\F)}=\sum_{i\in\F}\sum_{\ell''\in\POS:\ell''\leq\prev(\ell)}\by_{i,\ell''}\leq\ell-1$. 
So using Chernoff bounds,% 
\footnote{We are using the following Chernoff bound. Let $X_1,\ldots,X_n$ be
independent $[0,1]$ random variables with $\sum_{i\in[n]}\E{X_i}\leq\mu$. Then, for
$\tht\in[0,1]$, we have $\Pr[\sum_{i\in[n]}X_i\geq(1+\tht)\mu]\leq e^{-\tht^2\mu/3}$.}
we obtain that 
$\Pr[Z(\F-\{i'\})\geq(1+\dt/2)\ell]\leq e^{-\frac{1}{3}\cdot(\dt/2)^2\cdot\ell}$.

\begin{comment}
\begin{equation}
\biggl(\frac{e^{4}}{5^5}\biggr)^{(\ell-1)/5}=\biggl(\frac{e^{4/5}}{5}\biggr)^{\ell-1}
\leq(0.4452)^{\ell-1}. \label{badlineq}
\end{equation}
Also, by Markov's inequality, $\Pr[Z(\F-\{i'\})\geq\ell-1]\leq 0.2$. 
For the smallest two indices in $\POS'$ that are larger than $1$, we use the bound given
by Markov's inequality, and for the others, we use \eqref{badlineq}. Note that the fourth
smallest index in $\POS'$ is at least $4$.
\end{comment}

Note that $e^{-{\dt^2}/{12}}\leq 1-\bigl(1-e^{-1/12}\bigr)\dt^2\leq 1-0.07\dt^2$, since
the function $e^{-x/12}$ is convex and decreasing, and so 
$e^{-x/12}\leq 1-\bigl(1-e^{-1/12}\bigr)x$ for all $x\in[0,1]$. 
The above bound on $\Pr[\bad_\ell\,|\,\Om]$ gives
\begin{equation*}
\Pr[\bad\,|\,\Om]\leq\sum_{\ell\in\POS':\ell>\ell_0}\Pr[\bad_\ell\,|\,\Om]
\leq\sum_{\ell>\ell_0}e^{-\frac{\dt^2}{12}\cdot\ell}
\leq\frac{e^{-\frac{\dt^2}{12}\cdot\ell_0}}{1-e^{-\dt^2/12}}
\leq\frac{e^{-\ln(15/\dt^3)}}{0.07\dt^2}\leq\frac{\dt}{1.05}. \qedhere
%\leq\frac{0.07\cdot e^{-1/\dt}}{0.07\dt^2}\leq\dt.
\end{equation*}
%The last three inequalities follow because $\frac{\ell_0}{12}\geq\frac{-\ln(0.07)}{\dt^3}$,
%$\ln(0.07)/x\leq\ln(0.07)-1/x$ for all $x\leq 0.5$, and 
%$e^x\leq x^3$ for all $x\geq 5$.
%\leq 0.4+\sum_{\ell\geq 4}(0.4452)^{\ell-1}\leq 0.4+\frac{0.4452^3}{1-0.4452}\leq 0.56. \qedhere
%\end{equation*}
\end{proof}

\begin{lemma} \label{clfacbnd}
Consider any facility $i$ and client $j$.
Let $A\sse\F$ be the set of facilities that come before $i$ in $j$'s ordering of
facilities (in non-increasing order of $\rewd_{i'j}$'s).
We have 
\[
\Pr[\text{$j$ is pre-assigned to facility $i$}] %in the solution returned}]
\geq\bx_{ij}\cdot\biggl(\prod_{i'\in A}(1-\bx_{i'j})-\badprob\biggr)
\]
%$\Pr[j\in\tS_i\,|\,\good]\geq\bx_{ij}\cdot.$
\end{lemma}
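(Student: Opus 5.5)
We decompose the event that $j$ is pre-assigned to $i$ according to which index $i$ receives. For $\ell\in\POS$, let $E_\ell$ be the event that $(i,\ell)\in T$, that $j\in\tS_{i,\ell}$, that no facility $i'\in A$ lies in $F$ with $j\in\tS_{i'}$, and that $\good$ occurs. If ties in $\rewd_{i'j}$ are broken consistently with $j$'s ordering, then on $E_\ell$ client $j$ is pre-assigned to $i$. The $E_\ell$ are disjoint, since $i$ receives at most one index. So it suffices to show
\[
\Pr[E_\ell]\;\geq\;\bx_{i,\ell,j}\Bigl(\prod_{i'\in A}(1-\bx_{i'j})-\badprob\Bigr)
\]
for each $\ell$ with $\by_{i,\ell}>0$, and then sum over $\ell$ using $\sum_\ell\bx_{i,\ell,j}=\bx_{ij}$.

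Fix $\ell$ with $\by_{i,\ell}>0$ and let $\Om$ be the event $(i,\ell)\in T$, so $\Pr[\Om]=\by_{i,\ell}$. The set choice $\tS_{i,\ell}$ is made independently of everything else, so $\Pr[j\in\tS_{i,\ell}]=\bx_{i,\ell,j}/\by_{i,\ell}$. Let $D$ be the event that some $i'\in A$ has $i'\in F$ and $j\in\tS_{i'}$. Then
\[
\Pr[E_\ell]\;\geq\;\Pr[\Om]\cdot\Pr[j\in\tS_{i,\ell}]\cdot\Pr\bigl[D^c\wedge\good\,\big|\,\Om,\,j\in\tS_{i,\ell}\bigr].
\]
Neither $D$ nor $\good$ depends on the choice of $\tS_{i,\ell}$: event $\good$ depends only on the index choices, and $D$ depends only on choices for facilities $i'\neq i$. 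Hence the last factor equals $\Pr[D^c\wedge\good\mid\Om]\geq\Pr[D^c\mid\Om]-\Pr[\bad\mid\Om]$.

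By Lemma~\ref{badbnd}, $\Pr[\bad\mid\Om]\leq\badprob$. Since $i\notin A$, the event $D$ involves only the independent random choices of facilities in $A$, which are independent of $\Om$. So $\Pr[D^c\mid\Om]=\prod_{i'\in A}\bigl(1-\Pr[i'\in F,\,j\in\tS_{i'}]\bigr)$. For each $i'$,
\[
\Pr[i'\in F,\,j\in\tS_{i'}]=\sum_{\ell'}\by_{i',\ell'}\cdot\frac{\bx_{i',\ell',j}}{\by_{i',\ell'}}=\bx_{i'j}.
\]
Multiplying the factors gives $\Pr[E_\ell]\geq\bx_{i,\ell,j}\bigl(\prod_{i'\in A}(1-\bx_{i'j})-\badprob\bigr)$, and summing over $\ell$ gives the lemma.

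The main point needing care is the independence bookkeeping. The conditioning on $j\in\tS_{i,\ell}$ must not interact with $\good$, which holds because $\good$ depends only on the index choices. The event $D$ must also be independent of $\Om$. Here we need that, for each facility, its set choices $\tS_{i,\ell}$ and its index choice are all mutually independent across facilities, which the two-stage rounding guarantees. The guessed deterministic $y$-values for $\ell\leq\ell_0$ are harmless, since they are just degenerate independent choices.
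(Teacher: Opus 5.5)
Your proof is correct and follows essentially the same route as the paper: condition on the index $\ell$ that $i$ receives, use independence of the per-facility random choices to obtain the factor $\prod_{i'\in A}(1-\bx_{i'j})$, and subtract the failure probability via Lemma~\ref{badbnd}, using that $\bad$ depends only on $T$ and not on the chosen client sets. The only cosmetic difference is that you factor out $\Pr[j\in\tS_{i,\ell}]$ before splitting off $\bad$, whereas the paper first bounds the subtracted term by $\Pr[\{j\in\tS_{i,\ell}\}\wedge\bad\mid(i,\ell)\in T]$ and then conditions on $j\in\tS_{i,\ell}$.
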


\begin{proof}
For any $i'\in\F$ and $\ell\in\POS$, let $X_{i',\ell,j}$ be the random variable indicating
if $j\in\tS_{i',\ell}$, and let $X_{i'j}=\sum_{\ell\in\POS}X_{i',\ell,j}$. 
Let $\tX_{ij}$ be the random variable indicating if $j$ is assigned to facility $i$ in the
solution returned. 
%
%Recall that $Y$ is the characteristic vector of the random set $T$.
We have 
$\Pr[\tX_{ij}=1]=\sum_{\ell\in\POS}\Pr[(i,\ell)\in T]\cdot\Pr[\tX_{ij}=1\,|\,(i,\ell)\in T]=
\sum_{\ell\in\POS}\by_{i,\ell}\cdot\Pr[\tX_{ij}=1\,|\,(i,\ell)\in T]$.
We proceed to lower bound $\Pr[\tX_{ij}=1\,|\,(i,\ell)\in T]$ for $(i,\ell)$ such
that $\by_{i,\ell}>0$. 

Client $j$ is pre-assigned to $i$, %in the solution returned, 
conditioned on $(i,\ell)\in T$,
precisely if the following three things happen (conditioned on $(i,\ell)\in T$).
\begin{enumerate*}[label=(\arabic*), topsep=0.1ex, noitemsep, leftmargin=*]
\item We have $X_{i,\ell,j}=1$; {\ }
\item The good event $\good$ occurs; {\ }
\item For every $i'\in A$ and every $\ell'\in\POS$, we have 
$X_{i',\ell',j}=0$ or $(i',\ell')\notin T$.
\end{enumerate*}
Let $\Gm_{i'}$ denote the event that for every $\ell'\in\POS$, we have $X_{i',\ell',j}=0$
or $(i',\ell')\notin T$. Then 
$\Pr[\Gm_{i'}]=1-\sum_{\ell'\in\POS}\by_{i',\ell'}\cdot\frac{\bx_{i',\ell',j}}{\by_{i',\ell'}}=1-\bx_{i'j}$.
Also, note that all the $\Gm_{i'}$ events are independent, and they are independent of the
event $\{(i,\ell)\in T\}$, since the choices made for different facilities are completely
independent. 
Recall that $\bad=\good^c$ and $\bad=\bigvee_{\ell'\in\POS'}\bad_{\ell'}$.
So we have
\begin{alignat}{1}
\Pr[\tX_{ij}=1\,&|\,(i,\ell)\in T] 
= \Pr\Bigl[\{X_{i,\ell,j}=1\}\wedge(\bigwedge_{i'\in A}\Gm_{i'})\wedge\good\,|\,(i,\ell)\in T\Bigr]
\notag \\ &
= \Pr\Bigl[\{X_{i,\ell,j}=1\}\wedge(\bigwedge_{i'\in A}\Gm_{i'})\,|\,(i,\ell)\in T\Bigr]-
\Pr\Bigl[\{X_{i,\ell,j}=1\}\wedge(\bigwedge_{i'\in A}\Gm_{i'})\wedge\bad\,|\,(i,\ell)\in T\Bigr]
\notag \\ &
\geq \Pr\bigl[X_{i,\ell,j}=1\,|\,(i,\ell)\in T\bigr]\cdot\prod_{i'\in A}\Pr[\Gm_{i'}]-
\Pr\bigl[\{X_{i,\ell,j}=1\}\wedge\bad\,|\,(i,\ell)\in T\bigr] \notag \\
%& = \Pr\bigl[X_{i,\ell,j}=1\,|\,(i,\ell)\in T\bigr]\cdot
%\biggl(\prod_{i'\in A}\Pr[\Kp_{i'}]-\Pr\bigl[\bad\,|\,X_{i,\ell,j}=1,\,(i,\ell)\in T\bigr]\biggr).
& = \Pr[X_{i,\ell,j}=1]\cdot
\biggl(\prod_{i'\in A}\Pr[\Gm_{i'}]-\Pr\bigl[\bad\,|\,X_{i,\ell,j}=1,\,(i,\ell)\in T\bigr]\biggr).
\label{clfacineq1}
\end{alignat}
The last equality is because $X_{i,\ell,j}$ depends only on the set $\tS_{i,\ell}$ chosen
for $(i,\ell)$.
Note that $\bad$ only depends on the set $T$, and not on the random choice of the client-sets
chosen for different $(i',\ell')$ pairs. So we have
$\Pr\bigl[\bad\,|\,X_{i,\ell,j}=1,\,(i,\ell)\in T\bigr]=\Pr\bigl[\bad\,|\,(i,\ell)\in  T\bigr]\leq\badprob$ 
by Lemma~\ref{badbnd}. Also, $\Pr[X_{i,\ell,j}=1]=\frac{\bx_{i,\ell,j}}{\by_{i,\ell}}$
So using \eqref{clfacineq1}, we have
\[
\Pr[\tX_{ij}=1\,|\,(i,\ell)\in T]\geq\frac{\bx_{i,\ell,j}}{\by_{i,\ell}}\cdot
\biggl(\prod_{i'\in A}(1-\bx_{i'j})-\badprob\biggr).
\]
This yields that 
$\Pr[\tX_{ij}=1]\geq\bx_{ij}\cdot\bigl(\prod_{i'\in A}(1-\bx_{i'j})-\badprob\bigr)$.
\end{proof}

\begin{lemma} \label{clrewd}
The expected reward obtained in the pre-assignment from any client $j$ 
is at least $\bigl(1-e^{-1}-\badprob\bigr)\val_j$.
\end{lemma}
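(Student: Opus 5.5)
Order the facilities as $j$ does, in non-increasing order of $\rewd_{ij}$, as $i_1,i_2,\ldots$, and write $x_r:=\bx_{i_rj}$, $r_r:=\rewd_{i_rj}$. Summing Lemma~\ref{clfacbnd} over facilities, the expected pre-assignment reward from $j$ is at least
\[
\sum_r r_r x_r\Bigl(\prod_{s<r}(1-x_s)-\badprob\Bigr)
=\sum_r r_r x_r\prod_{s<r}(1-x_s)-\badprob\cdot\val_j .
\]
So it suffices to show that $\sum_r r_r x_r\prod_{s<r}(1-x_s)\geq(1-e^{-1})\sum_r r_r x_r$. This is the standard inequality from facility-location and $\maxgap$ rounding, and it uses only that $\sum_r x_r\leq 1$, which holds by part (c) of Theorem~\ref{lpsepflsolve}.

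To prove the inequality, write $P_r:=\prod_{s<r}(1-x_s)$, so that $x_rP_r=P_r-P_{r+1}$. Rewrite the left side by Abel summation, with $r_{R+1}:=0$:
\[
\sum_r r_r(P_r-P_{r+1})=\sum_r (r_r-r_{r+1})(1-P_{r+1}).
\]
The right side becomes
\[
\sum_r r_r x_r=\sum_r (r_r-r_{r+1})\sum_{s\leq r}x_s .
\]
Since $r_r-r_{r+1}\geq 0$ by the chosen ordering, it suffices to compare termwise. For each prefix, let $X:=\sum_{s\leq r}x_s\leq 1$. Then
\[
1-P_{r+1}=1-\prod_{s\leq r}(1-x_s)\geq 1-e^{-X}\geq(1-e^{-1})X,
\]
where the last step uses concavity of $1-e^{-X}$ on $[0,1]$.

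Combining these bounds gives expected reward at least $(1-e^{-1})\val_j-\badprob\val_j$. The subtracted term needs some care. Lemma~\ref{clfacbnd} gives $\bx_{ij}(\prod(1-\bx_{i'j})-\badprob)$, and this may be negative for some facilities. Even so, the sum over $i$ is still a valid lower bound, since each true probability is nonnegative and therefore at least this expression. So the bound holds as stated.

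The main obstacle is the Abel-summation step. It relies on the ordering and on the fact that the lower bound in Lemma~\ref{clfacbnd} is exactly $x_rP_r$ minus the $\badprob$ term. The rest is routine.
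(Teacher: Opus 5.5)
Your proof is correct and follows the paper's argument. Like the paper, you sum the bound of Lemma~\ref{clfacbnd} over $j$'s reward-ordering of facilities, split off the $\badprob\cdot\val_j$ term, and lower-bound the remaining sum using only $\sum_i\bx_{ij}\leq 1$. The one difference is that the paper simply invokes the cited Claim~\ref{helper} with $t=1$, whereas your Abel summation together with $1-\prod_{s\leq r}(1-x_s)\geq 1-e^{-X}\geq(1-e^{-1})X$ gives a self-contained proof of that inequality.
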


\begin{proof}
We will utilize the following well-known claim. 
%(The proof is included in Appendix~\ref{append-sepfl} for completeness.)

\begin{claim}[Claim 4.3 in~\cite{FriggstadS21}] \label{helper}
Let $a_1\geq a_2\geq\ldots\geq a_q\geq a_{q+1}:=0$. Let $z\in[0,1]^q$, and 
$t\geq\sum_{i=1}^qz_i$. 
We have 
$z_1a_1+(1-z_1)z_2a_2+\ldots+(1-z_1)(1-z_2)\cdots(1-z_{q-1})z_qa_q\geq (1-e^{-t})\cdot\frac{\sum_{i=1}^qa_iz_i}{t}$.
\end{claim}

Fix a client $j$. Let $i_1,i_2,\ldots,i_{|\F|}$ be $j$'s ordering of facilities in
non-increasing order of $\rewd_{ij}$ values.
Using Lemma~\ref{clfacbnd}, we obtain that 
\begin{equation*}
\begin{split}
\E{\text{expected reward from $j$}} & \geq
\sum_{q=1}^{|\F|}\rewd_{i_q j}\cdot\bx_{i_q j}\cdot
\biggl(\prod_{r=1}^{q-1}(1-\bx_{i_r j})-\badprob\biggr) \\
& \geq \bigl(1-e^{-1}\bigr)\cdot\sum_{q=1}^{|F|}\rewd_{i_q j}\cdot\bx_{i_q j}-
\badprob\cdot\sum_{q=1}^{|F|}\rewd_{i_q j}\cdot\bx_{i_q j} 
\geq\bigl(1-e^{-1}-\badprob\bigr)\val_j
\end{split}
\end{equation*}
where the first inequality follows from Claim~\ref{helper}, since
$\sum_{i\in\F}\bx_{ij}\leq 1$.
\end{proof}

\begin{lemma} \label{fvecnorm}
We have 
$f(\text{facility-load vector of solution returned})\leq\bigl(1+O(\ve)\bigr)\cdot\gm B$ 
with probability $1$.
\end{lemma}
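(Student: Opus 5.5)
If $\good$ fails we return the empty solution, whose load vector is $0$, so we may assume $\good$ occurs. The final solution then opens a subset $F'\sse F$ with $|F'|\leq k$. Each open facility $i$ carries $(i,\ell_i)\in T$, and the clients assigned to it form a subset of $\tS_{i,\ell_i}\in\sols_{i,\gm t_{\ell_i}}$. Hence its load is at most $\gm t_{\ell_i}$, deterministically.

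Let $\al\in\Rp^k$ be the facility-load vector, padded with zeros. The plan is to apply Lemma~\ref{toplestim}(b) with $N=k$ and the non-increasing vector $v=\gm\vt\in\Rp^{\POS}$, and conclude
\[
f(\al)\leq(1+3\dt)f(\gm t^{\exp})=(1+3\dt)\gm f(t^{\exp})\leq(1+3\dt)(1+4\ve)\gm B.
\]
The last step uses the choice of $\vt$ from Lemma~\ref{lpsepflval}, and $\dt\leq\ve$ gives $(1+O(\ve))\gm B$.

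\emph{Top coordinate.} We need $\al^{\down}_1\leq \gm t_1$. Every open facility has load at most $\gm t_{\ell_i}\leq\gm t_1$, since $\vt$ is non-increasing.

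\emph{Counting condition.} We need $Q^{>\gm t_\ell}(\al)\leq(1+\dt)(\ell-1)$ for every $\ell\in\POS$. Suppose facility $i$ has load exceeding $\gm t_\ell$. Then $\gm t_{\ell_i}>\gm t_\ell$, so $\ell_i<\ell$; as $\ell_i\in\POS$, this means $\ell_i\leq\prev(\ell)$. The facilities are distinct, so the number of such facilities is at most $\bigl|\{(i,\ell')\in T:\ell'\leq\prev(\ell)\}\bigr|$. On $\good$ this is at most $(1+\dt)(\ell-1)$. For $\ell=1$ the set is empty, consistent with the bound $0$.

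One detail to check is that $t_\ell>0$ for all $\ell$, which holds since $\vt\in\T$ has $t_\ell\geq\kp/(1+\ve)>0$. Another is that Lemma~\ref{toplestim} is stated for a norm on $\R^N$; we use $N=k$ and the normalization $f(1,0,\ldots,0)=1$, and the $\gm$ factor comes out by homogeneity.

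The main obstacle is conceptual rather than technical. The event $\good$ controls only the counts $(1+\dt)(\ell-1)$, not the exact counts $\ell-1$; this is precisely why Lemma~\ref{toplestim}(b) is phrased with the $(1+\dt)$ slack. Everything else is a direct verification.
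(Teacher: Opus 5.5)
Your proposal is correct and follows the paper's proof essentially verbatim. On $\good$, each open facility $i$ with $(i,\ell_i)\in T$ has load at most $\gm t_{\ell_i}$, so Lemma~\ref{toplestim}(b) applies with $v=\gm\vt$ and gives $f\leq(1+3\dt)\gm f(t^{\exp})\leq(1+3\dt)(1+4\ve)\gm B$; your explicit argument that a load exceeding $\gm t_\ell$ forces $\ell_i\leq\prev(\ell)$, by monotonicity of $\vt$, spells out a step the paper leaves implicit.
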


\begin{proof}
Let $\fvec$ be the facility-load vector of the solution returned. If $\good$ does not
occur, then $\fvec=\stdvec{0}$, so suppose otherwise. 
Then, by design, we have $\bigl|\{(i,\ell')\in T: \ell'\leq\prev(\ell)\}\bigr|\leq(1+\dt)(\ell-1)$
for every $\ell\in\POS'=\POS\cup\{k+1\}$, and if $(i,\ell)\in T$, then the load
of facility $i$ is at most $\gm t_\ell$.
Thus, $\fvec$ has the property that there are at most $(1+\dt)(\ell-1)$ coordinates of
value larger than $\gm t_\ell$, for every $\ell\in\POS$. 
By Lemma~\ref{toplestim} (b), we therefore obtain that 
$f(\fvec)\leq (1+3\dt)f(\gm t^{\exp})$. Since $f(t^{\exp})\leq(1+4\ve)B$ and 
$\dt=\min\{\ve, 1\}$, this yields 
\begin{equation*}
f(\fvec)\leq\gm(1+3\dt)f(t^{\exp})
\leq\gm(1+3\dt)(1+4\ve)B\leq (1+19\ve)\cdot\gm B. \qedhere
\end{equation*}
%It follows that $\fvec^{\down}_\ell\leq \gm t_\ell$
%for every $\ell\in\POS$, and hence $\fvec^{\down}\leq 2t^{\exp}$.
%Therefore, $f(\fvec)\leq 2f(t^{\exp})\leq(2+8\ve)B$.
\end{proof}

%The following claim, which follows from a result in~\cite{IbrahimpurS21}, shows that
%then we must have $f(\fvec)\leq 2(1+4\dt)f(t^{\exp})\leq \bigl(2+O(\ve)\bigr)B$.

%shows that if the facility-load vector of our solution satisfies these bounds
%approximately,  %is ``close'' to $\vt$ in a certain sense, 
%then we approximately satisfy the norm budget.

\begin{comment}
\begin{lemma}[Follows from Lemma 2.8 in~\cite{IbrahimpurS21}]
%Let $\vt$ be as above, i.e., 
%$\fvo^{\down}_\ell\leq t_\ell\leq(1+\ve)\fvo^{\down}_\ell+\kp$ for all $\ell\in\POS$.
Let $\al\in\R^k$ be such that $\al^{\down}_1\leq\rho t_1$, and 
$\bigl|\{i\in[k]: \al_i>\rho t_\ell\}\bigr|\leq\ell-1$ for all $\ell\in\POS$. Then, 
$f(\al)\leq \rho(1+4\dt)f(t^{\exp})$.
\end{lemma}
\end{comment}

\begin{proofof}{Theorem~\ref{normsepfl-bi}}
Recall that the objective value of the LP solution $(\bx,\by)$ is at least
$\OPT/\rho$. %due to Theorem~\ref{lpsepflsolve}~\ref(obj)
So Lemma~\ref{clrewd} shows that the expected reward obtained from the pre-assignment at
least $\bigl(1-e^{-1}-\badprob\bigr)\frac{\OPT}{\rho}$. 
%If the good event $\good$ does not occur, then no clients are pre-assigned, so this
%also lower bounds the expected reward of the pre-assignment conditioned on
%$\good$ occurring. 
Since $|F|\leq k(1+\dt)$ when the good event $\good$ occurs, and we return the $k$
facilities in $F$ obtaining the largest reward under the pre-assignment, it follows that
the expected reward of the final solution is always at least $\frac{1}{1+\dt}$ times the
expected reward of the pre-assignment. (This holds even when $\good$ does not occur, as
then no clients are pre-assigned.) So the expected reward of the final solution is at
least $\frac{1}{1+\dt}\cdot\bigl(1-e^{-1}-\badprob\bigr)\cdot\frac{\OPT}{\rho}$.
Lemma~\ref{fvecnorm} bounds the norm of the facility-load vector returned.
\end{proofof}

%\subsection*{Proofs of Lemma~\ref{lpsepflval} and Theorem~\ref{lpsepflsolve}} 

\begin{proofof}{Lemma~\ref{lpsepflval}}
Suppose $t_\ell\geq\fvo^{\down}_\ell$ for all $\ell\in\POS$; equivalently
$t^{\exp}\geq\fvo^{\down}$. 
Consider the optimal solution $(\fopt,\,\sg^*:\copt\mapsto\fopt)$. 
For every $(i,\ell)$ pair, we set $y_{i,\ell}=1$ if $i\in\fopt$ and
$\fvo_i\in(t_{\nxt(\ell)},t_\ell]$, and $0$ otherwise. 
Also, let $\guessind$ be the $(i,\ell)$ pairs where $y_{i,\ell}=1$ and $\ell\leq\ell_0$.
These $y$-values satisfy \eqref{fasgn}, \eqref{numk}, and \eqref{lguess}, and since 
$t^{\exp}\geq\fvo^{\down}$, we also satisfy \eqref{numl}.
For every $(i,\ell)$ for which $y_{i,\ell}=1$, we set $x_{i,\ell,S}=1$ if $S$ is the set
of clients assigned to $i$; all other $x$-variables are set to $0$. It is easy to see that
by construction, this satisfies \eqref{config}, \eqref{casgn}. 
So $\lpsepflopt$ is at least the objective value of this feasible solution, which is
$\OPT$. 

There exists $\vt\in\T$ such that 
$\fvo^{\down}_\ell\leq t_\ell\leq(1+\ve)\fvo^{\down}_\ell+\kp$. For this vector $\vt$, we
have $\lpsepflopt\geq\OPT$, as argued above. By Lemma~\ref{toplestim} (a), we also have
$f(t^{\exp})\leq(1+\dt)(1+\ve)f(\fvo)+\ve B\leq (1+4\ve)B$, where we use the fact that
$\dt\leq\ve$, $\dt\leq 1$.
\end{proofof}

\newcommand{\dualconstr}{\ensuremath{\text{(*)}}\xspace}
\newcommand{\dlpsepfl}{(D)\xspace}

\begin{proofof}{Theorem~\ref{lpsepflsolve}}
We argue that $\alg$ can be used to obtain a kind of approximate separation oracle for the
dual, which, in conjunction with the ellipsoid method, yields the stated result.
This idea has been used in other settings (see, e.g.,~\cite{FriggstadS14},
\cite{ChakrabartyS16}). 
The dual \dlpsepfl has a polynomial number of variables, exponentially many
constraints corresponding to the $x_{i,\ell,S}$ variables of the primal LP
\eqref{lpsepfl}, and a polynomial number of other constraints corresponding to the
$y_{i,\ell}$ variables. Let $\mu_{i,\ell}\geq 0$, and $\tht_j$ be the dual variables
corresponding to constraints \eqref{config} and \eqref{casgn} respectively. 
The dual constraints corresponding to the $x_{i,\ell,S}$ variables
are:
\begin{equation}
\mu_{i,\ell}+\sum_{j\in S}\tht_j\geq\sum_{j\in S}\rewd_{ij} \qquad 
\forall i\in\F,\ \forall\ell\in\POS,\ \forall S\in\sols_{i,t_\ell}. \label{dlpconfig}
\end{equation}

\newcommand{\dlpc}[1][{}]{\ensuremath{\text{(\ref{dlpconfig}}_{#1}\text{)}}\xspace}

Defining $v_{ij}=\max\{\rewd_{ij}-\tht_j,0\}$ for all $j\in\C$, 
it is easy to see that constraints \eqref{dlpconfig} are equivalent to 
$\bigl(\max_{S\in\sols_{i,t_\ell}}\sum_{j\in S}v_{ij}\bigr)\leq\mu_{i,\ell}$.
So for any $\mu,\tht$, we can use $\alg$ to determine if constraints
\eqref{dlpconfig} hold, or find $i,\ell,A\in\sols_{i,\gm t_\ell}$ such that 
$\frac{\mu_{i,\ell}}{\rho}<\sum_{j\in A}\bigl(\rewd_{ij}-\tht_j\bigr)$. This is because,
suppose we run $\alg$ 
%on the independence system $(\C,\sols_{i,t_\ell})$
with the input tuple $(i, t_\ell, \{v_{ij}\}_{j\in\C})$ and obtain a set
$A\in\sols_{i,\gm t_\ell}$. If $\sum_{j\in A}v_{ij}>\frac{\mu_{i,\ell}}{\rho}$, then we also
have $\sum_{j\in S}\bigl(\rewd_{ij}-\tht_j\bigr)>\frac{\mu_{i,\ell}}{\rho}$ for 
%the subset of $A$ corresponding to clients $j$ with 
$S=\{j\in A: v_{ij}\geq 0\}$ (which also lies in $\sols_{i,\gm t_\ell}$). Otherwise, we
know that $\bigl(\max_{S\in\sols_{i,t_\ell}}\sum_{j\in S}v_{ij}\bigr)\leq\mu_{i,\ell}$, i.e.,
constraints \eqref{dlpconfig} hold for $i,\ell$, and all $S\in\sols_{i,t_\ell}$.

We utilize this as follows. Let $\phi$ denote the remaining dual variables, and
\dualconstr denote the dual constraints (including nonnegativity) constraints other than 
\eqref{dlpconfig}. The objective function of \dlpsepfl is of the form 
$\sum_j\tht_j+h^T\phi$, where $h$ is some fixed vector. 
%Let \eqref{dlpcp} denote 
Consider the dual LP with the following modified version of \eqref{dlpconfig}:
\begin{equation}
\frac{\mu_{i,\ell}}{\rho}+\sum_{j\in S}\tht_j\geq\sum_{j\in S}\rewd_{ij} \qquad 
\forall i\in\F,\,\forall\ell\in\POS,\,\forall S\in\sols_{i,\gm t_\ell}. 
\tag{\ref{dlpconfig}'} \label{dlpcp}
\end{equation}
%For $b\geq 0$, let \dlpc[b] denote \eqref{dlpconfig} with $\mu_{i,\ell}$ replaced by
%$\mu_{i,\ell}/b$. 
%In the primal LP, this changes constraint \eqref{config} to
%$\frac{1}{b}\cdot\sum_{S\in\sols_{i,t_\ell}}x_{i,\ell,S}\leq y_{i,\ell}$ for every $i$,
%$\ell\in\POS$. 
The effect of this in the primal LP is that we now have variables $x_{i,\ell,S}$ for every
$i$, $\ell\in\POS$, and $S\in\sols_{i,\gm t_\ell}$, and constraint \eqref{config} changes
to $\frac{1}{\rho}\cdot\sum_{S\in\sols_{i,\gm t_\ell}}x_{i,\ell,S}\leq y_{i,\ell}$ for every $i$,
$\ell\in\POS$. Let (\sepfllpp) denote this modified primal LP.

Let
$\K(\nu):=\bigl\{(\mu,\tht,\phi): \dualconstr,\ \eqref{dlpconfig},\ \sum_j\tht_j+h^T\phi\leq\nu\bigr\}$
denote the set of dual feasible solutions achieving objective value at most $\nu$.
Thus, the optimal value of the dual, and hence \eqref{lpsepfl}, is the smallest $\nu$
such that $\K(\nu)\neq\es$.
Also, let 
$\K'(\nu):=\bigl\{(\mu,\tht,\phi): \dualconstr,\ \eqref{dlpcp},\ \sum_j\tht_j+h^T\phi\leq\nu\bigr\}$.
Given $\nu$, $(\mu,\tht,\phi)$, by our earlier discussion, we can use $\alg$ to either
show that $(\mu,\tht,\phi)\in\K(\nu)$, or find a hyperplane separating
$(\mu,\tht,\phi)$ from $\K'(\nu)$. 
\begin{comment}
We first check if $\mu,\tht\geq 0$, $\dualconstr$
and $\sum_{i=0}^\num\tht^i+h^T\beta\leq\nu$ hold, and if not use the violated inequality as
the separating hyperplane. Next, for each $i=0,\ldots,\num$, we run $\A$ on the
\ptp-orienteering instance with start and end nodes $u_i,w_i$ respectively, length bound
$c_{u_iw_i}+\regb[i]$, and node-rewards $\{\mu^i_v\}_{v\in\nV}$. If for some $i$, we obtain a
path $P\in\Pc^i$ with reward greater than $\tht^i/\al$, then we return 
$\sum_{v\in P}\mu^i_v\leq\tht^i/\al$ as the separating hyperplane. Otherwise, for all
$i=0,\ldots,\num$ and all $P\in\Pc^i$, we know that $\sum_{v\in P}\mu^i_v\leq\tht^i$, and so 
$(\beta,\mu,\tht)\in\K(\nu;1)$.
\end{comment}
Thus, for a fixed $\nu$, by running the ellipsoid method, in polynomial time, we either
certify that $\K'(\nu)=\es$, or find a point $(\mu,\tht,\phi)\in\K(\nu)$. 

Now we can combine this with binary search to find $\nu^*$ that is an upper bound on
$\lpsepflopt$, and a near-feasible solution to (\sepfllpp) achieving this objective
value; scaling this solution will yield $(\bx,\by)$ satisfying the stated properties.

It is easy to find an upper bound $\ub$ such that $\K(\ub)\neq\es$.
For a given $\e>0$, we use binary search in the range $[0,\ub]$ to find $\nu^*$ such that
the ellipsoid method when run for $\nu^*$ (with the above separation oracle) returns
%$(\mu^*,\tht^*,\phi^*)\in
a point in $\K(\nu^*)$, and when run for $\nu^*-\e$ certifies that
$\K'(\nu^*-\e)=\es$. Since $\K(\nu^*)\neq\es$, we have that $\lpsepflopt\leq\nu^*$, 
and $\K'(\nu^*-\e)=\es$ implies that the optimal value of (\sepfllpp) 
is at least $\nu^*-\e$. 
For $\nu^*-\e$, the inequalities returned by the separation oracle 
during the execution of the ellipsoid method together with the inequality
$\sum_{j}\tht_j+h^T\phi\leq\nu^*-\e$ yield a polynomial-size certificate for the
emptiness of $\K'(\nu^*-\e)$. By duality (or Farkas' lemma), this implies that if we
restrict (\sepfllpp) to only use the (polynomially many) $x_{i,\ell,S}$ variables
corresponding to the violated inequalities of type \eqref{dlpcp} returned during
the execution of the ellipsoid method, we obtain a polynomial-size feasible solution
$(\hx,\by)$ to (\sepfllpp) of value at least $\nu^*-\e$. 
If we take $\e$ to be inverse exponential in the input size, this also implies that
$(\hx,\by)$ has objective value at least $\nu^*\geq\lpsepflopt$. 
Finally, setting $\bx=\hx/\rho$, we obtain that $(\bx,\by)$ has the desired properties. 
\end{proofof}

\section{Norm-budgeted packing problems with submodular rewards} \label{subnb}
We now consider norm-budgeted packing problems where the reward function,
%a substantial generalization of \normknap, 
%where instead of having 
instead of being an additive (or modular) function (induced by the
item-rewards) is a {\em monotone, submodular} function $\rewd:2^n\mapsto\R_+$, with
$\rewd(T)$ denoting the reward obtained from a set $T\sse[n]$ of items.
So in a generic {\em submodular norm-budgeted packing problem},
the goal (as before) is to maximize $\rewd(T)$ subject to $T$ being an independent in a
given independence system $([n],\sols)$, and the norm budget constraint
$f\bigl(\svec{T}\bigr)\leq B$.
%Clearly, this generalizes \normknap, wherein $\rewd(.)$ is an additive (or modular)
%function induced by the item rewards.
We assume that $\rewd$ is specified via a value oracle. 
%As before, the specification of $([n],\sols)$ and the size-vector $\svec{T}$ will
%depend on the problem being considered.
%We sometimes refer to this as the submodular generalization of the corresponding
%norm-budgeted packing problem (where we have the same independence system $([n],\sols)$
%and the same definition of $\svec{T}$).

We obtain results for the submodular generalizations of some of the norm-budgeted packing 
problems considered in the paper. 
We develop $O(1)$ approximation guarantees for 
the {\em submodular norm-budgeted knapsack} (\subnbknap) problem
(Section~\ref{submodknap}), using a fundamentally different approach than that used for
(regular) \normknap in Section~\ref{knapsack},
and {\em submodular norm-budgeted \maxgap} (\subnbsched) on identical and related machines 
(Section~\ref{submodlb}). The latter uses our result for \subnbknap (essentially) as a
black-box, capitalizing on the observation that the reduction in Theorem~\ref{lbredn}
still works with submodular rewards.
%We leave the question of devising algorithms
%for other submodular norm-budgeted packing problems, with guarantees that qualitatively
%match the guarantees obtained for their (regular, i.e., additive) counterparts, for
%future work. 
%As discuss in more detail in Section~\ref{submodknap}, this requires us to come up with a
%very different approach

\subsection{Submodular norm-budgeted knapsack} %with submodular rewards} 
\label{submodknap}
Recall that in the submodular norm-budgeted knapsack (\subnbknap) problem, 
%where the setup the same as in \normknap, 
%a substantial generalization of \normknap, wherein instead of having an additive (or
%modular) reward function (induced by the item-rewards) we
we have a {\em monotone, submodular} reward-function $\rewd:2^n\mapsto\R_+$ 
%with $\rewd(T)$ denoting the reward obtained from a set $T\sse[n]$ of items.
and the goal is to maximize $\rewd(T)$ subject to $f\bigl(\svec{T}\bigr)\leq B$, where
$\svec{T}$ is the size-weighted characteristic vector of $T$.
%Clearly, this generalizes \normknap, wherein $\rewd(.)$ is an additive (or modular)
%function induced by the item rewards.
%We assume that $\rewd$ is specified via a value oracle, 
We develop an $O(1)$-approximation algorithm for \subnbknap. The approximation factor we
obtain is in fact $\bigl(1+O(\ve)\bigr)(\beta+1)$, where $\beta$ is the approximation
factor for maximizing a submodular function subject to a cardinality constraint
(Theorem~\ref{improvsubnbknapthm} in Section~\ref{submodknap-improv}).
%partition-matroid independence constraint. 
It is known that $\beta=\frac{e}{e-1}$, 
%(even with a general matroid-independence constraint), 
so this yields a $\bigl(\frac{2e-1}{e-1}+O(\ve)\bigr)$-approximation for \subnbknap.

Before describing our algorithm, we first discuss why our earlier approach leading to a
PTAS for \normknap (Section~\ref{knapsack}) is not amenable to handling \subnbknap. 
In the PTAS for \normknap, we set up reward buckets, comprising items having roughly the
same reward, so that any subset containing a certain number of
items from each reward bucket yields good total reward, and in order to satisfy the norm  
budget constraint, we greedily pick the smallest-size items from each reward bucket. 

With a submodular reward function, we run into some immediate problems with this
approach. It is not at all clear what a reward bucket should be in this setting, since
$\rewd(T)$ is 
not separable across items and the contribution of an item to $\rewd(T)$ depends on the
other items. Moreover, suppose that one could even identify some $\{\itemset_j\}$ item
buckets, target rewards $\{\rest_j\}$ to obtain from these buckets, and guarantee
that there is some set of $\num_j$ items in each $\itemset_j$ bucket such
that: (i) these yield the desired $\rest_j$ reward from the bucket, and (ii)
the resulting item-set (comprising items picked from the various buckets) is a feasible
solution. 
The question still remains: {\em how do we pick the $\num_j$ items from each $\itemset_j$
bucket.}  
Greedily picking the $\num_j$ smallest-size items will ensure feasibility, but this will not
usually yield the desired target reward: obtaining a 
target reward by picking a certain number of items corresponds to a
cardinality-constrained submodular-function maximization problem, and approximation
algorithms for this problem usually use greedy approaches based on {\em reward}, as
opposed to size.

The upshot is that dealing with \subnbknap requires us to fundamentally rethink our
approach. Instead of reward buckets, we will now work by creating, loosely speaking,
certain {\em size buckets} by grouping items based on their size (though
%we emphasize that 
items in a size bucket will not necessarily have similar sizes). 
In the reward-bucketing approach, %things were set up so that 
any solution built from the reward buckets was guaranteed to have good reward and a
greedy choice ensured feasibility. In contrast, we will now set things up so 
%as to guarantee feasibility and will solve an optimization problem to obtain good reward; 
%we will argue 
that any solution that picks at most a certain number of items from
each size bucket is {\em guaranteed to be feasible}, and we will solve a suitable
submodular-function maximization problem to obtain good reward. 

\medskip
We now delve into details. For ease of exposition, we describe here a 
$\bigl(1+O(\ve)\bigr)(2\beta+3)$-approximation algorithm, which will convey all the main
ideas, and  discuss the improvement to $\bigl(1+O(\ve)\bigr)(\beta+1)$-approximation in
Section~\ref{submodknap-improv}. 
%Let $\dt=\min\{\ve,1\}$. 
Let $\POS=\POS_{n,1}$ (see Definition~\ref{posdef}), so $\POS$ consists of all
powers of $2$ up to (and potentially including) $n$. Since 
%$\n,\dt$ (and hence $\POS$) 
$\POS$ will be fixed throughout, we drop $n, \dt$ from $\nxt$ and $\prev$. 
Define $\nxt(0):=1$ and $\prev(0):=0$ for notational convenience.
Let the items be ordered so that $\sz_1\geq\sz_2\geq\ldots\geq\sz_n$.
Recall that $\optset\sse[n]$ denotes some fixed optimal solution, and
$\OPT=\rewd(\optset)$ is the optimal value.
Let $\optset=\{\optind_1,\optind_2,\ldots,\optind_{\nopt}\}$, where
$\optind_1<\optind_2<\ldots<\optind_{\nopt}$. 
Let $\ellfin$ be the largest index in $\POS\cap[\nopt]$.
Let $I^*=\{\optind_\ell: \ell\in\POS\cap[\nopt]\}$, which denotes the
$\ell$-th-largest-size items in $\optset$, where $\ell$ ranges over $\pos\cap[\nopt]$.
Define $\optind_0:=0$ and $\optind_{\nxt(\ellfin)}:=n$ for notational convenience.%
\footnote{For example, suppose $n=1000$, and $|\optset|=\nopt=240$ with $\optind_{240}=480$,
i.e., the last item in $\optset$ is the $480$-th item in our ordering of items. 
%i.e., the item with the $480$-th largest size.
Then, $\ellfin=128$. We have 
$I^*=\{\optind_1,\optind_2,\optind_4,\optind_8,\optind_{16},\optind_{32},\optind_{64},\optind_{128}\}$,
and $\optind_0:=0$ and $\optind_{256}=1000$.}

Let $\POS':=\{0\}\cup(\POS\cap[\ellfin])$.
%For $\ell\in\{0\}\cup(\POS\cap[\ellfin])$, 
For $\ell\in\POS'$, define 
$\sbktopt_\ell:=\{\optind_\ell+1,\ldots,\optind_{\nxt(\ell)}\}$;
we think of $\sbktopt_\ell$ as a size bucket. We emphasize however that (in contrast with
reward buckets): (i) items in a size 
bucket need not have similar size; and (ii) we cannot actually identify these size
buckets, since we do not know $\optset$.
Observe that these size buckets %$\bigl\{\sbkt_\ell\bigr\}_{\ell\in\dbrack{\ellfin}}$ 
partition $[n]$ (but it could be that $\sbktopt_{\ellfin}=\es$, if $\ellfin=n$).
Also, by construction, for all $\ell\in\POS'$, $\ell<\ellfin$, we have that $\optset$
contains exactly $\nxt(\ell)-\ell$ items from $\sbktopt_\ell$: we have
$\optset\cap\sbktopt_\ell=\{\optind_{\ell+1},\optind_{\ell+2},\ldots,\optind_{\nxt(\ell)}\}$.
%(see Fig.~\ref{optset-fig}). 
For the last size bucket, we have that
$|\optset\cap\sbktopt_{\ellfin}|\leq\nxt(\ellfin)-\ellfin$ (and note that if $\nopt\in\POS$
then $\optset\cap\sbktopt_{\ellfin}=\es$). 

Lemma~\ref{sizebnds} states one of our main insights, namely that any set $A\sse[n]$
containing at most a certain number of items from each size bucket satisfies
$f\bigl(\svec{A}\bigr)\leq B$. 
%yields a feasible solution. 

\begin{lemma} \label{sizebnds}
Let $A\sse[n]$.
\begin{enumerate}[label=(\alph*), topsep=0.1ex, itemsep=0ex, leftmargin=*]
\item If $\bigl|A\cap\sbktopt_\ell\bigr|\leq\ell-\prev(\ell)$ for all
$\ell\in\POS'$, then $f\bigl(\svec{A}\bigr)\leq B$.

\item Let $I=\{i_\ell: \ell\in\pos\cap[k]\}$ be an index-set, where $k\leq\ellfin$ and
$i_\ell\geq\optind_\ell$ for all $\ell\in\pos\cap[k]$. Define $i_0:=0$ and $i_{\nxt(k)}:=n$.
Suppose that $\bigl|A\cap\{i_\ell+1,\ldots,i_{\nxt(\ell)}\}\bigr|\leq\ell-\prev(\ell)$ for
all $\ell\in\POS'$, $\ell\leq k$. Then, $f\bigl(\svec{A}\bigr)\leq B$.
\end{enumerate}
\end{lemma}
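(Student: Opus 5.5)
The plan is to show $\svec{A}^{\down}\leq\svec{\optset}^{\down}$ coordinatewise; monotonicity and symmetry of $f$ then give $f(\svec{A})\leq f(\svec{\optset})\leq B$. Since items are indexed in non-increasing order of size, it suffices to build an injection $\pi:A\to\optset$ with $\sz_a\leq\sz_{\pi(a)}$, i.e., $\pi(a)\leq a$ as indices. Equivalently, by a Hall-type counting argument, it suffices to show that for every prefix $[p]$ of items, $|A\cap[p]|\leq|\optset\cap[p]|$. This prefix condition gives the required sorted domination: the $r$-th largest item of $A$ has index at least that of the $r$-th largest item of $\optset$.

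For part (a), I would fix $p$ and let $\ell\in\POS'$ be the bucket with $p\in\sbktopt_\ell=\{\optind_\ell+1,\ldots,\optind_{\nxt(\ell)}\}$. Then $[p]$ is covered by the buckets $\sbktopt_{\ell'}$ with $\ell'\in\POS'$, $\ell'\leq\ell$. Summing the hypothesis over these buckets telescopes:
\[
|A\cap[p]|\leq\sum_{\ell'\in\POS',\,\ell'\leq\ell}\bigl(\ell'-\prev(\ell')\bigr)=\ell .
\]
Meanwhile $|\optset\cap[p]|\geq|\optset\cap[\optind_\ell]|=\ell$, since $p\geq\optind_\ell$. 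The case $\ell=0$ is trivial, since $0-\prev(0)=0$ forces $A\cap\sbktopt_0=\es$. Because the buckets partition $[n]$, every $p$ is handled, including those in the last bucket $\sbktopt_{\ellfin}$, whose upper end is $n$. This yields the prefix inequality.

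For part (b), the argument is the same with the breakpoints $i_\ell$ in place of $\optind_\ell$. For $p$ in the interval $\{i_\ell+1,\ldots,i_{\nxt(\ell)}\}$ (with $\ell\leq k$ in $\POS'$), the telescoping bound again gives $|A\cap[p]|\leq\ell$. The right-hand side still satisfies $|\optset\cap[p]|\geq\ell$, because $p>i_\ell\geq\optind_\ell$; for $\ell=0$ the bound is $0$. The intervals cover $[n]$ since $i_0=0$ and $i_{\nxt(k)}=n$. I expect the only subtle point to be this bookkeeping at the boundaries: checking that the interval sets partition $[n]$, and that $\optind_\ell$ is defined for all the $\ell$ used, which holds because $k\leq\ellfin\leq\nopt$.

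The final step turns the prefix inequality into $\svec{A}^{\down}\leq\svec{\optset}^{\down}$. Let $a_r$ be the $r$-th smallest index in $A$. Taking $p=a_r$ gives $|\optset\cap[a_r]|\geq r$, so $\optind_r\leq a_r$ and therefore $\sz_{a_r}\leq\sz_{\optind_r}$. Padding with zeros completes the comparison, and monotone symmetry then gives the norm bound in both parts.
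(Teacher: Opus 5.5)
Your proposal is correct and is essentially the paper's argument: both prove $\svec{A}^{\down}\leq\svec{\optset}^{\down}$ by the same telescoping count $\sum_{\ell'\in\POS',\,\ell'\leq\ell}\bigl(\ell'-\prev(\ell')\bigr)=\ell$, which bounds the number of elements of $A$ in prefixes ending at the bucket (or interval) boundaries, together with $|\optset\cap[\optind_\ell]|=\ell$. The only difference is cosmetic: the paper routes the comparison through the staircase vector $v$ with $v_j=\sz_{\optind_\ell}$ for $j\in\{\prev(\ell)+1,\ldots,\ell\}$ and shows $\svec{A}^{\down}\leq v\leq\svec{\optset}^{\down}$, whereas you compare with $\optset$ directly through the prefix inequality $|A\cap[p]|\leq|\optset\cap[p]|$.
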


\begin{proof}
%Part (b) generalizes part (a), but it is slightly simpler to prove part (a), so we prove
%part (a) first. 
We will define a vector $v\in\R^n$, and argue that $f(v)\leq B$ and 
$\svec{A}^{\down}\leq v$. 
Let $v\in\R^n$ be the following vector: 
\[
%v_1=\sz_{\optind_1}; \quad  
\frall \ell\in\pos\cap[\ellfin],\,\text{ and all }j\in\{\prev(\ell)+1,\ldots,\ell\},
\ \ \text{set }v_j=\sz_{\optind_\ell}; \quad \text{$v_j=0$ for all other $j$}.
\]
%For part (a), 
%First, we show that $f(v)\leq B$. 
%Let $T$ be a feasible solution to \normknap that is compatible with $I$. Recall from
%Definition~\ref{validind}, this means that 
%$\bigl|T\cap\{i_\ell+1,i_{2\ell}\}\bigr|=\ell$ for all $\ell\in\pos$, $\ell\leq k/2$.
We argue that $v\leq\svec{\optset}^{\down}$, which implies that 
$f(v)\leq f\bigl(\svec{\optset}\bigr)\leq B$. 
Let $u=\svec{\optset}^\down$.
%We have $v_1=\sz_{\optind_1}=u_1$.
Consider any $\ell\in\pos\cap [\ellfin]$. %with $\ell\geq 2$. 
We have $v_j=\sz_{\optind_\ell}$ for all $j\in\{\prev(\ell)+1,\ldots,\ell\}$. Now consider
the %same coordinates for the 
vector $u$. The number of items in $\optset$ up to and including $\optind_\ell$ is
precisely $\ell$,
%$\bigl|\optset\cap[\optind_\ell]\bigr|=1+\sum_{r=1}^{\ell/2}\bigl|T\cap\{i_r+1,\ldots,i_{2r}\}\bigr|=\ell$, 
and so $u_j\geq\sz_{\optind_\ell}$ for all $j\leq\ell$. It follows that 
$(v_j)_{j\in\{\prev(\ell)+1,\ldots,\ell\}}\leq (u_j)_{j\in\{\prev(\ell)+1,\ldots,\ell\}}$. This
holds for all $\ell\in\pos\cap[\ellfin]$, so together this covers all non-zero
coordinates of $v$, and we obtain that $v\leq u$.

%Now consider part (b). 
Next, we argue that $\svec{A}^{\down}\leq v$.
Let $z=\svec{A}^\down$. 
\begin{comment}
Recall that $\sbktopt_\ell:=\{\optind_\ell+1,\ldots,\optind_{\nxt(\ell)}\}$ for all
$\ell=0,1,\ldots,\ellfin$, and these size buckets partition $[n]$.
Recall that 
$\optset\cap\sbktopt_\ell=\{\optind_{\ell+1},\optind_{\ell+2},\ldots,\optind_{\nxt(\ell)}\}$
for all $\ell=0,\ldots,\ellfin-1$, and 
$\optset\cap\sbktopt_{\ellfin}=\{\optind_{\ellfin+1},\optind_{\ellfin+2},\ldots,\optind_{\nopt}\}$
(which could be the empty set).
\end{comment}
%
For part (a), 
%Since $A\cap[i_1-1]=\es$, we have $z_1\leq\sz_{i_1}=v_1$. 
consider any $\ell\in\pos\cap[\ellfin]$, %with $2\leq\ell\leq k$, 
and consider the coordinates $j\in\{\prev(\ell)+1,\ldots,\ell\}$. 
We have 
\[
\bigl|A\cap[\optind_\ell]\bigr|=
\sum_{r\in\POS':r\leq\prev(\ell)}\bigl|A\cap\sbktopt_r\bigr|
\leq \sum_{r\in\POS':r\leq\prev(\ell)}\Bigl(r-\prev(r)\Bigr)=\prev(\ell).
\]
%consider the coordinates $j\in\{\prev(\ell)+1,\ldots,\ell\}$. 
So for all $j>\prev(\ell)$, we have $z_j\leq\sz_{\optind_\ell}$. On the other hand, as
argued above we have $v_j=\sz_{\optind_\ell}$ for all $j\in\{\prev(\ell)+1,\ldots,\ell\}$.  
So we obtain that 
$(z_j)_{j\in\{\prev(\ell)+1,\ldots,\ell\}}\leq (v_j)_{j\in\{\prev(\ell)+1,\ldots,\ell\}}$. This holds
for all $\ell\in\pos\cap[\ellfin]$, so we obtain that 
$(z_j)_{j\in[\ellfin]}\leq (v_j)_{j\in[\ellfin]}$. Finally, note that 
$|A|=\sum_{r\in\POS'}\bigl|A\cap\sbktopt_r\bigr|\leq\sum_{r\in\POS'}\bigl(r-\prev(r)\bigr)=\ellfin$, 
so $z_j=0$ for all $j>\ellfin$. It follows that $z\leq v$.

For part (b), consider any $\ell\in\pos\cap[k]$.
Similar to above, we have 
\[
\bigl|A\cap[i_\ell]\bigr|=
\sum_{r\in\POS':r\leq\prev(\ell)}\Bigl|A\cap\{i_r+1,\ldots,i_{\nxt(r)}\}\Bigr|
\leq \sum_{r\in\POS':r\leq\prev(\ell)}\Bigl(r-\prev(r)\Bigr)=\prev(\ell).
\]
%consider the coordinates $j\in\{\prev(\ell)+1,\ldots,\ell\}$. 
So $z_j\leq\sz_{i_\ell}$ for all $j>\prev(\ell)$. 
We also have as before $v_j=\sz_{\optind_\ell}$ for all $j\in\{\prev(\ell)+1,\ldots,\ell\}$, 
%(as shown earlier), 
and $\sz_{\optind_\ell}\geq\sz_{i_\ell}$, since $i_\ell\geq\optind_\ell$.  
So $(z_j)_{j\in\{\prev(\ell)+1,\ldots,\ell\}}\leq (v_j)_{j\in\{\prev(\ell)+1,\ldots,\ell\}}$. This holds
for all $\ell\in\pos\cap[k]$, so we obtain that 
$(z_j)_{j\in[\ellfin]}\leq (v_j)_{j\in[\ellfin]}$. Finally, note that 
$|A|=\sum_{r\in\POS':r\leq k}\bigl|A\cap\{i_r+1,\ldots,i_{\nxt(r)}\}\bigr|\leq\sum_{r\in\POS':r\leq k}\bigl(r-\prev(r)\bigr)=k$, 
so $z_j=0$ for all $j>k$. It follows that $z\leq v$.
\end{proof}

For sets $S,T\sse[n]$, let $\rewd_S(T):=\rewd(T\cup S)-\rewd(S)$ denote the incremental
reward that we obtain by adding the item-set $T$ to $S$.
Part (a) of Lemma~\ref{sizebnds} suggests that we build a solution $A$ satisfying the
bounds mentioned therein. In particular, since $\rewd$ is submodular, one can argue that
if we build a solution by considering the $\sbktopt_\ell$ size buckets (in any order),
and finding the $\ell-\prev(\ell)$ items from $\sbktopt_\ell$ to add to our current solution
that maximize the incremental reward obtained, then this yields good reward. Note that the  
subproblem we need to solve here is a cardinality-constrained submodular maximization
problem. The issue with implementing this plan is that 
{\em we do not know the $\sbktopt_\ell$ size buckets}. Part (b) of Lemma~\ref{sizebnds}
suggests a way out of this. We will instead build an index-set $I$ and a
solution $A$ simultaneously satisfying the requirements of part (b) of
Lemma~\ref{sizebnds}. This will still guarantee feasibility, and we will show that this 
can be done so as to obtain $\Omega(\OPT)$ reward.  

It will be convenient to first observe that there is a near-optimal solution satisfying the 
cardinality bounds in Lemma~\ref{sizebnds} (a). This is due to the following simple,
general result.

\begin{lemma} \label{submodmat}
Let $([n],\I)$ be a matroid with rank function $\rk:2^{[n]}\mapsto\Zp$, and
$\Pc=\bigl\{x\in\R_+^n: x(S)\leq\rk(S)\ \ \forall S\sse[n]\bigr\}$ be the corresponding
matroid polytope.
Let $A\sse[n]$ and $\rho\geq 1$ be such that $\chi^A/\rho\in\Pc$. Let
$\gensub:2^{[n]}\mapsto\Rp$ be a monotone, submodular function. Then, there exists a set
$B\sse A$ such that $B\in\I$ and $\gensub(B)\geq\gensub(A)/\rho$.
\end{lemma}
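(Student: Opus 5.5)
We plan to restrict the matroid to $A$ and apply the multilinear extension of $\gensub$ to the fractional point $\chi^A/\rho$. Let $\Pc_A$ denote the matroid polytope of the restriction $\M|_A$, whose independent sets are the sets $B\in\I$ with $B\sse A$. Since $\chi^A/\rho$ is supported on $A$ and satisfies $x(S)\le\rk(S)$ for all $S\sse A$, we have $x:=\chi^A/\rho\in\Pc_A$.

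Let $G$ be the multilinear extension of $\gensub$, so $G(x)=\E{\gensub(R(x))}$, where $R(x)$ contains each $e$ independently with probability $x_e$. The first step lower-bounds $G(\chi^A/\rho)$. Take $\rho$ independent copies $R_1,\ldots,R_{\ceil{\rho}}$ of $R(x)$. A cleaner route is the concavity of $t\mapsto G(t\chi^A)$ for $t\in[0,1]$, which follows from submodularity: the second directional derivative along a nonnegative direction is nonpositive. Concavity together with $G(0)=\gensub(\es)\ge 0$ gives $G(t\chi^A)\ge t\,G(\chi^A)+(1-t)G(0)\ge t\,\gensub(A)$. Setting $t=1/\rho$ yields $G(x)\ge\gensub(A)/\rho$.

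The second step converts the fractional point into an integral independent set without losing value. We use pipage rounding or swap rounding on the matroid polytope (Calinescu--Chekuri--P\'al--Vondr\'ak). Each rounding step moves along a direction $e_i-e_j$ with $i,j$ in a common tight set. Along such a direction $G$ is convex, because submodularity gives a nonnegative cross-derivative and hence a nonnegative second derivative in $e_i-e_j$. Therefore one of the two endpoints does not decrease $G$, and it remains in $\Pc_A$. Iterating produces a vertex of $\Pc_A$, namely $\chi^B$ for some $B\in\I$ with $B\sse A$, with $\gensub(B)=G(\chi^B)\ge G(x)\ge\gensub(A)/\rho$. This is purely an existence statement, so we need neither efficiency nor value-oracle estimation of $G$.

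The main obstacle is justifying the pipage step in existence form: at any non-integral $x\in\Pc_A$, either some coordinate can be increased within the polytope, or a suitable pair $i,j$ exists. This is exactly the standard pipage analysis for matroid polytopes, which we would cite. Alternatively, $x$ can be written as a convex combination of independent sets, and swap rounding yields a random $B$ with $\E{\gensub(B)}\ge G(x)$, from which existence follows.
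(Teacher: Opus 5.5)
Your proposal is correct and follows the paper's proof. You bound the multilinear extension at $\chi^A/\rho$ from below by $\gensub(A)/\rho$, then round this point losslessly within the matroid polytope via pipage or swap rounding; restricting to $\M|_A$ plays the role of the paper's ``support is preserved'' property. The only difference is that you derive $G(\chi^A/\rho)\geq\gensub(A)/\rho$ from concavity of $t\mapsto G(t\chi^A)$ together with $\gensub(\es)\geq 0$, whereas the paper cites Feige's bound, which holds even for subadditive $\gensub$. Delete the abandoned sentence about $\ceil{\rho}$ independent copies.
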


Lemma~\ref{submodmat} follows from the fact that the 
{\em multilinear extension of $\gensub$} drops by a factor of at most $\rho$ when we scale
$\chi^A$ by $\rho$, and one can round a point in a matroid polytope to an integer point
without incurring any rounding loss in the value of the multilinear extension.
To avoid detracting the reader, we defer the proof to the end of this section. 
We utilize the following corollary. 
%which follows from the fact that the cardinality bounds
%in Lemma~\ref{sizebnds} (b) can be encoded by a partition matroid. 
\begin{comment}
scaling the indicator vector of
$\optset-\{optind_1\}$ by $\al$ yields a fractional solution $x$ satisfying the cardinality
bounds, and one can argue that the {\em multilinear extension of $\rewd$} has at least the
stated value at $x$. 
The polytope encoding the cardinality bounds is a (partition) matroid
polytope, for which it is known that rounding $x$ to an integer solution does not incur
any loss in value (compared to the multilinear extension), which yields the lemma. 
\end{comment}

\begin{corollary} \label{noptsoln}
%Let $\al=\max_{\ell\in\POS':\ell\geq 1}\frac{\nxt(\ell)-\ell}{\ell-\prev(\ell)}$. (Note
%that $\al=2$ here.) 
There is a solution $\noptset\sse[n]$ such that
$\bigl|\noptset\cap\sbktopt_\ell\bigr|\leq\ell-\prev(\ell)$ for all $\ell\in\POS'$ and 
$\rewd(\noptset)\geq\rewd\bigl(\optset-\{\optind_1\}\bigr)/2$.
\end{corollary}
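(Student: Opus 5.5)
We will derive the corollary directly from Lemma~\ref{submodmat}, taking the matroid to be the partition matroid on $[n]$ with parts $\{\sbktopt_\ell\}_{\ell\in\POS'}$ and capacities $\ell-\prev(\ell)$. We take $A=\optset-\{\optind_1\}$, $\gensub=\rewd$ and $\rho=2$.

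The size buckets partition $[n]$, so this is a legitimate partition matroid. Its rank function is $\rk(S)=\sum_{\ell\in\POS'}\min\{|S\cap\sbktopt_\ell|,\ell-\prev(\ell)\}$. Its independence polytope is
\[
\Pc=\Bigl\{x\geq 0:\ x(\sbktopt_\ell)\leq\ell-\prev(\ell)\ \ \forall\ell\in\POS'\Bigr\}.
\]
It therefore suffices to check that $\chi^A/2$ satisfies every part constraint.

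\textbf{Verifying the part constraints.} For $\ell=0$, note that $\sbktopt_0=\{1,\ldots,\optind_1\}$. So $\optset\cap\sbktopt_0=\{\optind_1\}$ and $A\cap\sbktopt_0=\es$, which meets the capacity $0-\prev(0)=0$. This is exactly why $\optind_1$ must be removed.

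For $\ell\in\POS\cap[\ellfin]$ with $\ell<\ellfin$, we have $|A\cap\sbktopt_\ell|=|\optset\cap\sbktopt_\ell|=\nxt(\ell)-\ell$. Since $\POS=\POS_{n,1}$ consists of the powers of $2$, we get $\nxt(\ell)=2\ell$ and $\prev(\ell)=\ell/2$ for $\ell\geq 2$, so $\nxt(\ell)-\ell=\ell=2(\ell-\prev(\ell))$. For $\ell=1$, we have $\nxt(1)-1=1$ and $1-\prev(1)=1$, so $1\leq 2$.

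For $\ell=\ellfin$, we have $|\optset\cap\sbktopt_{\ellfin}|\leq\nxt(\ellfin)-\ellfin$. The case $\nxt(\ellfin)=2\ellfin$ is handled as above. If $\nxt(\ellfin)=n+1$, then $|\optset\cap\sbktopt_{\ellfin}|\leq\nopt-\ellfin<\ellfin$, because $\ellfin$ is the largest power of $2$ not exceeding $\nopt$. In both cases the count is at most $2(\ellfin-\prev(\ellfin))$, or at most $1$ when $\ellfin=1$.

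Hence $|A\cap\sbktopt_\ell|/2\leq\ell-\prev(\ell)$ for all $\ell\in\POS'$, and so $\chi^A/2\in\Pc$.

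\textbf{Applying the lemma.} Lemma~\ref{submodmat} now yields $\noptset\sse A$ that is independent in the partition matroid, which is precisely the cardinality bound in the statement, with $\rewd(\noptset)\geq\rewd(A)/2=\rewd(\optset-\{\optind_1\})/2$.

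The only delicate step is the bookkeeping for the boundary buckets $\ell=0$, $\ell=1$ and $\ell=\ellfin$, in particular the convention $\optind_{\nxt(\ellfin)}=n$. I expect no real obstacle beyond this.
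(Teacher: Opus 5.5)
Your proposal is correct and is essentially the paper's proof. Like the paper, you apply Lemma~\ref{submodmat} to the partition matroid with parts $\sbktopt_\ell$ and capacities $\ell-\prev(\ell)$, with $A=\optset-\{\optind_1\}$ and $\rho=2$, and check $A\cap\sbktopt_0=\es$ and $|A\cap\sbktopt_\ell|\leq\nxt(\ell)-\ell\leq 2(\ell-\prev(\ell))$ for $\ell\geq 1$; your boundary-case bookkeeping just spells out that inequality. One harmless nit: the partition-matroid polytope also includes the constraints $x_e\leq 1$, which your description of $\Pc$ omits, but $\chi^A/2$ has entries at most $1/2$, so the conclusion is unaffected.
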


\begin{proof}
We apply Lemma~\ref{submodmat} to the partition matroid encoding the stated
cardinality bounds and the set $A=\optset-\{\optind_1\}$. We claim that we can take
$\rho=2$ in Lemma~\ref{submodmat}, which yields the claimed result.
This follows because $A\cap\sbktopt_0=\es$ and 
$\bigl|A\cap\sbktopt_\ell\bigr|\leq\nxt(\ell)-\ell\leq 2\bigl(\ell-\prev(\ell)\bigr)$ for
all $\ell\in\POS'-\{0\}$. 
\end{proof}

Let $\rmax:=\max\,\{\rewd(\{e\}): e\in[n],\ f(\sz_e)\leq B\}$. Since $\rewd$ is monotone,
submodular, we have $\OPT\leq n\cdot\rmax$.
We may assume that we know $\ellfin$, and (as is routine by now) that we
have an estimate $\optval$ such that $\optval\leq\OPT\leq(1+\ve)\optval$. Let
$\Dt=\frac{\ve\cdot\optval}{|\POS|}$. 
For $\ell\in\POS'$, define $\noptset_\ell:=\noptset\cap\sbktopt_\ell$, and
$\noptset_{>\ell}:=\bigcup_{r\in\POS':r>\ell}\noptset_r$. (Note that $\noptset_0=\es$.)
%$\sbktopt_{>\ell}:=\bigcup_{r\in\POS':r>\ell}\sbktopt_r$ and
%$\sbktopt_{\geq\ell}:=\bigcup_{r\in\POS':r\geq\ell}\sbktopt_r$
%let
%$\rnopt_\ell:=\rewd_{\noptset\cap\sbktopt_{>\ell}}(\noptset\cap\sbktopt_\ell)$,
%\rewd\bigl(\noptset\cap(\bigcup_{r\in\POS':r\geq\ell}\sbktopt_r)\bigr)
%-\rewd\bigl(\noptset\cap(\bigcup_{r\in\POS':r>\ell}\sbktopt_r)\bigr)$ 
%which is the incremental reward obtained from $\noptset_\ell$ over the
%reward from the $\noptset\cap\sbktopt_r$ sets for $r\in\POS'$, $r>\ell$.
%(Note that $\rnopt_0=0$.)
Since $|\POS|=O(\log n)$, similar to Section~\ref{knapsack},    
we may also assume that we can estimate the incremental rewards
$\rewd_{\noptset_{>\ell}}(\noptset_\ell)$ within an additive error of $\Dt$, for all
$\ell\in\POS'$: more precisely, we may assume that we know 
$\rnopt_\ell:=\Dt\cdot\floor{\frac{\rewd_{\noptset_{>\ell}}(\noptset_\ell)}{\Dt}}$  
for all $\ell\in\POS'$, since 
$\sum_{\ell\in\POS'}\floor{\frac{\rewd_{\noptset_{>\ell}}(\noptset_\ell)}{\Dt}}\leq\bigl(1+\frac{1}{\ve}\bigr)|\POS|$. 

Consider using these $\rnopt_\ell$-estimates %(up to an additive error of $\Dt$) 
%for all $\ell\in\POS'$
to build the index-set $I$ and solution $A$, roughly speaking, via the following
procedure. 
Let $\alg$ be a $\beta$-approximation algorithm for submodular maximization subject to a
cardinality constraint. 
Set $i_{\nxt(\ellfin)}:=n$, $A\assign\es$, and $I\assign\es$.
Starting at $\ell=\ellfin$, we find the largest index $i<i_{\nxt(\ell)}$ such that $\alg$
can find a subset $T\sse\{i+1,\ldots,i_{\nxt(\ell)}\}$ with $|T|\leq\ell-\prev(\ell)$
satisfying $\rewd_A(T)\geq\rnopt_\ell/\beta$.
The idea is that if $i_{\nxt(\ell)}\geq\optind_{\nxt(\ell)}$, then
$\noptset_\ell$ is a set that yields incremental reward $\rnopt_\ell$, and so since
$\alg$ is a $\beta$-approximation algorithm, we  
will be able to find such a set for some $i\geq\optind_\ell$. Given this, we can set
$i_\ell:=i$, and update
$A\assign A\cup T$, $I\assign I\cup\{i_\ell\}$, and $\ell\assign\prev(\ell)$, and
continue until we have $\ell=0$ at the end of the iteration. 

By construction, $I$ and $A$ would then satisfy Lemma~\ref{sizebnds} (b), which suggests
that $A$ is a feasible solution obtaining reward roughly $\rewd(\noptset)/\beta$.
However, {\em there is a significant piece of fallacious reasoning} here, which poses a
serious impediment. In arguing that $i_\ell\geq\optind_\ell$, we assumed above that 
%$\noptset_\ell$ yields incremental reward (at least) $\rnopt_\ell$ over the set $A$, i.e., 
$\rewd_A(\noptset_\ell)\geq\rnopt_\ell$, but {\em this need not hold}; 
we only have that $\rewd_{\noptset_{>\ell}}(\noptset_\ell)\geq\rnopt_\ell$. Consequently, we
cannot argue that $i_\ell\geq\optind_\ell$, and so $A$ need not be a feasible solution.
%so all bets are off regarding the feasibility of $A$.  
One could avoid this problem if we could estimate the sequence
$\bigl\{\rewd_{A_{\nxt(\ell)}}(\noptset_\ell)\bigr\}_{\ell\in\POS'}$ of incremental rewards,
where $A_{\nxt(\ell)}$ is the set of items included (from $\{i_{\nxt(\ell)}+1,n\}$) when
considering index $\ell$; we call $\bigl\{\rewd_{A_{\nxt(\ell)}}(\noptset_\ell)\bigr\}_{\ell\in\POS'}$
the incremental-reward sequence corresponding to the (nested) set sequence
$\{A_\ell\}_{\ell\in\POS'}$.
But this runs the risk of circular reasoning, since the $A_\ell$-sets are themselves
determined by the incremental-reward sequence used (in the above procedure)!

Essentially, we seek a reward sequence $\{\rest_\ell\}_{\ell\in\POS'}$ that corresponds to
the incremental-reward sequence of a set sequence $\{A_{\ell}\}_{\ell\in\POS'}$,
such that $\{A_\ell\}_{\ell\in\POS'}$ is precisely the set sequence that one would obtain
via the above procedure from the $\{\rest_\ell\}_{\ell\in\POS'}$ reward sequence! 
%(This has the flavor of a fixed point.) 
We argue that this is indeed possible, and that the solution $A=A_1$ constructed from such
a reward-sequence obtains large reward. Roughly speaking this holds because:
(1) one can define a suitable set sequence and target incremental-reward sequence, whose
entries are multiplies of $\Dt$,  
%where $\rest_\ell$
%depends only on $A_r$ for $r\in\POS'$, $r>\ell$, and $A_\ell$ depends on $\rest_\ell$ and
%$A_{\nxt(\ell)}$;
and can argue %after rounding incremental rewards to multiples of $\Dt$, we can 
%isolate 
that this target reward sequence can be isolated within a polynomial-size collection of
reward sequences  
%polynomial number of incremental-reward sequences 
(even though the number of set sequences can be quite large); and 
(2) given this target reward sequence, we can recover the set sequence used to define the
reward sequence. 

\vspace*{-1ex}
\paragraph{The algorithm.} 
Recall that we may assume that we know $\ellfin$ and an estimate $\optval$ such that
$\optval\leq\OPT\leq(1+\ve)\optval$, and we set $\Dt=\frac{\ve\cdot\optval}{|\POS|}$. 
(More precisely, we repeat the procedure below for all possible values of $\ellfin$ and
$\optval$, where $\optval$ is of the form $\rmax(1+\ve)^k$ and lies in $[\rmax, n\rmax]$.)
%\footnote{That is, we repeat the procedure decribed below for all possible values of
%$\ellfin$ and $\optval$ to obtain a collection of item sets, and return the best feasible
%solution in this collection.}
Define the following collection of reward sequences.
\[
\rseqset:=
\Bigl\{\rest\in\Rp^{\POS'-\{0\}}:\ \rest_\ell\text{ is a multiple of $\Dt$}\ \ \forall
\ell\in\POS'-\{0\},
\quad \sum_{\ell\in\POS'-\{0\}}\rest_\ell\leq\Bigl(1+\tfrac{1}{\ve}\Bigr)\Dt\cdot|\POS|\Bigr\}
\]
Note that $|\rseqset|\leq 2^{O(|\POS|/\ve)}=n^{O(1/\ve)}$.
Recall that $\alg$ is a $\beta$-approximation algorithm for cardinality-constrained
submodular maximization. We assume that $\alg$ is deterministic. 

For each $\rest\in\rseqset$, we do the following. 
Set $i_{\nxt(\ellfin)}:=n$. 
Initialize $A_{\nxt(\ellfin)}:=\es$, and $I:=\es$.
For each $\ell\in\POS'-\{0\}$ considered in {\em decreasing} order, we do the following.
%Starting at $\ell=\ellfin$, 
Let $i$ be the largest index smaller than $i_{\nxt(\ell)}$ such that 
for the cardinality-constrained submodular maximization problem with ground set
$\{i+1,\ldots,i_{\nxt(\ell)}\}$, submodular function $\rewd_{A_{\nxt(\ell)}}(\cdot)$, and
cardinality bound $\ell-\prev(\ell)$, 
$\alg$ returns a set $T$
%can find a subset $T\sse\{i+1,\ldots,i_{\nxt(\ell)}\}$ with $|T|\leq\ell-\prev(\ell)$
satisfying $\rewd_{A_{\nxt(\ell)}}(T)\geq\rest_\ell$.
If no such index $i$ exists, then we declare failure; this indicates that $\rest$ is not a
suitable target reward sequence.
Set $A_\ell\assign T\cup A_{\nxt(\ell)}$, $i_\ell:=i$, and $I\assign I\cup\{i_\ell\}$.
We add $A_1$ to our collection of solutions.

Finally, from the collection of solutions computed, we return the feasible solution that
attains maximum reward, or the element achieving reward $\rmax$, whichever yields higher
reward. 

\paragraph{Analysis.}
We prove the following performance guarantee.

\begin{theorem} \label{subnbknapthm}
The above algorithm returns a feasible solution obtaining reward at least 
$\bigl(\frac{1}{2\beta+3}-\ve\bigr)\cdot\OPT$.
\end{theorem}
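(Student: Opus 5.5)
The plan is to show that one particular target sequence $\rest\in\rseqset$, built from $\noptset$ itself, makes the procedure succeed with $i_\ell\ge\optind_\ell$ for every $\ell$. Then Lemma~\ref{sizebnds}~(b) gives feasibility, and a telescoping submodularity argument lower-bounds $\rewd(A_1)$. Fix the correct guesses of $\ellfin$ and $\optval$, and the set $\noptset$ from Corollary~\ref{noptsoln}.

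\textbf{Step 1: the self-consistent sequence.} I define $\rest$ and the sets $A_\ell$, indices $i_\ell$ simultaneously, processing $\ell\in\POS'-\{0\}$ in decreasing order. Given $A_{\nxt(\ell)}$ and $i_{\nxt(\ell)}$, I set
\[
\rest_\ell:=\Dt\cdot\floor{\rewd_{A_{\nxt(\ell)}}(\noptset_\ell)/(\beta\Dt)}.
\]
Then $A_\ell$ and $i_\ell$ are exactly what the algorithm produces at step $\ell$ with this value. Because $\alg$ is deterministic, the algorithm run on this $\rest$ (once I show $\rest\in\rseqset$) reproduces exactly these sets. This resolves the circularity.

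\textbf{Step 2: induction that $i_\ell\ge\optind_\ell$.} Assume $i_{\nxt(\ell)}\ge\optind_{\nxt(\ell)}$; this holds at the top, since $i_{\nxt(\ellfin)}=n$. Then the ground set $\{\optind_\ell+1,\ldots,i_{\nxt(\ell)}\}$ contains $\sbktopt_\ell\supseteq\noptset_\ell$. Also $|\noptset_\ell|\le\ell-\prev(\ell)$, and $\rewd_{A_{\nxt(\ell)}}$ is monotone submodular. So for $i=\optind_\ell$, the $\beta$-approximation $\alg$ returns $T$ with $\rewd_{A_{\nxt(\ell)}}(T)\ge\rewd_{A_{\nxt(\ell)}}(\noptset_\ell)/\beta\ge\rest_\ell$. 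Hence no failure occurs, and the largest successful index satisfies $i_\ell\ge\optind_\ell$. The ground sets $\{i_\ell+1,\ldots,i_{\nxt(\ell)}\}$ receive at most $\ell-\prev(\ell)$ items each, and the bucket with $\ell=0$ receives none. Lemma~\ref{sizebnds}~(b) then gives $f(\svec{A_1})\le B$.

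\textbf{Step 3: membership $\rest\in\rseqset$.} Each $\rest_\ell$ is a multiple of $\Dt$. Moreover $\rest_\ell\le\rewd_{A_{\nxt(\ell)}}(T_\ell)$, where $T_\ell=A_\ell-A_{\nxt(\ell)}$. Summing telescopes to $\sum_\ell\rest_\ell\le\rewd(A_1)$. Since $A_1$ is feasible, $\rewd(A_1)\le\OPT\le(1+\ve)\optval=(1+\frac1\ve)\Dt|\POS|$. So the enumeration does try this $\rest$.

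\textbf{Step 4: reward.} First, $\rewd(A_1)=\sum_\ell\rewd_{A_{\nxt(\ell)}}(T_\ell)\ge\sum_\ell\bigl(\rewd_{A_{\nxt(\ell)}}(\noptset_\ell)/\beta-\Dt\bigr)$. Next, submodularity with $A_{\nxt(\ell)}\subseteq A_1\cup\noptset_{>\ell}$ gives
\[
\sum_\ell\rewd_{A_{\nxt(\ell)}}(\noptset_\ell)\ge\sum_\ell\rewd_{A_1\cup\noptset_{>\ell}}(\noptset_\ell)=\rewd_{A_1}(\noptset)\ge\rewd(\noptset)-\rewd(A_1).
\]
Combining the two yields $(\beta+1)\rewd(A_1)\ge\rewd(\noptset)-O(\ve\beta)\OPT$. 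Corollary~\ref{noptsoln} and submodularity give $\rewd(\noptset)\ge(\OPT-\rmax)/2$. Finally, taking the better of $A_1$ and the $\rmax$ element yields at least
\[
\max\Bigl\{\rmax,\ \frac{\OPT-\rmax}{2(\beta+1)}\Bigr\}-\ve\OPT\ \ge\ \Bigl(\frac{1}{2\beta+3}-\ve\Bigr)\OPT,
\]
after rescaling $\ve$. Any infeasible candidates are discarded via the feasibility oracle.

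\textbf{Main obstacle.} The delicate point is Step~3: showing that the adaptively defined sequence lies in the polynomial-size family $\rseqset$. Bounding $\rest_\ell$ by $\rewd(\noptset_\ell)$ would overcount; the fix is to bound the sum by the realized telescoping reward $\rewd(A_1)\le\OPT$, which needs the feasibility established in Step~2.
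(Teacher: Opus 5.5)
Your proposal is correct and follows the paper's proof essentially step for step. Your self-consistent sequence is the paper's $\brewd$-procedure, and your Steps~2--4 correspond to Lemmas~\ref{feasset}, \ref{fixedpt}, \ref{goodseq} and~\ref{goodrewd}, followed by the same balancing against $\rmax$. The only cosmetic difference is in Step~4: you compare $\rewd_{A_{\nxt(\ell)}}(\noptset_\ell)$ with $\rewd_{A_1\cup\noptset_{>\ell}}(\noptset_\ell)$ and telescope, whereas the paper compares with $\rewd_{B}(\noptset_\ell)$ and uses subadditivity of marginals. Both routes yield $\rewd_{A_1}(\noptset)$.
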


We assume that we have the the correct $\ellfin$, $\optval$ values,
%It suffices to 
and show that there is some reward sequence in $\rseqset$ for which the algorithm computes
a feasible solution, and the better of $\rmax$ and the solution computed by the algorithm yields 
the desired reward. 
%and a feasible solution with the desired reward. 
To this end, define the following sequence of rewards and nested sets.
Recall that $\noptset_\ell=\noptset\cap\sbktopt_\ell$ for $\ell\in\POS'$.

Set $\bi_{\nxt(\ellfin)}:=n$. 
Initialize $B_{\nxt(\ellfin)}:=\es$, $\bI:=\es$.
For each $\ell\in\POS'-\{0\}$ considered in decreasing order, we do the following.
%Starting at $\ell=\ellfin$, 
Set $\brewd_{\ell}:=\Dt\cdot\floor{\frac{\rewd_{B_{\nxt(\ell)}}(\noptset_{\ell})}{\beta\Dt}}$.
Let $i$ be the largest index smaller than $\bi_{\nxt(\ell)}$ such that 
for the cardinality-constrained submodular maximization problem with ground set
$\{i+1,\ldots,\bi_{\nxt(\ell)}\}$, submodular function $\rewd_{B_{\nxt(\ell)}}(\cdot)$, and
cardinality bound $\ell-\prev(\ell)$, 
$\alg$ returns a set $T$ such that $\rewd_{B_{\nxt(\ell)}}(T)\geq\brewd_\ell$.
Update $B_\ell\assign T\cup B_{\nxt(\ell)}$, $\bi_\ell:=i$, $\bI\assign\bI\cup\{\bi_\ell\}$.

%The performance guarantee of the algorithm then follows immediately from
%Lemmas~\ref{feasset}, \ref{fixedpt}, and~\ref{fixedpt}.
Lemma~\ref{feasset} shows that $B:=B_1$ is a feasible solution, and Lemma~\ref{goodrewd}
shows that $B$ attains good reward.
%$B:=B_1$ and the index set $\bI$ satisfy the bounds in
%Lemma~\ref{sizebnds} (b), and hence that $B$ is a feasible solution. 
Lemma~\ref{goodseq} argues that the $(\brewd_\ell)_{\ell\in\POS'-\{0\}}$ reward sequence lies in
$\rseqset$. Lemma~\ref{fixedpt} shows the key result that, given this reward sequence, our
algorithm produces precisely the sequence of sets $\{B_\ell\}_{\ell\in\POS'-\{0\}}$. 
%Given this, we finally argue that $B$ achieves the desired near-optimal reward.
%Lemma~\ref{feasset} and Lemma~\ref{fixedpt}. 
Theorem~\ref{subnbknapthm} then follows by combining these results.

\begin{lemma} \label{feasset}
The item-set $B$ and index-set $\bI$ satisfy the bounds in Lemma~\ref{sizebnds} (b). Hence
$B$ is a feasible solution.
\end{lemma}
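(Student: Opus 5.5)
We will verify the two hypotheses of Lemma~\ref{sizebnds}~(b), taking $k=\ellfin$ and $I=\bI$.

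\emph{Cardinality bounds.} These follow directly from the construction. Consider $\ell\in\POS'-\{0\}$. The set $T$ added in iteration $\ell$ is returned by $\alg$ on the cardinality-constrained instance with ground set $\{\bi_\ell+1,\ldots,\bi_{\nxt(\ell)}\}$ and bound $\ell-\prev(\ell)$. Hence $B\cap\{\bi_\ell+1,\ldots,\bi_{\nxt(\ell)}\}$ contains at most $\ell-\prev(\ell)$ items from this iteration. We also need that no other iteration places items in this interval. Iteration $r>\ell$ chooses items from $\{\bi_r+1,\ldots,\bi_{\nxt(r)}\}$, and iteration $r<\ell$ chooses items from $\{\bi_r+1,\ldots,\bi_{\nxt(r)}\}$. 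Since $\bi_r<\bi_{\nxt(r)}$ at every step, these intervals are pairwise disjoint, so the bound holds. For $\ell=0$, the relevant interval is $\{1,\ldots,\bi_1\}$ with bound $0-\prev(0)=0$; no item is ever chosen below $\bi_1$, so this bound holds as well.

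\emph{Well-definedness and $\bi_\ell\ge\optind_\ell$.} This is the main step. We prove by downward induction on $\ell\in\POS'-\{0\}$ that the index $i$ in iteration $\ell$ exists and satisfies $i\ge\optind_\ell$. The base is $\bi_{\nxt(\ellfin)}=n=\optind_{\nxt(\ellfin)}$. Assume $\bi_{\nxt(\ell)}\ge\optind_{\nxt(\ell)}$ and set $i=\optind_\ell$. Then the ground set $\{\optind_\ell+1,\ldots,\bi_{\nxt(\ell)}\}$ contains $\sbktopt_\ell=\{\optind_\ell+1,\ldots,\optind_{\nxt(\ell)}\}$, and therefore contains $\noptset_\ell$. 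By Corollary~\ref{noptsoln}, $|\noptset_\ell|\le\ell-\prev(\ell)$, so $\noptset_\ell$ is a feasible solution of this cardinality-constrained instance. Its value under $\rewd_{B_{\nxt(\ell)}}(\cdot)$ is exactly $\rewd_{B_{\nxt(\ell)}}(\noptset_\ell)$.

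This is the point where defining $\brewd_\ell$ relative to $B_{\nxt(\ell)}$, rather than $\noptset_{>\ell}$, resolves the earlier fallacy. Because $\alg$ is a $\beta$-approximation, it returns $T$ with
\[
\rewd_{B_{\nxt(\ell)}}(T)\ \ge\ \rewd_{B_{\nxt(\ell)}}(\noptset_\ell)/\beta\ \ge\ \brewd_\ell .
\]
Thus a valid index exists, and the largest one satisfies $\bi_\ell\ge\optind_\ell$. One subtlety remains: we need $\optind_\ell<\bi_{\nxt(\ell)}$. This follows from $\optind_\ell<\optind_{\nxt(\ell)}\le\bi_{\nxt(\ell)}$. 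In the last bucket we use the convention $\optind_{\nxt(\ellfin)}=n$, and we need $\optind_{\ellfin}<n$; if $\optind_{\ellfin}=n$, the bucket is empty and the choice $i=n-1$ or a trivial $T=\es$ handles it. Note that $\alg$ is deterministic, so ``$\alg$ returns'' is well defined. Also, monotonicity in $i$ is not needed, since we only need some valid $i\ge\optind_\ell$.

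With $\bi_\ell\ge\optind_\ell$ for all $\ell\in\pos\cap[\ellfin]$ and the cardinality bounds established, Lemma~\ref{sizebnds}~(b) yields $f\bigl(\svec{B}\bigr)\le B$, so $B$ is feasible. The main obstacle is the inductive existence argument; the disjointness and indexing bookkeeping are routine.
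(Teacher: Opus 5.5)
Your proof follows the paper's argument. The cardinality bounds hold by construction. A downward induction then shows $\bi_\ell\ge\optind_\ell$: at $i=\optind_\ell$, the set $\noptset_\ell$ is a feasible witness, and the $\beta$-approximation guarantee gives value at least $\rewd_{B_{\nxt(\ell)}}(\noptset_\ell)/\beta\ge\brewd_\ell$. You also make explicit three things the paper leaves implicit: the $\ell=0$ constraint $B\cap\{1,\ldots,\bi_1\}=\es$, the disjointness of the intervals, and the existence of the index $i$ in every iteration.

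The one step that does not work as written is your remark on the degenerate case $\optind_{\ellfin}=n$. Here $\noptset_{\ellfin}=\es$ forces $\brewd_{\ellfin}=0$, so the rule ``largest $i<\bi_{\nxt(\ellfin)}=n$'' accepts $i=n-1$. That gives $\bi_{\ellfin}=n-1<\optind_{\ellfin}$, so the inequality you then claim for all $\ell\in\POS\cap[\ellfin]$ is false. Worse, the induction hypothesis you need at $\ell=\prev(\ellfin)$ is lost. The set $\noptset_{\prev(\ellfin)}$ may contain item $n=\optind_{\ellfin}$, which lies outside every ground set considered in that iteration, so $\noptset_{\prev(\ellfin)}$ is no longer a valid witness. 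Nor can you ``take a trivial $T=\es$'': the procedure does not let you pick $i=n$, and it does not let you choose $T$.

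The paper's proof silently skips this case as well, since it uses $i=\optind_\ell$ as a candidate without checking $\optind_\ell<\bi_{\nxt(\ell)}$. The simplest repair is to let $i$ range up to $\bi_{\nxt(\ell)}$ inclusive, using both in the algorithm and in the $\brewd$-procedure the empty ground set, on which $\alg$ returns $\es$. Then $i=n$ is valid in the degenerate case, giving $\bi_{\ellfin}=n=\optind_{\ellfin}$. Your induction then goes through verbatim, since $\optind_\ell\le\optind_{\nxt(\ell)}\le\bi_{\nxt(\ell)}$ always holds. The disjointness argument and Lemma~\ref{fixedpt} are unaffected, because the intervals may now be empty but remain disjoint.
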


\begin{proof}
By construction, we have
$\bigl|B\cap\{\bi_\ell+1,\ldots,\bi_{\nxt(\ell)}\}\bigr|\leq\ell-\prev(\ell)$ for all
$\ell\in\POS'-\{0\}$. 
We argue that $\bi_{\ell}\geq\optind_\ell$ for all $\ell\in\POS'-\{0\}$. 
This will show that $B$ and $\bI$ satisfy the requirements of Lemma~\ref{sizebnds} (b),
which implies that $B$ is a feasible solution.

We proceed by induction. For any $\ell\in\POS'-\{0\}$, we have that
$\rewd_{B_{\nxt(\ell)}}(\noptset_\ell)\geq\beta\cdot\brewd_\ell$ by design.
So assuming inductively that $\bi_{\nxt(\ell)}\geq\optind_{\nxt(\ell)}$, which holds also
for the base case $\ell=\ellfin$ by definition, we have that for $i=\optind_\ell$,
$\noptset_\ell$ is a feasible solution to the cardinality-constrained
submodular-maximization problem considered in iteration $\ell$ with value at least
$\beta\cdot\brewd_\ell$. So since $\alg$ is a $\beta$-approximation algorithm, certainly
for index $\optind_\ell$, $\alg$ would return a desired set. It follows that
$\bi_\ell\geq\optind_\ell$. 
\end{proof}

\begin{lemma} \label{goodrewd}
We have $\rewd(B)\geq\frac{\rewd(\noptset)}{\beta+1}-\ve\cdot\OPT$.
\end{lemma}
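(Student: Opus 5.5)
Let $B_{\ell}$ denote the nested sets built in decreasing order of $\ell\in\POS'-\{0\}$. We telescope $\rewd(B)$ and $\rewd(\noptset)$ along this order and compare them term by term. By construction, $\rewd(B)=\sum_{\ell}\rewd_{B_{\nxt(\ell)}}(B_\ell-B_{\nxt(\ell)})\geq\sum_\ell\brewd_\ell$. Each term satisfies $\brewd_\ell\geq\rewd_{B_{\nxt(\ell)}}(\noptset_\ell)/\beta-\Dt$, which follows from the definition of $\brewd_\ell$ as a rounded-down multiple of $\Dt$. Summing over the $|\POS'-\{0\}|\leq|\POS|$ indices, the additive losses total at most $|\POS|\Dt=\ve\cdot\optval\leq\ve\OPT$. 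Hence
\[
\rewd(B)\;\geq\;\frac{1}{\beta}\sum_{\ell\in\POS'-\{0\}}\rewd_{B_{\nxt(\ell)}}(\noptset_\ell)\;-\;\ve\OPT .
\]

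Next I lower bound the sum of incremental rewards $\rewd_{B_{\nxt(\ell)}}(\noptset_\ell)$ using submodularity and monotonicity. Since $B_{\nxt(\ell)}\sse B$, submodularity gives $\rewd_{B_{\nxt(\ell)}}(\noptset_\ell)\geq\rewd_{B\cup\noptset_{>\ell}}(\noptset_\ell)$. The sets $\noptset_\ell$ for $\ell\in\POS'-\{0\}$ partition $\noptset$, because $\noptset_0=\es$. So the right-hand terms telescope when $\ell$ is taken in decreasing order:
\[
\sum_\ell\rewd_{B\cup\noptset_{>\ell}}(\noptset_\ell)=\rewd(B\cup\noptset)-\rewd(B)\geq\rewd(\noptset)-\rewd(B),
\]
where the last step uses monotonicity. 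Combining the two displays gives $\beta\,\rewd(B)\geq\rewd(\noptset)-\rewd(B)-\beta\ve\OPT$. Rearranging, $\rewd(B)\geq\frac{\rewd(\noptset)}{\beta+1}-\frac{\beta}{\beta+1}\ve\OPT\geq\frac{\rewd(\noptset)}{\beta+1}-\ve\OPT$.

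The step to check is the first inequality $\rewd_{B_{\nxt(\ell)}}(B_\ell-B_{\nxt(\ell)})\geq\brewd_\ell$. It holds because the construction picks the index $\bi_\ell$ and a set $T$ with $\rewd_{B_{\nxt(\ell)}}(T)\geq\brewd_\ell$ and sets $B_\ell=T\cup B_{\nxt(\ell)}$. This requires that a valid index exists, so that the construction never fails. The argument in Lemma~\ref{feasset} guarantees this: $i=\optind_\ell$ always works, since $\noptset_\ell\sse\{\optind_\ell+1,\ldots,\optind_{\nxt(\ell)}\}\sse\{\optind_\ell+1,\ldots,\bi_{\nxt(\ell)}\}$.

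The only mildly delicate point is aligning the submodularity comparison with the telescoping order. The conditioning set for $\ell$ must contain exactly $\noptset_{>\ell}$, the parts processed before $\ell$ in decreasing order, together with $B$. This is consistent with how the $\brewd_\ell$ were defined relative to $B_{\nxt(\ell)}$, whose items come from indices above $\bi_{\nxt(\ell)}$.
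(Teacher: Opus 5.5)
Your proof is correct and follows the paper's argument: bound $\rewd(B)$ below by $\sum_\ell\brewd_\ell$, undo the rounding at a total cost of at most $|\POS|\Dt\le\ve\OPT$, use $B_{\nxt(\ell)}\subseteq B$ with submodularity, and finish with monotonicity and rearranging. The only cosmetic difference is that you telescope directly along the chain $B\cup\noptset_{>\ell}$, whereas the paper first passes to $\rewd_B(\noptset_\ell)$ and then uses subadditivity of marginals. The alignment concern in your last paragraph is unnecessary, since the comparison needs only $B_{\nxt(\ell)}\subseteq B\subseteq B\cup\noptset_{>\ell}$.
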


\begin{proof}
We have $\rewd_{B_{\nxt(\ell)}}(B_\ell)\geq\brewd_\ell$ for all $\ell\in\POS'-\{0\}$ by
construction. So
\begin{alignat*}{1}
\rewd(B) & =\sum_{\ell\in\POS'-\{0\}}\rewd_{B_{\nxt(\ell)}}(B_\ell)
\geq\sum_{\ell\in\POS'-\{0\}}\brewd_\ell
\geq\sum_{\ell\in\POS'-\{0\}}\biggl(\frac{\rewd_{B_{\nxt(\ell)}}(\noptset_\ell)}{\beta}-\Dt\biggr)
\\
& \geq \frac{1}{\beta}\cdot\sum_{\ell\in\POS'-\{0\}}\rewd_B(\noptset_\ell)-|\POS'|\cdot\Dt
\tag{\small since $\rewd$ is submodular and $B\supseteq B_{\nxt(\ell)}$} \\
& \geq \frac{1}{\beta}\cdot\rewd_B\Bigl(\bigcup_{\ell\in\POS'-\{0\}}\noptset_\ell\Bigr)
-\ve\cdot\OPT
\tag{\small since $\rewd$ is submodular} \\
& \geq \frac{\rewd(\noptset)-\rewd(B)}{\beta}-\ve\cdot\OPT.
\tag{\small since $\rewd$ is monotone}
\end{alignat*}
It follows that
$\rewd(B)\cdot\frac{\beta+1}{\beta}\geq\frac{\rewd(\noptset)}{\beta}-\ve\cdot\OPT$, which
implies the stated bound.
\end{proof}

\begin{lemma} \label{goodseq}
We have $(\brewd_\ell)_{\ell\in\POS'-\{0\}}\in\rseqset$.
\end{lemma}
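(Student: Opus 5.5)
We need to show that each $\brewd_\ell$ is a nonnegative multiple of $\Dt$, and that $\sum_{\ell\in\POS'-\{0\}}\brewd_\ell\leq\bigl(1+\tfrac{1}{\ve}\bigr)\Dt\cdot|\POS|$. The first part is immediate from the definition $\brewd_\ell=\Dt\cdot\floor{\rewd_{B_{\nxt(\ell)}}(\noptset_\ell)/(\beta\Dt)}$: monotonicity of $\rewd$ gives a nonnegative incremental reward, and the floor then yields a nonnegative integer multiple of $\Dt$. So the plan reduces to bounding the sum.

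For the sum, I would first use $\beta\geq 1$ and drop the floor to get
\[
\sum_{\ell\in\POS'-\{0\}}\brewd_\ell\;\leq\;\frac{1}{\beta}\sum_{\ell\in\POS'-\{0\}}\rewd_{B_{\nxt(\ell)}}(\noptset_\ell)\;\leq\;\sum_{\ell\in\POS'-\{0\}}\rewd_{B_{\nxt(\ell)}}(\noptset_\ell).
\]
The key step is to bound this last sum by $\OPT$. A telescoping over the $\noptset_\ell$'s is not directly available, because each increment is measured relative to $B_{\nxt(\ell)}$ rather than relative to the previously added $\noptset$-pieces. The alternative I would use is the crude bound from submodularity and monotonicity: $\rewd_{B_{\nxt(\ell)}}(\noptset_\ell)\leq\rewd(\noptset_\ell)$.

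To control $\sum_\ell\rewd(\noptset_\ell)$, I would not pass through the individual sets $\rewd(\noptset_\ell)$; summing those could be as large as $|\POS|\cdot\OPT$. Instead I would bound the sum of floors directly, as the paper does for the $\rnopt$ estimates. There the relevant bound came from a telescoping sum. Here I would emulate that telescoping by comparing against the sets $B_{\nxt(\ell)}\cup\noptset_{>\ell}$.

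A simpler route is via the construction itself. The increment from $\noptset_\ell$ over $B_{\nxt(\ell)}$ is at most $\beta$ times what $\alg$ attains on the interval ending at $\bi_{\nxt(\ell)}$ starting from index $\optind_\ell$. The chosen $T$ only needs reward at least $\brewd_\ell$. Hence the sum of the $\brewd_\ell$ is at most the sum of the actual increments $\rewd_{B_{\nxt(\ell)}}(B_\ell)$, which telescopes to $\rewd(B)$. Since $B$ is feasible by Lemma~\ref{feasset}, we have $\rewd(B)\leq\OPT\leq(1+\ve)\optval$.

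The inequality $\brewd_\ell\leq\rewd_{B_{\nxt(\ell)}}(B_\ell)$ holds because the construction selects $T$ with $\rewd_{B_{\nxt(\ell)}}(T)\geq\brewd_\ell$, and such a $T$ exists, as argued in Lemma~\ref{feasset}. This gives the chain
\[
\sum_{\ell\in\POS'-\{0\}}\brewd_\ell\;\leq\;\rewd(B)\;\leq\;(1+\ve)\,\optval\;=\;(1+\ve)\cdot\frac{|\POS|\,\Dt}{\ve}\;=\;\Bigl(1+\tfrac{1}{\ve}\Bigr)\Dt\,|\POS|,
\]
which is exactly the bound required for membership in $\rseqset$.

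The main obstacle I expect is resisting the per-bucket bounds, which lose a factor of $|\POS|$; the fix is to route everything through the telescoping $\rewd(B)=\sum_\ell\rewd_{B_{\nxt(\ell)}}(B_\ell)$ together with the feasibility of $B$.
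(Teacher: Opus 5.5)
Your final argument is correct and is essentially the paper's proof: bound each $\brewd_\ell$ by the realized increment $\rewd_{B_{\nxt(\ell)}}(B_\ell)$, telescope to $\rewd(B)$, and use the feasibility of $B$ (Lemma~\ref{feasset}) to get $\rewd(B)\leq\OPT\leq(1+\ve)\optval=\bigl(1+\frac{1}{\ve}\bigr)\Dt\,|\POS|$. The earlier detours (passing to $\rewd(\noptset_\ell)$ and comparing against $B_{\nxt(\ell)}\cup\noptset_{>\ell}$) are not needed and can simply be deleted.
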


\begin{proof}
We only need to show that 
$\sum_{\ell\in\POS'-\{0\}}\brewd_\ell\leq\bigl(1+\frac{1}{\ve}\bigr)|\POS|$ since 
all $\brewd_\ell$ entries are multiples of $\Dt$.
We have $\rewd_{B_{\nxt(\ell)}}(B_\ell)\geq\brewd_\ell$ for all $\ell\in\POS'-\{0\}$. 
%by construction. 
So $\sum_{\ell\in\POS'-\{0\}}\brewd_\ell\leq\rewd(B_1)=\rewd(B)$. Since $B$ is feasible
and $\optval$ is a correct estimate of $\OPT$,
we have $\rewd(B)\leq\OPT\leq(1+\ve)\optval$.
So $\sum_{\ell\in\POS'-\{0\}}\brewd_\ell\leq(1+\ve)\optval$. We also know that the sum on
the left is a multiple of $\Dt$. So we have that 
$\sum_{\ell\in\POS'-\{0\}}\brewd_\ell\leq\Dt\cdot\floor{\frac{(1+\ve)\optval}{\Dt}}
\leq\Dt\cdot\bigl(1+\frac{1}{\ve}\bigr)|\POS|$.
\end{proof}

\begin{lemma} \label{fixedpt} 
On the input sequence $(\brewd_\ell)_{\ell\in\POS'-\{0\}}$, the algorithm returns the set
sequence $\{B_\ell\}_{\ell\in\POS'-\{0\}}$.
\end{lemma}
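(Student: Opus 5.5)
The plan is to prove, by induction over $\ell\in\POS'-\{0\}$ taken in decreasing order (the order both procedures use), that when the algorithm runs on $\rest=\brewd$, its state entering iteration $\ell$ equals that of the defining procedure. Concretely, the hypothesis at stage $\ell$ is that the algorithm's $i_{\nxt(\ell)}$ equals $\bi_{\nxt(\ell)}$ and its set $A_{\nxt(\ell)}$ equals $B_{\nxt(\ell)}$. The base case $\ell=\ellfin$ holds by initialization: both procedures set the boundary index to $n$ and the current set to $\es$.

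For the inductive step, assume the hypothesis at stage $\ell$. Then for each candidate index $i<i_{\nxt(\ell)}=\bi_{\nxt(\ell)}$, the algorithm and the defining procedure pose exactly the same cardinality-constrained submodular-maximization instance to $\alg$:
\begin{itemize}
\item ground set $\{i+1,\ldots,\bi_{\nxt(\ell)}\}$,
\item objective $\rewd_{B_{\nxt(\ell)}}(\cdot)$,
\item cardinality bound $\ell-\prev(\ell)$.
\end{itemize}
Since $\alg$ is deterministic, it returns the same set $T$ on each such instance in both procedures. The acceptance test is also the same in both, namely $\rewd_{B_{\nxt(\ell)}}(T)\geq\rest_\ell=\brewd_\ell$. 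Hence the largest index passing the test is identical, so $i_\ell=\bi_\ell$. The set $T$ returned at that index is likewise identical, so $A_\ell=T\cup A_{\nxt(\ell)}=T\cup B_{\nxt(\ell)}=B_\ell$.

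It remains to confirm that the algorithm does not declare failure at iteration $\ell$, i.e., that some qualifying index exists. This follows from the inductive argument in Lemma~\ref{feasset}: index $\optind_\ell$ qualifies in the defining procedure, so $\bi_\ell$ is well defined. By the identity of the two procedures just established, the same index also qualifies for the algorithm.

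Running the induction down to $\ell=1$ gives $A_\ell=B_\ell$ for all $\ell\in\POS'-\{0\}$, which is the claim. The only delicate point is the determinism of $\alg$, which is assumed explicitly. Everything else is bookkeeping: noting that $\brewd_\ell$ is a fixed number, computed from $B_{\nxt(\ell)}$, which we are feeding in as the target $\rest_\ell$. So there is no circularity.
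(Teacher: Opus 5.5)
Your proposal is correct and matches the paper's proof: both maintain the invariant that, entering iteration $\ell$, the algorithm's index and current set coincide with $\bi_{\nxt(\ell)}$ and $B_{\nxt(\ell)}$, so that determinism of $\alg$ forces both procedures to pose identical subproblems and select the same index and set. Your explicit non-failure check is a harmless addition that the paper leaves implicit in the well-definedness of the $B_\ell$.
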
 

\begin{proof} 
This follows simply because inductively, one can argue that each iteration of the
algorithm unfolds in exactly the same way as the corresponding iteration of the procedure
used to construct the $(\brewd_\ell)$ reward sequence. To avoid cumbersome terminology, we use
$\brewd$-procedure to denote the latter procedure.

We show the following. For every $\ell\in\POS'-\{0\}$, assuming that the index-set $I$ and
the sets $\{A_r\}_{r\in\POS':r>\ell}$ computed by the algorithm by the start of iteration
$\ell$, are the same as the index set $\bI$ and the sets $\{B_r\}_{r\in\POS':r>\ell}$
computed by the $\brewd$-procedure by the start of iteration $\ell$, the same holds at the
end of the iteration, i.e., the start of the next iteration.
The stated assumption clearly holds when $\ell=\ellfin$, so if we prove the above
claim, then this implies that $A_\ell=B_\ell$ for all $\ell\in\POS'-\{0\}$.

To see why the claim holds, observe that in iteration $\ell$, both the algorithm and the
$\brewd$-procedure are solving exactly the same cardinality-constrained submodular
maximization problem when considering any index $i$. This is because
$i_{\nxt(\ell)}=\bi_{\nxt(\ell)}$ and $A_{\nxt(\ell)}=B_{\nxt(\ell)}$, since $I=\bI$ at
the start of the iteration and $A_r=B_r$ for all $r\in\POS'$, $r>\ell$, and we are
considering the same target reward $\brewd_\ell$ by design. Since $\alg$ is deterministic,
it follows that both the
algorithm and the $\brewd$-procedure compute the same index $i$ and the same set $T$ to
add to their respective current solutions in this iteration. Hence, the invariant is
maintained. 
\end{proof}

\begin{proofof}{Theorem~\ref{subnbknapthm}}
The reward obtained is at least $\max\bigl\{\rewd(B),\rmax\bigr\}$. 
By Corollary~\ref{noptsoln} and
since $\rewd(\{\optind_1\})\leq\rmax$, we have that
$\rewd(\noptset)\geq\frac{\OPT-\rmax}{2}$. So using Lemma~\ref{goodrewd}, we obtain that 
\begin{equation*}
\begin{split}
\max\bigl\{\rewd(B),\rmax\bigr\} &
\geq \frac{2(\beta+1)}{2\beta+3}\cdot\rewd(B)+\frac{1}{2\beta+3}\cdot\rmax \\
& \geq \frac{2(\beta+1)}{2\beta+3}\cdot\biggl(\frac{\OPT-\rmax}{2(\beta+1)}-\ve\OPT\biggr)
+\frac{1}{2\beta+3}\cdot\rmax 
\geq \biggl(\frac{1}{2\beta+3}-\ve\biggr)\cdot\OPT. \quad \qedhere
\end{split} 
\end{equation*}
\end{proofof}

\subsubsection{Improvement to \boldmath $\bigl(\frac{1}{\beta+1}-O(\ve)\bigr)$-approximation}
\label{submodknap-improv}
To obtain the improved approximation factor, we make a few changes to the earlier
algorithm. First, letting $\dt=\min\{\ve,1\}$, we now take $\POS=\POS_{n,\dt}$, 
so as to avoid the factor-$2$ loss in Corollary~\ref{noptsoln}, since for large
enough $\ell$, we now have that
$\nxt(\ell)-\ell\leq(1+2\ve)\bigl(\ell-\prev(\ell)\bigr)$. However, this savings kicks in
only for sufficiently large $\ell$. We also want to reduce the $\rmax$ loss incurred
earlier because of excluding $\optind_1$ when we move to $\noptset$.  
We handle these considerations as follows. As before, let $\ellfin$ be the largest index
in $\POS\cap[\nopt]$, and let $\POS':=\{0\}\cup(\POS\cap[\ellfin])$.
We guess
$\optind_1,\ldots,\optind_{\ell_1}$, where $\ell_1$ is a sufficiently large constant 
depending on $\ve$, and prove a generalization of Lemma~\ref{sizebnds} 
(Lemma~\ref{nsizebnds}) showing that if
$A$ is such that $A\cap[\optind_{\ell_1}]=\{\optind_1,\ldots,\optind_{\ell_1}\}$, 
%for some $\ell_1\in\POS$, 
$A\cap\sbktopt_{\ell_1}=\es$, and
$\bigl|A\cap\sbktopt_\ell\bigr|\leq\ell-\prev(\ell)$ for all $\ell\in\POS'$,
$\ell>\ell_1$, then $A$ is a feasible solution. 
We can take $\ell_1$ sufficiently large such that
$\optset\cap\sbktopt_{\ell_1}$ contributes (incremental) reward at most $\ve\OPT$ (see
Claim~\ref{goodind}). 
This enables us to argue that there is a solution $\noptset$ satisfying the above
bounds such that $\rewd(\noptset)\geq(1-\ve)\OPT$ (Lemma~\ref{nearopt}). Finally, since we start
by including items $\{\optind_1,\ldots,\optind_{\ell_1}\}$, we modify the algorithm and
the $\brewd$-procedure accordingly to stop when we reach index $\ell_1$.
%instead of moving to a solution $\noptset$ where we exclude $\optind_1$ 
%
Recall that $\sbktopt_\ell:=\{\optind_\ell+1,\ldots,\optind_{\nxt(\ell)}\}$ for all
$\ell=0,1,\ldots,\ellfin$, and these size buckets partition $[n]$.

\begin{lemma} \label{nsizebnds}
Let $A\sse[n]$ be such that 
$A\cap[\optind_{\ell_1}]=\{\optind_1,\optind_2,\ldots,\optind_{\ell_1}\}$
for some $\ell_1\in\POS'$.
\begin{enumerate}[label=(\alph*), topsep=0.1ex, itemsep=0ex, leftmargin=*]
\item If $A\cap\sbktopt_{\ell_1}=\es$ and
$\bigl|A\cap\sbktopt_\ell\bigr|\leq\ell-\prev(\ell)$ for all
$\ell\in\POS'$, $\ell>\ell_1$, then $f\bigl(\svec{A}\bigr)\leq B$.

\item Let $I=\{i_\ell: \ell\in\pos,\,\ell_1<\ell\leq k\}$ be an index-set, 
where $k\leq\ellfin$ and
$i_\ell\geq\optind_\ell$ for all $\ell\in\pos$, $\ell_1<\ell\leq k$. 
Define $i_{\ell_1}:=\optind_{\ell_1}$ and $i_{\nxt(k)}:=n$.
Suppose that $A\cap\{i_{\ell_1}+1,\ldots,i_{\nxt(\ell_1)}\}=\es$ and
$\bigl|A\cap\{i_\ell+1,\ldots,i_{\nxt(\ell)}\}\bigr|\leq\ell-\prev(\ell)$ for
all $\ell\in\POS'$, $\ell_1<\ell\leq k$. Then, $f\bigl(\svec{A}\bigr)\leq B$.
\end{enumerate}
\end{lemma}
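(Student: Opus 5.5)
I will follow the proof of Lemma~\ref{sizebnds}: build a dominating vector $v\in\R^n$ with $v\leq\svec{\optset}^{\down}$ (so $f(v)\leq B$), and show $\svec{A}^{\down}\leq v$. The new feature is the prefix $\{\optind_1,\ldots,\optind_{\ell_1}\}$, which I treat as the first $\ell_1$ coordinates. Define $v_j:=\sz_{\optind_j}$ for $j\in[\ell_1]$. For every $\ell\in\POS\cap[\ellfin]$ with $\ell>\ell_1$ and every $j\in\{\prev(\ell)+1,\ldots,\ell\}$, set $v_j:=\sz_{\optind_\ell}$. All other coordinates are $0$. Since $\ell_1\in\POS'$, the ranges $\{\prev(\ell)+1,\ldots,\ell\}$ for $\ell>\ell_1$ start at $\ell_1+1$, so there is no overlap.

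To get $v\leq u:=\svec{\optset}^{\down}$, note that $u_j=\sz_{\optind_j}$ exactly, because items are indexed in non-increasing size order. For $j\leq\ell_1$ this gives equality. For $j\in\{\prev(\ell)+1,\ldots,\ell\}$ we have $j\leq\ell$, hence $u_j=\sz_{\optind_j}\geq\sz_{\optind_\ell}$.

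For part (a), let $z=\svec{A}^{\down}$.
\begin{itemize}
\item The first $\ell_1$ largest items of $A$ are exactly $\optind_1,\ldots,\optind_{\ell_1}$. Indeed, $A\cap[\optind_{\ell_1}]$ equals this set, and every other element of $A$ has index greater than $\optind_{\ell_1}$, hence no larger size. So $z_j=\sz_{\optind_j}=v_j$ for $j\leq\ell_1$.
\item Fix $\ell\in\POS$ with $\ell_1<\ell\leq\ellfin$. Then $A\cap[\optind_\ell]$ consists of the $\ell_1$ prefix items, plus $A\cap\sbktopt_{\ell_1}=\es$, plus $\sum_{r\in\POS':\ell_1<r\leq\prev(\ell)}|A\cap\sbktopt_r|$.
\item The buckets telescope: this count is at most $\ell_1+\sum_{\ell_1<r\leq\prev(\ell)}(r-\prev(r))=\prev(\ell)$. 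Here I use that the $r$'s are consecutive elements of $\POS$ starting after $\ell_1$, so the first term is $\nxt(\ell_1)-\ell_1$.
\item Hence $z_j\leq\sz_{\optind_\ell}=v_j$ for all $j>\prev(\ell)$.
\item The same telescoping gives $|A|\leq\ellfin$, so $z_j=0$ beyond the support of $v$.
\end{itemize}
Thus $z\leq v\leq u$, and monotonicity and symmetry of $f$ give $f(\svec{A})\leq f(\svec{\optset})\leq B$.

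Part (b) is the same argument with the breakpoints $i_\ell$ in place of $\optind_\ell$, as in Lemma~\ref{sizebnds}(b).
\begin{itemize}
\item Since $i_{\ell_1}=\optind_{\ell_1}$, the prefix claim on the first $\ell_1$ coordinates is unchanged.
\item For $\ell_1<\ell\leq k$, the count $|A\cap[i_\ell]|$ is at most $\ell_1+0+\sum_{\ell_1<r\leq\prev(\ell)}(r-\prev(r))=\prev(\ell)$. The $0$ uses the hypothesis $A\cap\{i_{\ell_1}+1,\ldots,i_{\nxt(\ell_1)}\}=\es$.
\item So $z_j\leq\sz_{i_\ell}\leq\sz_{\optind_\ell}=v_j$ for $j>\prev(\ell)$, using $i_\ell\geq\optind_\ell$.
\item Also $|A|\leq k$, so $z_j=0$ for $j>k$.
\end{itemize}
One subtlety: when $k<\ellfin$, $v$ has support up to $\ellfin\geq k$, which only helps.

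The only step needing care is the bookkeeping at the junction between the guessed prefix and the first bucket. I need to make sure the ranges $\{i_r+1,\ldots,i_{\nxt(r)}\}$ for $r\geq\ell_1$ tile $(\optind_{\ell_1},n]$, and that the degenerate case $k=\ell_1$ (no buckets) still gives $|A|\leq\ell_1$. I expect both to follow from the conventions $i_{\ell_1}:=\optind_{\ell_1}$ and $i_{\nxt(k)}:=n$.
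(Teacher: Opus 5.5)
Your proposal is correct and is essentially the paper's proof. It uses the same dominating vector $v$: equal to $\sz_{\optind_j}$ on the first $\ell_1$ coordinates and to $\sz_{\optind_\ell}$ on each block $\{\prev(\ell)+1,\ldots,\ell\}$ for $\ell>\ell_1$. It uses the same sandwich $\svec{A}^{\down}\le v\le\svec{\optset}^{\down}$, and the same telescoping counts $|A\cap[\optind_\ell]|\le\prev(\ell)$ (resp.\ $|A\cap[i_\ell]|\le\prev(\ell)$ in part (b)) together with the bound on $|A|$. The junction at $\ell_1$ and the degenerate case $k=\ell_1$ that you flag are indeed handled correctly by the conventions $i_{\ell_1}:=\optind_{\ell_1}$ and $i_{\nxt(k)}:=n$.
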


\begin{proof}
We mimic the proof of Lemma~\ref{sizebnds}.
%We will define a vector $v\in\R^n$, and argue that $f(v)\leq B$ and 
%$\svec{A}^{\down}\leq v$. 
Let $v\in\R^n$ be the following vector: 
\[
v_r=\sz_{\optind_r}\ \ \forall r\in[\ell_1]; \quad\ \ 
\forall \ell\in\pos',\,\ell>\ell_1,\,\forall j\in\{\prev(\ell)+1,\ldots,\ell\},
\ \ \text{set }v_j=\sz_{\optind_\ell}; \quad \text{$v_j=0$ for all other $j$}.
\]
%We argue that $v\leq\svec{\optset}^{\down}$, which implies that 
%$f(v)\leq f\bigl(\svec{\optset}\bigr)\leq B$. 
Let $u=\svec{\optset}^\down$.
Clearly, $v_j=u_j$ for all $j\in[\ell_1]$.
Consider any $\ell\in\pos'$, $\ell>\ell_1$. %with $\ell\geq 2$. 
We have $v_j=\sz_{\optind_\ell}$ for all $j\in\{\prev(\ell)+1,\ldots,\ell\}$. 
%Now consider the vector $u$. 
The number of items in $\optset$ up to and including $\optind_\ell$ is
precisely $\ell$,
and so $u_j\geq\sz_{\optind_\ell}$ for all $j\leq\ell$. 
Therefore $(v_j)_{j\in\{\prev(\ell)+1,\ldots,\ell\}}\leq (u_j)_{j\in\{\prev(\ell)+1,\ldots,\ell\}}$. This
holds for all $\ell\in\pos'$, $\ell>\ell_1$, so together this covers all non-zero
coordinates of $v$, and we obtain that $v\leq u$.

%Now consider part (b). 
Next, we argue that $\svec{A}^{\down}\leq v$.
Let $z=\svec{A}^\down$. We have $z_j=v_j$ for all $j\in[\ell_1]$. 
\begin{comment}
Recall that 
$\optset\cap\sbktopt_\ell=\{\optind_{\ell+1},\optind_{\ell+2},\ldots,\optind_{\nxt(\ell)}\}$
for all $\ell=0,\ldots,\ellfin-1$, and 
$\optset\cap\sbktopt_{\ellfin}=\{\optind_{\ellfin+1},\optind_{\ellfin+2},\ldots,\optind_{\nopt}\}$
(which could be the empty set).
\end{comment}
%
For part (a), consider any $\ell\in\pos'$, $\ell>\ell_1$,
and consider the coordinates $j\in\{\prev(\ell)+1,\ldots,\ell\}$. 
We have 
\[
\bigl|A\cap[\optind_\ell]\bigr|=
\ell_1+\sum_{r\in\POS':\ell_1<r\leq\prev(\ell)}\bigl|A\cap\sbktopt_r\bigr|
\leq \ell_1+\sum_{r\in\POS':\ell_1<r\leq\prev(\ell)}\Bigl(r-\prev(r)\Bigr)=\prev(\ell).
\]
So for all $j>\prev(\ell)$, we have $z_j\leq\sz_{\optind_\ell}$. On the other hand, as
argued above we have $v_j=\sz_{\optind_\ell}$ for all $j\in\{\prev(\ell)+1,\ldots,\ell\}$.  
So we obtain that 
$(z_j)_{j\in\{\prev(\ell>)+1,\ldots,\ell\}}\leq (v_j)_{j\in\{\prev(\ell)+1,\ldots,\ell\}}$. This holds
for all $\ell\in\pos'$, $\ell>\ell_1$, so 
$(z_j)_{j\in[\ellfin]}\leq (v_j)_{j\in[\ellfin]}$. 
Also, $|A|\leq\ell_1+\sum_{r\in\POS':\ell_1<r\leq k}\bigl(r-\prev(r)\bigr)=k$, 
so $z_j=0$ for all $j>k$. It follows that $z\leq v$.

For part (b), consider any $\ell\in\pos'$, $\ell>\ell_1$.
Similar to above, we have 
\[
\bigl|A\cap[i_\ell]\bigr|=
\ell_1+\sum_{r\in\POS':\ell_1<r\leq\prev(\ell)}\Bigl|A\cap\{i_r+1,\ldots,i_{\nxt(r)}\}\Bigr|
\leq \ell_1+\sum_{r\in\POS':\ell_1<r\leq\prev(\ell)}\Bigl(r-\prev(r)\Bigr)=\prev(\ell).
\]
So $z_j\leq\sz_{i_\ell}\leq\sz_{\optind_\ell}$ for all $j>\prev(\ell)$, 
and $v_j=\sz_{\optind_\ell}$ for all $j\in\{\prev(\ell)+1,\ldots,\ell\}$. 
%and $\sz_{\optind_\ell}\geq\sz_{i_\ell}$, since $i_\ell\geq\optind_\ell$.  
So we have that $(z_j)_{j\in\{\prev(\ell)+1,\ldots,\ell\}}\leq
(v_j)_{j\in\{\prev(\ell)+1,\ldots,\ell\}}$. This holds 
for all $\ell\in\pos'$, $\ell>\ell_1$, so 
$(z_j)_{j\in[\ellfin]}\leq (v_j)_{j\in[\ellfin]}$. Finally, 
$|A|\leq\ell_1+\sum_{r\in\POS':\ell_1<r\leq k}\bigl|A\cap\{i_r+1,\ldots,i_{\nxt(r)}\}\bigr|
\leq\ell_1+\sum_{r\in\POS':\ell_1<r\leq k}\bigl(r-\prev(r)\bigr)=k$, 
so $z_j=0$ for all $j>k$. It follows that $z\leq v$.
\end{proof}

Let $\ell'$ be the smallest index in $\POS$ that is at least $\frac{2}{\dt^2}$, and let
$\ell_0>\ell'$ be such that 
$\bigl|\POS\cap[\ell',\ell_0]\bigr|=\ceil{\frac{1}{\ve}}$. 
%indices $\ell\in\POS$ with $\ell'<\ell\leq\ell_0$. 
Let $N=\ceil{\frac{1}{\ve}}$. 
Since $\nxt(\ell)\leq(1+\dt)\ell+1$ for all $\ell\in\POS$, we have 
\[
\ell_0\leq(1+\dt)^{N-1}\ell'+\sum_{r=1}^{N-1}(1+\dt)^r
\leq(1+\dt)^{N-1}\Bigl(\ell'+\tfrac{(1+\dt)^2}{\dt}\Bigr)
\leq e^{\dt(N-1)}\Bigl(\ell'+2+\dt+\tfrac{1}{\dt}\Bigr)
\leq e(1+\dt)\ell'+e(2+\dt).
\]
%\leq(1+\dt)^{N+1}\ell'\leq e^{2+\dt}\cdot\ell'$.
We may assume that $\nopt=|\optset|\geq\ell_0$ as otherwise we can use brute force
enumeration to find an optimal solution. Then, $\ell',\ell_0\in\POS'$.

\begin{claim} \label{goodind}
There is some index $\ell_1\in\POS\cap[\ell',\ell_0]$ %with $\ell'\leq\ell_1\leq\ell_0$ 
such that $\rewd_{\optset-\sbktopt_{\ell_1}}(\optset\cap\sbktopt_{\ell_1})\leq\ve\OPT$.
\end{claim}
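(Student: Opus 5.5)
We prove the claim by averaging over disjoint pieces of $\optset$. The indices in $\POS\cap[\ell',\ell_0]$ number exactly $N=\ceil{\frac{1}{\ve}}$, and the corresponding size buckets $\sbktopt_\ell$ are pairwise disjoint. So the sets $\optset\cap\sbktopt_\ell$, for $\ell\in\POS\cap[\ell',\ell_0]$, are disjoint subsets of $\optset$. It therefore suffices to show
\[
\sum_{\ell\in\POS\cap[\ell',\ell_0]}\rewd_{\optset-\sbktopt_\ell}\bigl(\optset\cap\sbktopt_\ell\bigr)\;\leq\;\OPT .
\]
Given this, the minimum term is at most $\OPT/N\leq\ve\OPT$, and we take $\ell_1$ to be an index attaining the minimum. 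Note that $\ell',\ell_0\in\POS'$ since $\nopt\geq\ell_0$, so the buckets involved are well defined.

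To prove the displayed inequality, list the indices of $\POS\cap[\ell',\ell_0]$ as $r_1<r_2<\cdots<r_N$ and set $D_q:=\optset\cap\sbktopt_{r_q}$. For each $q$ let $P_q:=\optset-(D_q\cup D_{q+1}\cup\cdots\cup D_N)$. Since the $D$'s are disjoint, $P_q\sse\optset-D_q=\optset-\sbktopt_{r_q}$. Submodularity (diminishing marginal returns) then gives
\[
\rewd_{\optset-\sbktopt_{r_q}}(D_q)\;\leq\;\rewd_{P_q}(D_q)\;=\;\rewd(P_q\cup D_q)-\rewd(P_q).
\]
Moreover $P_q\cup D_q=P_{q+1}$, where we set $P_{N+1}:=\optset$. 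Summing over $q$, the right-hand side telescopes to $\rewd(\optset)-\rewd(P_1)\leq\OPT$, using $\rewd(P_1)\geq0$. Monotonicity is not needed beyond nonnegativity of $\rewd$.

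No step is a real obstacle. The only points needing care are bookkeeping:
\begin{enumerate*}[label=(\roman*)]
\item the buckets for distinct indices of $\POS'$ are disjoint, which holds because they are consecutive intervals $\{\optind_\ell+1,\ldots,\optind_{\nxt(\ell)}\}$;
\item the count of indices in the range is exactly $N\geq\frac{1}{\ve}$, which holds by the choice of $\ell_0$;
\item $\optset-\sbktopt_{\ell}=\optset-(\optset\cap\sbktopt_\ell)$, so the marginal in the claim is exactly the one bounded above.
\end{enumerate*}
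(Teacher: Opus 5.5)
Your proposal is correct and is essentially the paper's argument. Both proofs bound the sum of the $\lceil 1/\ve\rceil$ marginals by a telescoping sum using submodularity, then average. The only difference is the order of the chain: the paper uses base sets $\optset-\sbktopt_{[\ell',\ell]}$, which remove the buckets with index at most $\ell$, while you remove the buckets with index at least $r_q$, and this makes no difference to the argument.
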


\begin{proof}
For an index-set $I\sse[n]$, let $\sbktopt_I:=\bigcup_{r\in I\cap\POS'}\sbktopt_r$.
There are at least $\frac{1}{\ve}$ indices in $\POS\cap[\ell',\ell_0]$ and 
\begin{equation*}
\begin{split}
\sum_{\ell\in\POS\cap[\ell'.\ell_0]}\rewd_{\optset-\sbktopt_\ell}(\optset\cap\sbktopt_\ell)
& \leq\sum_{\ell\in\POS\cap[\ell',\ell_0]}\rewd_{\optset-\sbktopt_{[\ell',\ell]}}(\optset\cap\sbktopt_\ell) \\
& =\rewd(\optset)-\rewd(\optset-\sbktopt_{[\ell',\ell_0]})\leq\OPT. 
\qedhere
\end{split}
\end{equation*}
\end{proof}

In the sequel, $\ell_1$ is fixed to be the index given by Claim~\ref{goodind}.

\begin{lemma} \label{nearopt}
There is a solution $\noptset\sse[n]$ satisfying the requirements of Lemma~\ref{nsizebnds}
(a) such that $\rewd(\noptset)\geq(1-2\dt)(1-\ve)\OPT\geq(1-3\ve)\OPT$.
\end{lemma}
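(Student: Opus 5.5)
The plan is to obtain $\noptset$ in three steps: drop the bucket $\sbktopt_{\ell_1}$ from $\optset$, pin the first $\ell_1$ items of $\optset$, and then apply Lemma~\ref{submodmat} to a contracted reward function under a partition matroid that encodes the cardinality bounds of Lemma~\ref{nsizebnds} (a).

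\textbf{Step 1 (drop the light bucket).} Let $O':=\optset-\sbktopt_{\ell_1}$. Since $\rewd(\optset)-\rewd(O')=\rewd_{\optset-\sbktopt_{\ell_1}}(\optset\cap\sbktopt_{\ell_1})$, Claim~\ref{goodind} gives $\rewd(O')\geq(1-\ve)\OPT$. The structure of $O'$ is as follows.
\begin{itemize}[label=$\bullet$]
\item The items of $\optset$ in $[\optind_{\ell_1}]$ are exactly $P:=\{\optind_1,\ldots,\optind_{\ell_1}\}$, so $O'\cap[\optind_{\ell_1}]=P$.
\item $O'\cap\sbktopt_{\ell_1}=\es$.
\item For every $\ell\in\POS'$ with $\ell>\ell_1$, we have $|O'\cap\sbktopt_\ell|\leq\nxt(\ell)-\ell$.
\end{itemize}

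\textbf{Step 2 (contract and apply Lemma~\ref{submodmat}).} Define $\gensub(T):=\rewd(T\cup P)$; it is monotone and submodular. Consider the partition matroid on $\bigcup_{\ell\in\POS',\ell>\ell_1}\sbktopt_\ell$ with capacity $\ell-\prev(\ell)$ on part $\sbktopt_\ell$. Let $A:=O'-P$, which lies entirely in these parts.

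We want to take $\rho:=\frac{1}{1-2\dt}$ in Lemma~\ref{submodmat}. This requires $\chi^A/\rho$ to lie in the matroid polytope, i.e.,
\[
\nxt(\ell)-\ell\leq\rho\bigl(\ell-\prev(\ell)\bigr)\quad\text{for all }\ell\in\POS,\ \ell>\ell_1.
\]
Given this, Lemma~\ref{submodmat} yields an independent $B\sse A$ with $\gensub(B)\geq(1-2\dt)\gensub(A)=(1-2\dt)\rewd(O')$.

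Set $\noptset:=B\cup P$. Then $\noptset\cap[\optind_{\ell_1}]=P$, $\noptset\cap\sbktopt_{\ell_1}=\es$, and the bucket bounds for $\ell>\ell_1$ hold by independence of $B$. So $\noptset$ satisfies Lemma~\ref{nsizebnds} (a), and
\[
\rewd(\noptset)\geq(1-2\dt)(1-\ve)\OPT\geq(1-3\ve)\OPT,
\]
where the last inequality uses $\dt\leq\ve$.

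\textbf{Step 3 (the ratio inequality; the main obstacle).} For $\ell\in\POS$ with $\ell\geq\ell_1\geq\ell'\geq 2/\dt^2$, the definition of $\POS_{n,\dt}$ gives
\[
\ell=\ceil{(1+\dt)\prev(\ell)}\geq(1+\dt)\prev(\ell),\qquad \nxt(\ell)\leq(1+\dt)\ell+1.
\]
For such $\ell$ we have $\prev(\ell)\geq 1$, since $\ell\geq 2/\dt^2>1$ and $1\in\POS$. From the first relation,
\[
\ell-\prev(\ell)\geq\dt\,\prev(\ell)\geq\frac{\dt(\ell-1)}{1+\dt},
\]
using $\prev(\ell)\geq(\ell-1)/(1+\dt)$. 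From the second, $\nxt(\ell)-\ell\leq\dt\ell+1$. Hence
\[
\frac{\nxt(\ell)-\ell}{\ell-\prev(\ell)}\leq\frac{(1+\dt)(\dt\ell+1)}{\dt(\ell-1)}.
\]
Using $\ell\geq 2/\dt^2$, the right-hand side is $(1+\dt)(1+O(\dt))$. I then need to verify carefully that it is at most $1/(1-2\dt)$. If the constants do not close, I would fall back to a slightly larger $\ell'$ or a constant $c\dt$ in place of $2\dt$, which only changes constants in the final $1-O(\ve)$ bound. The last bucket $\ell=\ellfin$ is harmless, since there $|O'\cap\sbktopt_{\ellfin}|\leq\nxt(\ellfin)-\ellfin$ as well.
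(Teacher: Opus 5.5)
Your proposal is correct and is essentially the paper's argument. You drop $\sbktopt_{\ell_1}$ at a cost of $\ve\OPT$ via Claim~\ref{goodind}, keep $\{\optind_1,\ldots,\optind_{\ell_1}\}$, and apply Lemma~\ref{submodmat} to the contracted reward with the partition matroid of caps $\ell-\prev(\ell)$; the paper uses $\rewd_S(\cdot)$ with $\rho=1+2\dt$ rather than your $\rewd(\cdot\cup P)$ with $\rho=1/(1-2\dt)$, which makes no difference. The check you left open does close for $\dt<1/2$, as is implicitly assumed: your ratio is at most $\frac{(1+\dt)(2+\dt)}{2-\dt^2}\leq\frac{1}{1-2\dt}$. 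More simply, using $\prev(\ell)\leq\ell/(1+\dt)$ as the paper does gives $\nxt(\ell)-\ell\leq(1+\dt)(\ell-\prev(\ell))+1\leq(1+2\dt)(\ell-\prev(\ell))$, since $\dt(\ell-\prev(\ell))\geq\dt^2\ell/(1+\dt)\geq 1$.
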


\begin{proof}
Let $S=\{\optind_1,\ldots,\optind_{\ell_1}\}$, and $T=\optset-S-\sbktopt_{\ell_1}$.
For any $\ell\in\POS$, we have $\nxt(\ell)-\ell\leq\dt\ell+1$, and
$\ell-\prev(\ell)\geq\ell-\frac{\ell}{1+\dt}=\frac{\dt\ell}{1+\dt}$. 
Since $\ell_1\geq\ell'$, for all $\ell\geq\ell_1$, we therefore have 
\[
\nxt(\ell)-\ell\leq(1+\dt)\Bigl(\ell-\prev(\ell)\Bigr)+1\leq(1+2\dt)\Bigl(\ell-\prev(\ell)\Bigr)
\]
where the final inequality follows because 
$\dt\bigl(\ell-\prev(\ell)\bigr)\geq\frac{\dt^2\ell}{1+\dt}\geq\frac{\dt^2\ell'}{2}\geq 1$.
Since $|T\cap\sbktopt_\ell|\leq\nxt(\ell)-\ell$ for all $\ell\in\POS'$, $\ell>\ell_1$,
by Lemma~\ref{submodmat}, we can find $B\sse T$ such that
$|B\cap\sbktopt_\ell|\leq\ell-\prev(\ell)$ for all $\ell\in\POS'$, $\ell>\ell'$ and 
$\rewd_S(B)\geq(1-2\dt)\rewd_S(T)$.
Consider $\noptset=S\cup B$. By construction, $\noptset$ satisfies the requirements of
Lemma~\ref{nsizebnds} (a). Also, we have
\begin{equation*}
\begin{split}
\rewd(\noptset) & =\rewd(S)+\rewd_S(B)\geq\rewd(S)+(1-2\dt)\rewd_S(T)
\geq(1-2\dt)\rewd(S\cup T)\\
&=(1-2\dt)\rewd\bigl(\optset-\sbktopt_{\ell_1}\bigr)
\geq(1-2\dt)\Bigl(\OPT-\rewd_{\optset-\sbktopt_{\ell_1}}\bigl(\optset\cap\sbktopt_{\ell_1}\bigr)\Bigr) \\
& \geq(1-2\dt)(1-\ve)\OPT. \qedhere
\end{split} 
\end{equation*} 
\end{proof}

\vspace*{-1ex}
\paragraph{The algorithm.}
As before, we assume that we know $\ellfin$, and $\optval$ satisfying
$\optval\leq\OPT\leq(1+\ve)\optval$. 
We also assume that we know $\ell_1$ and the elements
$\optind_1,\optind_2,\ldots,\optind_{\ell_1}$. 
Recall that $\Dt=\frac{\ve\cdot\optval}{|\POS|}$ and
$\alg$ is a deterministic $\beta$-approximation algorithm for cardinality-constrained
submodular maximization.  
We now define the following collection of reward sequences.
\[
\rseqset:=
\Bigl\{\rest\in\Rp^{\POS'-\dbrack{\ell_1}}:\ \rest_\ell\text{ is a multiple of $\Dt$}\ \ \forall
\ell\in\POS'-\dbrack{\ell_1},
\quad \sum_{\ell\in\POS'-\dbrack{\ell_1}}\rest_\ell\leq\Bigl(1+\tfrac{1}{\ve}\Bigr)\Dt\cdot|\POS|\Bigr\}
\]
We have $|\rseqset|\leq 2^{O(|\POS|/\ve)}=n^{O(1/\dt^2)}$.
%We assume that $\alg$ is deterministic. 
As mentioned earlier, for each reward-sequence, we run the earlier iterative procedure but
stop when we reach index $\ell_1$. 
For each $\rest\in\rseqset$, we do the following. 
Set $i_{\nxt(\ellfin)}:=n$. 
Initialize $A_{\nxt(\ellfin)}:=\{\optind_1,\ldots,\optind_{\ell_1}\}$, and $I:=\es$.
We repeat the following steps for each $\ell\in\POS'-\dbrack{\ell_1}$ considered in
decreasing order. %we do the following. 
%Starting at $\ell=\ellfin$, 
Let $i$ be the largest index smaller than $i_{\nxt(\ell)}$ such that 
for the cardinality-constrained submodular maximization problem with ground set
$\{i+1,\ldots,i_{\nxt(\ell)}\}$, submodular function $\rewd_{A_{\nxt(\ell)}}(\cdot)$, and
cardinality bound $\ell-\prev(\ell)$, 
$\alg$ returns a set $T$
%can find a subset $T\sse\{i+1,\ldots,i_{\nxt(\ell)}\}$ with $|T|\leq\ell-\prev(\ell)$
satisfying $\rewd_{A_{\nxt(\ell)}}(T)\geq\rest_\ell$.
Set $A_\ell\assign T\cup A_{\nxt(\ell)}$, $i_\ell:=i$, and $I\assign I\cup\{i_\ell\}$.
We add $A_1$ to our collection of solutions.

From the collection of solutions computed, we return the feasible solution that
attains maximum reward.

\paragraph{Analysis.}
The analysis mimics the earlier analysis. 
Assume we have the correct $\ellfin$, $\optval$, $\ell_1$ values, and the set
$\{\optind_1,\ldots,\optind_{\ell_1}\}$. 
The target reward sequence and nested-set
sequence are now defined as follows, mimicking the modified algorithm.
Recall that $\noptset_\ell=\noptset\cap\sbktopt_\ell$ for $\ell\in\POS'$.

Set $\bi_{\nxt(\ellfin)}:=n$. 
Initialize $B_{\nxt(\ellfin)}:=\{\optind_1,\ldots,\optind_{\ell_1}\}$, $\bI:=\es$.
For each $\ell\in\POS'-\dbrack{\ell_1}$ considered in decreasing order, we do the following.
Set $\brewd_{\ell}:=\Dt\cdot\floor{\frac{\rewd_{B_{\nxt(\ell)}}(\noptset_{\ell})}{\beta\Dt}}$.
Let $i$ be the largest index smaller than $\bi_{\nxt(\ell)}$ such that 
for the cardinality-constrained submodular maximization problem with ground set
$\{i+1,\ldots,\bi_{\nxt(\ell)}\}$, submodular function $\rewd_{B_{\nxt(\ell)}}(\cdot)$, and
cardinality bound $\ell-\prev(\ell)$, 
$\alg$ returns a set $T$ such that $\rewd_{B_{\nxt(\ell)}}(T)\geq\brewd_\ell$.
Update $B_\ell\assign T\cup B_{\nxt(\ell)}$, $\bi_\ell:=i$, $\bI\assign\bI\cup\{\bi_\ell\}$.

Let $B:=B_{\nxt(\ell_1)}$. Essentially, Lemmas~\ref{feasset}--\ref{fixedpt} continue to
hold, with the only cosmetic changes in some of the statements and proofs is that we
replace $\POS'-\{0\}$ by $\POS'-\dbrack{\ell_1}$, and we invoke Lemma~\ref{nsizebnds} (b)
in place of Lemma~\ref{sizebnds} (b). We point out these cosmetic changes, but omit
re-proving the statements.

Analogous to Lemma~\ref{feasset}, we now have that $B$ and
$\bI$ satisfy the requirements of Lemma~\ref{nsizebnds} (b). 
%The proof is essentially the same with the only cosmetic change being that we iterate over
%all $\ell\in\POS'-\dbrack{\ell_1}$ as in the above procedure. 
Lemma~\ref{goodrewd} holds as is. Similar to Lemma~\ref{goodseq}, we now have that
$(\brewd_\ell)_{\ell\in\POS'-\dbrack{\ell_1}}\in\rseqset$, and the analogue of
Lemma~\ref{fixedpt} is that the (modified) algorithm returns
$\{B_\ell\}_{\ell\in\POS'-\dbrack{\ell}}$ on the input reward sequence
$\{\brewd_\ell\}_{\ell\in\POS'-\dbrack{\ell_1}}$. Due to the improved bound on
$\rewd(\noptset)$, we therefore obtain the following performance guarantee.

\begin{theorem} \label{improvsubnbknapthm}
The above algorithm returns a feasible solution obtaining reward at least 
$\bigl(\frac{1-3\ve}{\beta+1}-\ve\bigr)\cdot\OPT$
\end{theorem}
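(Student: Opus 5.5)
We follow the template of the proof of Theorem~\ref{subnbknapthm}, working with the correct guesses of $\ellfin$, $\optval$, $\ell_1$ and the items $\optind_1,\ldots,\optind_{\ell_1}$. These are polynomially many guesses, since $\ell_1\leq\ell_0=O(1/\dt^2)$. As the comparison set we take the solution $\noptset$ supplied by Lemma~\ref{nearopt}. The goal is to exhibit one reward sequence in $\rseqset$ on which the modified algorithm outputs a feasible set of reward at least $\frac{(1-3\ve)\OPT}{\beta+1}-\ve\OPT$. Since the algorithm returns the best feasible set found, this suffices. We no longer need to compare against $\rmax$.

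\textbf{Feasibility.} We show by downward induction on $\ell\in\POS'-\dbrack{\ell_1}$ that $\bi_\ell\geq\optind_\ell$. The base case uses $\bi_{\nxt(\ellfin)}=n$. For the inductive step, $\noptset_\ell=\noptset\cap\sbktopt_\ell\sse\{\optind_\ell+1,\ldots,\bi_{\nxt(\ell)}\}$ and $|\noptset_\ell|\leq\ell-\prev(\ell)$. Also $\rewd_{B_{\nxt(\ell)}}(\noptset_\ell)\geq\beta\brewd_\ell$. So at $i=\optind_\ell$ the $\beta$-approximation $\alg$ succeeds, which gives $\bi_\ell\geq\optind_\ell$. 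Next, $B_{\nxt(\ellfin)}=\{\optind_1,\ldots,\optind_{\ell_1}\}$, and every added set lies above $\bi_{\nxt(\ell_1)}\geq\optind_{\nxt(\ell_1)}$. Hence $B\cap[\optind_{\ell_1}]=\{\optind_1,\ldots,\optind_{\ell_1}\}$, and $B$ has no items in $\{\optind_{\ell_1}+1,\ldots,\bi_{\nxt(\ell_1)}\}$. This is the one point needing care: the empty window in Lemma~\ref{nsizebnds}(b) is phrased with $i_{\nxt(\ell_1)}$, and we set $i_{\nxt(\ell_1)}:=\bi_{\nxt(\ell_1)}\geq\optind_{\nxt(\ell_1)}$. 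The per-window cardinality bounds hold by construction. Lemma~\ref{nsizebnds}(b) then gives $f(\svec{B})\leq B$.

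\textbf{Reward.} We repeat the telescoping argument of Lemma~\ref{goodrewd}, now starting from the base set $S=\{\optind_1,\ldots,\optind_{\ell_1}\}$. First,
\[
\rewd(B)=\rewd(S)+\sum_{\ell}\rewd_{B_{\nxt(\ell)}}(B_\ell)\geq\rewd(S)+\sum_\ell\brewd_\ell .
\]
Since $\brewd_\ell\geq\rewd_{B_{\nxt(\ell)}}(\noptset_\ell)/\beta-\Dt$, submodularity with $B\supseteq B_{\nxt(\ell)}$ gives $\rewd_{B_{\nxt(\ell)}}(\noptset_\ell)\geq\rewd_B(\noptset_\ell)$. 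Summing and using submodularity again, $\sum_\ell\rewd_B(\noptset_\ell)\geq\rewd_B(\noptset-S)$. By monotonicity, and because $S\sse B$, we have $\rewd_B(\noptset-S)\geq\rewd(\noptset)-\rewd(B)$. Combining, and using $\rewd(S)\geq0$ and $|\POS|\Dt\leq\ve\OPT$,
\[
\rewd(B)\geq\frac{\rewd(\noptset)-\rewd(B)}{\beta}-\ve\OPT,
\]
so $\rewd(B)\geq\frac{\rewd(\noptset)}{\beta+1}-\ve\OPT$. Lemma~\ref{nearopt} then yields $\rewd(B)\geq\frac{(1-3\ve)\OPT}{\beta+1}-\ve\OPT$.

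\textbf{Enumeration.} We check that $(\brewd_\ell)\in\rseqset$ as in Lemma~\ref{goodseq}. The entries are multiples of $\Dt$ and sum to at most $\rewd(B)\leq\OPT\leq(1+\ve)\optval=(1+\ve)|\POS|\Dt/\ve$. Finally, as in Lemma~\ref{fixedpt}, determinism of $\alg$ and identical initialization ($A_{\nxt(\ellfin)}=B_{\nxt(\ellfin)}=S$) show inductively that on input $\brewd$ the algorithm reproduces exactly the sets $B_\ell$ and indices $\bi_\ell$. So $B$ is among the candidates, and the theorem follows. The only step we expect to require real attention is the feasibility step, specifically the empty window just above $\optind_{\ell_1}$ noted there.
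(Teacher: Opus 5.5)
Your proposal is correct and follows the paper's route. The paper proves the theorem by combining Lemma~\ref{goodrewd} with Lemma~\ref{nearopt}, after asserting that the analogues of Lemmas~\ref{feasset}--\ref{fixedpt} carry over with $\POS'-\{0\}$ replaced by $\POS'-\dbrack{\ell_1}$ and Lemma~\ref{nsizebnds}(b) used in place of Lemma~\ref{sizebnds}(b). You carry out these adaptations explicitly, namely the telescoping from the base set $\{\optind_1,\ldots,\optind_{\ell_1}\}$, the identity $\bigcup_{\ell>\ell_1}\noptset_\ell=\noptset-\{\optind_1,\ldots,\optind_{\ell_1}\}$, and the empty window above $\optind_{\ell_1}$, all of which the paper leaves implicit.
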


\begin{proof}
By Lemma~\ref{goodrewd} and Lemma~\ref{nearopt}, we obtain reward at least 
$\rewd(B)\geq\frac{\rewd(\noptset)}{\beta+1}-\ve\cdot\OPT\geq\frac{(1-3\ve)\OPT}{\beta+1}-\ve\cdot\OPT$. 
\end{proof}

\begin{proofof}{Lemma~\ref{submodmat}}
The multilinear extension of $\gensub$ is defined as the function
$\genmulti:[0,1]^n\mapsto\Rp$ given by 
$\genmulti(x):=\sum_{S\sse[n]}\gensub(S)\prod_{e\in S}x_e\prod_{e\notin S}(1-x_e)$. It is
well-known that any $x\in\Pc$ can be rounded to an integer point $\tx\in\Pc$ 
whose support is a subset of the support of $x$, 
%$\supp(\tx)\sse\supp(x)$ 
such that $\genmulti(\tx)\geq\genmulti(x)$~\cite{CalinescuCPV11}.
Let $\tx=\chi^B$ be the rounded point corresponding to $\chi^A/\rho$, so $B\sse A$.
Note that $\genmulti(\chi^A/\rho)$ corresponds to the expected value of a random subset of
$A$ where we include each element of $A$ independently with probability
$\frac{1}{\rho}$. It is known that this expected value is at least
$\gensub(A)/\rho$; this holds even for subadditive $\gensub$ (see~\cite{Feige09},
Propositions 2.2 and 2.3). Putting these facts together gives
$\gensub(B)=\genmulti(\tx)\geq\genmulti(\chi^A/\rho)\geq\gensub(A)/\rho$.
\end{proofof}

\subsection{Submodular norm-budgeted \maxgap on related machines} \label{submodlb}
Recall that submodular norm-budgeted \maxgap (\subnbsched) on related machines is the
generalization of \normsched on related machines, where the reward $\rewd(T)$ from a set
$T\sse J$ of jobs is given by a monotone, submodular function $\rewd:2^J\mapsto\Rp$ 
%(specified via a value oracle). 
We devise a $12.91$-approximation algorithm for \subnbsched. %(Theorem~\ref{subnblbthm}).
%The approximation factor we obtain is . 
We emphasize that we have not attempted to optimize this factor, preferring simplicity of
exposition instead. 

\begin{theorem} \label{subnblbthm}
There is a
$\bigl(5\cdot\frac{2e-1}{e-1}+\ve\bigr)\approx(12.91+\ve)$-approximation
algorithm for \subnbsched on related machines.
\end{theorem}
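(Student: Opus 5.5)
The plan is to reuse the reduction of Theorem~\ref{lbredn}, after checking that it goes through verbatim for a monotone, submodular reward function. The optimal job set is split as in Claim~\ref{jobsplit}, and the better of the two returned solutions is kept. All the proof needs is that $\rewd(\jopt)\le\sum_{\ell}\rewd\bigl(\bigcup_i{\jopt_i}^{(\ell)}\bigr)$, and this follows from subadditivity of a monotone, submodular function. So, with $c:=\frac{2e-1}{e-1}$, we will aim for two ingredients:
\begin{enumerate*}[label=(\roman*)]
\item an $\al$-approximation for the one-job-per-machine variant with $\al=c(1+O(\ve))$;
\item a bicriteria $(\rho,\gm)$-approximation for \subnbsched on related machines with $\gm=2$ and $\rho=2c(1+O(\ve))$.
\end{enumerate*}
Plugging these into $\ceil{\gm}(\rho+\al)-\al$ gives $2(2c+c)-c=5c$, which is the claimed factor $5\cdot\frac{2e-1}{e-1}+\ve$.

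For (i), on identical machines the one-job-per-machine problem is exactly \subnbknap with the extra cardinality bound $m$. This bound can be folded into the norm by taking $f'(v)=\max\{f(v^{\down}_1,\ldots,v^{\down}_m),\,B\cdot\bon[v^{\down}_{m+1}>0]\}$; more cleanly, we note that the size-bucket argument of Lemma~\ref{sizebnds} only ever produces at most $\ellfin$ items and we cap $\ellfin\le m$. Theorem~\ref{improvsubnbknapthm} with $\beta=\frac{e}{e-1}$ then gives $\al=(1+O(\ve))c$. For related machines we would order machines by non-increasing speed and, for the $r$-th largest-size chosen job, assign it to the $r$-th fastest machine. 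Sorting jobs by $p_j$, this sorted matching makes the load vector coordinatewise dominated (after sorting) by that of the optimal matching on the same job set, whenever the optimum also uses a prefix of the fastest machines. Using this, the size-bucket dominance argument of Lemma~\ref{sizebnds} should transfer once item ``sizes'' are read off through this canonical matching. I expect this to need a short exchange argument but no new idea.

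For (ii), the bicriteria algorithm proceeds in three steps. First, as in Section~\ref{sepfl}, guess a non-increasing profile $\vt\in\T$ over $\POS_{m,\dt}$ that dominates the sorted optimal load vector up to $(1+\ve)$ and an additive $\kp$; Lemma~\ref{toplestim}(a) gives $f(t^{\exp})\le(1+4\ve)B$. Second, assign the capacities $t^{\exp}$ to machines sorted by speed; by the rearrangement property we may assume the optimal loads are ordered consistently with the speeds. Third, process machines one at a time and, for machine $i$, run a knapsack-constrained submodular maximization, with approximation ratio $\frac{e}{e-1}$, on the marginal function $\rewd_{A}(\cdot)$ over the remaining jobs with capacity $s_it^{\exp}_i$ in work units. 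The standard analysis of sequential greedy over a partition-type structure, where each step is a $\frac{e}{e-1}$-approximate oracle, gives reward at least $\OPT/(1+\frac{e}{e-1})=\OPT/c$. The knapsack algorithm, however, yields a true (unviolated) capacity only if we discard the last item. Instead we keep that item on the same machine, which at most doubles each machine's load and so gives $\gm=2$. Alternatively, if we must drop it, we split the solution into ``fractional'' and ``one-item'' parts and lose another factor $2$ in reward; that is the source of $\rho=2c$. Feasibility in norm then follows from Lemma~\ref{toplestim} and homogeneity: $f(\text{load})\le 2(1+O(\ve))B$.

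The main obstacle is the bicriteria step: making sure that the guessed profile $t^{\exp}$ can be attached to specific machines on related machines without knowing which machines the optimum leaves idle or lightly loaded. I would first try to argue, via the work-vector monotonicity mentioned in the overview of Section~\ref{relatedmc}, that an optimal solution exists whose loads are non-increasing in speed among used machines and whose used machines form a prefix. If that fails, I would fall back on the configuration LP \eqref{lpsepfl} with the knapsack oracle of Theorem~\ref{lpsepflsolve}, accepting a worse constant.
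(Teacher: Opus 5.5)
Your part (i) is fine and is essentially the paper's argument: the \subnbknap algorithm produces a set $S$ with good reward and $(p_j)^{\down}_{j\in S}\le(p_j)^{\down}_{j\in O}$, and you then place it on machines in speed order.

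The gap is in the bicriteria step (ii), at ``by the rearrangement property we may assume the optimal loads are ordered consistently with the speeds.'' Lemma~\ref{wksort} lets you sort the \emph{work} vector by speed, not the load vector, and the claim about loads is false. Take speeds $(2,1)$, jobs of sizes $2$ and $3/2$, $f=\ell_\infty$ and $B=3/2$. The only way to schedule both jobs puts the size-$2$ job on the fast machine and the other on the slow one, giving loads $(1,3/2)$. The swap gives load $2$ on the slow machine, which is infeasible.

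Attaching the sorted profile $t\approx(3/2,1)$ to machines in speed order gives work capacities $\approx(3,1)$. For small $\ve$ the two jobs cannot be packed into these capacities. So the per-machine knapsack instances with capacities $\spd_i t^{\exp}_i$ need not contain any piece of a near-optimal solution, and the sequential-greedy bound $\OPT/c$ has nothing to charge against. The rescue in your last paragraph is refuted by the same example. The configuration-LP fallback does not give the constant $5\cdot\frac{2e-1}{e-1}$.

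The bookkeeping is also off. The $\frac{e}{e-1}$-approximation of~\cite{Sviridenko04} returns a knapsack-feasible set, so there is no ``last item'' trade-off. Your norm bound is $2(1+O(\ve))B$, not $2B$, so with your $\rho,\al$ the reduction gives $\min\{\gm(\rho+\al),\ceil{\gm}(\rho+\al)-\al\}\approx 6c$, not $5c$.

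The missing idea is to work with the work vector, which \emph{can} be assumed non-increasing in machine index. Estimating it only at $\POS$ indices does not control the load norm. You need coordinatewise domination $\wvec{\sg}\le O(1)\cdot\wvecopt$, which transfers to $\lvec{\sg}\le O(1)\cdot\lvec{\sg^*}$ because speeds are fixed per machine. The paper obtains this through Lemma~\ref{struclem}.

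\begin{enumerate*}[label=(\arabic*)]
\item It guesses a separate $(1+\ve)$-estimate $t_i\ge\wvecopt_i$ for each of the first $\ell_0=O(1/\dt^2)$ machines.
\item For each later group $\Mc_\ell=\{\prev(\ell)+1,\ldots,\ell\}$, it keeps the job sets of the best $|\Mc_{\prev(\ell)}|\ge(1-2\dt)|\Mc_\ell|$ machines, chosen via Claim~\ref{xosprop} so that only a $(1-2\dt)$ factor of reward is lost.
\item It shifts these job sets onto the faster group $\Mc_{\prev(\ell)}$, so every machine of $\Mc_\ell$ carries work at most $\wvecopt_\ell=\min_{i\in\Mc_\ell}\wvecopt_i$.
\end{enumerate*}

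The algorithm then solves one knapsack-constrained submodular maximization per group, over jobs with $p_j\le t_\ell$ and with budget $|\Mc_\ell|\,t_\ell$. It spreads the chosen jobs over $\Mc_\ell$ with work at most $2t_\ell\le 2(1+\ve)\wvecopt_i$ per machine. This gives $\wvec{\sg}\le 3\wvecopt$, hence a $\bigl(\frac{2e-1}{e-1}+O(\dt),\,3\bigr)$ bicriteria algorithm. With the integer $\gm=3$ and $\al=\frac{2e-1}{e-1}+\ve$, the submodular version of Theorem~\ref{lbredn} yields $3c+2c=5c$.
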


%Our algorithm has two components. 
This result is built from two components.
We first observe that the reduction described in
Section~\ref{reduction} also applies with a submodular reward function. 
Specifically Theorem~\ref{lbredn} generalizes to this setting, where the (submodular)
norm-budgeted matching problem that we need to now solve is a constrained version of
\subnbsched where we can assign at most one job per machine. 
%constraint that one needs to solve
Instead of %restating and 
re-proving the version of Theorem~\ref{lbredn} for submodular
rewards, we point out the key place in the proof where the arguments still go through with 
a submodular reward function. 
We assume we have a bicriteria $(\rho,\gm)$-approximation for \subnbsched on related 
machines. For simplicity, we assume here that $\gm$ is an integer, but 
%in Appendix~\ref{append-subnb}, we show 
we remark that with more care, %one can also handle the case where $\gm$ is arbitrary.
the entirety of Theorem~\ref{lbredn} extends to the submodular-rewards setting.
Recall that in the proof of Theorem~\ref{lbredn}, we create another instance $\I'$ of the
same problem (i.e., \subnbsched on related machines), but with a reduced budget $B/\gm$,
and an instance $\I''$ of \subnbsched on related machines, where we have the additional
constraint that at most one job can be assigned to any machine. 
We upper bound the optimal value $\OPT$ of
the original instance by $\gm\OPT_{\I'}+(\gm-1)\OPT_{\I''}$ by partitioning the job-set
$\optset$ corresponding to an optimal solution to $\I$ to create $\gm$ feasible solutions
to $\I'$ and $(\gm-1)$ feasible solutions to $\I''$. 
Observe that {\em if the reward-function $\rewd(.)$ is subadditive}, i.e., 
$\rewd(S\cup T)\leq\rewd(S)+\rewd(T)$, as is the case with a submodular function, we still
obtain the same upper bound $\OPT\leq\gm\OPT_{\I'}+(\gm-1)\OPT_{\I''}$. Given this, the
rest of the proof goes through as is, and we thus obtain a
$\bigl(\gm\rho+(\gm-1)\al\bigr)$-approximation, where $\al$ is the approximation factor
for the at-most-one-job-per-machine variant of \subnbsched on related machines.

\paragraph{Identical machines.} 
We discuss the setting of identical machines separately,
as the underlying arguments are simpler and more direct here. 
Notice that with identical machines, the at-most-one-job-per-machine variant of
\subnbsched is {\em precisely} submodular norm-budgeted knapsack.  
%Observe that this latter problem is simply \subnbknap. 
So as a consequence of the above reduction, we can focus on developing a bicriteria 
approximation for \subnbsched on identical machines. 
%and solving the at-most-one-job-per-machine variant of the problem. 
%The latter problem is simply \subnbknap, so we can use our
%$O(1)$-approximation algorithm for \subnbknap from Section~\ref{submodknap}. 
%So we focus on developing a bicriteria $(\rho,\gm)$-approximation algorithm for
%\subnbsched, where we may violate the norm-budget by a $\gm$-factor.
%
%\medskip

We proceed to describe a simple such bicriteria approximation algorithm.
%We show that this can also be essentially
Let $(J, m, \{p_j\}_{j\in J}, \rewd:2^J\mapsto\Rp, f:\R^m\mapsto\Rp, B)$ be an instance of
\subnbsched on identical machines. 
%Let $\ellfin$ be the last index in $\POS$. 
We may assume that $f$ is normalized so that $f(1,0,\ldots,0)=1$.
Let $\sg^*:\optset\mapsto[m]$ be an optimal solution inducing the load vector
$\lvo=\lvec{\sg^*}$. Since we have identical machines, we may assume that
$\lvo=\lvo^{\down}$.
Let $\POS=\POS_{m,1}$, i.e., it consists of powers of $2$ up to (and potentially
including) $m$. Fix some $\ve>0$, $\ve\leq 0.33$. 
In polynomial time, we can guess a vector $\vt\in\Rp^\POS$ such that 
$\lvo_\ell\leq t_\ell\leq (1+\ve)\lvo_\ell+\kp$ for all $\ell\in\POS$, where 
$\kp=\ve B/m$. 

Our bicriteria approximation algorithm proceeds as follows. 
%Let $\POS'=\POS\cup\{m\}$.
Initialize $A\assign\es$. %$I\assign\es$, and  
Recall that for $S,T\sse J$, 
$\rewd_S(T):=\rewd(S\cup T)-\rewd(S)$ is the incremental reward of adding $T$ to $S$. 
For all $\ell\in\POS$, considered in increasing order, we do the following. We
approximately solve a {\em knapsack-constrained submodular maximization} problem with
ground set $\{j\notin A: p_j\leq t_\ell\}$, submodular function $\rewd_A(.)$, $p_j$'s as the
item sizes, and knapsack budget $\bigl(\nxt(\ell)-\ell\bigr)t_\ell$. It is known that
knapsack-constrained submodular maximization admits a
$\frac{e}{e-1}$-approximation~\cite{Sviridenko04}.   
Let $T$ be the set of jobs returned.
We schedule the jobs in $T$ on machines $\{\ell,\ldots,\nxt(\ell)-1\}$ so that the 
load assigned to each machine is at most $2t_\ell$.
Since we have identical machines and each job in $T$ has $p_j\leq t_\ell$, this is always
possible. We update $A\assign A\cup T$ (and continue with the next index in $\POS$).

%\vspace*{-1ex}
%\paragraph{Analysis.}
\medskip
Let $\beta=\frac{e}{e-1}$.
Let $\sg$ be the assignment so obtained. Note that by design, $\lvec{\sg}\leq 2t^{\exp}$,
so $f(\lvec{\sg})\leq 2f(t^{\exp})\leq 2(1+\ve)f(\lvo)+\ve B$ (Lemma~\ref{toplestim} (b)),
which is at most $3B$.
To lower bound the reward obtained, let $\optset_\ell\sse\optset$ be the jobs assigned by
$\sg^*$ to machines $\ell,\ldots,\nxt(\ell)-1$, for $\ell\in\POS$. Note that
$\{\optset_\ell\}_{\ell\in\POS}$ partitions $\optset$. For any $\ell\in\POS$ and any
$i\in\{\ell,\ldots,\nxt(\ell)-1\}$, we have $\lvo_i\leq t^{\exp}_i=t_\ell$, so
$p(\optset_\ell)\leq\bigl(\nxt(\ell)-\ell\bigr)t_\ell$ and 
$p_j\leq t_\ell$ for all $j\in\optset_\ell$. 
Let $A_{\prev(\ell)}$ be the set $A$ at the start of iteration $\ell$ (with $A_0:=\es$), and
$T_\ell$ be the set added in iteration $\ell$. It follows that $\optset_\ell-A_{\prev(\ell)}$ is a valid
solution to the knapsack-constrained submodular maximization problem considered in
iteration $\ell$, and so   
$\rewd_{A_{\prev(\ell)}}(T_\ell)\geq\frac{1}{\beta}\cdot\rewd_{A_{\prev(\ell)}}(\optset_\ell-A_{\prev(\ell)})
=\frac{1}{\beta}\cdot\rewd_{A_{\prev(\ell)}}(\optset_\ell)$. 
Similar to the proof of Lemma~\ref{goodrewd}, this implies that for the final set $A$
returned, we have $\rewd(A)\geq\frac{\rewd(\optset)}{\beta+1}$.

%\medskip
Thus, the above algorithm is a $\bigl(\frac{2e-1}{e-1},\,3\bigr)$-approximation algorithm for
\subnbsched on identical machines. 
%where recall that $\beta=\frac{e}{e-1}$.
Combining this with the $\bigl(\frac{2e-1}{e-1}+\ve\bigr)$-approximation algorithm for
\subnbknap using the reduction of Theorem~\ref{lbredn}, yields the guarantee stated in
Theorem~\ref{subnblbthm} for identical machines.
%following.

\subsubsection{Related machines}
%Recall that $s_i>0$ is the speed of machine $i$. 
Let $(J, m, \{p_j\}_{j\in J}, \{s_i\}_{i\in[m]}, \rewd:2^J\mapsto\Rp, f:\R^m\mapsto\Rp, B)$, 
be an instance of \subnbsched on related machines,
where $s_i>0$ is the speed of machine $i$, which fixes $p_{ij}=p_j/s_i$ as the time taken
to process job $j$ on machine $i$.
%With related machines, 
We proceed in a roughly similar fashion as with identical machines.  
Order the machines so that $s_1\geq s_2\geq\ldots\geq s_m$.
One basic property that we utilize, which we sometimes refer to as the similarly-ordered property,
is that given any assignment of jobs to machines, %can be 
if we permute the assignment so that the work assigned to a machine is non-decreasing in its
speed, then this does not increase the norm of the resulting load vector;
see Lemma~\ref{wksort}. Thus, we may focus on assignments where
the associated work-vector is similarly ordered as the speed vector. 
%e sometimes refer to this property as the similarly-ordered property. %sorted work property, 
Again, let $\beta=\frac{e}{e-1}$. 

The at-most-one-job-per-machine problem is not quite \subnbknap now, but we nevertheless
observe that {\em the $(\beta+1)$-approximation guarantee of our algorithm for \subnbknap
carries over to this problem} due to the fact that our algorithm returns a solution whose
size-weighted characteristic vector is coordinate-wise at most that of the optimal solution. 
%The upshot is that 

We therefore focus (as before) on obtaining a bicriteria approximation
for \subnbsched on related machines. We do so by proceeding along similar lines as with
identical machines, with two key differences: (1) instead of guessing the load vector
induced by an optimal solution, we guess the {\em work vector} induced by a near-optimal
solution, where the work on a machine is the total processing time of the jobs assigned to
the machine; (2) we utilize a more-refined, coordinate-wise $(1+\ve)$-approximate,
estimate of the work-vector of a structured near-optimal solution. 
%where we guess this work-vector within a  
Given such an estimate, 
as with identical machines, we solve a knapsack-constrained submodular-maximization
problem to identify a set of jobs to assign to each group of machines having the same
work-estimate. 
We argue that this yields a $\bigl(\frac{2e-1}{e-1}+\ve,3\bigr)$-approximation algorithm for 
\subnbsched on related machines.

%Combining this with the $\bigl(\frac{2e}{e-1}+\ve\bigr)$-approximation algorithm for the
%at-most-one-job-per-machine problem 
Combining these two ingredients using the reduction of Theorem~\ref{lbredn} yields the
guarantee stated in Theorem~\ref{subnblbthm} for related machines. 

\paragraph{The at-most-one-job-per-machine problem.}
Let $\I''$ %be the instance obtained from $\I$, where 
denote the instance where we are constrained to assign at most one job to any given machine.
\begin{comment}
By the similarly-ordered property, if we have a feasible solution to $\I''$ that assigns a
set $A\sse J$ of jobs %$\sg:A\mapsto[m]$  
(so $|A|\leq m$), then we know that the assignment where the
$i$-th largest-size job in $A$ is assigned to the $i$-th fastest machine, for all
$i=1,\ldots,|A|$, yields a feasible assignment.
%the second-largest job is assigned to the second-highest speed machine
\end{comment}
Let $\optpp\sse J$ be the jobs assigned by an optimal solution to $\I''$. 
%the version of \subnbsched on
%related machines where we are constrained to assign at most one job to each machine. So
%$|\optpp|\leq m$, and by the sorted work-order property, given

Consider the \subnbknap instance ``defined'' by item-set $J$, $\{\sz_j=p_j\}_{j\in J}$ item-sizes,
and $\rewd:2^J\mapsto\Rp$ reward function; the norm $f$ and the budget $B$ will not be
relevant. Notice that the algorithm we develop for \subnbknap in Section~\ref{submodknap}
does not really depend on $f$ or $B$ in that this information is only used at the end to
select a suitable feasible solution from among a polynomial-size set of candidate
solutions.
%Furthermore, observe 
%We observe that our algorithm for 
Furthermore and more precisely, our algorithm for \subnbknap identifies a polynomial-size
collection $\C\sse 2^J$, and our analysis shows that for {\em any} $O\sse J$, 
%(and any monotone, symmetric norm $f$, budget $B$), 
there is some set $S\in\C$ satisfying
$\rewd(S)\geq\bigl(\frac{1}{\beta+1}-\ve\bigr)\cdot\rewd(O)$ and
$\svec{S}^{\down}\leq\svec{O}^{\down}$.%
\footnote{
%In Lemma~\ref{nearopt}, we move from an item-set $\optset$ to $\noptset\sse\optset$. 
To elaborate, Lemma~\ref{nsizebnds} (and Lemma~\ref{sizebnds}) prescribes some structural
conditions depending on an item-set $\optset$, and actually shows that 
%if an item-set $A$ satisfies these conditions, then 
$\svec{A}^{\down}\leq\svec{\optset}^{\down}$ for any item-set $A$ satisfying these
conditions. Also, we prove that a specific set (depending on $\optset$) in our portfolio
$\C$ obtains good reward relative to $\rewd(\optset)$ and satisfies the stated conditions.} 
So taking $O=\optpp$, this means that we can find in polynomial time some $S\sse J$ such
that $\rewd(S)\geq\bigl(\frac{1}{\beta+1}-\ve\bigr)\cdot\rewd(\optpp)$ and 
$(p_j)_{j\in S}^{\down}\leq(p_j)_{j\in\optpp}^{\down}$ 

Let $\sg'':\optpp\mapsto[m]$ be the assignment where the $i$-th largest-size job in
$\optpp$ is assigned to the $i$-th fastest machine, for all $i=1,\ldots,|\optpp|$. By the
similarly-ordered property, we have $f\bigl(\lvec{\sg''}\bigr)\leq B$.
Since $(p_j)_{j\in S}^{\down}\leq(p_j)_{j\in\optpp}^{\down}$, it follows that assigning %the assignment
the $i$-th largest-size job in $S$ to the $i$-th fastest machine, for
all $i=1,\ldots,|S|$, yields a load vector that is coordinate-wise at most $\lvec{\sg''}$. 
Hence, we obtain a feasible solution to $\I''$ obtaining reward at least
$\bigl(\frac{1}{\beta+1}-\ve\bigr)\cdot\OPT_{\I''}$.

\paragraph{Bicriteria approximation algorithm.}
Recall that we have $s_1\geq \ldots\geq s_m$.
Recall that for an assignment $\sg:A\mapsto[m]$, the associated {\em work vector}
$\wvec{\sg}$ has coordinates 
$\wvec{\sg}_i:=p\bigl(\sg^{-1}(i)\bigr)=\sum_{j\in A:\sg(j)=i}p_j$ for all $i\in[m]$.
Let $\sg^*:\optset\mapsto[m]$ be an optimal solution inducing the work vector
$\wvecopt=\wvec{\sg^*}$. 
As discussed earlier, we may assume that $\wvecopt=\wvecopt^{\down}$.

Fix some $0<\ve\leq 0.33$, and some $0<\dt<1$.
Let $\POS=POS_{m,\dt}$.
It will not be enough to just estimate $\wvecopt_\ell$ for all $\ell\in\POS$ because
having $\wvec{\sg}_\ell=O(\wvecopt_\ell)$ for all $\ell\in\POS$, for an assignment $\sg$,   
does not imply that $f\bigl(\lvec{\sg}\bigr)=O\bigl(f(\lvec{\sg^*})\bigr)$.
So we proceed somewhat differently.

Let $\ell_0$ be the smallest index in $\POS$ that is at least $\frac{2}{\dt^2}$. 
Let $\POSg=\{\ell\in\POS: \ell>\ell_0\}\cup\{m\}$.
Define $\Mc_\ell:=\{\prev(\ell)+1,\ldots,\ell\}$ for all $\ell\in\POSg$. 
As noted in the proof of Lemma~\ref{nearopt}, we have
$|\Mc_\ell|\leq(1+2\dt)\cdot|\Mc_{\prev(\ell)}|$ for all $\ell\in\POSg$, $\ell>\nxt(\ell_0)$. 
%$\nxt(\ell)-\ell\leq(1+2\dt)\bigl(\ell-\prev(\ell)\bigr)$ for all $\ell\geq\ell_0$.
Also, $|\Mc_{\nxt(\ell_0)}|\leq 2\dt\cdot\ell_0$.
Let $\kp=\ve\cdot\wvecopt_1/m$. 
In polynomial time, we can guess a non-increasing vector $\vt\in\Rp^{[\ell_0]\cup\POSg}$
such that $\wvecopt_\ell\leq t_\ell\leq\max\bigl\{(1+\ve)\wvecopt_\ell,\,\kp\bigr\}$ for all
$\ell\in[\ell_0]\cup\POSg$. 

\medskip
%Our bicriteria approximation algorithm proceeds as follows. 
%Let $\POS'=\POS\cup\{m\}$.
\noindent Initialize $A\assign\es$. %$I\assign\es$, and  
%Recall that for $S,T\sse J$, 
%$\rewd_S(T):=\rewd(S\cup T)-\rewd(S)$ is the incremental reward of adding $T$ to $S$. 
The algorithm has two phases.
\begin{enumerate}[label=$\bullet$, topsep=0.1ex, noitemsep, leftmargin=*]
\item For all $\ell\in[\ell_0]$, considered in increasing order, 
approximately solve a knapsack-constrained submodular maximization problem with
ground set $\{j\notin A: p_j\leq t_{\ell}\}$, submodular function $\rewd_A(.)$, $p_j$'s as the
item sizes, and knapsack budget $t_\ell$, to obtain a job-set $T$. 
%It is known that knapsack-constrained submodular maximization admits a 
%$\frac{e}{e-1}$-approximation~\cite{Sviridenko04}.   
%Let $\beta=\frac{e}{e-1}$.

We schedule the jobs in $T$ on machine $\ell$, if $t_\ell>\kp$, and on machine $1$
otherwise.
We update $A\assign A\cup T$ (and continue with the next index).

\item For all $\ell\in\POSg$, considered in increasing order, we 
approximately solve a knapsack-constrained submodular maximization problem with
ground set $\{j\notin A: p_j\leq t_{\ell}\}$, submodular function $\rewd_A(.)$, $p_j$'s as the
item sizes, and knapsack budget $|\Mc_\ell|\cdot t_\ell$, to obtain a job-set $T$. 
%Let $T$ be the set of jobs returned.

If $t_\ell\leq\kp$, we schedule the jobs in $T$ on machine $1$. 
Otherwise, we schedule the jobs in $T$ on machines in $\Mc_\ell$ 
%$\{\prev(\ell)+1,\ldots,\ell\}$ 
so that the work assigned to each machine is at most $2t_\ell$.
Since $p_j\leq t_\ell$ for all $j\in T$, and $p(T)\leq|\Mc_\ell|\cdot t_\ell$,
this is always possible.  
We update $A\assign A\cup T$ (and continue with the next index in $\POSg$).
\end{enumerate}

%\vspace*{-1ex}
%\paragraph{Analysis.}
\medskip
For the analysis, we argue that there is a near-optimal solution that can be used to
exhibit feasible job-sets for the submodular maximization problem solved in each
iteration.

\begin{lemma} \label{struclem}
There is a job-set $\noptset\sse J$ and assignment $\tsg:\noptset\mapsto[m]$ satisfying
the following  properties.
(a) $\rewd(\noptset)\geq(1-2\dt)\OPT$;
%(b) $\wvec{\tsg}_1\leq(1+\ve)\wvecopt_1$;
(b) $\wvec{\tsg}_i\leq\wvecopt_i$ for all $i\in[\ell_0]$; and
(c) $\wvec{\tsg}_i\leq\wvecopt_\ell$ for all $i\in\Mc_\ell$ and all $\ell\in\POSg$.
\end{lemma}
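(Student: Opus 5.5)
The plan is to construct $\tsg$ by \emph{shifting} the optimal assignment $\sg^*$ one group of machines downward, and then discarding a small amount of reward via Lemma~\ref{submodmat}. Keep $\sg^*$ unchanged on machines $1,\ldots,\ell_0$; since $\wvecopt$ is non-increasing, this already gives property (b). For machines in groups $\Mc_\ell$ with $\ell\in\POSg$, observe that $\wvecopt_i\le\wvecopt_{\prev(\ell)}$ for all $i\in\Mc_\ell$, but we need the bound $\wvecopt_\ell$, which is the \emph{smallest} work value in $\Mc_\ell$. So the machines in $\Mc_\ell$ will instead receive the job-sets that $\sg^*$ places on machines in the next group $\Mc_{\nxt(\ell)}$. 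Those machines have index at least $\ell+1$, and hence work at most $\wvecopt_{\ell+1}\le\wvecopt_\ell$.

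Concretely, for each $\ell\in\POSg$, take the collection of per-machine job-sets $\{\sg^{*-1}(i)\}_{i\in\Mc_{\nxt(\ell)}}$. Select a subcollection of at most $|\Mc_\ell|$ of these sets and assign them bijectively to machines of $\Mc_\ell$. The job-sets on machines in $\Mc_{\nxt(\ell_0)}$ (the first group of $\POSg$) would be moved into $\Mc_{\nxt(\ell_0)}$'s predecessor. Since that predecessor is $[\ell_0]$, which is already occupied, these jobs are simply dropped instead. The jobs on machines in $[\ell_0]$ are kept as is. Every kept per-machine set then sits on a machine of index at most its original index, so (c) holds. Work, unlike load, does not depend on which machine the set sits on, so no speed issues arise.

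To control reward, view the per-machine job-sets of $\optset$ outside $[\ell_0]$ as ``blocks'', and apply Lemma~\ref{submodmat} to the monotone submodular function $g(X):=\rewd_{\optset\cap\sg^{*-1}([\ell_0])}\bigl(\bigcup_{i\in X}\sg^{*-1}(i)\bigr)$ defined on machine-indices $X$. Use the partition matroid that allows at most $|\Mc_\ell|$ machines from each group $\Mc_{\nxt(\ell)}$, and allows none from $\Mc_{\nxt(\ell_0)}$. By the bound $|\Mc_{\nxt(\ell)}|\le(1+2\dt)|\Mc_\ell|$, the scaled indicator vector $\chi/(1+2\dt)$ of all machines outside $[\ell_0]\cup\Mc_{\nxt(\ell_0)}$ lies in the matroid polytope. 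This yields a retained set of value at least a $(1+2\dt)^{-1}\ge 1-2\dt$ fraction of those machines' total incremental value.

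The main obstacle is the dropped first group $\Mc_{\nxt(\ell_0)}$, whose jobs may carry significant reward; there the plan above falls short. The fix I would try first is to fold this group into the scaled-polytope argument rather than discarding it outright. Since $|\Mc_{\nxt(\ell_0)}|\le 2\dt\ell_0$, its machines can be spread into spare capacity. One option is to merge $\Mc_{\nxt(\ell_0)}$ with $\Mc_{\nxt(\nxt(\ell_0))}$ as a single source group for $\Mc_{\nxt(\ell_0)}$. An alternative is to choose the group boundary via an averaging argument over $O(1/\dt)$ shifts, similar to Claim~\ref{goodind}, so that the lost group contributes at most $\dt\OPT$. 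Either way, the target is the overall bound $\rewd(\noptset)\ge(1-2\dt)\OPT$, adjusting constants (for example via the choice of $\ell_0\ge 2/\dt^2$) as needed.
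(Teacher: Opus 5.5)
Your treatment of the groups in $\POSg$ is sound and matches the paper's shift: the job-sets on $\Mc_{\nxt(\ell)}$ move to $\Mc_\ell$, and (c) holds because every source index exceeds $\ell$ and $\wvecopt$ is non-increasing. Using one application of Lemma~\ref{submodmat} to a partition matroid, instead of the paper's sequential use of Claim~\ref{xosprop} on incremental rewards, is also a legitimate alternative. The gap is where you suspect, and it is not closed. Discarding the job-sets on $\Mc_{\nxt(\ell_0)}$ loses their incremental reward, which is not bounded by $O(\dt)\OPT$: rewards are unrelated to work, so these few machines may carry almost all of $\OPT$. None of your three repairs works as stated.
\begin{itemize}
\item There is no spare capacity. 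A machine $i\in[\ell_0]$ that keeps its own job-set is already at its bound $\wvecopt_i$ in (b). Later machines cannot in general absorb sets whose work can be as large as $\wvecopt_{\ell_0+1}$ under (c).
\item Using $\Mc_{\nxt(\ell_0)}\cup\Mc_{\nxt(\nxt(\ell_0))}$ as the source for $\Mc_{\nxt(\ell_0)}$ violates (c). A set from $i'\in\Mc_{\nxt(\ell_0)}$ with $i'<\nxt(\ell_0)$ can have work exceeding $\wvecopt_{\nxt(\ell_0)}$. It also makes the source class roughly twice the size of the target, so the scaling factor becomes about $2$ rather than $1+2\dt$.
\item The averaging argument moves the boundary away from the fixed $\ell_0$ of the statement. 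Groups left unshifted before the chosen boundary would violate (c) unless absorbed into the individually bounded prefix, which changes (b). It also yields a constant strictly weaker than $1-2\dt$.
\end{itemize}

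The missing idea is to make room inside $[\ell_0]$ by sacrificing part of it. Since $|\Mc_{\nxt(\ell_0)}|\le 2\dt\ell_0$, you can vacate $|\Mc_{\nxt(\ell_0)}|$ machines of $[\ell_0]$. Choose them via Claim~\ref{xosprop} with $a=\ell_0-|\Mc_{\nxt(\ell_0)}|$, so that the kept machines retain at least a $(1-2\dt)$ fraction of the reward on $[\ell_0]$. Then put the job-sets of $\Mc_{\nxt(\ell_0)}$ on the vacated machines. Every vacated $i\in[\ell_0]$ satisfies $i<i'$ for all $i'\in\Mc_{\nxt(\ell_0)}$, so (b) is preserved; this is what the paper does.

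In your framework the repair is short. Apply Lemma~\ref{submodmat} to $X\mapsto\rewd\bigl(\bigcup_{i\in X}(\sg^*)^{-1}(i)\bigr)$ on all of $[m]$, not to the function incremental over $[\ell_0]$. Use the partition matroid with one class $[\nxt(\ell_0)]$ of capacity $\ell_0$, and classes $\Mc_{\nxt(\ell)}$ of capacity $|\Mc_\ell|$ for $\ell\in\POSg$. Because $\nxt(\ell_0)\le(1+\dt)\ell_0+1\le(1+2\dt)\ell_0$ (as $\dt\ell_0\ge 2/\dt\ge 1$), the all-ones vector scaled by $1/(1+2\dt)$ lies in the polytope. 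This gives $\rewd(\noptset)\ge\OPT/(1+2\dt)\ge(1-2\dt)\OPT$. Finally, map the selected machines of the first class to machines $1,2,\ldots$ in increasing index order; this gives (b), since the $j$-th selected index is at least $j$.
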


We prove Lemma~\ref{struclem} shortly, but first we show that this implies that the above
algorithm %is a $\bigl(\frac{2e}{e-1},3\bigr)$-approximation algorithm 
has the desired bicriteria approximation guarantee for \subnbsched on
related machines.  
%readily yields the guarantee of Theorem~\ref{subnblbthm} for related machines.

Let $\sg$ be the assignment returned by the algorithm. By design, the total work
assigned to each machine $i\in[m]$ is at most: 
\[
\begin{cases}
t_1+m\kp\leq(1+2\ve)\wvecopt_1; & \text{if $i=1$} \\
t_i\leq(1+\ve)\wvecopt_i; & \text{if $i\in[\ell_0]$ and $t_i>\kp$}; \\
2t_{\ell}\leq 2(1+\ve)\wvecopt_{\ell}\leq 2(1+\ve)\wvecopt_i; \quad & 
\text{if $i\in\Mc_\ell$, $\ell\in\POSg$, and $t_\ell>\kp$} \\
%\text{if $i\in[m]-[\ell_0]$ and $t_{\nxt(i)}>\kp$} \\
%and at most
0; & \text{otherwise}.
%\text{if ($i\in[\ell_0]$ and $t_i\leq\kp$) or ($i\in[m]-[\ell_0]$ and $t_{\nxt(i)}\leq\kp$)}
\end{cases}
\]
%(In particuler, we have the work on $i$ is $0$ if $i\in[\ell_0]\cup\POSg$ and
%$t_i\leq\kp$, or $i\notin([\ell_0]\cup \POSg)$ and $\nxt(i)=m+1$.) 
So $\wvec{\sg}\leq 3\cdot\wvecopt$, and therefore 
$f(\lvec{\sg})\leq 3f(\lvec{\sg^*})\leq 3B$.

We now lower bound the reward obtained. Recall that we have a $\beta$-approximation for
knapsack-constrained submodular maximization, where $\beta=\frac{e}{e-1}$.
For $\ell\in[\ell_0]\cup\POSg$, let 
%$A_{\ell-1}$ be the set $A$ at the start of iteration $\ell$ (with $A_0:=\es$), and 
$T_\ell$ be the set of jobs added to $A$ in iteration $\ell$, i.e., the iteration when we
consider index $\ell$. 
Let $\noptset$, $\tsg$ be as given by Lemma~\ref{struclem}.
%For any set $I\sse[m]$ of machines, let $\noptset_I$ denote the set of jobs assigned by
%$\tsg$ to machines in $I$;  
%we abbreviate $\noptset_{\{i\}}$ to $\noptset_i$ for $i\in[m]$. 

\begin{comment}
For $\ell\in\Mc_0\cup\POSg$, let $\noptset_\ell\sse\noptset$ be the jobs assigned by
$\tsg$ to machine $\ell$, if $\ell\in[\ell_0]$, and to machines
$\prev(\ell)+1,\ldots,\ell$, if $\ell\in\POSg$. Note that the $\noptset_\ell$ sets
partition $\noptset$.
\end{comment}

For $\ell\in[\ell_0]$, let $\noptset_\ell\sse\noptset$ be the jobs assigned by $\tsg$ to
machine $\ell$.  
We have $p(\noptset_\ell)\leq\wvecopt_\ell\leq t_\ell$, so 
if $A'$ is the set $A$ at the start of iteration $\ell$, then $\noptset_\ell-A'$ is a
valid solution to the knapsack-constrained submodular maximization problem considered in
iteration $\ell$. 
For $\ell\in\POSg$, let $\noptset_\ell\sse\noptset$ be the jobs assigned by $\tsg$ to
machines in $\Mc_\ell$.  
For any $i\in\Mc_\ell$, we have $\wvec{\tsg}_i\leq\wvecopt_\ell\leq t_\ell$.
So $p(\noptset_{\ell})\leq|\Mc_\ell|\cdot t_\ell$ and $p_j\leq t_\ell$ for all
$j\in\noptset_{\ell}$.  
So again letting $A'$ be the set $A$ at the start of iteration $\ell$, it follows that
$\noptset_\ell-A'$ is a valid solution to the knapsack-constrained submodular maximization
problem considered in iteration $\ell$. 

In both cases, we therefore obtain that 
$\rewd_{A'}(T_\ell)\geq\frac{1}{\beta}\cdot\rewd_{A'}(\noptset_\ell)$. 
Noting also that the $\noptset_\ell$ sets, where $\ell$ ranges over $[\ell_0]\cup\POSg$,
partition $\noptset$, this implies that that the final set $A$ satisfies 
$\rewd(A)\geq\frac{\rewd(\noptset)}{\beta+1}\geq\frac{1-2\dt}{\beta+1}\cdot\OPT$.
%\medskip
Thus, the above algorithm is a $\bigl(\frac{2e}{e-1}+O(\dt),\,3\bigr)$-approximation algorithm. 
%for \subnbsched on related machines. 
%where recall that $\beta=\frac{e}{e-1}$.

\begin{comment}
Combining this with the $\bigl(\frac{2e}{e-1}+\ve\bigr)$-approximation algorithm for the
at-most-one-job-per-machine problem yields the guarantee stated in
Theorem~\ref{subnblbthm} for related machines.
\end{comment}

We utilize the following claim in proving Lemma~\ref{struclem}.

\begin{claim} \label{xosprop}
Let $\gensub:2^{[n]}\mapsto\Rp$ be a monotone, submodular function. 
Let $S\sse[n]$ be partitioned as $T_1\cup\ldots\cup T_r$ for some $r\geq 1$. 
Let $a\in\{0,1,\ldots,r\}$. There is an index-set $I\sse[r]$ with $|I|=a$  
such that $\gensub\bigl(\bigcup_{q\in I}T_q\bigr)\geq\frac{a}{r}\cdot\gensub(S)$.
\end{claim}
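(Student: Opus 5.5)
We argue by averaging over all index-sets of size $a$. The cases $a=0$ (take $I=\es$, using $\gensub\geq 0$) and $a=r$ (take $I=[r]$) are immediate, so assume $1\leq a<r$. Let $I$ be a uniformly random subset of $[r]$ with $|I|=a$. It suffices to show $\E{\gensub\bigl(\bigcup_{q\in I}T_q\bigr)}\geq\frac{a}{r}\cdot\gensub(S)$, since then some fixed $I$ attains at least the expectation.

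The main step is a telescoping argument. Fix the ordering $T_1,\ldots,T_r$ and write $S_{<q}:=T_1\cup\ldots\cup T_{q-1}$. Then
\[
\gensub(S)=\gensub(\es)+\sum_{q=1}^r\gensub_{S_{<q}}(T_q),
\]
where $\gensub_X(Y):=\gensub(X\cup Y)-\gensub(X)$ as in the paper's notation $\rewd_S(T)$. For a fixed $I=\{q_1<\ldots<q_a\}$, write $U_I:=\bigcup_{q\in I}T_q$ and expand $\gensub(U_I)$ along the same order:
\[
\gensub(U_I)=\gensub(\es)+\sum_{s=1}^a\gensub_{T_{q_1}\cup\ldots\cup T_{q_{s-1}}}(T_{q_s}).
\]
Here the conditioning set $T_{q_1}\cup\ldots\cup T_{q_{s-1}}$ is contained in $S_{<q_s}$. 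Submodularity (diminishing marginals) then gives each term $\geq\gensub_{S_{<q_s}}(T_{q_s})$. Consequently
\[
\gensub(U_I)\geq\gensub(\es)+\sum_{q\in I}\gensub_{S_{<q}}(T_q).
\]

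Now take expectations. Each $q\in[r]$ lies in $I$ with probability $a/r$, and every marginal $\gensub_{S_{<q}}(T_q)$ is nonnegative by monotonicity. Hence
\[
\E{\gensub(U_I)}\geq\gensub(\es)+\frac{a}{r}\sum_{q}\gensub_{S_{<q}}(T_q)\geq\frac{a}{r}\Bigl(\gensub(\es)+\sum_q\gensub_{S_{<q}}(T_q)\Bigr)=\frac{a}{r}\gensub(S),
\]
where the last inequality uses $\gensub(\es)\geq 0$. This proves the claim.

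The only step needing care is the direction of the submodularity inequality. Evaluating marginals along the fixed global order ensures that the conditioning set inside $U_I$ is always a subset of the global prefix $S_{<q_s}$, which is what makes the comparison go the right way.
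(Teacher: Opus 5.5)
Your proof is correct. It has the same outer structure as the paper's: average $g\bigl(\bigcup_{q\in I}T_q\bigr)$ over all $\binom{r}{a}$ index-sets of size $a$, then pick one that is at least the average. Where you differ is in how the averaging inequality is justified.

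The paper obtains $\frac{1}{\binom{r-1}{a-1}}\sum_{|I|=a}g\bigl(\bigcup_{q\in I}T_q\bigr)\geq g(S)$ in one line. It cites the fractional-cover property of submodular (more generally, XOS) functions, using that every element of $S$ lies in exactly $\binom{r-1}{a-1}$ of these unions. You instead derive the equivalent expectation bound from scratch by comparing marginals along the fixed order $T_1,\ldots,T_r$. This is essentially the standard proof that submodular functions are fractionally subadditive, specialized to this uniform cover.

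Your route is self-contained, and it shows that very little is needed. The diminishing-returns step holds for set-marginals here because the blocks are disjoint. Beyond that you only use $g(\emptyset)\geq 0$. Monotonicity is never actually needed: linearity of expectation gives $\frac{a}{r}\sum_q g_{S_{<q}}(T_q)$ exactly, so your remark that these marginals are nonnegative is harmless but unnecessary.

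The paper's route is shorter, and it applies verbatim to XOS reward functions, where your diminishing-returns comparison is unavailable.
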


\begin{proof}
If $a=0$, the statement trivially holds for $I=\es$. So suppose $a\geq 1$.
The claim follows from the fractional-cover property of submodular functions (and more 
generally, XOS functions), which states that 
\[
\Bigl(\min\ \sum_{C\sse S}\gensub(C)x_C \quad \text{s.t.} 
\quad \sum_{C\sse S}x_C\geq 1\ \ \forall e\in S, \quad x\geq 0\Bigr) \geq \gensub(S).
\]
It follows that
\[
\frac{1}{\binom{r-1}{a-1}}\cdot\sum_{I\sse[r]: |I|=a}\gensub\bigl(\bigcup_{q\in I}T_q\bigr)\geq\gensub(S)
\]
which implies that there is some $I\sse[r]$ with $|I|=a$ such that 
$\gensub\bigl(\bigcup_{q\in I}T_q\bigr)\geq\gensub(S)\cdot\frac{\binom{r-1}{a-1}}{\binom{r}{a}}
=\gensub(S)\cdot\frac{a}{r}$.
\end{proof}

\begin{proofof}{Lemma~\ref{struclem}}
Let $\Mc_{\ell_0}:=[\ell_0]$.
For any set $I\sse[m]$ of machines, we use $\optset_I$ to denote the
set of jobs assigned by $\sg^*$ to machines in $I$; 
we abbreviate $\optset_{\{i\}}$ to $\optset_i$ for $i\in[m]$. 
Thus, $p(\optset_i)=\wvecopt_i$ for all $i\in[m]$. Recall that
$\wvecopt=\wvecopt^{\down}$.
%Define $\Mc_{\ell_0}=[\ell_0]$, and for $\ell\in\POSg$, define 
%$\Mc_\ell:=\{\prev(\ell)+1,\ldots,\ell\}$. Note that 
%$|\Mc_{\ell)}|\leq(1+2\dt)\cdot|\Mc_{\prev(\ell)}|$ for all $\ell\in\POSg$, and
%$|\Mc_{\nxt(\ell_0)}|\leq 2\dt\cdot|\Mc_{\ell_0}|$.

%We modify the optimal solution, $(\optset, \sg^*:\optset\mapsto[m])$, to obtain
%$(\noptset,\tsg)$. 
We modify $\optset$ and $\sg^*$ by dropping some jobs, and moving jobs from some machines
in $\Mc_\ell$ to machines in $\Mc_{\prev(\ell)}$ for all $\ell\in\POSg$, so as to reduce
the total work assigned to machines in $\Mc_\ell$ to at most $\wvecopt_\ell$.
%We do so via the following iterative process.

\smallskip
\noindent
Initialize $R\assign\es$.
%Repeat the following steps for all $\ell\in\{\ell_0\}\cup\POSg$ considered in increasing
%order. 
Consider indices $\ell\in\{\ell_0\}\cup\POSg$ in increasing order, and do the following. 
%we define $H_\ell\sse\Mc_\ell$ iteratively as follows.  
%
\begin{enumerate}[label=$\bullet$, topsep=0.2ex, itemsep=0.1ex]
\item If $\ell=\ell_0$, let $H_\ell$ be the index-set $I$ obtained by applying
Claim~\ref{xosprop} to the submodular function $\rewd(\cdot)$, taking
$S=\optset_{\Mc_\ell}$ with the partition $\{\optset_i\}_{i\in\Mc_\ell}$, and 
$a=|\Mc_{\ell_0}|-|\Mc_{\nxt(\ell_0)}|\geq(1-2\dt)\cdot|\Mc_{\ell_0}|$.

We drop the jobs assigned by $\sg^*$ to machines in $\Mc_\ell-H_\ell$, thus freeing up
these machines.
%Note that we free up $|\Mc_{\nxt(\ell_0)}|$ machines from $\Mc_{\ell_0}$.

\item If $\ell=\nxt(\ell_0)$, set $H_\ell=\Mc_\ell$. 
For each machine $i\in H_\ell$, we move the jobs in $\optset_i$ to a distinct machine in
$\Mc_{\ell_0}-H_{\ell_0}$; note that this is well defined since 
$|\Mc_{\ell_0}|-|H_{\ell_0}|=|\Mc_{\nxt(\ell_0)}|$. We thus free up all the machines in $\Mc_\ell$. 

\item For all other $\ell$, let $H_\ell$ be the index-set $I$ obtained by applying
Claim~\ref{xosprop} to the submodular function $\rewd_R(\cdot)$, taking
$S=\optset_{\Mc_\ell}$ with the partition $\{\optset_i\}_{i\in\Mc_\ell}$, and 
$a=|\Mc_{\prev(\ell)}|\geq\frac{|\Mc_\ell|}{1+2\dt}\geq(1-2\dt)\cdot|\Mc_\ell|$.

We drop the jobs assigned by $\sg^*$ to machines in $\Mc_\ell-H_\ell$.
For each machine $i\in H_\ell$, we move the jobs in $\optset_i$ to a distinct machine in
$\Mc_{\prev(\ell)}$. This is well defined since $|H_\ell|=|\Mc_{\prev(\ell)}|$ (and note
that all machines in $\Mc_{\prev(\ell)}$ were freed up in the previous iteration).
After this movement, all machines in $\Mc_\ell$ are free.
\end{enumerate} 
We set $R\assign R\cup\optset_{H_\ell}$ and continue to the next index. 

\smallskip
Intuitively, for all $\ell\in\{\ell_0\}\cup\POSg$, $H_\ell$ is a suitable subset of
machines from $\Mc_\ell$ such that the jobs assigned by $\sg^*$ to these machines gather
large incremental reward.

%\smallskip
For $\ell\in\{\ell_0\}\cup\POSg$, let $R_\ell$ be the set $R$ at the end of the iteration
when index $\ell$ was considered. Define $R_{\prev(\ell_0)}:=\es$ for notational convenience.
Let $\noptset=\bigcup_{\ell\in\{\ell_0\}\cup\POSg}\optset_{H_\ell}$, which is the set $R$ after
all indices in $\{\ell_0\}\cup\POSg$ have been considered, and $\tsg$ be the modified
assignment obtained by the above iterative process.
From the guarantee of Claim~\ref{xosprop}, we have 
\begin{equation}
\rewd_{R_{\prev(\ell)}}\bigl(\optset_{H_{\ell}}\bigr)\geq(1-2\dt)\cdot\rewd_{R_{\prev(\ell)}}\bigl(\optset_{\Mc_\ell}\bigr)
\qquad \text{for all }\ell\in\{\ell_0\}\cup\POSg.
\end{equation}
(This holds also for index $\ell=\nxt(\ell_0)$, where do not utilize Claim~\ref{xosprop}.)
Summing up these inequalities yields $\rewd(\noptset)$ on the LHS. For the RHS, observe
that $R_{\prev(\ell)}\sse\bigcup_{r\in\{\ell_0\}\cup\POSg:\, r<\ell}\optset_{\Mc_r}$; so
using submodularity and the fact that $\{\optset_{\Mc_\ell}\}_{\ell\in\{\ell_0\}\cup\POSg}$
partitions $\optset$, we obtain that the RHS is at least $(1-2\dt)\rewd(\optset)$.
Thus, $\rewd(\noptset)\geq(1-2\dt)\OPT$,  proving part (a).

For part (b), under the assignment $\tsg$, a machine $i\in\Mc_{\ell_0}$ either retains the
job-set $\optset_i$, %assigned to it by $\sg^*$, 
or is assigned the job-set $\optset_{i'}$ for some machine $i'\in\Mc_{\nxt(\ell_0)}$. 
In the latter case, since $i'>i$, we have 
$\wvec{\tsg}_i=p(\optset_{i'})\leq p(\optset_i)=\wvecopt_i$. 

For part (c), consider $\ell\in\POSg$ and $i\in\Mc_\ell$. Then, under $\tsg$, machine $i$
is either assigned jobs in $\optset_{i'}$ for some $i'\in\Mc_{\nxt(\ell)}$, or no
jobs at all (for example, if $\ell$ is the last index in $\POSg$). So we have 
$\wvec{\tsg}_i\leq\max_{i'\in\Mc_{\nxt(\ell)}}\wvecopt_{i'}\leq\wvecopt_\ell$.
\end{proofof}

\section{\boldmath Refinements: improved guarantees for $\topl$ norms} \label{refine}
For the special case of $\topl$ norms, we show that one can derive approximation
guarantees in a simpler fashion, by reducing the norm-budgeted packing problem to
one with a sum- budget constraint and/or one with max- budget constraint.
In this section only, we refer to the $\ell_1$-norm as $\ssum$-norm, and the
$\ell_\infty$-norm as $\smax$-norm (to avoid any confusion with the $\ell$ in $\topl$ 
norm). 

\begin{theorem} \label{toplsizechar}
Let $\I=\bigl([n],\sols,\{\rewd_e,\sz_e\}\}_{e\in[n]},\topl,B\bigr)$ be an instance of a
norm-budgeted packing problem involving a $\topl$ norm, where 
the size-vector $\svec{T}$ induced by a solution $T\in\sols$ is the size-weighted
characteristic vector of $T$.
For $t\geq 0$, let $\I^{\ssum}_t$ be the instance  
$\bigl([n],\sols,\{\rewd_e,(\sz_e-t)^+\}\}_{e\in[n]},\sumnorm,B\bigr)$ involving the
%$\ell_1$ 
$\ssum$-norm.
Then, any $\al$-approximation algorithm for solving instances of the form $\I^{\ssum}_t$ 
can be used to obtain an $\al$-approximate solution to $\I$.
\end{theorem}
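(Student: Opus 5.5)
Our plan is to use the standard variational characterization of the $\topl$ norm: for any $v\in\Rp^n$,
\[
\topl(v)=\min_{t\geq 0}\Bigl(\ell t+\sum_{e}(v_e-t)^+\Bigr),
\]
where the minimum is attained at $t=v^{\down}_\ell$. For a size-weighted characteristic vector, this gives $\topl(\svec{T})\leq\ell t+\sum_{e\in T}(\sz_e-t)^+$ for every $t\geq 0$, with equality at $t=\svec{T}^{\down}_\ell$. The natural instance for a fixed threshold $t$ therefore has sum-budget $B-\ell t$ rather than $B$. We will read $\I^{\ssum}_t$ in the statement with its budget set accordingly to $B-\ell t$ (taking only $t\leq B/\ell$); if the budget $B$ is intended literally, the same argument goes through after shifting notation.

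\textbf{Guessing $t$.} Let $\optset$ be an optimal solution of $\I$ and set $t^*:=\svec{\optset}^{\down}_\ell$. This is either $0$ or one of the $n$ sizes $\sz_e$, so we can enumerate $t\in\{0\}\cup\{\sz_e: e\in[n]\}$. This takes polynomially many candidates, and it is exactly the ``guess the $\ell$-th largest entry exactly'' idea mentioned in the introduction.

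\textbf{Feasibility in both directions.} For every candidate $t$, we run the $\al$-approximation on $\I^{\ssum}_t$ and obtain $T_t\in\sols$ with $\sum_{e\in T_t}(\sz_e-t)^+\leq B-\ell t$. The characterization then gives $\topl(\svec{T_t})\leq\ell t+(B-\ell t)=B$, so every returned set is feasible for $\I$. Conversely, at $t=t^*$ we have $\sum_{e\in\optset}(\sz_e-t^*)^+=\topl(\svec{\optset})-\ell t^*\leq B-\ell t^*$, since the top $\ell$ entries are exactly the ones exceeding or equal to $t^*$ and all others contribute $0$. Hence $\optset$ is feasible for $\I^{\ssum}_{t^*}$, the optimum of $\I^{\ssum}_{t^*}$ is at least $\OPT$, and $\rewd(T_{t^*})\geq\OPT/\al$. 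We then return the best $T_t$ over all candidates.

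\textbf{Main step.} The only real work is verifying the variational formula, in particular the inequality $\topl(v)\leq\ell t+\sum_e(v_e-t)^+$ for arbitrary $t$. For this, take the top-$\ell$ coordinate set $L$ and bound $v_e\leq t+(v_e-t)^+$ for each $e\in L$; summing over $L$ gives the inequality. Equality at $t=v^{\down}_\ell$ follows because $(v_e-t)^+=0$ for coordinates outside the top $\ell$, with ties handled by ordering. We also need to handle the edge case $\ell t>B$, where there is no valid candidate; this cannot occur at $t^*$, since $\ell t^*\leq\topl(\svec{\optset})\leq B$.
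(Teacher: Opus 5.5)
Your proof is correct and takes essentially the same route as the paper, which cites the characterization $\topl(v)=\min_{t\geq 0}\bigl(\ell t+\sum_e(v_e-t)^+\bigr)$ from Theorem~\ref{normprops}(a), guesses $t^*$ as the $\ell$-th largest entry of $\svec{\optset}$, and runs the sum-budget algorithm on $\I^{\ssum}_{t^*}$. You are right that the sum budget must be $B-\ell t$ for the feasibility direction, which the paper's proof implicitly assumes. Your extra candidate $t=0$ also covers the case $|\optset|<\ell$, which the paper's choice $t^*=\sz_{e^*}$ with $e^*\in\optset$ glosses over.
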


\begin{proof}
This is an easy consequence of Theorem~\ref{normprops} (a). Letting $\optset\in\sols$
denote an optimal solution to $\I$, we obtain that for some $e^*\in\optset$, taking
$t^*=\sz_{e^*}$, we have $\topl(\svec{\optset})=\ell t^*+\sum_{e\in\optset}(\sz_e-t^*)^+$.
Thus, guessing this element $e^*$, and solving (approximately) the 
sum-budget constrained problem $\I^{\ssum}_{t^*}$ yields the desired solution to $\I$.
\end{proof}

This immediately implies that for $\topl$-norms, there is an FPTAS for \normknap, a PTAS
for \normmwis~\cite{BergerBGS11,GrandoniRSZ14}, %{AradKS24}, 
and a PTAS for \normmatch~\cite{BergerBGS11,GrandoniRSZ14}.

\medskip
For problems such as \normsched, where the size-vector $\svec{T}$ is obtained by
aggregating some $\sz_e$ values, we show that the $\topl$-norm-budgeted packing problem
reduces to sum-budget constrained problem, and a norm-budgeted packing problem with the
$\ell_\infty$ norm.
We assume that the norm-budgeted packing problem satisfies the following aggregation
property: for any $T\in\sols$, each coordinate $i$ of $\svec{T}$ corresponds to  
a set $S_i\sse T$ such that:
(i) $\svec{T}_i=\sz(S_i)$ for every coordinate $i$, and the $S_i$-sets partition $T$;
and (ii) for any subset $I$ of coordinates, %we have 
the size-vector $\svec{\bigcup_{i\in I}S_i}$ of the solution $\bigcup_{i\in I}S_i$ (which
is in $\sols$) is equal to $\{\svec{T}_i\}_{i\in I}$, modulo permutations of
coordinates and padding with $0$s.
%\footnote{Here, $\svec{\bigcup_{i\in I}S_i}$ denotes the size-vector of the solution
%$\bigcup_{i\in I}S_i$, which lies in $\sols$.}
%and suppose $\svec{T}\in\Rp^k$. Then, for any subset
%$I\sse[k]$, there is a solution $S\sse T$ such that $\svec{S}$ is equal to
%$\bigl(\svec{T}_i\bigr)_{i\in I}$ modulo permutations of the coordinates and padding with
%$0$s. 
For instance, \normsched satisfies this property, where $S_i$ is the set of jobs assigned
to machine $i$ under $T$.
%because taking a subset $I$ of
%$[m]$ corresponds to taking the solution where we select the jobs assigned to machines in
%$I$ under $T$.

\begin{theorem} \label{toplgen}
Let $\I=\bigl([n],\sols,\{\rewd_e,\sz_e\}\}_{e\in[n]},\topl,B\bigr)$ be an instance of a 
norm-budgeted packing problem involving a $\topl$ norm, satisfying the above aggregation
property.  
%any subset of the coordinates of $\svec{T}$
Let $\I^{\ssum}$ denote the $\ssum$-norm budget constrained instance
$\bigl([n],\sols,\{\rewd_e,\sz_e\}\}_{e\in[n]},\ssum,B\bigr)$, and 
$\I^{\smax}$ denote the $\smax$-norm budget constrained instance
$\bigl([n],\sols,\{\rewd_e,\sz_e\}\}_{e\in[n]},\smax,B/\ell\bigr)$. 
Given an $\al$-approximate solution $T^{\ssum}$ to $\I^{\ssum}$, and
a $\beta$-approximate solution $T^{\smax}$ to $\I^{\smax}$, one can obtain an
$(\al+\beta)$-approximate solution to $\I$. 
\end{theorem}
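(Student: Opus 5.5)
The plan is to return the better of $T^{\ssum}$ and $T^{\smax}$, after checking that each is feasible for $\I$, and then to bound $\OPT$ by splitting an optimal solution into two pieces.

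\textbf{Feasibility.} For any vector $v\geq 0$ we have $\topl(v)\leq\ssum(v)$ and $\topl(v)\leq\ell\cdot\smax(v)$. If $\ssum(\svec{T^{\ssum}})\leq B$, then $\topl(\svec{T^{\ssum}})\leq B$. If $\smax(\svec{T^{\smax}})\leq B/\ell$, then $\topl(\svec{T^{\smax}})\leq B$. So both solutions are feasible for $\I$, and we output whichever has larger reward.

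\textbf{Upper bound on $\OPT$.} Let $\optset\in\sols$ be optimal for $\I$, with aggregation sets $S_i$ indexed by the coordinates of $\svec{\optset}$, and let $u=\svec{\optset}$. Let $H$ be the set of coordinates $i$ with $u_i>B/\ell$.
\begin{itemize}
\item We have $|H|<\ell$: otherwise the top $\ell$ coordinates would sum to more than $B$.
\item Hence $\sum_{i\in H}u_i\leq\topl(u)\leq B$.
\item By aggregation property (ii), $T_1:=\bigcup_{i\in H}S_i\in\sols$. Its size-vector consists of the entries $(u_i)_{i\in H}$ padded with zeros, so its $\ssum$-norm is at most $B$. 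Thus $T_1$ is feasible for $\I^{\ssum}$.
\item Likewise $T_2:=\bigcup_{i\notin H}S_i\in\sols$. Its size-vector has every entry at most $B/\ell$, so it is feasible for $\I^{\smax}$.
\end{itemize}
By property (i), the $S_i$ partition $\optset$, so $T_1,T_2$ partition $\optset$. Therefore
\[
\OPT=\rewd(T_1)+\rewd(T_2)\leq\OPT_{\I^{\ssum}}+\OPT_{\I^{\smax}}\leq\al\,\rewd(T^{\ssum})+\beta\,\rewd(T^{\smax}).
\]

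\textbf{Conclusion.} A weighted average is at most a maximum, so
\[
\max\{\rewd(T^{\ssum}),\rewd(T^{\smax})\}\geq\frac{\al\,\rewd(T^{\ssum})+\beta\,\rewd(T^{\smax})}{\al+\beta}\geq\frac{\OPT}{\al+\beta}.
\]
This gives the claimed $(\al+\beta)$-approximation.

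The only delicate step is the partition argument. It needs the strict threshold so that $|H|\leq\ell-1$, and it needs aggregation property (ii) so that the restricted solutions $T_1,T_2$ lie in $\sols$ and have exactly the claimed size-vectors. The rest is routine.
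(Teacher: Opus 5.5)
Your proposal is correct and follows essentially the same route as the paper's proof. Both check feasibility of $T^{\ssum}$ and $T^{\smax}$ via $\topl\le\ssum$ and $\topl\le\ell\cdot\smax$, split $\optset$ by the threshold $B/\ell$ to get $\OPT\le\OPT_{\I^{\ssum}}+\OPT_{\I^{\smax}}$ using the aggregation property, and finish with the weighted-average bound.
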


\begin{proof}
This reduction is implicit in the work of~\cite{NeogiPS24}.
%Let $T^{\ssum}\in\sols$ be an $\al$-approximate solution to $\I^{\ssum}$, and $T^{\smax}$
%be a $\beta$-approximate solution to $\I^{\smax}$.
Note that $\topl(\svec{T^{\ssum}})\leq\bigl\|\svec{T^{\ssum}}\bigr\|_1\leq B$, and
$\topl(\svec{T^{\smax}})\leq\ell\cdot\topl[1](\svec{T^{\smax}})\leq\ell\cdot\frac{B}{\ell}=B$. 
So $T^{\ssum}$ and $T^{\smax}$ are both feasible solutions to $\I$. We claim that the
better of the two solutions achieves an $(\al+\beta)$-approximation.

Let $\optset$ be an optimal solution to instance $\I$, and $\OPT=\rewd(\optset)$ be the
optimal value for instance $\I$. Let $\vo=\svec{\optset}$ be a vector in $\R^k$, 
and let $\optset_1,\ldots\optset_k$ be the partition of $\optset$ such that
$\vo_i=\sz(\optset_i)$ for all $i\in[k]$.
Let $I=\bigl\{i\in[k]: \vo_i>\frac{B}{\ell}\bigr\}$. Since $\topl(\vo)\leq B$, we must
have $|I|\leq\ell$ and so $\sum_{i\in I}\vo_i\leq B$. So taking $S=\bigcup_{i\in I}\optset_i$,
%$S\sse\optset$ is such that
%$\svec{S}$ corresponds to $(\vo_i)_{i\in I}$, then 
we obtain that $\svec{S}=(\vo_i)_{i\in I}$, and hence $S$ is a feasible solution to
$\I^{\ssum}$. 
%Let $S'\sse\optset$ be such that $\svec{S'}$ corresponds to $(\vo_i)_{i\in[k]-I}$. 
Let $S'=\bigcup_{i\in[k]-I}\optset_i$. Then, $\svec{S'}=(\vo_i)_{i\in[k]-I}$, so
by construction $\topl[1](\svec{S'})\leq\frac{B}{\ell}$, and
$S'$ is a feasible solution to $\I^{\smax}$. 
It follows that
$\OPT=\rewd(\optset)=\rewd(S)+\rewd(S')\leq\OPT_{\I^{\ssum}}+\OPT_{\I^{\smax}}$, where
$\OPT_{\I^{\ssum}}$ and $\OPT_{\I^{\smax}}$ denote respectively the optimum values for the
$\I^{\ssum}$ and $\I^{\smax}$ instances. Then
\begin{equation*}
\begin{split}
\max\Bigl\{\rewd(T^{\ssum}),\rewd(T^{\smax})\Bigr\}
& \geq\frac{\al}{\al+\beta}\cdot\rewd(T^{\ssum})+\frac{\beta}{\al+\beta}\cdot\rewd(T^{\smax})
\\
& \geq\frac{1}{\al+\beta}\cdot\bigl(\OPT_{\I^{\ssum}}+\OPT_{\I^{\smax}}\bigr)\geq\frac{\OPT}{\al+\beta}
\end{split}
\end{equation*}
where the second inequality follows from the approximation guarantees of $T^{\ssum}$ and $T^{\smax}$.
\end{proof}

Theorem~\ref{toplgen} yields an improved approximation guarantee 
%for $\topl$ norms. 
%\begin{enumerate}[label=(\alph*), topsep=0.2ex, noitemsep, leftmargin=*]
%\item 
of $\bigl(\frac{2e}{e-1}+\ve\bigr)$ for \normsched with $\topl$ norms, since the min-sum
budgeted problem is simply the standard knapsack problem, and the min-max budgeted problem
is maximum-GAP for which there is an $\frac{e}{e-1}$-approximation~\cite{FleischerGMS11}.

%\item $(2+\ve)$ for \normsched on identical machines, since in this case the min-max
%budgeted problem is an instance of the uniform multiple knapsack problem, which admits a
%PTAS~\cite{Kellerer99,ChekuriK05}.
%\end{enumerate}

\section{PTAS for \boldmath \normsched on related machines} \label{relatedmc}
We now describe a %improve the approximation guarantee of
%Theorem~\ref{identicalbiptas} to 
PTAS for \normsched on related machines 
(i.e., $(1+\ve)$-approximation for any $\ve>0$).
Recall that in the setting of related machines, %corresponds to the special case where 
we have $p_{ij}=p_j/s_i$ for every machine $i\in[m]$, job $j\in J$, where $s_i>0$ is the
speed of machine $i$ and $p_j$ is the processing requirement of job $j$, %which
leading to $p_{ij}=p_j/s_i$ as the time needed on machine $i$ to process job $j$. 
The setting of identical machines is the further special case where all speeds are $1$.
%i.e., obtain a PTAS, %for \normsched on identical machines 
%assuming that we have some further information about the norm $f$, in particular,
%access to approximate subgradients of the norm; see Theorem~\ref{ident-ptasthm}. 
%For the special case when $f$ is the $\ell_\infty$ norm:
%\begin{enumerate}[label=(\arabic*), nosep, leftmargin=*]
%\item 
Recall that the special case of \normsched on identical machines where $f$ is the
$\ell_\infty$ norm corresponds to the uniform multiple knapsack problem (\mkp), which is
already strongly \nphard; thus, our PTAS yields a tight complexity result for \normsched
on related machines. 
%\item 
Furthermore, \normsched on related machines with the $\ell_\infty$ norm
reduces to {\em non-uniform \mkp}, wherein we seek to pack a maximum-reward set of
jobs into $m$ machines (or knapsacks or bins) with given capacities $\{u_i\}_{i\in[m]}$; 
%with rewards and sizes 
%\footnote{In non-uniform \mkp, we have jobs with rewards and sizes, and $m$
%machines with capacities $\{u_i\}_{i\in[m]}$, and we seek an assignment
%$\sg:A\mapsto[m]$ of some subset $A$ of jobs to machines satisfying
%$\sum_{j:\sg(j)=i}p_j\leq u_i$ for all $i\in[m]$, that maximizes the total reward of
%$A$. 
this can be cast as \normsched on related machines by taking $u_i$ to be the speed of
machine $i$ and requiring that the $\ell_\infty$-norm of the load vector should be at most
$1$. 
%which corresponds to the special case where $f$ is the $\ell_\infty
%Moreover, for the $\ell_\infty$ norm, and more generally, for any
%ordered norm, one can indeed obtain the desired norm information, 
Thus our PTAS for \normsched on related machines substantially generalizes the PTAS for
non-uniform \mkp~\cite{ChekuriK05}. 
%\end{enumerate}

We may assume that $p_j>0$ for all jobs, as we can always consider the smaller instance
involving jobs $j$ with $p_j>0$, and then tag on the jobs with zero processing time (to
any machine).

\subsection{Identical machines} \label{ident-ptas}
\begin{comment}
We now describe how to %improve the approximation guarantee of
%Theorem~\ref{identicalbiptas} to 
obtain a PTAS for \normsched on identical machines 
(i.e., $(1+\ve)$-approximation for any $\ve>0$),
%i.e., obtain a PTAS, %for \normsched on identical machines 
assuming that we have some further information about the norm $f$, in particular,
access to approximate subgradients of the norm; see Theorem~\ref{ident-ptasthm}. 
Recall that even the special case where $f$ is the $\ell_\infty$ norm, namely, the uniform
multiple knapsack problem (\mkp), is strongly \nphard, so our PTAS yields a tight result
for identical machines. Moreover, for the $\ell_\infty$ norm, and more generally, for any
ordered norm, one can indeed obtain the desired norm information, so our PTAS for
\normsched on identical machines substantially generalizes the PTAS for uniform \mkp.
\end{comment}

We begin by considering the setting of identical machines, as the algorithm here is
simpler, and will serve to convey some of the ideas that we build upon in developing the
PTAS for related machines (Section~\ref{rel-ptas}).
%\vspace*{-1ex}
%\paragraph{An overview.}
We first give an overview. Note that if we knew
the load-vector $\lvo$ of an optimal solution, then the problem reduces to selecting a
maximum-reward set of jobs and assigning them to machines while staying within capacity
$\lvo_i$ on machine $i$. This is precisely 
%the {\em multiple knapsack problem with non-uniform capacities} 
non-uniform \mkp, considered 
by~\cite{ChekuriK05}, who devised a PTAS for this problem. 
%Our PTAS draws and builds upon on some of their ideas. 
While in \mkp, one knows the capacities and one can utilize this 
information to gain suitable information about the job assignments, one significant
challenge that we encounter is that we do not have this information, and do not have a
``target vector'' $\lvo$ to work with. 
\begin{comment}
Instead, we will ``guess'' a vector close to $\lvo$, and use its coordinates as proxy
capacities (step~\ref{tguess}). However, using these proxy capacities as is will yield a
solution that violates the norm budget. Instead, we use these proxy capacities to glean,
much more carefully and more conservatively, suitable information about the job
assignments; steps~\ref{sizesparse}--\ref{lmcassign}. 
\end{comment}
Instead, we will ``guess'' suitable features of a near-optimal load vector, %$\lvo$, 
and use this information along
with some further enumeration to identify the assignments of a suitable set of jobs.
%identify suitable information about job assignments. 
We then consider a suitable convex program (see \eqref{extncp}) to %encode the problem of 
find a fractional job assignment that is consistent with this partial assignment and
minimizes the norm of the resulting load vector, 
and the information we have gleaned will guarantee that the optimal value of this convex
program is at most the norm budget. 
%there is a feasible solution whose resulting load vector satisfies the norm budget. 
It turns out that one can
give a (easily-computable) closed-form expression for the optimal value of this convex
program and can find an optimal solution by solving a related LP (see
Lemma~\ref{extnlem}). We then argue that this fractional assignment can be rounded to an
integral assignment without violating the norm budget, and losing only a $(1-\ve)$-factor
in the reward. 

\subsubsection{The algorithm} \label{ident-detail}
%The PTAS involves various steps, which 
We now describe the PTAS in detail. 
%We assume that the $p_j$'s are integers. 
%and will build on numerous ideas. 
Some of the steps of the PTAS involve some partial enumeration to obtain some information
about a  near-optimal solution. To keep exposition simple, we will assume that we have
found (by enumeration) information consistent with a near-optimal solution. 
We will show that the entire enumeration can be done in polynomial time.
%that only a constant number of rounds of enumeration, each over polynomial-size sets, is
%needed, so the overall running time is polynomial. 
To avoid cumbersome notation, we assume
that $\frac{1}{\ve}$ is an integer. We also assume that $\ve\leq 0.25$.

We will utilize two useful results. 
Lemma~\ref{enumlem} states an enumeration lemma that we will
frequently use, which captures the enumeration process that was used for \normknap.
%(and in the proof of Theorem~\ref{ident-biptasthm}).  
%and for identical machines. 
%We will also utilize the following 
Lemma~\ref{extnlem} %states a useful result that 
describes how one can extend an assignment of some jobs to a fractional assignment of other
jobs so as to minimize the norm of the resulting load vector. 
For $x\in\R$, we use $(x)^+$ to denote $\max\{x,0\}$. 

\begin{lemma}[{\bf Enumeration lemma}] \label{enumlem}
Let $a_1,a_2,\ldots,a_k\geq 0$ and $\Gm\geq 0$, be such that $\sum_{r=1}^k a_r\leq\Gm$. 
Suppose we have an estimate $\est$ such that $\Gm\leq(1+\ve)\est$, where $\ve>0$. 
Define $\Dt=\frac{\ve\cdot\est}{k}$.
Consider the set %we can efficiently find $\seqset\sse\Rp^k$ with
\[
\seqset:=\Bigl\{b\in\Rp^k:\quad b_r\text{ is a multiple of }\Dt\ \ \forall r\in[k],
\quad\ \  \sum_{r=1}^kb_r\leq\Bigl(1+\tfrac{1}{\ve}\Bigr)k\Dt\Bigr\}.
\]
Then, (a) $|\seqset|\leq 2^{O(k/\ve)}$ and elements in $\seqset$ can be enumerated in
$2^{O(k/\ve)}$ time; and %such that 
(b) $\seqset$ contains a tuple $(\ta_1,\ldots,\ta_k)$
satisfying $a_r-\Dt\leq\ta_r\leq a_r$ for all $r\in[k]$, and 
$\sum_{r=1}^k\ta_r\geq\sum_{r=1}^k a_r-\ve\cdot\est$.

We refer to this by saying that we can ``guess'' (underestimates of) $a_1,\ldots,a_k$ up
to cumulative error $\ve\cdot\est$ in time $2^{O(k/\ve)}$.  
\end{lemma}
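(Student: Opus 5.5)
The natural witness is obtained by rounding each $a_r$ down to a multiple of $\Dt$: set $\ta_r:=\Dt\cdot\floor{a_r/\Dt}$ for all $r\in[k]$. The plan is to verify (b) for this tuple and to bound $|\seqset|$ through Claim~\ref{polybnd}. If $\est=0$ then $\Gm=0$, so every $a_r=0$ and the all-zeros tuple works; hence we may assume $\est>0$, which gives $\Dt>0$.

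For part (b), each $\ta_r$ is a multiple of $\Dt$ and satisfies $a_r-\Dt<\ta_r\leq a_r$. Summing over $r$ gives $\sum_r\ta_r\geq\sum_r a_r-k\Dt=\sum_r a_r-\ve\cdot\est$. It remains to check that $(\ta_r)_{r\in[k]}\in\seqset$. We have
\[
\sum_r\ta_r\leq\sum_r a_r\leq\Gm\leq(1+\ve)\est=\Bigl(1+\tfrac{1}{\ve}\Bigr)\ve\cdot\est=\Bigl(1+\tfrac{1}{\ve}\Bigr)k\Dt,
\]
which is exactly the membership condition.

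For part (a), write each $b\in\seqset$ as $b_r=n_r\Dt$ with $n_r\in\Zp$. This gives a bijection between $\seqset$ and the sequences of $k$ nonnegative integers summing to at most $M:=\floor{(1+\frac1\ve)k}$. By Claim~\ref{polybnd}, there are at most $(2e)^{\max\{M,k\}}=2^{O(k/\ve)}$ such sequences. To enumerate them within the same time bound, recursively choose $n_1,n_2,\ldots$, each time keeping the remaining budget nonnegative. Every leaf of this recursion is a valid sequence, and each output costs $O(k)$ work, so the total time is $2^{O(k/\ve)}\cdot O(k)=2^{O(k/\ve)}$.

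The only step needing care is the identity $(1+\frac1\ve)k\Dt=(1+\ve)\est$, which makes the rounded-down tuple fall inside $\seqset$. Everything else is routine, so I do not expect a real obstacle.
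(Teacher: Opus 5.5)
Your proof is correct and follows essentially the same route as the paper: round each $a_r$ down to a multiple of $\Dt$, use $(1+\ve)\est=(1+\frac{1}{\ve})k\Dt$ to place the rounded tuple in $\seqset$, and bound $|\seqset|$ via Claim~\ref{polybnd} with a recursive enumeration. Your separate handling of the degenerate case $\est=0$ is a small extra that the paper leaves implicit.
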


\begin{comment}
We will also utilize the following useful result that shows how one can extend an
assignment of some jobs to a fractional assignment of other jobs so as to minimize the
norm of the resulting load vector. For $x\in\R$, we use $(x)^+$ to denote 
$\max\{x,0\}$. 
\end{comment}

\begin{lemma} \label{extnlem}
Let $J_1,J_2\sse J$ be disjoint job sets. Let $\sg:J_1\mapsto[m]$ and
define $\preload_i:=\sum_{j\in J_1:\sg(j)=i}p_j$ for all $i\in[m]$. 
Let $\mcset\sse[m]$. Consider
the following convex program to assign the jobs in $J_2$ fractionally to machines in
$\mcset$ so as to minimize the norm of the resulting load vector.
\begin{equation}
\min \ \ f(L_1,\ldots,L_m)\quad\ \text{s.t.} \quad\
\sum_{i\in\mcset}x_{ij}\geq 1\ \ \ \forall j\in J_2, \quad 
L_i=\sum_{j\in J_2}p_jx_{ij}+\preload_i \ \ \ \forall i\in[m], \quad 
x\geq 0. \tag{CP}
\label{extncp}
\end{equation}
Let $z^*$ be the unique value $z\in\Rp$ such that $\sum_{i\in\mcset}(z-\preload_i)^+=p(J_2)$. 
The optimal value of \eqref{extncp} is equal to 
$f\bigl(\max\{\preload_i,z^*\}_{i\in[m]}\bigr)$ and any %solution
$x^*\in\Rp^{\mcset\times J_2}$ satisfying $\sum_{i\in I}x^*_{ij}\geq 1$ for all 
$j\in J_2$, and 
$\sum_{j\in J_2}p_jx^*_{ij}\leq(z^*-\preload_i)^+$ for all 
$i\in\mcset$ yields an optimal solution to \eqref{extncp}.
\end{lemma}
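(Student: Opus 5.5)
The plan has two parts. First I show that no feasible solution can beat the water-filling vector $L^*:=(\max\{\preload_i,z^*\})_{i\in[m]}$. Then I show that any $x^*$ with the stated properties attains exactly $L^*$, or a vector dominated by it. Before either part I record that $z^*$ is well defined. The function $z\mapsto\sum_{i\in\mcset}(z-\preload_i)^+$ is continuous and nondecreasing, equals $0$ at $z=0$, tends to infinity, and is strictly increasing wherever it is positive. So a value with sum $p(J_2)$ exists, and it is unique when $p(J_2)>0$. If $p(J_2)=0$, take the smallest such $z$, namely $z^*=0$. (This assumes $\mcset\neq\es$ when $J_2\neq\es$, since otherwise the program is infeasible.)

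\emph{Upper bound.} Take $x^*$ as in the statement. For $i\in\mcset$ the load is $L_i=\preload_i+\sum_j p_jx^*_{ij}\le\preload_i+(z^*-\preload_i)^+=\max\{\preload_i,z^*\}$. For $i\notin\mcset$ we have $L_i=\preload_i\le L^*_i$. Coverage holds by hypothesis, so $x^*$ is feasible, and by monotonicity $f(L)\le f(L^*)$. Such an $x^*$ exists: fill the capacities $(z^*-\preload_i)^+$ greedily. Their total is exactly $p(J_2)$, so every job fits fractionally.

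\emph{Lower bound; the main step.} Let $(x,L)$ be any feasible solution. I may assume each $\sum_{i\in\mcset}x_{ij}=1$, because scaling down only lowers $L$ and hence $f(L)$. Let $a_i:=L_i-\preload_i\ge0$ for $i\in\mcset$, and $a_i=0$ otherwise. Then $\sum_{i\in\mcset}a_i=p(J_2)$. The key is to show $L^*$ lies in the convex hull of the coordinate permutations of $L$. A monotone symmetric norm is constant on permutations, so convexity then gives $f(L^*)\le f(L)$.

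To show this, I will argue that $L^*$ is majorized by $L$. Both vectors have the same total sum, $\sum_i\preload_i+p(J_2)$. I must also check $\topl[k](L^*)\le\topl[k](L)$ for every $k$. Let $H:=\{i\in\mcset:\preload_i<z^*\}$. The vector $L^*$ equals $\preload$ off $H$ and equals the constant $z^*$ on $H$, while $L\ge\preload$ coordinatewise. Since $L$ puts total mass $\sum_{i\in\mcset}a_i=\sum_{i\in H}(z^*-\preload_i)$ onto $\mcset$, I would take the largest $k$ entries of $L^*$, say $P$ outside $H$ together with $q$ entries from $H$. If $q=0$, the bound is immediate from $L\ge\preload$. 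If $q\ge1$, every entry of $P$ is at least $z^*$, and $H$ contributes $qz^*$. Compare with $L$ summed over $P$ together with the $q$ largest-$L$ coordinates of $H$: that sum is at least $\sum_{P}\preload+$ the average-based bound $q\cdot\frac{1}{|H|}\sum_{H}L_i$. Here I use that coordinates in $\mcset\setminus H$ satisfy $\preload\ge z^*$, so any mass placed there can be charged as in a standard water-filling exchange argument. I expect making this exchange clean to be the fiddly part. The fallback is a direct Hardy--Littlewood--P\'olya/Rado argument: take $L^*=\frac{1}{|\Pi|}\sum_{\pi}$ of $L$ averaged over permutations of $H$, after first moving the $a_i$ mass from $\mcset\setminus H$ into $H$. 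Moving mass from coordinates with $L_i\ge z^*$ toward coordinates with $L_i<z^*$ is a Robin Hood transfer, so it does not increase $f$ for any symmetric convex $f$.

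Given majorization, $f(L^*)\le f(L)$, and combined with the upper bound this proves the claimed optimal value and optimality of $x^*$.
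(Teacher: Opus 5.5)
Your proof is correct, but it takes a different route from the paper.

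\textbf{The paper's route.} The paper argues by extremality. It takes an optimal $(x,L)$ whose sorted load vector $L^{\down}$ is lexicographically smallest among optimal solutions. If $L^*\not\le L$, it finds $i\in\mcset$ with $L_i<z^*$ and $i'\in\mcset$ with $L_{i'}>\max\{\preload_{i'},z^*\}$. It then shifts a sliver of some job from $i'$ to $i$ and applies Claim~\ref{schur}, obtaining an optimal solution with lexicographically smaller sorted load, a contradiction.

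\textbf{Your route.} You prove the global fact that $L^*$ is majorized by every feasible load vector. This gives a stronger conclusion: $L^*$ is optimal simultaneously for every symmetric convex $f$, and monotonicity is used only to drop excess coverage and for the upper bound. It also avoids the existence step the paper needs for an optimal, lexicographically minimal solution. That step requires a small compactness argument, because the feasible region of \eqref{extncp} as written is unbounded in $x$. The paper's route is shorter in context, since it reuses Claim~\ref{schur}, which it needs anyway for Claim~\ref{tsgclaim} and Lemma~\ref{wksort}.

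\textbf{A tie-breaking gap in your primary $\topl[k]$ comparison.} The choice of top-$k$ set of $L^*$ matters when there are ties at $z^*$. If ties are broken toward $H$, a coordinate $i\in\mcset\setminus H$ with $\preload_i=z^*$ can be left out of $P$ while carrying all the mass. For example, take $\preload=(0,0,1)$, $\mcset=[3]$, and $p(J_2)=2$. Then $z^*=1$ and $H=\{1,2\}$. For $k=1$ with the top entry taken from $H$ and $L=(0,0,3)$, your comparison set gives $0<1$.

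The fix is to break ties in favor of coordinates outside $H$. Then $q\ge 1$ forces $\mcset\setminus H\subseteq P$. Set $A:=\sum_{i\in\mcset\setminus H}a_i$; then $\sum_{i\in H}L_i=|H|z^*-A$. The comparison set therefore has $L$-sum at least $\sum_{i\in P}\preload_i+A+qz^*-\frac{q}{|H|}A\ge\topl[k](L^*)$.

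\textbf{Your fallback is already complete.} Move all of $a_i$ from $i\in\mcset\setminus H$ to some $i'\in H$ whose current load is below $z^*$. Such an $i'$ exists while mass remains on $\mcset\setminus H$, because $\sum_{H}L$ stays equal to $|H|z^*$ minus the remaining mass. This is a transfer of amount $a_i<L_i-L_{i'}$, since $L_i-a_i=\preload_i\ge z^*>L_{i'}$. That is exactly the setting of Claim~\ref{schur}. Averaging the resulting vector over permutations of $H$ then gives $L^*$, and convexity plus symmetry give $f(L^*)\le f(L)$.
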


We also utilize the following simple claim.

\begin{claim} \label{jobdom}
Let $J_1, J_2\sse J$ be two job-sets such that $\svec[p]{J_1}\leq\svec[p]{J_2}$, and
$\sg_2:J_2\mapsto[m]$ be an assignment such that $f(\lvec{\sg_2})\leq B$. Then, there is an
assignment $\sg_1:J_1\mapsto[m]$ such that $f(\lvec{\sg_1})\leq B$.
\end{claim}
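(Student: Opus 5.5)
The plan is to build $\sg_1$ directly from $\sg_2$ by an injective job-replacement map, using the machinery of Claim~\ref{sortmap}. I read the hypothesis $\svec[p]{J_1}\leq\svec[p]{J_2}$ in the sorted sense used throughout the paper, i.e., $\svec[p]{J_1}^{\down}\leq\svec[p]{J_2}^{\down}$. (If it is meant coordinate-wise, then $J_1\sse J_2$ and we simply take $\sg_1:=\sg_2|_{J_1}$.)

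First, I would produce a one-to-one map $\pi:J_1\mapsto J_2$ with $p_j\leq p_{\pi(j)}$ for all $j\in J_1$. This comes straight from the sorted domination. Match the $r$-th largest job of $J_1$ to the $r$-th largest job of $J_2$, for $r=1,\ldots,|J_1|$. This is well defined because $|J_1|\leq|J_2|$: the number of nonzero coordinates of $\svec[p]{J_1}^{\down}$ is at most that of $\svec[p]{J_2}^{\down}$, and all $p_j>0$. Note that we do not need $\pi$ to fix elements of $J_1\cap J_2$, since we are only building a new assignment and not modifying a set in place, so Claim~\ref{sortmap} is not even required.

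Next, define $\sg_1(j):=\sg_2(\pi(j))$ for $j\in J_1$. Since $\pi$ is injective, each machine $i$ receives the jobs $\pi^{-1}(\sg_2^{-1}(i))$, whose total processing time is
\[
\sum_{j\in J_1:\sg_2(\pi(j))=i}p_j\;\leq\;\sum_{j'\in J_2:\sg_2(j')=i}p_{j'}.
\]
Hence $\lvec{\sg_1}\leq\lvec{\sg_2}$ coordinate-wise. For identical machines the load equals the work, so monotonicity of $f$ gives $f(\lvec{\sg_1})\leq f(\lvec{\sg_2})\leq B$. For related machines the same argument applies verbatim, since each load is the work divided by the fixed machine speed $s_i$.

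There is no real obstacle here. The only care needed is the injectivity and cardinality bookkeeping in the first step.
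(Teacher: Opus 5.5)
Your proposal is correct and is essentially the paper's argument: take a one-to-one map $\pi:J_1\mapsto J_2$ with $p_j\leq p_{\pi(j)}$, set $\sg_1:=\sg_2\circ\pi$, and conclude from coordinate-wise domination of the load vectors together with monotonicity of $f$. Your explicit rank-by-rank construction of $\pi$, which relies on the paper's standing assumption $p_j>0$, and your remark that the argument applies unchanged to related machines (where the claim is also invoked) are sound additions.
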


\begin{proof}
Let $\pi:J_1\mapsto J_2$ be a one-to-one mapping such that $p_j\leq p_{\pi(j)}$ for all
$j\in J_1$. Let $\sg_1$ be the assignment that assigns each $j\in J_1$ to the machine $\sg_2(\pi(j))$.
Then the load vector $\lvec{\sg_1}$ is coordinate-wise at most $\lvec{\sg_2}$, since 
$\lvec{\sg_1}_i=\sum_{j\in J_1: \sg_2(\pi(j))=i}p_j\leq\sum_{j\in J_1:\sg_2(\pi(j))=i}p_{\pi(j)}\leq\lvec{\sg_2}_i$.
Therefore, $f(\lvec{\sg_1})\leq B$.
\end{proof}

\vspace*{-1ex}
\paragraph{I: Finding a suitable set of jobs.}

\begin{enumerate}[label=(L\arabic*), topsep=0.5ex, itemsep=0.1ex, labelsep=*]
\item \label{goodset} \label{ptas-start}
The first step is to guess a set of jobs assigned by a near-optimal solution
%with $\rewd(A)\geq (1-\ve)\OPT$
using the same approach as for norm-budgeted knapsack.
%as in the proof of Theorem~\ref{ident-biptasthm} (Appendix~\ref{ident-biptas}). 
%Let $\tsg:A\mapsto[m]$ be an assignment such that $f(\lvec{\tsg})\leq B$. 
%
Recall that this involves the following.
We consider reward buckets $\itemset_0,\ldots,\itemset_{\nbuck}$,
where $\itemset_q=\bigl\{j\in J: \frac{\thresh_q}{1+\ve}<\rewd_j\leq\thresh_q\bigr\}$
%consists of all jobs with rewards in the range
%$\bigl(\frac{\thresh_q}{1+\ve},\thresh_q\bigr]$ 
with $\thresh_q=\frac{\rmax}{(1+\ve)^q}$ and
$\nbuck=O\bigl(\frac{1}{\ve}\log\frac{n}{\ve}\bigr)$.
Let $\optval$ be an estimate such that $\optval\leq\OPT\leq(1+\ve)\optval$.
%(iii) 
We guess the number $\num_q$ of jobs from each $\itemset_q$ bucket assigned by an
optimal solution using Lemma~\ref{enumlem}. %this can be described as follows: 
To elaborate, we apply Lemma~\ref{enumlem} to the sequence
$\bigl\{\rewd(\optset\cap\itemset_q)\bigr\}_{q\in\dbrack{\nbuck}}$ taking $\est=\optval$
to guess these rewards up to a cumulative error of $\ve\optval$, and set $\num_q$ to be
the estimate of $\rewd(\optset\cap\itemset_q)$ divided by $\thresh_q$. 
(This is precisely what we do in the proof of Theorem~\ref{normknapthm}.) 
%~\ref{ident-biptasthm}.)
%within $(1+\ve)$ multiplicative error and
%$\frac{\ve\cdot\optval}{\nbuck\cdot\thresh_q}$ additive error; and 
We then have 
$\num_q\leq|\optset\cap\itemset_q|\leq(1+\ve)\bigl(\num_q+\frac{\ve\cdot\optval}{\nbuck\cdot\thresh_q}\bigr)$
for all $q\in\dbrack{\nbuck}$
We pick the $\ceil{\num_q}$ smallest-size jobs from each $\itemset_q$ bucket. 
Let $A$ denote this collection of jobs. We have $\rewd(A)\geq(1-\ve)^2\OPT$ as argued in
the proof of Theorem~\ref{normknapthm}. %~\ref{ident-biptasthm}.
Also $\svec[p]{A}\leq\svec[p]{\noptset}$, 
%so one can argue (as in the proof of Theorem~\ref{ident-biptasthm}) 
so by Claim~\ref{jobdom}, there is an assignment
$\sg:A\mapsto[m]$ such that $f(\lvec{\sg})\leq B$; we call such an assignment a feasible
assignment. 

%Recall that $A=\bigcup_{q\in\dbrack{\nbuck}}S_q$, where $S_q\sse\itemset_q$ for
%$q\in\dbrack{nbuck}$.
\end{enumerate}

\vspace*{-1ex}
\paragraph{\boldmath II: Simplifying the instance, obtaining information about a near-optimal assignment.}

\begin{enumerate}[resume*, start=2]
\item \label{tguess}
%Next, we utilize an insight from~\cite{IbrahimpurS21}. 
%Say that an assignment $\sg:A\mapsto[m]$ is feasible if $f(\lvec{\sg})\leq B$.
Let $\tsg:A\mapsto[m]$ be a
feasible assignment %with $f(\lvec{\tsg})\leq B$ 
such that $\lvec{\tsg}^{\down}$ is {\em lexicographically smallest} among all feasible
assignments.%
\footnote{Given distinct vectors $u,v\in\R^m$, we say that $u$ is lexicographically
smaller than $v$ if for some $i\in[m]$, we have $u_i<v_i$ and $u_{i'}=v_{i'}$ for
all $i'=1,\ldots,i-1$.}
Let $\tlvec=\lvec{\tsg}$.
%assignments $\sg:A\mapsto[m]$ satisfying 
%Since all machines are identical, let us re-index the machines so that
%$\tlvec^{\down}=\tlvec$, 
Let $\tlvmin:=\min_{i\in[m]}\tlvec_i$ be the least load on any machine under $\tsg$.
Let $A_1\sse A=\{j\in A: p_j>\tlvmin\}$.
%(Clearly, $\tlvmin$ is also the least load on any machine that is not assigned any job
%in $A_1$.)
\begin{comment}
Let $\mcset=\{i\in[m]: \tsg^{-1}(i)\cap A_1=\es\}$ be the machines not assigned a job from
$A_1$ under $\tsg$. (Note that $\mcset$ need not be a consecutive set of indices.)
Let $i^*$ be the smallest index in $\mcset$, so $i^*$ is the most-loaded machine in
$\mcset$ (under $\tsg$).
\end{comment}
Let $\tlvmax:=\max\,\{\tlvec_i: i\in[m],\ \tsg^{-1}(i)\cap A_1=\es\}$ be the maximum load
on any machine that is not assigned any job from $A_1$ under $\tsg$. 
Utilizing an insight from~\cite{IbrahimpurS21}, we can infer the following.
%As observed in~\cite{IbrahimpurS21}, letting $i^*$ be the smallest-index machine that has at
%least two jobs assigned to it under $\tsg$, we can infer the following.

\begin{claim} \label{tsgclaim}
\begin{enumerate*}[label=(\alph*)] %noitemsep, leftmargin=*]
\item %Let $j\in A$ be such that $p_j>\tlvec_m$. Then, 
For every $j\in A_1$, job $j$ is the only job assigned by $\tsg$ to machine $\tsg(j)$. \quad
\item We have $\tlvmax\leq 2\cdot\tlvmin$.
\end{enumerate*}
\end{claim}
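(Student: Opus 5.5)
Both parts will be proved by contradiction with the lexicographic minimality of $\tlvec^{\down}$. In each case we build another feasible assignment whose sorted load vector is lexicographically smaller. The basic tool is a local move that changes the loads on only two machines $a,b$, replacing $(\tlvec_a,\tlvec_b)$ by a pair $(x,y)$ with $x+y=\tlvec_a+\tlvec_b$ and $\max\{x,y\}<\max\{\tlvec_a,\tlvec_b\}$.

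\emph{Feasibility of such a move.} The new pair is a convex combination of $(\tlvec_a,\tlvec_b)$ and its swap $(\tlvec_b,\tlvec_a)$. Both of these full load vectors have the same $f$-value, by symmetry. So by convexity of $f$ the new load vector has norm at most $f(\tlvec)\leq B$, and the new assignment is feasible.

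\emph{Lexicographic decrease.} Let $M=\max\{\tlvec_a,\tlvec_b\}$. All coordinates of the load vector that are larger than $M$ are untouched. The number of coordinates equal to $M$ drops by at least one, and every new value is below $M$. Hence the sorted vector becomes lexicographically smaller, contradicting the choice of $\tsg$.

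For (a), suppose $j\in A_1$ shares machine $a=\tsg(j)$ with some other job $j'$. Let $b$ be a machine attaining $\tlvmin$. Note $b\neq a$, since $\tlvec_a\geq p_j>\tlvmin$. We move $j'$ from $a$ to $b$, giving loads $(\tlvec_a-p_{j'},\,\tlvmin+p_{j'})$. The first entry is less than $\tlvec_a$ because $p_{j'}>0$. The second is less than $\tlvec_a$ because $\tlvec_a\geq p_j+p_{j'}>\tlvmin+p_{j'}$. So the maximum strictly drops, and the local move gives the contradiction. (If some job sizes were zero, we would first discard zero-size jobs, as the paper already assumes $p_j>0$.)

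For (b), let $a$ be a machine with no $A_1$-job and $\tlvec_a=\tlvmax$, and let $b$ attain $\tlvmin$. Suppose for contradiction that $\tlvmax>2\tlvmin$; then $a\neq b$. Every job on $a$ lies in $A\setminus A_1$, so it has size at most $\tlvmin$. Since $\tlvec_a>2\tlvmin\geq\tlvmin$, machine $a$ holds at least one job; pick any such job $j'$. Moving $j'$ to $b$ gives loads $(\tlvec_a-p_{j'},\,\tlvmin+p_{j'})$. The first entry is less than $\tlvec_a$ because $p_{j'}>0$. For the second, $\tlvmin+p_{j'}\leq 2\tlvmin<\tlvec_a$. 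So the maximum strictly drops, again a contradiction.

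The main point needing care is the feasibility step: justifying that averaging two coordinates cannot increase $f$. This is where convexity and symmetry of $f$ are used together. Everything else is bookkeeping with strict inequalities, which needs $p_j>0$.
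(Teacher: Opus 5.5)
Your proof is correct and follows the paper's route: in both parts you contradict the lexicographic minimality of $\tlvec^{\down}$ by moving one job onto a least-loaded machine, exactly as the paper does. The differences are minor. You justify the two-machine rebalancing step directly from convexity and symmetry of $f$, by writing the new pair as a convex combination of $(\tlvec_a,\tlvec_b)$ and its swap, rather than citing Claim~\ref{schur}, which the paper proves via $\topl$-dominance. In part (a) you also avoid the paper's unnecessary choice of $j'$ with $p_{j'}\leq p_j$.
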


%We assume we know $i^*$, and hence the jobs in $A$ assigned to machines $1,\ldots,i^*-1$
Clearly, $A_1$ comprises the largest $|A_1|$ jobs in $A$, so we may assume that we know
$A_1$. 
Let $A_2=A-A_1$. Since all machines are identical, let
us re-index the machines so that jobs in $A_1$ are assigned to the first $|A_1|$ machines
(one job per machine). (Note that the load vector under this indexing need not be
equal to $\tlvec^{\down}$; it corresponds to $\tlvec$ under some permutation of coordinates.) 
Let $\mcset=\{|A_1|+1,|A_1|+2,\ldots,m\}$. 
%and $\mcnum=|\mcset|=m-|A_1|$. 
%so $\mcnum=m-|A_1|$.
So jobs in $A_2$ are assigned by $\tsg$ to machines in $\mcset$. 
Let $\tlvavg=\bigl(\sum_{i\in\mcset}\tlvec_i\bigr)/\mcnum=p(A_2)/\mcnum$ be the average
load on machines in $\mcset$. Note that we know $\tlvavg$.
Since $\tlvavg\in\bigl[\tlvmin,\tlvmax\bigr]$, by Claim~\ref{tsgclaim} (b), we have that  
$\tlvavg/2\leq\tlvmin\leq\tlvavg\leq\tlvmax\leq 2\cdot\tlvavg$.

We call a job $j\in A_2$ large if $p_j\geq\ve\cdot\tlvavg$, and small otherwise.
We now consider two cases, depending on whether $\mcnum\leq\frac{2}{\ve^2}$, which we call
a {\em \sparse} instance, or whether $\mcnum>\frac{2}{\ve^2}$, which we call a 
{\em \dense} instance. 
\end{enumerate}

%\vspace*{-1ex}
\paragraph{III: \sparse[S] instance.} \nopagebreak
\begin{enumerate}[resume*, start=3]
%\item {\bf \sparse[S] machines.} 
\item \label{spinst} \label{sp-start}
We guess the assignments of the $\frac{\mcnum}{\ve}$ largest-reward jobs in $A_2$. 
The time needed for this enumeration is
$\bigl(\frac{\mcnum}{\ve}\bigr)^{\mcnum}=\bigl(\frac{1}{\ve}\bigr)^{\poly(1/\ve)}$. 
%If this covers all of $A_2$, there is nothing more to be done, so suppose otherwise.
Let $J_1$ consist of $A_1$ and these $\frac{\mcnum}{\ve}$ jobs, whose assignments have
been determined, and $J_2$ be the remaining jobs in $A$. (If $J_2=\es$, there is nothing
more to be done, so suppose otherwise.)
%Thus, we have determined the assignments of all jobs in $J_1$.
Observe that by design, we have
$\rewd_j\leq\frac{\rewd(J_1-A_1)}{\mcnum/\ve}\leq\frac{\ve}{\mcnum}\cdot\rewd(A_2)$ for
all $j\in J_2$. 

\item \label{sp-frac}
We first find a fractional assignment $x^*$ of jobs in $J_2$ to machines in $\mcset$
by invoking Lemma~\ref{extnlem}, letting $\sg$ be the pre-determined assignment of jobs in
$J_1$. Let $L^*$ be the corresponding load vector. Since $\tsg$
yields one possible feasible solution to the convex program \eqref{extncp}, we know that
$f(L^*)\leq B$.  

\item \label{sp-end}
We round $x^*$ using \gap rounding~\cite{ShmoysT93}. Let $\hsg:A\mapsto[m]$
denote the assignment obtained by concatenating the resulting integer solution and $\sg$. 
Let $\hJ_i\sse A$ be the jobs assigned to machine $i$ by $\hsg$. By properties of \gap
rounding, we know that for each machine $i\in\mcset$, by removing at most one job
$j^*_i\in\hJ_i\cap J_2$ with $x^*_{ij}>0$, we can ensure that the load on $i$ is at most
$L^*_i$.  
%we have $\sum_{j\in\hJ_i}\tp_{j}\leq L^*_i+p_{j^*_i}$, where 
%$j^*_i\in\hJ_i\cap J^{\rem}$ with $x^*_{ij}>0$; 
%let $j^*_i$ denote such a job, with the convention that $\{j^*_i\}=\es$ 
We adopt the convention that if $p(\hJ_i)\leq L^*_i$, then $\{j^*_i\}=\es$. 
%is ''null'' with $p_{j^*_i}=\rewd_{j^*_i}=0$ and $\{j^*_i\}=\es$.  
%there is no such job.
We discard $\{j^*_i\}_{i\in\mcset}$ to obtain a feasible assignment, and return this
solution. Note that since we discard at most $\mcnum$ jobs from $J_2$, we lose reward at
most $\ve\cdot\rewd(A_2)$.
\end{enumerate}

\vspace*{-1ex}
\paragraph{IV: \dense[D] instance.} The \dense-instance setting is more complicated. Here,
we will first aim to determine the assignments of all large jobs in $A_2$, then assign the
remaining jobs by computing a fractional assignment using Lemma~\ref{extnlem}, and
rounding this fractional assignment using \gap rounding. To implement this plan, we will
need to sparsify the instance to reduce the number of distinct {\em types} of large jobs, 
where type of a job denotes its (reward, size) tuple, and we
will need to set aside some jobs that are assigned to ``extra'' machines. We will
eventually argue that we can drop the extra machines and obtain a feasible assignment without
sacrificing the reward by much.

Let $A^{\lrg}\sse A_2$ be the large jobs in $A_2$, let $A^{\sml}=A_2-A^{\lrg}$ be the
small jobs in $A_2$.
Note that $|A^{\lrg}|\leq\mcnum/\ve$ since the total load assigned to machines in
$\mcset$ is $\mcnum\cdot\tlvavg=p(A_2)\geq p(A^{\lrg})$ and 
%is at least $\sum_{j\in J_I}p_j$ (as $\tsg$ assigns all jobs in $J_I$ to machines in $I$) 
%the total processing time of all jobs assigned by $\sg$ to $I$, 
$p_j\geq\ve\cdot\tlvavg$ for every $j\in A^{\lrg}$. 
(If $|A^{\lrg}|>\mcnum/\ve$, then we declare failure; this can only happen if one of
our guesses is incorrect.)

\begin{enumerate}[resume*, start=6]
\item {\bf Sparsifying job sizes.} \label{sizesparse}
We sparsify the instance by using a shifting idea used for bin packing and the
multiple-knapsack problem~\cite{VegaL81,ChekuriK05} so that there are only  
$O\bigl(\frac{1}{\ve^2}\bigr)$ distinct sizes of large jobs.
%$O\bigl(\frac{\nbuck}{\ve}\bigr)=O\bigl(\frac{1}{\ve^2}\ln\frac{n}{\ve}\bigr)$ distinct
%job-sizes.  

%To elaborate, first 
%and $\sg$ %there is an assignment that 
%assigns jobs in $J'_I$ to machines in $I$ without exceeding capacity $t$. 
We consider jobs in $A^{\lrg}$ in non-increasing order of size,
and divide them into $k=1+\frac{1}{\ve^2}$ groups, where the first $\frac{1}{\ve^2}$
groups contain $\floor{\ve^2|A^{\lrg}|}$ jobs, and the last group contains less than
$\frac{1}{\ve^2}$ jobs. (If $\ve^2|A^{\lrg}|<1$, then the first $\frac{1}{\ve^2}$ groups
are empty, and the last group contains all jobs in $A^{\lrg}$.) 
We increase the size of every job in groups $2,\ldots,k-1$ to
the size of the largest job in that group, and move each job in the first group to a 
separate extra machine. 
This creates at most $\ve^2|A^{\lrg}|\leq\ve\cdot\mcnum$ extra machines. 

Let $A'$ denote the jobs in groups $2,\ldots,k$ with their (potentially) modified sizes.
Note that jobs in $A'$ now have at most $\frac{2}{\ve^2}$ distinct sizes.
Let $\tp_j\geq p_j$ denote the modified size of every $j\in A^{\lrg}$ (where $\tp_j=p_j$
if $j$'s size is unchanged). 
We will work with these modified sizes in order to assign the jobs in $A'$, and then
revert to the original job sizes.
For all $r=2,\ldots,k-1$, the space used by jobs in group $r-1$ under assignment $\tsg$
can be used to accommodate the jobs in group $r$ with their modified sizes. 
Thus, $\tsg$ can be used to obtain a feasible assignment 
$A_1\cup A'\cup A^{\sml}\mapsto[m]$, where jobs in $A_1$ are assigned as before to
the first $|A_1|$ machines and 
jobs in $A'$ %(with their modified sizes) 
and $A^{\sml}$ are assigned to machines in
$\mcset$, and the load on each machine $i$ (even under the modified sizes) is at most
$\tlvec_i$. This may involve changing the $\tsg$-assignments of some jobs in $A'$; to
avoid excessive notation, we continue to use $\tsg$, viewed now as an assignment 
$A_1\cup A'\cup A^{\sml}\mapsto[m]$, to denote this modified feasible assignment. 
%denote the resulting modified assignment of jobs in $(J_I-\tJ_I)\cup J'_I$.

\begin{remark}
Given the constant number of job sizes for jobs in $A'$, it is possible to argue that
by considering a {\em polynomial} number of candidate assignments of jobs in $A'$ to
machines in $\mcset$, one can find an assignment that is consistent with $\tsg$. 
\begin{comment}
This is because $\tsg$ assigns any machine $i\in\mcset$ at most $\frac{2}{\ve}$ jobs from
$A'$, since $\tlvec_i\leq 2\cdot\tlvavg$ and $\tp_j\geq\ve\tlvavg$ for every job $j\in
A'$. Thus,  
\end{comment} 
This is because there are at most a constant number
($\bigl(\frac{2}{\ve}\bigr)^{2/\ve^2}$) of possible $A'$-configurations, where an
$A'$-configuration specifies how many jobs in $A'$ of each size are assigned to a machine
(in $\mcset$) so that the total load assigned is at most $2\cdot\tlvavg$. 
%is at most a constant ($\bigl(\frac{2}{\ve}\bigr)^{2/\ve^2}$). 
To specify the $A'\mapsto\mcset$ assignment of
jobs, we can guess, for each $A'$-configuration, how many machines in $\mcset$ are
assigned that configuration; thus, the number of such candidate assignments is at most
$m^{O(1)}$ (where the $O(1)$ term in the exponent is a function of $\frac{1}{\ve}$). 
While this will suffice to obtain a PTAS here, %in order to obtain an efficient PTAS, and
with an eye towards extending things to the setting of related machines (where we can have
up to $O(\log m)$ speed classes and the above enumeration idea will not work), we proceed
differently and reduce the number of $A'$-job assignments that we need to consider to
a {\em constant} (depending on $\ve$).
%time needed to determine the $A'$-job
%assignments to an expression of the form $g(1/\ve)\cdot\poly(m,n)$.
\end{remark}

\item {\bf Sparsifying job rewards.} \label{rewdsparse}
We next use a similar shifting idea to reduce the number of distinct rewards of jobs in
$A'$ to $O\bigl(\frac{1}{\ve^3}\bigr)$. We consider jobs in $A'$ in non-increasing order
of {\em reward} and now divide them into $k'=1+\frac{1}{\ve^3}$ groups, where the first
$\frac{1}{\ve^3}$ groups contain $\floor{\ve^3|A'|}$ jobs, and the last group contains
less than $\frac{1}{\ve^3}$ jobs. (Again, if $\ve^3|A'|<1$, then the last group is all of
$A'$ and all other groups are empty.)
For every $r=\frac{1}{\ve}+1,\ldots,k'-1$, we reduce the reward of each job in group
$r$ to the smallest reward of a job in that group. We also move each job in the first
$\frac{1}{\ve}$ groups to a separate extra machine.
Let $A''\sse A'$ denote the jobs in groups $\frac{1}{\ve}+1,\ldots,k'$. Let
$\trewd_j\leq\rewd_j$ denote the modified reward of each job $j\in A^{\lrg}$, where
$\trewd_j=\rewd_j$ if $j$'s reward is unchanged.

Observe that we create at most $\ve^2|A'|\leq\ve\cdot\mcnum$ extra machines, and jobs in
$A''$ now have at most $\frac{2}{\ve^3}$ distinct rewards.
Moreover, the following claim shows that this reward-sparsification step does not incur
much loss.

\begin{claim} \label{rwdsparse}
We have $\trewd(A')\geq(1-\ve)\rewd(A')$.
\end{claim}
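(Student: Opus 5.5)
We compare the reward lost in each group against the reward of the groups that were moved aside. Let the groups, in non-increasing order of reward, be $G_1,\ldots,G_{k'}$, where $k'=1+\frac{1}{\ve^3}$ and $|G_r|=g:=\floor{\ve^3|A'|}$ for $r\leq\frac{1}{\ve^3}$. Jobs in groups $G_1,\ldots,G_{1/\ve}$ are moved to extra machines with their rewards unchanged, so $\trewd_j=\rewd_j$ there. The same holds for $G_{k'}$. The only loss therefore comes from groups $G_r$ with $\frac{1}{\ve}<r<k'$, where every reward is lowered to $\min_{j\in G_r}\rewd_j$.

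First, for such $r$ the loss in group $r$ satisfies
\[
\rewd(G_r)-\trewd(G_r)\leq\rewd(G_r)\leq g\cdot\max_{j\in G_r}\rewd_j\leq g\cdot\min_{j\in G_{r-1}}\rewd_j.
\]
We will not need any finer bound. The point is that every reward in $G_r$ is at most every reward in earlier groups.

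Next, we charge group $r$ to the block of $\frac{1}{\ve}$ preceding groups $G_{r-1/\ve},\ldots,G_{r-1}$. Since all of these have size $g$ and every reward in them is at least $\max_{j\in G_r}\rewd_j$, we get $\rewd(G_r)\leq\ve\sum_{s=r-1/\ve}^{r-1}\rewd(G_s)$. Summing over $r$ from $\frac{1}{\ve}+1$ to $k'-1$, each group $G_s$ is counted at most $\frac{1}{\ve}$ times. This gives
\[
\sum_{r}\rewd(G_r)\leq\ve\cdot\tfrac{1}{\ve}\cdot\rewd(A')=\rewd(A'),
\]
which is too weak. We need the sharper per-group bound $\rewd(G_r)\leq g\,\rewd_{\min}(G_{r-1})\leq\rewd(G_{r-1})$ together with a telescoping argument. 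Because rewards are sorted, $\rewd(G_r)\leq\trewd(G_{r-1})$ for $r-1>\frac{1}{\ve}$, using $\trewd(G_{r-1})=g\cdot\min_{G_{r-1}}\rewd$. Hence
\[
\sum_{r>1/\ve}\rewd(G_r)\leq\rewd(G_{1/\ve+1})+\sum_{r>1/\ve+1}\trewd(G_{r-1}),
\]
so $\rewd(A'')-\trewd(A'')\leq\rewd(G_{1/\ve+1})$. We also have $\rewd(G_{1/\ve+1})\leq\min_{j\in G_{1/\ve}}\rewd_j\cdot g\leq\ve\sum_{s\leq 1/\ve}\rewd(G_s)\leq\ve\,\rewd(A')$, by averaging over the first $\frac{1}{\ve}$ equal-size groups, each of which has reward at least $g\cdot\max_{G_{1/\ve+1}}\rewd$. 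Combining these, the total loss $\rewd(A')-\trewd(A')$ is at most $\ve\,\rewd(A')$.

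The main obstacle is getting this telescoping/averaging step right, in particular dispatching the edge cases. When $\ve^3|A'|<1$, all groups except the last are empty and nothing changes. The last group $G_{k'}$ is smaller than $g$ but keeps its rewards, so it causes no loss. The only needed facts are that the groups have equal size $g$ and that rewards are sorted.
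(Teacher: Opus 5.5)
Your final argument (telescoping $\rewd(G_r)\leq\trewd(G_{r-1})$ across the reduced groups, then bounding the one leftover term $\rewd(G_{1/\ve+1})$ by averaging over the first $\frac{1}{\ve}$ equal-size groups) is correct and is exactly the paper's proof; the block-charging attempt you discard is not needed. One small slip: $G_{k'}$ has fewer than $\frac{1}{\ve^3}$ jobs but not necessarily fewer than $g$, so the telescoping should run only over $\frac{1}{\ve}<r<k'$, simply dropping $\trewd(G_{k'-1})\geq 0$; this suffices because $G_{k'}$ incurs no loss.
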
 

\item {\bf Assigning jobs in \boldmath $A''$.} \label{lrgasgn} \label{denseasgn}
At this point, we have assigned jobs in $A^{\lrg}-A''$ to at most $2\ve\cdot\mcnum$ extra
machines, one job per machine. Our next task is to find an assignment of jobs in $A''$ to
machines in $\mcset$ that is consistent with (the unknown assignment) $\tsg$.
By steps~\ref{sizesparse} and~\ref{rewdsparse}, jobs in $A''$ correspond to at most
$\frac{4}{\ve^5}$ distinct $(\trewd_j,\tp_j)$ tuples; we call $(\trewd_j,\tp_j)$ the 
{\em type} of job $j$. (Note that jobs of the same type are %essentially
indistinguishable.)  

We know that every machine in $\mcset$ is assigned load at most $2\cdot\tlvavg$ by $\tsg$.
A {\em type configuration} specifies an assignment of jobs in $A''$ to a
machine such that the total load assigned to it due to these jobs is at
most $2\cdot\tlvavg$, by listing out how many jobs of each type are assigned to
the machine. 
Let $\cfgset$ denote the collection of all type configurations.
Since $\tp_j\geq\ve\cdot\tlvavg$ for every $j\in A''$, we have 
%the number of type configurations is at most a constant 
$|\cfgset|\leq C=\bigl(\frac{2}{\ve}\bigr)^{4/\ve^5}$. Define the reward of a
type configuration $\cfg$ to be the reward obtained from the jobs assigned by that
configuration; abusing notation, we denote this by $\trewd(\cfg)$. 
We say that a machine $i$ uses a type configuration $\cfg$ %(or is assigned $\cfg$) 
to mean that the $A''$-jobs assigned to $i$ conform to $\cfg$, i.e., the
number of $A''$-jobs of each type assigned to $i$ is as specified by $\cfg$.
We say that $\tsg$ uses a type configuration $\cfg$ if it uses $\cfg$ for some
machine in $\mcset$. %is assigned $\cfg$ under $\tsg$.
%for a machine $i$, or assigns $\cfg$
%to machine $i$, if the $A''$-jobs assigned by $\tsg$ to $i$ conform to $\cfg$, i.e., the
%number of $A''$-jobs of each type assigned to $i$ is as specified by $\cfg$.

Ideally, we want to determine all the type configurations used by $\tsg$. 
%to assign jobs in $A''$. 
We will settle for finding a set of type configurations to assign to machines in $\mcset$
that accrue total reward %(from the jobs assigned by these type configurations)  
at least $\trewd(A'')-\ve\OPT$. 
For $\cfg\in\cfgset$, let $N_{\cfg}$ denote the number of times $\tsg$ uses
configuration $\cfg$ for machines in $\mcset$. 
%We call $\{N_{I,\cfg}\}_{\cfg\in\cfgset_I}$, the $\tsg$-configuration-usage sequence for
%$I$. 
We apply Lemma~\ref{enumlem} to guess the sequence 
$\bigl\{N_{\cfg}\cdot\trewd(\cfg)\}_{\cfg\in\cfgset}$ up to cumulative error
$\ve\cdot\optval$ by taking $\est=\optval$. 
%(This satisfies the requirements of Lemma~\ref{enumlem} since
%$\sum_{\cfg\in\cfgset_I}N_{\cfg}\trewd(\cfg)\leq\trewd(\bA_I)$.) 
This takes time $2^{O(C/\ve)}$ (which is a constant). 
Define $\num_{\cfg}$ to be the entry for configuration $\cfg$ in the correct guessed
sequence, divided by $\trewd(\cfg)$. 
Then, $\num_{\cfg}\leq N_{\cfg}$ for all $\cfg\in\cfgset$, and 
$\sum_{\cfg\in\cfgset}\num_{\cfg}\cdot\trewd(\cfg)\geq\sum_{\cfg\in\cfgset}N_{\cfg}\cdot\trewd(\cfg)-\ve\cdot\optval$.

\begin{comment}
To this end, for each type configuration $\cfg$, we 
estimate how many machines in $\mcset$ are assigned $\cfg$ under $\tsg$ by first guessing
the total reward $\Totrewd(\cfg)$ obtained from all uses of $\cfg$ by 
$\tsg$ up to the nearest multiple of $\Dt'=\frac{\ve\cdot\optval}{C}$, and then using this
to obtain the desired estimate.%
\footnote{This is similar to how, in \normknap, we estimated the number of items the
optimal solution picks from each reward bucket.} 
(Recall that $\optval$ is an estimate
such that $\optval\leq\OPT\leq(1+\ve)\optval$.) More precisely, 
%if $\Totrewd(\cfg)$ denotes the total reward from all uses of $\cfg$ by $\tsg$, 
%$\tsg$ uses $\cfg$ for some $m_{\cfg}$ machines in $\mcset$, 
we guess $\bigl\{\floor{\frac{\Totrewd(\cfg)}{\Dt'}}\bigr\}_{\cfg\in\cfgset}$. Since 
$\sum_{\cfg\in\cfgset}\floor{\frac{\Totrewd(\cfg)}{\Dt'}}\leq\bigl(1+\frac{1}{\ve}\bigr)C$,
there are at most $2^{O(C/\ve)}$ choices for the 
$\bigl\{\floor{\frac{\Totrewd(\cfg)}{\Dt'}}\bigr\}_{\cfg\in\cfgset}$ values, so 
we may assume that we know $\floor{\frac{\Totrewd(\cfg)}{\Dt'}}$ for all
$\cfg\in\cfgset$. For $\cfg\in\cfgset$, define 
$\num_{\cfg}:=\floor{\frac{\Totrewd(\cfg)}{\Dt'}}\cdot\frac{\Dt'}{\trewd(\cfg)}$, and note
that $\tsg$ uses $\cfg$ for at least $\num_{\cfg}$ machines (since
$\frac{\Totrewd(\cfg)}{\trewd(\cfg)}\geq\num_{\cfg}$).
\end{comment}

For every $\cfg\in\cfgset$, we choose $\ceil{\num_{\cfg}}$ distinct machines from
$\mcset$, and assign these machines the type configuration $\cfg$, where when we pick
$A''$-jobs as specified by $\cfg$ to assign to a machine, we of course always pick from
the unassigned jobs in $A''$.
%Assigning $\cfg$ to a
%machine $i$ means that, from the unassigned jobs in $A''$, for each job type, we select
%the number of jobs of that type specified by $\cfg$ and assign these jobs to $i$. 
If we run out of machines while doing so, i.e.,
$\sum_{\cfg\in\cfgset}\ceil{\num_{\cfg}}>\mcnum$, then we declare failure. Similarly, if
we run out of jobs of a particular type, then we again declare
failure. These failure events can only happen if one of our guesses is incorrect. 

\item {\bf Assigning jobs in \boldmath $A^{\sml}$.} \label{smlasgn} \label{ptas-end}
Let $J_1=A_1\cup A''$, and $\sg$ be the assignment determined above for jobs in $J_1$. We
extend $\sg$ to a fractional assignment $x^*$ of jobs in $A^{\sml}$ to machines in $\mcset$
using Lemma~\ref{extnlem}, taking $J_2=A^{\sml}$, and the original $p_j$ job sizes.
Let $L^*$ be the resulting load vector.
%As in the \sparse case, since $\sg$ is consistent with $\tsg$, we know that $f(L^*)\leq
%B$.

As with a \sparse instance, we round $x^*$ using \gap rounding. Let 
$\hsg:J_1\cup A^{\sml}\mapsto[m]$ be the assignment obtained by concatenating the resulting
integer solution and $\sg$. Let $\hJ_i\sse J_1\cup A^{\sml}$ be the jobs assigned to machine
$i$ by $\hsg$. Again, we know that for each $i\in\mcset$, there is at most one job
$j^*_i\in \hJ_i\cap A^{\sml}$ with $x^*_{ij}>0$ such that 
$p\bigl(\hJ_i-\{j^*_i\}\bigr)\leq L^*_i$;
as always, we set $\{j^*_i\}=\es$ if $p(\hJ_i)\leq L^*_i$.

We cannot discard the jobs in $\bigcup_{i\in\mcset}\{j^*_i\}$, as these jobs may bring in large
reward. Instead, we create extra machines and pack the jobs in
$\bigcup_{i\in\mcset}\{j^*_i\}$ arbitrarily on these extra machines so that each extra
machine is packed maximally within capacity $\tlvavg/2$.  
This creates at most $1+\floor{\frac{\ve\cdot\tlvavg\cdot\mcnum}{(1/2-\ve)\tlvavg}}$ extra machines,
since every extra machine, save for at most one, has at least 
$\tlvavg/2-\ve\cdot\tlvavg$ load assigned to it, and 
$\sum_{i\in\mcset}\tp_{j^*_i}\leq\ve\cdot\tlvavg\cdot\mcnum$.
Since we have a \dense instance, we have $\ve\cdot\mcnum\geq 1$, so the number of extra
machines created this way is bounded by $5\ve\cdot\mcnum$ (recall that $\ve\leq 0.25$). 
Combined with the $2\ve\cdot\mcnum$ extra machines utilized for jobs in $A^{\lrg}-A''$, we
have created at most $7\ve\cdot\mcnum$ extra machines.

Consider all the machines used for jobs in $A_2=A^{\lrg}\cup A^{\sml}$, i.e., the
regular machines in $\mcset$ and the extra machines. 
We retain the $\mcnum$ largest-reward machines from this collection, and discard the
rest. 
The final assignment is the assignment given by $\sg$ for machines in $[|A_1|]$ together 
with this postprocessed $\hsg$ assignment, for machines in $\mcset$.
\end{enumerate}

\subsubsection{Analysis} \label{ident-analysis}
In the sequel, we assume that we have found, by enumeration, the correct information in
steps~\ref{ptas-start}--\ref{ptas-end}. 
We show that the above algorithm is a PTAS.
%an efficient PTAS, that is, for any $\ve>0$, it
%runs in time $g(\ve)\cdot\poly(m,n)$, and returns a solution with reward $(1-\ve)\OPT$. 

\begin{theorem} \label{ident-ptasthm}
The algorithm described in
steps~\ref{ptas-start}--\ref{ptas-end} is a PTAS for \normsched on identical
machines. 
\end{theorem}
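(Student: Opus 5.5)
The proof has three parts, taken in order: (i) the enumeration is polynomial, (ii) the returned assignment satisfies the norm budget, and (iii) it loses only an $O(\ve)$ fraction of $\OPT$. Rescaling $\ve$ at the end then gives a PTAS.

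\textbf{Running time.} Step~\ref{goodset} costs $(n/\ve)^{O(1/\ve^2)}$, by Claim~\ref{polybnd}/Lemma~\ref{enumlem}, exactly as for \normknap. Guessing $A_1$ costs $O(n)$, since it is a prefix of $A$ sorted by size. The \sparse guess in step~\ref{spinst} costs $(1/\ve)^{\poly(1/\ve)}$. The configuration guess in step~\ref{denseasgn} costs $2^{O(C/\ve)}$, where $C$ depends only on $\ve$. Every remaining step (Lemma~\ref{extnlem}, which is an LP, \gap rounding, and the sorting in the shifting steps) is polynomial. Taking the best feasible solution over all guesses, and checking feasibility with the oracle, then suffices, provided I show that the \emph{correct} guesses yield a good feasible solution.

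\textbf{Feasibility.} I will argue that the final load vector is dominated coordinatewise, up to a permutation, by $L^*$, and that $f(L^*)\le B$.
\begin{itemize}
\item Sparse case. The guessed assignment of $J_1$ agrees with $\tsg$, so $\tsg$ restricted to $J_2$ is feasible for \eqref{extncp}. Hence $f(L^*)\le f(\tlvec)\le B$, and after removing the $j^*_i$ each machine has load at most $L^*_i$.
\item Dense case. The key claim is that the configuration-based assignment of $A''$ is ``no worse'' than $\tsg$'s. Since $\ceil{\num_{\cfg}}\le N_{\cfg}$, I can permute the machines of $\mcset$ so that every machine receiving configuration $\cfg$ corresponds to a distinct $\tsg$-machine using $\cfg$. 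Each machine then has $A''$-load (in the rounded-up sizes $\tp$, hence also in the original sizes) at most the $A''$-load that $\tsg$ places there. Now I compare \eqref{extncp} with preloads $\preload$ (ours) against \eqref{extncp} with preloads $\preload'\ge\preload$ (from $\tsg$). The latter is feasible with value $\le B$, because the $A^{\sml}$-part of $\tsg$ is a feasible fractional assignment. By monotonicity of the closed form in Lemma~\ref{extnlem}, or directly because a feasible $x$ for $\preload'$ is feasible for $\preload$ with smaller loads, we get $f(L^*)\le B$.
\item Extra machines. The jobs of $A^{\lrg}$ are reverted to their original sizes $p_j\le\tp_j$, which only decreases loads. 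Discarding machines only zeroes coordinates. Each retained extra machine holds either a single job of $A_2$ or load at most $\tlvavg/2\le\tlvmin$, and it replaces some discarded machine of $\mcset$. So I must check that swapping a machine's load for one no larger than some coordinate of $L^*$ keeps $f\le B$. For the extra machines holding $j^*_i$-packings this holds because $L^*_i\ge z^*\ge\tlvmin$-like bounds; concretely, I need $\min_{i\in\mcset}L^*_i\ge\tlvavg/2$.
\end{itemize}

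\textbf{Main obstacle.} I expect the single-job extra machines from steps~\ref{sizesparse}--\ref{rewdsparse} to be the hardest part. Such a job may be large, with $p_j$ up to $2\tlvavg$, or even a group-1 job that is larger than other loads. The plan is a charging argument. The group-1 jobs of the size shifting are the $\ve^2|A^{\lrg}|$ largest jobs. In $\tsg$ they sit on distinct-ish machines whose loads are at least their sizes, and the retained machine set is the top-$\mcnum$ by reward. So I will instead retain machines by an exchange argument: the final multiset of loads is dominated by a permutation of $L^*$ plus the $\tsg$-loads of machines freed by those jobs. I would first try to show directly, via Claim~\ref{sortmap}-style matching, that the multiset of retained loads is dominated by that of $(\max\{\preload_i,z^*\})_i$.

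\textbf{Reward.} We have $\rewd(A)\ge(1-\ve)^2\OPT$. The sparse loss is at most $\ve\rewd(A_2)$. In the dense case, the configurations recover $\trewd(A'')-\ve\OPT$. Claim~\ref{rwdsparse} costs a further $\ve$. The small jobs are all kept before pruning. Finally, retaining the best $\mcnum$ of at most $(1+7\ve)\mcnum$ machines keeps a $1/(1+7\ve)$ fraction. Altogether the reward is $(1-O(\ve))\OPT$.
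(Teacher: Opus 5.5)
Your running-time, sparse-case and reward arguments match the paper, and so does your dense-case bound $f(L^*)\le B$, obtained by comparing preloads against $\tsg$. The gap is the extra machines in the dense case, and your plan for them rests on two false statements.

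First, you say you need $\min_{i\in\mcset}L^*_i\ge\tlvavg/2$. This does not hold in general. The configuration step places only $\ceil{\num_{\cfg}}\le N_{\cfg}$ copies of each configuration, and the group-1 and low-reward-group jobs are moved off $\mcset$. So the preloads $\preload_i$ can be far below $\tsg$'s loads, and $z^*$ can fall below $\tlvmin$. For example, if $A^{\sml}=\emptyset$, some machines of $\mcset$ may have $L^*_i=0$.

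Second, your ``main obstacle'' paragraph assumes a single-job extra machine may carry a job of size up to $2\tlvavg$. That is wrong. Every such job lies in $A_2$, and by the definition of $A_1$ every job of $A_2$ has $p_j\le\tlvmin$. Hence every extra machine has load at most $\tlvmin$, whether it holds one job of $A^{\lrg}$ or a packing of the $j^*_i$'s within $\tlvavg/2\le\tlvmin$. No charging or exchange argument is needed.

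The missing idea is to dominate by the vector $\bL$ instead of $L^*$. Set $\bL_i=\max\{L^*_i,\tlvmin\}$ for $i\in\mcset$ and $\bL_i=L^*_i$ otherwise. Show $f(\bL)\le B$ by splitting on $z^*$:
\begin{itemize}
\item If $z^*\ge\tlvmin$, then $\bL=L^*$, because $L^*_i\ge z^*$ on $\mcset$.
\item If $z^*<\tlvmin$, then $\bL_i=\max\{\preload_i,\tlvmin\}\le\tlvec_i$, after permuting your machines to match $\tsg$'s. This uses $\preload_i\le\tlvec_i$ (the configuration assignment is consistent with $\tsg$, and restoring original sizes only lowers loads) and $\tlvmin\le\tlvec_i$. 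So $f(\bL)\le f(\tlvec)\le B$.
\end{itemize}
Since exactly $\mcnum$ machines are retained, map each retained regular machine $i$ to $\bL_i$. Map each retained extra machine to $\bL_{i'}$ for a distinct discarded regular machine $i'$. This shows the final sorted load vector is at most $\bL^{\down}$, which gives feasibility. This is the argument of Lemma~\ref{densefeas}.
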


We defer the proofs of Lemmas~\ref{enumlem} and~\ref{extnlem} to
Appendix~\ref{append-relmc}, and begin by proving Claim~\ref{tsgclaim}.
%We begin by proving Lemma~\ref{extnlem} and Claim~\ref{tsgclaim}. 
The following well-known and easy-to-see fact %(see, e.g.,~\cite{ChakrabartyS18}) 
will be useful; we prove this in Appendix~\ref{append-relmc} as well.  

\begin{claim} \label{schur}
Let $\gnorm:\R^m\mapsto\Rp$ be a monotone, symmetric norm. 
Let $v\in\Rp^m$, and 
$i,i'\in[m]$ be such that $v_{i}<v_{i'}$ and $0<\kp<v_{i'}-v_{i}$. Let $u$ be
the vector where $u_\ell=v_\ell$ for all $\ell\in[m]-\{i,i'\}$, $u_{i}\leq v_{i}+\kp$,
$u_{i'}=v_{i'}-\kp$. Then, $u^{\down}$ is lexicographically smaller than $v^{\down}$, and
$\gnorm(u)\leq\gnorm(v)$.
\end{claim}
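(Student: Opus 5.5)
We plan to prove the two assertions of Claim~\ref{schur} separately, the lexicographic one by tracking how sorting changes and the norm inequality by writing $u$ as a convex combination of permutations of $v$, after a harmless reduction.

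\textbf{Reduction to the equality case.} Let $w$ agree with $u$ except that $w_i=v_i+\kp$. Then $u\leq w$ coordinatewise, with all entries nonnegative. For the norm bound, monotonicity gives $\gnorm(u)\leq\gnorm(w)$, so it suffices to show $\gnorm(w)\leq\gnorm(v)$. For the lexicographic bound, the fact that $u\leq w$ implies $u^{\down}\leq w^{\down}$ coordinatewise. So if $w^{\down}$ is lexicographically smaller than $v^{\down}$, the first index where they differ has $u^{\down}_k\leq w^{\down}_k<v^{\down}_k$, and $u^{\down}$ agrees with or lies below $v^{\down}$ before that index. This shows $u^{\down}\leq_{\mathrm{lex}} w^{\down}<_{\mathrm{lex}}v^{\down}$: if $u^{\down}$ first deviates from $v^{\down}$ earlier, it deviates downward.

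\textbf{Norm part.} Set $\lambda:=\kp/(v_{i'}-v_i)\in(0,1)$, and let $v'$ be $v$ with coordinates $i$ and $i'$ swapped. Then $w=(1-\lambda)v+\lambda v'$, because
\[
w_{i'}=v_{i'}-\lambda(v_{i'}-v_i),\qquad w_i=v_i+\lambda(v_{i'}-v_i).
\]
By the triangle inequality and homogeneity, together with symmetry ($\gnorm(v')=\gnorm(v)$), we get $\gnorm(w)\leq(1-\lambda)\gnorm(v)+\lambda\gnorm(v')=\gnorm(v)$.

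\textbf{Lexicographic part.} Here $w$ is obtained from $v$ by lowering the value $a:=v_{i'}$ to $a-\kp$ and raising the value $b:=v_i$ to $b+\kp$, where $b<b+\kp<a-\kp<a$ by the hypothesis $\kp<v_{i'}-v_i$. This is also where we expect the only care to be needed, namely in how ties are handled in the sorted order. Let $k$ be the first position in $v^{\down}$ holding a copy of the value $a$. Positions before $k$ carry values strictly greater than $a$, and they are unchanged in $w^{\down}$, since the multiset of values $>a$ is the same in $v$ and $w$.

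At position $k$ there are two cases. If $v$ had another copy of $a$, then $w^{\down}_k=a=v^{\down}_k$ at that position, and we must continue further. In general, among the entries $\geq a$, $w$ has exactly one fewer copy of $a$ and otherwise the same multiset, while the two modified values are both strictly below $a$. Hence $w^{\down}$ agrees with $v^{\down}$ through all entries above $a$ and through one fewer copy of $a$. At the next position, $v^{\down}$ still has the value $a$, whereas $w^{\down}$ has the largest remaining value, which is strictly less than $a$. This is the largest of $a-\kp$, $b+\kp$, and the entries of $v$ below $a$, and each of these is strictly less than $a$. So the first difference is strict and downward, which gives $w^{\down}<_{\mathrm{lex}}v^{\down}$, as required.
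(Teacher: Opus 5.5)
Your proof is correct, and it reaches the norm inequality by a different route from the paper's.

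\textbf{Norm inequality.} The paper works with $u$ directly. It shows $\mathsf{Top}_j(u)\le\mathsf{Top}_j(v)$ for every $j$ by a case analysis on which of $i,i'$ lie among the $j$ largest coordinates of $u$. It then invokes the majorization fact (Theorem~\ref{normprops}(b)) that $\mathsf{Top}$-domination implies domination under every monotone, symmetric norm. You instead first pass to $w$ with $w_i=v_i+\kappa$, and then write $w=(1-\lambda)v+\lambda v'$ with $v'$ a transposition of $v$. Your route needs only the triangle inequality, homogeneity, symmetry and monotonicity. This is the elementary transfer step that underlies that majorization theorem, so it makes the claim self-contained. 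The paper's route instead fits its general $\mathsf{Top}_\ell$ machinery and absorbs the slack $u_i<v_i+\kappa$ without a separate reduction.

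\textbf{Lexicographic part.} Your multiset bookkeeping is more careful about ties than the paper's argument. The paper's choice $\ell=\pi^{-1}(i')$ tacitly requires $\pi$ to list $i'$ after every other coordinate equal to $v_{i'}$.

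\textbf{Minor points.} The chain $b<b+\kappa<a-\kappa<a$ does not follow from $\kappa<a-b$; it would need $\kappa<(a-b)/2$ (take $b=0$, $a=1$, $\kappa=0.8$). This is harmless, since all you actually use is $\max\{a-\kappa,\,b+\kappa\}<a$, which does hold.

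Your monotonicity step also uses that $u$ is nonnegative, which is not literally a hypothesis. It is implicit in the claim, though: the norm inequality can fail if $u_i$ is very negative, and $u^{\downarrow}$ is only defined for nonnegative vectors. The paper's $\mathsf{Top}$-norm argument relies on it equally.
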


\begin{proofof}{Claim~\ref{tsgclaim}}
Recall that $\tlvec=\lvec{\tsg}$ and $\tlvec^{\down}$ is lexicographically smallest
among the sorted-load vectors of all feasible assignments of jobs in $A$ to the $m$
machines.  

For part (a), consider any $j\in A_1$. Let $i=\sg(j)$. Suppose $i$ is assigned some job
other than $j$ by $\tsg$. Let $j'$ be some such job with $p_{j'}\leq p_j$. 
Note that $\tlvec_i>\tlvmin$ and $\tlvec_i-\tlvmin\geq p_j+p_{j'}-\tlvmin>p_{j'}$.
Consider the load vector $L$ that results when we move job $j'$ to a machine with load
$\tlvmin$. By Claim~\ref{schur}, we obtain that $f(L)\leq f(\tlvec)$ and $L^{\down}$ is
lexicographically smaller than $\tlvec^{\down}$, which gives a contradiction.

Part (b) follows from a similar argument. Suppose $\tlvmax>2\cdot\tlvmin$. Let
$i,i'\in\mcset$ be such that $\tlvmax=\tlvec_i$ and $\tlvmin-=\tlvec_{i'}$. Since
$i\in\mcset$, it must be that $i$ is assigned at least two jobs, otherwise the job
assigned to it would lie in $A_1$. Consider any job $j$ with $\tsg(j)=i$. We have
$\tlvec_i-p_j>\tlvmin$ since $\tlvec_i>2\cdot\tlvmin$ and $p_j\leq\tlvmin$. So, again, if
we transfer job $j$ from $i$ to $i'$, we still obtain a feasible assignment whose sorted
load vector is lexicographically smaller than $\tlvec^{\down}$, yielding a contradiction.
\end{proofof}

When the instance is \sparse, the proof of the performance guarantee is
fairly straightforward, as alluded to when describing the algorithm.

\begin{lemma} \label{spresult}
If the instance is \sparse, then the assignment returned in step~\ref{sp-end} is feasible
and obtains reward at least $(1-\ve)^3\OPT$.
\end{lemma}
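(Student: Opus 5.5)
We split the proof into feasibility and reward.

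\textbf{Feasibility.} We would argue that the restriction of $\tsg$ to $J_1$ coincides with the guessed assignment $\sg$. This holds because we enumerated the assignments of $A_1$, which is pinned to the first $|A_1|$ machines, and of the $\mcnum/\ve$ largest-reward jobs of $A_2$, and we take the correct guess. Then $x_{ij}=\bon_{\tsg(j)=i}$ for $j\in J_2$, $i\in\mcset$, is feasible for \eqref{extncp}, and its objective value is $f(\tlvec)\leq B$. By Lemma~\ref{extnlem}, $x^*$ is optimal, so $f(L^*)\leq f(\tlvec)\leq B$.

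Next we use the standard \gap-rounding guarantee of~\cite{ShmoysT93}. Every job of $J_2$ is assigned integrally to some $i\in\mcset$ with $x^*_{ij}>0$. For each $i\in\mcset$, removing at most one job $j^*_i\in\hJ_i\cap J_2$ brings the load of $i$ down to at most $L^*_i$. Machines outside $\mcset$ carry only their $\sg$-load $\preload_i$, which equals $L^*_i$ for $i\notin\mcset$ since $x^*$ is supported on $\mcset$. So after discarding $\{j^*_i\}_{i\in\mcset}$, the load vector is coordinate-wise at most $L^*$. Monotonicity of $f$ then gives feasibility.

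\textbf{Reward.} The returned set is $A$ minus at most $\mcnum$ jobs of $J_2$. Each job of $J_2$ has reward at most $\frac{\ve}{\mcnum}\rewd(A_2)$, because every one of the $\mcnum/\ve$ jobs in $J_1-A_1$ has reward at least that of any job in $J_2$, and $\rewd(J_1-A_1)\leq\rewd(A_2)$. Hence the loss is at most $\ve\,\rewd(A_2)\leq\ve\,\rewd(A)$. Combined with $\rewd(A)\geq(1-\ve)^2\OPT$ from step~\ref{goodset}, the reward is at least $(1-\ve)^3\OPT$.

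\textbf{Subtle point.} If $J_2=\es$, the assignment $\sg$ is already consistent with $\tsg$ and nothing is lost. The main thing needing care is that \gap rounding preserves the machine-index set: jobs are placed only on $\mcset$, and the per-machine overflow is a single job with $x^*_{ij}>0$. Both follow directly from the bipartite-matching-based rounding of~\cite{ShmoysT93} applied to $x^*$ with capacities $(z^*-\preload_i)^+$.
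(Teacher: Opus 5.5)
Your proposal is correct and follows the paper's proof. Feasibility comes from $f(L^*)\le f(\tlvec)\le B$ (as $\tsg$ induces a feasible solution to \eqref{extncp}), the \gap-rounding guarantee, and monotonicity of $f$; the reward bound comes from discarding at most $\mcnum$ jobs of $J_2$, each of reward at most $\frac{\ve}{\mcnum}\rewd(A_2)$. You also spell out details the paper leaves implicit, namely the concrete solution to \eqref{extncp} induced by $\tsg$ and the fact that machines outside $\mcset$ keep load $\preload_i$.
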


\begin{proof}
We have $\rewd(A)\geq(1-\ve)^2\OPT$, as argued in Appendix~\ref{ident-biptas}.
As noted earlier, if $(x^*,L^*)$ is the solution obtained in step~\ref{sp-frac}, we have
$f(L^*)\leq B$, since $\tsg$ yields one potential feasible solution to \eqref{extncp}. The
load vector of the final assignment is coordinate-wise at most $L^*$ by design, so
feasibility follows. Recall that in steps~\ref{sp-start}--\ref{sp-end}, $J_1$ consists of
$A_1$ and the $\frac{\mcnum}{\ve}$ largest-reward jobs in $A_2$, 
$J_2=A_2-J_1$, and $J_1\cup J_2=A=A_1\cup A_2$. 
Also, $\rewd_j\leq\frac{\ve}{\mcnum}\rewd(A_2)$ for all $j\in J_2$.
The reward obtained is $\rewd(A)-\mcnum\cdot\max_{j\in J_2}\rewd_j$ since we discard at
most $\mcnum$ jobs from $J_2$ in step~\ref{sp-end}. This is at least 
$\rewd(A)-\ve\cdot\rewd(A_2)\geq(1-\ve)\rewd(A)\geq(1-\ve)^3\OPT$.
\end{proof}

When the instance is \dense, the analysis is somewhat more involved. Lemma~\ref{denserewd}
bounds the reward obtained by the final assignment, and Lemma~\ref{densefeas} shows that
the assignment returned is feasible. We first prove Claim~\ref{rwdsparse}.

\begin{proofof}{Claim~\ref{rwdsparse}}
If $\ve^3|A'|<1$, then the rewards do not change, so the claim trivially holds. So suppose
otherwise. Let $B_r$ denote the jobs in group $r$, for $r=1,\ldots,k'$, where recall that
$k'=1+\frac{1}{\ve^3}$, groups $B_1,\ldots,B_{k'-1}$ contain $\floor{\ve^3|A'|}$ jobs and
$B_{k'}|<\frac{1}{\ve^3}$. 

We have $\trewd_j=\rewd_j$ for all $j\in\bigl(\bigcup_{r=1}^{1/\ve}B_r\bigr)\cup B_{k'}$. 
For any $r\in[k'-1]$, and any jobs $j\in B_r$, $j'\in B_{r+1}$, we have
$\rewd_j\geq\rewd_{j'}$. 
So for $r\in\bigl\{\frac{1}{\ve}+1,\ldots,k'-2\bigr\}$, since 
%$\rewd_j\geq\red_{j'}$ for any $j\in B_r$, $j'\in B_{r+1}$, and
$|B_r|=|B_{r+1}|$, we have $\trewd(B_r)\geq\rewd(B_{r+1})$.
Also, 
$\rewd\bigl(B_{1/\ve+1}\bigr)
\leq\bigl|B_{1/\ve+1}\bigr|\cdot\frac{\sum_{r=1}^{1/\ve}\rewd(B_r)}{\sum_{r=1}^{1/\ve}|B_r|}
\leq\ve\cdot\rewd(A')$.
Putting things together, we have
\begin{equation*}
\begin{split}
\trewd(A')=\sum_{r=1}^{k'}\trewd(B_r)
& \geq\sum_{r=1}^{1/\ve}\rewd(B_r)+\sum_{r=1/\ve+1}^{k'-2}\rewd(B_{r+1})+\rewd(B_{k'})
\\ & =\rewd(A')-\rewd(B_{1/\ve+1})\geq(1-\ve)\rewd(A'). \qedhere
\end{split}
\end{equation*}
\end{proofof}

\begin{lemma} \label{denserewd}
Suppose the instance is \dense. Then the reward obtained by the final assignment is at
least $\bigl(1-O(\ve)\bigr)\OPT$. 
\end{lemma}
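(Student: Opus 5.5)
Let me track the reward losses through steps \ref{goodset}--\ref{ptas-end}, comparing against $\rewd(A)\geq(1-\ve)^2\OPT$. The plan has two parts. First, I show that before the final pruning step, the collection of regular machines in $\mcset$ together with the extra machines holds $A_2$-jobs of total reward at least $\rewd(A_2)-O(\ve)\OPT$. Second, I show that keeping only the $\mcnum$ largest-reward machines out of at most $(1+7\ve)\mcnum$ machines loses at most a $\frac{7\ve}{1+7\ve}$ fraction of that reward. Jobs in $A_1$ are never dropped. Combining the two parts gives reward at least $(1-O(\ve))\rewd(A)\geq(1-O(\ve))\OPT$.

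For the first part, split $A_2=A^{\lrg}\cup A^{\sml}$. All small jobs end up placed somewhere. The GAP rounding in step \ref{smlasgn} assigns every job of $A^{\sml}$ integrally, and the removed jobs $j^*_i$ are moved to extra machines rather than discarded. So $A^{\sml}$ contributes its full reward $\rewd(A^{\sml})$.

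The large jobs in $A^{\lrg}-A''$ each sit on their own extra machine, so they are also fully retained. For $A''$, step \ref{lrgasgn} places $\ceil{\num_{\cfg}}\geq\num_{\cfg}$ copies of each configuration $\cfg$. Measured in modified rewards, this yields at least $\sum_{\cfg}N_{\cfg}\trewd(\cfg)-\ve\optval=\trewd(A'')-\ve\optval$, because every job of $A''$ is used by $\tsg$ on some machine of $\mcset$. Since $\rewd_j\geq\trewd_j$, the actual reward is at least this amount. I also need $\trewd(A'')+\rewd(A^{\lrg}-A'')\geq\trewd(A')+\rewd(A^{\lrg}-A')\geq(1-\ve)\rewd(A^{\lrg})$. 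For this I use Claim~\ref{rwdsparse}, together with $\trewd=\rewd$ on the first $1/\ve$ reward groups and on group~1 of the size sparsification. Summing, the reward placed is at least $\rewd(A_2)-\ve\rewd(A_2)-\ve\optval$.

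The main point to verify, which I expect to be the only delicate step, is that step \ref{lrgasgn} does not declare failure under correct guesses. The condition $\sum_{\cfg}\ceil{\num_{\cfg}}\leq\mcnum$ may fail because of the ceilings. I would handle this by using floors where needed. Alternatively, I would argue that with $\num_{\cfg}$ computed from a multiple of $\Dt$, any configuration with $\num_{\cfg}\notin\Z$ can be rounded down at an additional reward loss of at most $\Dt$ per configuration. This costs $\ve\optval$ in total and guarantees $\sum_{\cfg}\num_{\cfg}\leq\sum_{\cfg}N_{\cfg}\leq\mcnum$, and the job-type counts are then also dominated by those used by $\tsg$.

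For the pruning step, the machine count is $\mcnum$ regular machines plus at most $2\ve\mcnum$ extra machines from steps \ref{sizesparse}--\ref{rewdsparse} and at most $5\ve\mcnum$ from step \ref{smlasgn}. By averaging, the top $\mcnum$ machines hold at least a $\frac{1}{1+7\ve}$ fraction of the placed reward. Finally, $\optval\leq\OPT$ and $\rewd(A)\geq(1-\ve)^2\OPT$ give total reward at least $(1-O(\ve))\OPT$.
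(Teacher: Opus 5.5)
Your accounting matches the paper's proof. $A^{\sml}$ and $A^{\lrg}-A''$ are kept in full on regular or extra machines. The configuration guess yields $\trewd(A'')-\ve\optval$, and Claim~\ref{rwdsparse} turns $\trewd(A')$ back into $(1-\ve)\rewd(A')$. Keeping the $\mcnum$ best of at most $(1+7\ve)\mcnum$ machines then costs a factor $\frac{1}{1+7\ve}$.

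The paragraph you flag as the delicate step, however, rests on a false premise, and the fix you propose is wrong.
\begin{itemize}
\item There is no failure to fix. Each $N_{\cfg}$ is an integer and $\num_{\cfg}\le N_{\cfg}$, so $\ceil{\num_{\cfg}}\le N_{\cfg}$. Hence $\sum_{\cfg}\ceil{\num_{\cfg}}\le\sum_{\cfg}N_{\cfg}\le\mcnum$. For each job type, the number of jobs needed by $\ceil{\num_{\cfg}}$ copies of every $\cfg$ is at most the number needed by $N_{\cfg}$ copies, which is exactly what $\tsg$ places from $A''$. So step~\ref{lrgasgn} cannot fail under correct guesses, and the ceilings should be kept.
\item Rounding down does not cost ``at most $\Dt$ per configuration.'' The loss is $(\num_{\cfg}-\floor{\num_{\cfg}})\,\trewd(\cfg)$, which can be close to $\trewd(\cfg)$, far larger than $\Dt=\ve\optval/|\cfgset|$. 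For example, if $\tsg$ uses every configuration at most once, the canonical guess gives $\num_{\cfg}=\Dt\floor{\trewd(\cfg)/\Dt}/\trewd(\cfg)<1$ unless $\trewd(\cfg)$ is a multiple of $\Dt$. Flooring would then discard essentially all of $A''$.
\end{itemize}
Drop that paragraph and keep the ceilings; the rest of your argument is complete.
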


\begin{proof}
First, by Lemma~\ref{enumlem}, the total reward obtained from the assignment of
type configurations to machines in $\mcset$ computed in step~\ref{lrgasgn} is at least
\[
\sum_{\cfg\in\cfgset}\num_{\cfg}\cdot\trewd(\cfg)
\geq\sum_{\cfg\in\cfgset}N_{\cfg}\cdot\trewd(\cfg)-\ve\optval
\geq\trewd(A'')-\ve\OPT
\]
where the final inequality is because $\sum_{\cfg\in\cfgset}N_{\cfg}\cdot\trewd(\cfg)=\trewd(A'')$, 
by definition.  
So the total reward obtained from the assignment $\hsg$ computed in step~\ref{smlasgn} from
the machines in $\mcset$ and the extra machines 
%before postprocessing it to get rid of extra machines, 
is at least 
$\rewd\bigl(A^{\lrg}-A''\bigr)+\bigl(\trewd(A'')-\ve\OPT\bigr)+\rewd(A^{\sml})$, which is
at least
\begin{equation*}
\begin{split}
\rewd\bigl(A^{\lrg}-A'\bigr)&+\trewd(A'-A'')+\trewd(A'')+\rewd\bigl(A^{\sml}\bigr)-\ve\OPT
\\
& \geq \rewd\bigl(A^{\lrg}-A'\bigr)+(1-\ve)\rewd(A')+\rewd\bigl(A^{\sml}\bigr)-\ve\OPT
\geq (1-\ve)\rewd(A_2)-\ve\OPT.
\end{split}
\end{equation*}
where the first inequality is due to Claim~\ref{rwdsparse} and since $\trewd_j=\rewd_j$
for all $j\in A^{\lrg}-A''$.
So after postprocessing $\hsg$, the total reward obtained is at least 
\begin{equation*}
\begin{split}
\rewd(A_1)+\tfrac{1}{1+7\ve}\cdot\Bigl((1-\ve)\rewd(A_2)-\ve\OPT\Bigr)
& \geq(1-7\ve)(1-\ve)\rewd(A)-\ve\OPT \\
& \geq\Bigl((1-7\ve)(1-\ve)^3-\ve\Bigr)\OPT\geq(1-11\ve)\OPT. \qedhere
\end{split}
\end{equation*}
\end{proof}

\begin{lemma} \label{densefeas}
Suppose the instance is \dense. Then the final assignment computed in step~\ref{ptas-end}
is feasible.
\end{lemma}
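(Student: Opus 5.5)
The plan is to exhibit an explicit fractional assignment $\bar x$ of $A^{\sml}$ to $\mcset$ whose load vector, together with the load from the guessed configurations, is dominated by a feasible load vector derived from $\tsg$. Lemma~\ref{extnlem} then gives $f(L^*)\leq B$, and the remaining work is to control the rounding and extra-machine steps.

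First I will handle the configurations. The assignment $\sg$ of $A_1\cup A''$ built in step~\ref{lrgasgn} uses at most $N_{\cfg}$ machines per configuration $\cfg$. Since machines are identical, I permute $\tsg$'s machines in $\mcset$ so that every machine that $\sg$ gives configuration $\cfg$ is also a machine that $\tsg$ gives configuration $\cfg$. The remaining $\mcset$ machines receive no $A''$-jobs under $\sg$, so there $\sg$'s $A''$-load is $0\leq\tsg$'s.

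Now let $\preload_i$ denote the $\sg$-load on machine $i$ (this is $p(A_1$-job$)$ for $i\notin\mcset$). The $A''$-jobs that $\tsg$ places but $\sg$ does not, together with $A^{\sml}$ and $A'\!\setminus\!A''$, are exactly what $\tsg$ additionally carries. Hence the following is a feasible fractional solution to \eqref{extncp} with $J_2=A^{\sml}$: assign $A^{\sml}$ exactly as $\tsg$ does. Its load on machine $i$ is $\preload_i+p(\tsg^{-1}(i)\cap A^{\sml})\leq\tlvec_i$, using $\tp\geq p$ and the modified-size feasibility from step~\ref{sizesparse}. By monotonicity of $f$, the CP optimum $f(L^*)$ is at most $f(\tlvec)\leq B$.

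Second, I will handle rounding. After \gap rounding and moving each $j^*_i$ to an extra machine, every machine in $\mcset$ has load at most $L^*_i$. By Lemma~\ref{extnlem}, $L^*_i=\max\{\preload_i,z^*\}$ on $\mcset$, and the loads on $[|A_1|]$ are unchanged. Each extra machine has load at most $\tlvavg/2$ (small jobs) or a single job from $A^{\lrg}\setminus A''$ with its original size.

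Third, after keeping the $\mcnum$ largest-reward machines among the regular and extra ones, I need the final load vector to be dominated by $L^*$ up to permutation; this is the main obstacle. The plan is to compare with $\tlvmin\geq\tlvavg/2$.

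For the small-job extra machines, I will show $z^*\geq\tlvavg/2$. The regular machines in $\mcset$ must absorb $p(A^{\sml})$ plus the configuration loads, and $z^*$ is the water level. If $z^*<\tlvavg/2$, then every machine has load at most $\max\{\preload_i,z^*\}$. I expect to argue instead via the ordering: any retained extra machine replaces a discarded machine. I would pair each retained extra machine with a discarded regular machine whose $L^*$-load is at least $\min_i L^*_i\geq z^*$, and prove $z^*\geq\tlvmin/\!\!$ suitable. If that fails, I would use that $L^*\leq\tlvec$ coordinatewise and that the average of $L^*$ over $\mcset$ equals roughly $\tlvavg$ minus the moved loads.

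For the group-1 large jobs of step~\ref{sizesparse}, each is at most the size of a job in $\tsg$'s assignment, and $\tsg$ puts such a job on a machine of load at least that size. I will dominate these by reserving the coordinates of $\tlvec$ corresponding to the machines holding the original largest group, using Claim~\ref{jobdom}-style matching.

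Combining these, the final sorted load vector is at most $\tlvec^{\down}$ coordinatewise, so $f\leq B$.
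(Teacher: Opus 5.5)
Your first two steps are sound. Matching $\sg$'s configuration machines to machines on which $\tsg$ uses the same configuration shows that $\tsg$'s placement of $A^{\sml}$ is feasible for \eqref{extncp}, so $f(L^*)\le B$. After removing the $j^*_i$, every regular machine $i\in\mcset$ has load at most $L^*_i=\max\{\preload_i,z^*\}$.

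\textbf{The gap is the third step, which is where the lemma actually lives, and the claims you propose to use there fail in general.}

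\begin{itemize}
\item The bound $z^*\ge\tlvavg/2$ fails because the configuration guesses only guarantee $\num_{\cfg}\le N_{\cfg}$. If the configuration $\tsg$ uses on half of $\mcset$ carries little reward, $\sg$ may leave all those machines empty. With $p(A^{\sml})$ small, the water level $z^*$ is then far below $\tlvavg/2$, while a retained small-job extra machine can carry load up to $\tlvavg/2$. So the final vector need not be dominated by $L^*$.
\item Your fallback ``$L^*\le\tlvec$ coordinatewise'' also fails, because water-filling lifts the lightly loaded machines. Suppose every machine of $\mcset$ carries the same large-job configuration under $\tsg$, which $\sg$ reproduces. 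Add $a$ small jobs of size $\eta$ on half the machines and $a+1$ on the other half. Then $L^*_i=z^*$ lies strictly between the two values of $\tlvec_i$ on all of $\mcset$, so $(L^*)^{\down}\not\le\tlvec^{\down}$.
\item Consequently, your concluding claim that the final sorted load vector is at most $\tlvec^{\down}$ does not follow from your steps. The bound $f(L^*)\le B$ comes from optimality in \eqref{extncp}, not from coordinatewise domination by $\tlvec$.
\end{itemize}
Neither $L^*$ alone nor $\tlvec$ alone dominates the final load vector.

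\textbf{The missing idea is a hybrid comparison vector.} Set $\bL_i=\max\{L^*_i,\tlvmin\}$ for $i\in\mcset$ and $\bL_i=L^*_i$ otherwise. Prove $f(\bL)\le B$ by cases on $z^*$:
\begin{itemize}
\item If $z^*\ge\tlvmin$, then $\bL=L^*$.
\item If $z^*<\tlvmin$, then $\bL_i=\max\{\preload_i,\tlvmin\}\le\tlvec_i$ for $i\in\mcset$. This uses $\preload_i\le\tlvec_i$, which is exactly your matching, together with $\tlvmin\le\tlvec_i$. Hence $f(\bL)\le f(\tlvec)\le B$.
\end{itemize}

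Next, every extra machine has load at most $\tlvmin$:
\begin{itemize}
\item A single job of $A_2$ has $p_j\le\tlvmin$ by the definition of $A_1$. So no ``reserving coordinates of $\tlvec$'' argument is needed for the group-$1$ jobs.
\item Small-job extra machines are packed within $\tlvavg/2\le\tlvmin$.
\end{itemize}

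Each regular machine has load at most $L^*_i\le\bL_i$, and $\bL_i\ge\tlvmin$ for every $i\in\mcset$. So any $\mcnum$ retained machines can be charged to distinct coordinates $\bL_i$, $i\in\mcset$: regular machines to their own coordinate, extra machines to the coordinates of discarded regular machines. Machines in $[|A_1|]$ keep load $L^*_i=\bL_i$. Thus the final sorted load vector is at most $\bL^{\down}$, which gives feasibility.
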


\begin{proof}
Recall that in step~\ref{smlasgn}, we invoke Lemma~\ref{extnlem} taking $J_1=A_1\cup A''$
and $J_2=A^{\sml}$. Since the assignment $\sg$ computed for $J_1$ is consistent with
$\tsg$, it follows that $\tsg$ yields a feasible solution to \eqref{extncp}, and so the
solution  
$(x^*,L^*)$ obtained from Lemma~\ref{extnlem} satisfies $f(L^*)\leq B$. Let $z^*$ be the
quantity computed in Lemma~\ref{extnlem}; %which determines $L^*$; 
so we have %$\sum_{i\in\mcset}\bigl(z^*-p(\sg^{-1}(i)\bigr)^+=p(J_2)$ and
$L^*_i=\max\bigl\{p\bigl(\sg^{-1}(i)\bigr),z^*\bigr\}$ for all $i\in[m]$.

Let $\bL$ be the load vector given by $\bL_i=\max\{L^*_i,\tlvmin\}$ for all
$i\in\mcset$, and $\bL_i=L^*_i$ for all other $i$. 
We first show that we also have $f(\bL)\leq B$. 
%and that $\hL^{\down}\leq\bL^{\down}$, which yields $f(\hL)\leq f(\bL)\leq B$. 
If $z^*\geq\tlvmin$, then $\bL=L^*$ since $L^*_i\geq z^*$ for all $i\in\mcset$, so this
holds. So suppose $z^*<\tlvmin$. We claim then that $\bL_i\leq\tlvec_i$ for all $i\in[m]$.
This certainly holds for all $i\in[|A_1|]$. Consider $i\in\mcset$. 
We have $\bL_i=\max\bigl\{p\bigl(\sg^{-1}(i)\bigr),\tlvmin\bigr\}$ and since $\sg$ is
consistent with $\tsg$, we have $p\bigl(\sg^{-1}(i)\bigr)\leq\tlvec_i$. It follows that
$\bL_i\leq\tlvec_i$.

Let $\hL$ be the load vector corresponding to the final assignment, under the $\{p_j\}$
job sizes. For every $i\in[|A_1|]$, we have $\hL_i=L^*_i=\bL_i$. For any ``regular'' machine
in $\mcset$, we ensure by design that the total load on it is at most $L^*_i$. Each extra
machine is assigned exactly one job in $A_2$, and so has load at most $\tlvmin$. Thus, for
any combination of $\mcnum$ machines chosen in step~\ref{smlasgn} while postprocessing
$\hsg$, comprising some regular machines and some extra machines, we can bound the load on
each machine by a distinct $\bL_i$ term. Therefore, $\hL^{\down}\leq\bL^{\down}$, and so
$f(\hL)\leq f(\bL)\leq B$.
\end{proof}

\begin{proofof}{Theorem~\ref{ident-ptasthm}}
The time required for enumeration in step~\ref{goodset} is
$\bigl(\frac{n}{\ve}\bigr)^{O(1/\ve)}$. 
In all other steps that involve some enumeration---steps~\ref{tguess}, \ref{sp-start},
\ref{denseasgn}---the time required for enumeration is bounded by $g(1/\ve)\cdot\poly(m,n)$, 
for some function $g$. So the running time is polynomially bounded for any fixed $\ve>0$.

Feasibility of the solution returned and the performance guarantee follow directly from
Lemma~\ref{spresult} for a \sparse instance, and Lemmas~\ref{denserewd}
and~\ref{densefeas} for a \dense instance.
\end{proofof}

\subsection{Related machines} \label{rel-ptas}
%\input{relatedptas}
%In this section we develop a PTAS for \normsched on related machines.
%We remark that we only use the monotonicity, symmetry, and convexity of $f:\Rp^m\to\Rp$
%in our proof.
\paragraph{An overview.}
The PTAS for \normsched on related machines builds upon the ideas underlying the PTAS for
identical machines, but also needs several new ingredients. At a high level, we eventually
group machines into $O\bigl(\frac{\log m}{\ve}\bigr)$ groups so that, roughly speaking, we
can treat each group as an identical-machines instance, 
%to eventually obtain a collection of $O(\log m)$
%identical-machine instances, 
and extend suitable ideas from the approach used in the PTAS for identical
machines to solve these instances. In contrast to much extant work on related
machines, as also non-uniform \mkp, this grouping of machines is 
{\em not based on machine speed}, but instead is determined by the 
{\em total work assigned to the machine}. 
Throughout this section, when we say ``work'' on a machine, we mean the total
processing time of jobs assigned to the machine. 
For an assignment $\sg:S\mapsto[m]$ of a set $S\sse J$ of jobs, the {\em work-vector}
$\wvec{\sg}$ resulting from $\sg$ is therefore given by 
$\wvec{\sg}_i:=\sum_{j\in S:\sg(j)=i}p_j$, for all $i\in[m]$. 
Note that the load of machine $i$ under $\sg$ is $\lvec{\sg}_i=\wvec{\sg}_i/\spd_i$.
%the load of a machine $i$ is therefore
%(work assigned to $i$)/$\spd_i$. %it divided by its speed. 
%As we shall see this grouping based on the machine work-vector offers several
%benefits. 
%One can argue that a work-vector sorted similarly as the speed

We first prove some structural properties about a near-optimal solution. It is easy to
argue that one may assume that the work-vector is sorted similarly as the speed vector
(Lemma~\ref{wksort}): 
given any assignment of jobs to machines, if we permute the assignment so that
%for any pair of machines $i,i'\in[m]$ $s_i>s_{i'}$ $i$ is assigned at least as much work
%as $i'$, then 
the work assigned to machines is non-decreasing 
%(more precisely, does not increase) 
with their speed, 
%a higher-speed machine is assigned at least as much work than a lower-speed machine, 
then this does not increase the norm of the load vector. 
%This is indeed one of the benefits of
%considering work-vectors, namely that from the work-vector one can easily determine the
%corresponding assignment of jobs to machines.
Sort the machines so that $\spd_1\geq\spd_2\geq\ldots\geq\spd_m$.
We create $\ngrp=O\bigl(\frac{\log m}{\ve}\bigr)$ machine groups, where the zeroth group,
called the {\em fast machines}, consists of the first $m_0=m_{\fast}=\frac{2}{\ve^3}$
machines, the first group consists of the next $m_1=\frac{1}{\ve^2}$ machines, and every
subsequent group $r>1$ consists (roughly) of the next $m_{r-1}/(1-\ve)$ machines. %so that
%\frac{1}{\ve^2(1-\ve)}$ 
%and so on, with 
%The sizes of groups $1,2,\ldots$ increase geometrically. 
We argue that there is a near-optimal solution $\tsg$ %assignment $\tsg:A\mapsto[m]$ 
whose work-vector is bounded by a nicely
{\em smoothed-out} vector $\wksmooth$ that: (1) induces a feasible load vector (i.e., the
load vector has norm at most $B$); and (2) except for the fast machines, assigns all
machines in the same group the same work (Lemma~\ref{strucprop}). 
%except for the fast machines, all machines in the same group
%any group $\ell\in[\numgrp]$ have the same work

We guess the set of jobs assigned by $\tsg$, %this near-optimal solution, 
proceeding exactly as in
step~\ref{goodset} of the PTAS for identical machines.
We handle the fast machines, and the remaining machines separately. Since there are only a
constant number of fast machines, after sparsifying (the guessed set of) jobs into 
$O(\log n)$ job types  
using the same shifting idea used for identical machines, we use enumeration ideas
to find an assignment of jobs to the fast machines consistent with $\tsg$, 
%the near-optimal solution, 
and obtaining almost all the reward accrued by (jobs assigned to) the fast
machines under $\tsg$. %the near-optimal solution.

For the remaining machines, we guess a non-increasing vector
$\wkestim\in\Rp^{[m]-[m_{\fast}]}$ that coordinate-wise estimates
$(\wksmooth_i)_{i\in[m]-[m_{\fast}]}$ within a $(1+\ve)$-factor.
%the coordinates of $\{\wksmooth_i\}_{i\in[m]-[m_{\fast}]}$ within a $(1+\ve)$-factor. 
We can think of $\wkestim_i$ as the (work) {\em capacity} of machine $i$. But (as with
identical machines) because these are only estimates, simply solving a non-uniform \mkp
instance with these capacities will not work: we may violate the norm budget, if $\wkestim$
overestimates $\wksmooth$, or get low reward, if $\wkestim$ underestimates $\wksmooth$.
%\footnote{If $\wkestim$ overestimates $\wksmooth$, the resulting
%assignment may violate the norm budget; if it underestimates $\wksmooth$, then 
%we may lose out on the reward.} 
So we need to proceed more carefully. We choose $\wkestim$ to be an overestimate of
$\wksmooth$. We call a maximal (consecutive) set of machines
having the same $\wkestim_i$ value a machine class. 
Importantly, because of our initial grouping of machines, 
{\em every machine class $I$ is \dense in that we have $|I|\geq\frac{1}{\ve^2}$.}
For a machine class $I$ with capacity $W$, 
we again use enumeration to guess the set of jobs with $p_j>\ve W$ assigned to
machines in $I$; we call such jobs ``large'' for machine class $I$. Once we know the set
of large jobs for class $I$ that are assigned to machines in $I$, we use the
configuration-enumeration approach in step~\ref{lrgasgn} of the PTAS for identical
machines to find the actual assignment of these large jobs to machines in $I$ that
is consistent with $\tsg$. %our near-optimal solution. 
Recall that this enumeration takes constant time for a single machine class, 
so since we have $O(\log m)$ machine classes, we can
obtain these assignments for {\em all} machine classes in polynomial time. As with
identical machines, this step may entail creating ``extra machines'' for a machine class
$I$; but since $|I|$ is sufficiently large, the number of such extra machines will be at
most $O(\ve)|I|$, and so we will be able to discard the extra machines at the end without
sacrificing the reward by much. 

Finally, for the remaining jobs, which are assigned as small jobs for a machine class, we
write an LP (see \eqref{extnlp}) to find a fractional assignment respecting the $\wkestim$
capacities, %this constraint, 
and use \gap rounding to round the LP solution. 
For a machine class with capacity $W$, from each machine in that class, we transfer
roughly $\ve\cdot W$ work due to small jobs assigned to that machine, to extra machines
for that class. We argue that this can be done while creating an additional $O(\ve)|I|$
extra machines, and so that any load vector formed
by taking, for each machine class $I$, any collection of $|I|$ machines from among the
(regular and extra) machines used for that class, is feasible. 
Since the number of extra machines for class $I$ is $O(\ve)|I|$, if we
take the $|I|$ highest-reward machines for class $I$, we only lose a
$\bigl(1-O(\ve)\bigr)$-factor in the reward. This yields our PTAS.

%In Section~\ref{section:structure}, we make structural observations about near-optimal
%solutions.  
%In particular, Claim~\ref{claim:sortedwork} shows that, without loss of generality, we may
%assume that in any optimal assignment, each machine is assigned at least as much total
%processing time as any slower machine. 
%Then, we show in Lemma~\ref{lemma:capacities} that there exists \david{...}
%In Section~\ref{related-detail} we give a precise description of the algorithm.

\paragraph{Preliminaries.}
Recall that we index the machines so that $\spd_1\geq\spd_2\geq\ldots\geq\spd_m>0$. 
%(We may sasume that $\spd_m>0$, as we can simply discard zero-speed machines.)
We assume that $\frac{1}{\ve}$ is an integer and $\ve\leq 0.25$.
Define $m_{\fast}=m_0=\frac{2}{\ve^3}$, $m_1=\frac{1}{\ve^2}$, and for $r>1$, define
$m_r=\floor{\frac{m_{r-1}}{1-\ve}}$. 
Let $\Mc_{\fast}=\Mc_0$ be the first $m_\fast$ machines, $\Mc_1$ be the next $m_1$
machines, $\Mc_2$ be the next $m_2$ machines, and so on, until we exhaust all machines.  
More precisely, let $\ngrp$ be the smallest index $r\geq 0$ such that
$\sum_{\ell=0}^r m_\ell\geq m$.
Then, for $r=0,\ldots,\ngrp-1$, define $\Mc_r$ to be the machines in 
$\bigl[\sum_{\ell=0}^{r}m_\ell\bigr]-\bigl[\sum_{\ell=0}^{r-1}m_\ell\bigr]=
\bigl\{\sum_{\ell=0}^{r-1}m_\ell+1,\sum_{\ell=0}^{r-1}m_\ell+2,\ldots,\sum_{\ell=0}^{r}m_\ell\bigr\}$;
let $\Mc_{\ngrp}=[m]-\bigl[\sum_{\ell=0}^{\ngrp-1}m_\ell\bigr]$ be the remaining
machines. Clearly, $|\Mc_r|=m_r$ for all $r=0,1,\ldots,\ngrp-1$, and 
$|\Mc_{\ngrp}|\leq m_{\ngrp}$. We also have $\ngrp=O\bigl(\frac{\log m}{\ve}\bigr)$.
%\footnote{For $r=2\ln_{1/(1-\ve)}m+1\leq\frac{\4\ln m}{\ve}+1$, we have 
%$m_r\geq\frac{1}{\ve^2}(1-\ve)^{r-1}-(r-1)\geq\frac{m^2}{\ve^2}-(r-1)\geq m$.
(Recall that $\Mc_{\fast}=\Mc_0$.)

Let $i^*=m_{\fast}+1$ be the first machine not in $\Mc_{\fast}$; so $i^*=m+1$ indicates that
$\Mc_{\fast}=[m]$. 
For notational convenience, for any vector $v\in\Rp^m$ and index $i>m$, we define
$v_i:=0$.

\begin{comment}
Throughout the rest of the section we consider the following partition $[m]=M^{\fast}\cup M_1\cup\hdots\cup M_L$ of the machines.
Let $m^\fast=|M^\fast|$ and $m_\ell=|M_\ell|$ for each $\ell\in[L]$.
Let $m^\fast=\tfrac{2}{\ve^3}$, $m_\ell=\lceil\tfrac{1}{\ve^2}\cdot(1+\ve)^\ell\rceil$ for each $\ell\in[L-1]$,
and $\lceil\tfrac{1}{\ve^2}\cdot(1+\ve)^L\rceil<m_L\le \lceil\tfrac{1}{\ve^2}\cdot(1+\ve)^L\rceil$.
Observe that $L\le \tfrac{1}{\ve}\ln{m}$.
Here,
$M^\fast=\{1,2,\hdots,m^\fast\}$ contains the first (eq. the fastest) $\tfrac{2}{\ve^3}$ machines.
$M_1=\{\,m^\fast+1,\hdots,m^\fast+m_1\,\}$ contains the next $m_1$ machines, and in general, $M_\ell=\{\,m^\fast+\sum_{\ell^\prime=0}^{\ell-1}m_{\ell^\prime}+1,\hdots,m^\fast+\sum_{\ell^\prime=0}^{\ell}m_{\ell^\prime}\,\}$ for each $\ell\in[L]$.
\end{comment}

\paragraph{Structural results about (near-) optimal solutions.} %\label{section:structure}
Lemma~\ref{wksort} shows that one may always assume that the work assigned to a machine
is non-decreasing in its speed. Lemma~\ref{strucprop} proves important structural
properties about a near-optimal solution, which drives our approach.

\begin{comment}
From now on we assume that the machines have been sorted in nonincreasing order of speeds.
That is, if $i,i^\prime\in[m]$ and $i<i^\prime$, then $s_i\ge s_{i^\prime}$.
Given an assignment $\sg:S\to[m]$ of a subset $S\subseteq J$ of jobs to the machines, we define the {\em work-vector} 
$\wvec{\sg}$ associated to $\sg$ by $\wvec{\sg}_i=\sum_{j\in S:\sg(j)=i}p_j$.
Observe that $\wvec{\sg}_i=s_i\cdot \lvec{\sg}_i$.
Consider the work-vector $\wvecopt$ of an optimal solution.
A simple, but very useful, observation is that we can assume that if $s_i\ge
s_{i^\prime}$, then $\wvecopt_i\ge \wvecopt_{i^\prime}$. In particular, this implies that
$\wvecopt^\down=\wvecopt$, since we sorted machines by nonincreasing order of speeds.
\end{comment} 

\begin{lemma}\label{claim:sortedwork} \label{wksort}
Let $\sg:S\to[m]$ be an assignment of a subset $S\sse J$ of jobs.
Let $\pi:[m]\to[m]$ be the permutation that sorts the coordinates of $\wvec{\sg}$ in
non-increasing order. 
That is, $\wvec{\pi\circ\sg}=\wvec{\sg}^{\down}$, which means that for all $i\in[m]$,
the $i$-th fastest machine is assigned the $i$-th largest work  
%for each $i\in[m]$ i.e., $i$ is the machine with the $i^{th}$ most assigned work 
under $\pi\circ\sg:S\to[m]$. 
Then $f(\lvec{\pi\circ\sg})\le f(\lvec{\sg})$.
\end{lemma}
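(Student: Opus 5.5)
We reduce the statement to repeated application of Claim~\ref{schur}-type pairwise exchanges, or more directly to a majorization argument. Write $w=\wvec{\sg}$ and $\ell_i=w_i/\spd_i$ for the load vector of $\sg$. The load vector of $\pi\circ\sg$ is $\ell'_i=w^{\down}_i/\spd_i$, since machine $i$ (the $i$-th fastest) receives the $i$-th largest work. We must show $f(\ell')\le f(\ell)$.

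We argue by a swap (bubble-sort) argument. Suppose there are machines $i<i'$ (so $\spd_i\ge\spd_{i'}$) with $w_i<w_{i'}$. Swap their job sets, and note that only coordinates $i,i'$ of the load vector change: the old pair is $(a,b)=(w_i/\spd_i,\,w_{i'}/\spd_{i'})$, and the new pair is $(a',b')=(w_{i'}/\spd_i,\,w_i/\spd_{i'})$. We claim $(a',b')$ is weakly majorized by $(a,b)$, which for a monotone symmetric norm gives $f(\text{new})\le f(\text{old})$; here we use the standard fact that monotone symmetric norms are Schur-convex and monotone, hence decrease under weak submajorization. This follows from combining convexity and symmetry (the averaging over permutations) with monotonicity, as in the proof of Claim~\ref{schur}.

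To verify the majorization, first the maximum: $b=w_{i'}/\spd_{i'}\ge w_{i'}/\spd_i=a'$ and $b\ge w_i/\spd_{i'}=b'$ (as $w_{i'}>w_i$), so $\max\{a',b'\}\le\max\{a,b\}$. Then the sum: $a+b-a'-b'=(w_{i'}-w_i)\bigl(\frac1{\spd_{i'}}-\frac1{\spd_i}\bigr)\ge 0$. Hence $\topl[1]$ and $\topl[2]$ of the pair do not increase, so the pair is weakly submajorized. Since the other coordinates are fixed, every $\topl$ of the full load vector does not increase. For a monotone symmetric norm, domination of all $\topl$ norms implies $f(\text{new})\le f(\text{old})$. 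This is a standard fact; alternatively one can derive it directly from Claim~\ref{schur} by first applying a Robin Hood transfer from $b$ toward $a$ to reach a vector coordinate-wise dominating the pair in sorted order, then using monotonicity.

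Repeating swaps, each strictly reduces the number of inversions of $w$ relative to machine order, and so terminates at $w^{\down}$, i.e.\ at $\pi\circ\sg$ (ties in work can be permuted freely without changing the load multiset beyond what the swaps cover, or are simply left alone since they are not inversions). Chaining the inequalities gives $f(\lvec{\pi\circ\sg})\le f(\lvec{\sg})$. The only delicate step is justifying that weak submajorization of a two-coordinate block implies norm decrease. I would handle it via the $\topl$-domination characterization, which is implicit in Lemma~\ref{toplestim}, or by an explicit application of Claim~\ref{schur} followed by monotonicity.
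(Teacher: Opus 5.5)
Your proposal is correct and is essentially the paper's proof. The paper likewise swaps the job sets of an out-of-order pair $i<i'$ with $w_i<w_{i'}$ and controls each swap via Claim~\ref{schur} with $\kp=(w_{i'}-w_i)/\spd_{i'}$; Claim~\ref{schur} is itself proved by the same $\topl$-domination argument you invoke through weak submajorization of the two changed coordinates. Your direct max/sum check has a small advantage: all its inequalities are non-strict, so it also covers the boundary cases $w_i=0$ or $\spd_i=\spd_{i'}$. In those cases the strict hypothesis $\kp<v_{i'}-v_i$ of Claim~\ref{schur} fails, although its norm conclusion still holds.
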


\begin{restatable}[{\bf Structured near-optimal solution}]{lemma}{firststrucprop}
%\begin{lemma}[{\bf Structured near-optimal solution}]
\label{lemma:capacities} \label{strucprop}
There is a job-set $\noptset\sse J$, an assignment $\tsg:\noptset\mapsto[m]$, and a
work-vector $\wksmooth\in\Rp^m$ satisfying the following properties.
%Let $\pmin=\min_{j\in J}p_j$ and $\ptot=p(J)$.
%There exists a sequence
%$\theta_1\ge \theta_2\ge \hdots \geq\theta_{m^\fast}\ge \w_1^*\ge\hdots\ge \w^*_L=0$ of
%nonnegative integers,  
%and an assignment $\sg^\prime:A^\prime\to[m]$ of a subset of jobs $A^\prime\subseteq J$, such that

\noindent\,
\begin{enumerate*}[label=(\alph*), topsep=0.2ex, itemsep=0.1ex, leftmargin=*]
\item\label{item:reward}
$\rewd(\noptset)\ge (1-\ve)\OPT$; {\quad}

\item\label{item:capacities}
$\wvec{\tsg}\leq\wksmooth$; {\quad}

\item \label{sorted}
$\wksmooth={\wksmooth}^{\down}$; {\quad}
%for $i,i'\in[m]-\Mc_{\fast}$, we have $\wksmooth_i\geq\wksmooth_{i'}$ if $i<i'$; {\ \ }

%$\wvec{\sg^\prime}_i\le \theta_{i}$ for each $i\in M^\fast$, and
%$\wvec{\sg^\prime}_i\le \w^*_\ell$ for each $\ell\in[L]$, $i\in M_\ell$. 
\end{enumerate*}

\begin{enumerate}[label=(\alph*), start=4, topsep=0.1ex, itemsep=0.1ex, leftmargin=*]
\item\label{ngroup}
if $\ngrp\geq 1$, then $\wksmooth_{\ngrp}=0$;

\item\label{item:wm}
for $i\in[m]-\Mc_{\fast}$, if $\wksmooth_i>0$ then $\wksmooth_i\geq\wksmooth_{i^*}/m$; %{\quad}
%If $\w^*_1\ge m\cdot \w^*_\ell$ then $\w_{\ell}^*=0$.

\item\label{smooth}
for all $r\geq 1$, and any $i,i'\in\Mc_r$, we have $\wksmooth_i=\wksmooth_{i'}$; %{\quad}

\item\label{item:feasible}
$f(L)\le B$, where $L\in \Rp^{[m]}$ is given by $L_i=\wksmooth_i/\spd_i$ for all $i\in[m]$.
%$u_i=\frac{\theta_i}{s_i}$ if $i\in M^\fast$, and $u_i=\frac{\w^*_\ell}{s_i}$ if $i\in
%M_\ell$. (Since $\lvec{\sg^\prime}\le u$, the monotonicity of $f$ implies that
%$f(\lvec{\sg^\prime})\le B$.) 
\end{enumerate}
%\end{lemma}
\end{restatable}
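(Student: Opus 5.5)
Start from an optimal assignment $\sg^*:\optset\mapsto[m]$ and, by Lemma~\ref{wksort}, assume its work vector $\wvecopt$ is non-increasing, i.e.\ $\wvecopt=\wvecopt^{\down}$ (the fastest machines carry the most work). We then mimic the ``shift-down'' construction from the proof of Lemma~\ref{struclem}: drop a few machines' worth of jobs per group and move the remaining jobs one group \emph{up} (toward faster machines), so that every machine in group $\Mc_r$, $r\geq 1$, receives work that was originally on a machine of index at least as large, i.e.\ on a machine in $\Mc_{r+1}$ (or a later/equal position). Concretely, process $r=1,2,\ldots$ in increasing order. Group $\Mc_1$'s jobs are dropped or moved into ``holes'' created in $\Mc_{\fast}$? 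That loses too much, so instead: for $r\geq 1$, apply Claim~\ref{xosprop} with the additive reward (trivially, pick the $m_{r}$ highest-reward machines of $\Mc_{r+1}$) and move their job-sets onto the $m_r$ machines of $\Mc_r$; since $m_{r+1}\leq m_r/(1-\ve)$, at most an $\ve$-fraction of $\Mc_{r+1}$'s machines (the lowest-reward ones) are dropped, losing at most $\ve\cdot\rewd(\optset_{\Mc_{r+1}})$. The jobs originally on $\Mc_1$ are handled by first emptying it: they are kept only if moved onto $\Mc_0$? To avoid that, instead drop $\Mc_1$'s jobs only when cheap — better: apply the shift with group $\Mc_1$ absorbing $\Mc_2$ and the jobs of $\Mc_1$ itself moved to the last $m_1$ machines... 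I will instead use the standard trick of choosing an offset: shift only groups $r\geq r_0$ for a random/averaged $r_0\in[1/\ve]$ whose group carries reward at most $\ve\OPT$ after accounting, and drop that group; since groups are disjoint, some $r_0$ among $1/\ve$ choices has small reward. Machines in $\Mc_{\fast}$ and groups below $r_0$ keep their original jobs.

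Define $\wksmooth$ as follows: $\wksmooth_i=\wvecopt_i$ for $i$ in $\Mc_{\fast}$ and in unshifted groups $\Mc_r$, $r<r_0$ — but property (f) requires equal values within each group, so for those groups set $\wksmooth_i=\wvecopt_{\text{first}(\Mc_r)}$ (the max in the group); for shifted groups $r\geq r_0$ set $\wksmooth_i=\max_{i'\in\Mc_{r+1}}\wvecopt_{i'}=\wvecopt_{\text{first}(\Mc_{r+1})}$. Then (b) holds since each machine's new work is some $\wvecopt_{i'}$ with $i'$ in the next group; (c) holds by monotonicity of $\wvecopt$ and since shifted values are no larger than unshifted ones. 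For (g), the key is that $\wksmooth_i\leq\wvecopt_{\prev}$-type bound: in shifted groups $\wksmooth_i\leq\wvecopt_{i}$ for all $i\in\Mc_r$ (since indices in $\Mc_{r+1}$ exceed $i$), giving $\wksmooth_i/\spd_i\leq\lvecopt_i$; the trouble is the unshifted groups $1\le r<r_0$, where rounding up to the group maximum can exceed $\wvecopt_i$. To fix this, I will instead shift \emph{all} groups $r\geq1$ up and let group $\Mc_1$'s original jobs be the ones sacrificed, choosing the drop-level by averaging: the real obstacle is exactly this — $\Mc_1$ has nowhere faster to go without overloading $\Mc_{\fast}$. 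I will try: keep $\Mc_{\fast}$ with $\wksmooth_i=\wvecopt_i$, and for $\Mc_1$ move its jobs into the last $m_1$ machines... no; the cleanest try is to pick $r_0\in\{1,\ldots,1/\ve\}$ with $\rewd(\optset_{\Mc_{r_0}})\leq\ve\OPT$ after noting groups $1..1/\ve$ are disjoint, \emph{merge} groups $<r_0$ into the fast part conceptually is not allowed by (f), so instead enlarge: groups $\Mc_1,\ldots,\Mc_{r_0-1}$ shift up as well, with $\Mc_1$'s jobs moved into $\Mc_{\fast}$'s lightest?? This is the step I expect to be hardest and would resolve by redefining which group absorbs $\Mc_1$ via the fact $m_{\fast}=2/\ve^3\gg m_1$: drop the lowest-reward $\ve m_{\fast}$ machines of $\Mc_{\fast}$... which breaks monotonicity of $\wksmooth$ on $\Mc_{\fast}$ only mildly and is fixed by Lemma~\ref{wksort} re-sorting.

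Finally, (d) and (e): zero out $\wksmooth$ on $\Mc_{\ngrp}$ (it only absorbs nothing since no $\Mc_{\ngrp+1}$), consistent with the shift. For (e), drop all jobs on machines $i\notin\Mc_{\fast}$ with $\wvecopt_i<\wvecopt_{i^*}/m$ before smoothing; their total work is $<\wvecopt_{i^*}$, so they can be moved onto a single machine... rather they carry reward; instead discard them only if their reward is $\leq\ve\OPT$, otherwise pack them onto machine whose budget is slack — here I would use that their total work is below $\wvecopt_{i^*}$ and add them to machine $i^*$'s position, adjusting by $(1+\ve)$ and absorbing it into the shift slack. Norm feasibility (g) then follows from $L\leq\lvecopt$ coordinatewise and monotonicity of $f$.
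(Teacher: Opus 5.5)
Once the abandoned detours are stripped away, your construction is the paper's. You keep $\wksmooth_i=\wvecopt_i$ on $\Mc_{\fast}$. For $r\ge 2$ you move the jobs of the $m_{r-1}$ highest-reward machines of $\Mc_r$ onto $\Mc_{r-1}$, and you empty $\Mc_1$ into holes of $\Mc_{\fast}$. Each group gets one common value that is at most $\wvecopt_i$ on that group; your choice, $\wvecopt$ of the first machine of $\Mc_{r+1}$, works as well as the paper's $\wvecopt$ of the last machine of $\Mc_r$.

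Two of your worries along the way are unfounded, and the random-offset $r_0$ detour should be dropped: as you noticed, rounding unshifted groups up to their maximum breaks (g).
\begin{itemize}
\item Emptying $\Mc_1$ into $\Mc_{\fast}$ does not lose too much. Dropping the $m_1+1=\frac{1}{\ve^2}+1$ lowest-reward machines of $\Mc_{\fast}$, which has $\frac{2}{\ve^3}$ machines, costs at most an $\ve$-fraction of its reward.
\item It neither breaks monotonicity nor needs re-sorting via Lemma~\ref{wksort}. A hole $i'\in\Mc_{\fast}$ receiving the jobs of $i\in\Mc_1$ gets work $\wvecopt_i\le\wvecopt_{i'}=\wksmooth_{i'}$, because $i'<i$.
\end{itemize}

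The genuine gap is property (e), namely where the low-work tail goes. Discarding it is only permitted in your branch where its reward is at most $\ve\OPT$. The other branch, placing it at machine $i^*$ ``adjusting by $(1+\ve)$ and absorbing it into the shift slack,'' fails for three reasons:
\begin{itemize}
\item Property (g) must hold exactly, and $\wksmooth\le\wvecopt$ leaves no guaranteed slack. For example, if $\wvecopt$ is constant on $\Mc_1\cup\Mc_2$, then $\wksmooth=\wvecopt$ on $\Mc_1$.
\item Machine $i^*$ is already filled by jobs shifted up from $\Mc_2$.
\item Property (f) forces one common value on all of $\Mc_1$. Raising $\wksmooth_{i^*}$ alone violates (f), and raising it on all of $\Mc_1$ violates (g).
\end{itemize}

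The missing idea is to reserve one additional empty machine $\bi\in\Mc_{\fast}$. This is why the paper drops $m_1+1$ fast machines rather than $m_1$; with your $\ve m_{\fast}=\frac{2}{\ve^2}$ holes you already have spares, you just don't use them. As soon as a group's value falls below $\wvecopt_{i^*}/m$, set it and all later groups to $0$, and send the jobs of all subsequent machines to $\bi$. Those jobs come from fewer than $m$ machines, each with work below $\wvecopt_{i^*}/m$. So $\bi$ receives total work less than $\wvecopt_{i^*}\le\wvecopt_{\bi}=\wksmooth_{\bi}$. This gives (b), (d) and (e) simultaneously; (e) holds because $\wksmooth_{i^*}\le\wvecopt_{i^*}$. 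It loses no reward and keeps $\wksmooth\le\wvecopt$, hence (g).
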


\subsubsection{Algorithm details}\label{related-detail}
Let $\noptset\sse J$, $\tsg:\noptset\mapsto[m]$, and $\wksmooth\in\Rp^m$ be as given by
Lemma~\ref{strucprop}. 

%\david{Algorithm overview}

%\begin{claim*}[Restatement of Claim~\ref{polybnd}] There are at most $(2e)^{\max\{M,k\}}$
%sequences of $k$ nonnegative integers that sum to 
%at most $M$. The same bound applies to the number of non-increasing sequences of $k$
%integers chosen from $\dbrack{M}$.\end{claim*}

\vspace*{-1ex}
\paragraph{I: Finding a suitable set of jobs, sparsifying job rewards and sizes.}

\begin{enumerate}[label=(R\arabic*), topsep=0.5ex, itemsep=0.1ex, widest*=16, leftmargin=*]
\item \label{step:goodset} \label{rel-ptas-start}
We guess a set $A$ of jobs achieving large reward for which $\tsg$ can be used to obtain a feasible
assignment, by 
proceeding exactly as in step~\ref{goodset} of the PTAS for identical machines. 
%the near-optimal assignment $\sg^\prime:A^\prime\to[m]$ from Lemma~\ref{lemma:capacities}.
%Similar to Section~\ref{...},

%We consider reward buckets 
Let $\itemset_0,\ldots,\itemset_{\nbuck}$ be reward buckets,
where $\itemset_q=\bigl\{j\in J: \frac{\thresh_q}{1+\ve}<\rewd_j\leq\thresh_q\bigr\}$
%consists of all jobs with rewards in the range
%$\bigl(\frac{\thresh_q}{1+\ve},\thresh_q\bigr]$ 
with $\thresh_q=\frac{\rmax}{(1+\ve)^q}$ and
$\nbuck=O\bigl(\frac{1}{\ve}\log\frac{n}{\ve}\bigr)$.
We obtain an estimate $\optval\leq\OPT\leq(1+\ve)\optval$, and use this to 
%We can apply Lemma~\ref{enumlem} on the sequence 
obtain $\{\num_q\}_{q\in\dbrack{\nbuck}}$ estimates such that 
$\num_q\leq|\noptset\cap\itemset_q|\leq(1+\ve)\bigl(\num_q+\frac{\ve\cdot\optval}{\nbuck\cdot\thresh_q}\bigr)$
for all $q\in\dbrack{\nbuck}$.
\begin{comment}
\footnote{For instance, we can apply Lemma~\ref{enumlem} to the sequence
$\bigl\{\rewd(\noptset\cap\itemset_q)\bigr\}_{q\in\dbrack{\nbuck}}$ taking
$\est=\optval$ to estimate these rewards within an additive $\ve\optval$, and set $\num_q$
to be the estimate of $\rewd(\noptset\cap\itemset_q)$ divided by $\thresh_q$.
This is precisely what we do in the proof of Theorem~\ref{ident-biptasthm}.}
\end{comment}
%As argued in the proof of Theorem~\ref{ident-biptasthm}, if 
We pick the $\ceil{\num_q}$ smallest-size jobs from each $\itemset_q$ bucket. 

Let $A$ denote this set of jobs. 
We treat the reward of each job in $A\cap\itemset_q$ as $\frac{\thresh_q}{1+\ve}$.
Thus, we now have $\nbuck+1$ distinct rewards for jobs in $A$. 
%and with these  {\em modified rewards}, 
%
Since $\svec[p]{A}\leq\svec[p]{\noptset}$, by Claim~\ref{jobdom}, $\tsg$ can be used to
obtain a feasible assignment for the jobs in $A$.  
To keep notation simple, we continue to use $\tsg$ to denote this feasible assignment, and
$\rewd_j$ to denote the modified reward of $j$. 
Note that the (modified) reward of $A$ %satisfies    
%Again, as in the proof of Theorem~\ref{ident-biptasthm}
%We have
%$\rewd(A)\geq
is at least 
$\frac{1}{1+\ve}\cdot\bigl(\rewd(\noptset)-\ve\cdot\OPT\bigr)\geq (1-3\ve)\OPT$. 
%we have $\rewd(A)\geq\frac{1}{1+\ve}\cdot(1-3\ve)\OPT\geq(1-4\ve)\OPT$. 

\item \label{step:goodsettwo}  
{\bf Sparsifying job sizes.}
We next sparsify the job sizes using the shifting idea for bin packing.
%instance using a shifting idea used for bin packing and the multiple-knapsack
%problem~\cite{VegaL81,ChekuriK05} 
so that there are only 
$O\bigl(\frac{\nbuck}{\ve}\bigr)=O\bigl(\frac{1}{\ve^2}\log\frac{n}{\ve}\bigr)$ distinct 
job-sizes for jobs in $A$.
For each $q\in\dbrack{\nbuck}$, we consider jobs in $S_q=A\cap\itemset_q$ in
non-increasing order of size and group them into $k=1+\frac{1}{\ve}$ sets
$B_q^{(1)},\ldots,B_q^{(k)}$, where $|B_q^{(1)}|=\ldots=|B_q^{(k-1)}|=\floor{\ve|S_q|}$ and
$|B_q^{(k)}|<\frac{1}{\ve}$. (If $\ve|S_q|<1$, then $B_q^{(1)}=\ldots=B_q^{(k-1)}=\es$, and $B_q^{(k)}=S_q$.)
We drop the jobs in $B_q^{(1)}$, and for every $r=2,\ldots,k-1$, we set the size of every 
job in $B_q^{(r)}$ to be the size of the largest job in $B_q^{(r)}$. Let 
$T_q=B_q^{(2)}\cup\ldots\cup B_q^{(k)}$.
This creates at most $k-2+|B_q^{(k)}|\leq\frac{2}{\ve}$ distinct modified job sizes for
jobs in $T_q$. 

Let $A_1=\bigcup_{q\in\dbrack{\nbuck}}T_q$ be the jobs remaining in $A$.
To keep notation simple, we continue to use $p_j$'s to denote the modified sizes of jobs in
$A_1$.
Since $B_q^{(1)}\le\ve |S_q|$, it follows that $\rewd(A_1)\ge(1-\ve)\cdot\rewd(A)$.
Note that $\tsg$ can be used to obtain a feasible assignment of jobs in 
$A_1$ with their modified job sizes, whose resulting
work-vector %(resp. load-vector) 
is at most $\wvec{\tsg}$ %(resp. $\lvec{\tilde{\sg}}$)
(coordinate-wise), because for all $q\in\dbrack{\nbuck}$,
$r=2,\ldots,k$, we can use 
the space previously occupied (under the assignment $\tilde{\sg}$) by the jobs in
$B_q^{(r-1)}$ to accommodate the jobs in $B_q^{(r)}$ with their modified sizes. 
This may involve changing the $\tsg$-assignments of some jobs; again, to avoid
excessive notation, we continue to use $\tsg$, 
viewed now as an assignment $A_1\mapsto[m]$, 
to denote this modified feasible assignment.

%Therefore, from now on we assume that 
We now collect all jobs in $A_1$ with the same (size, reward) tuple, which we call the
type of a job into a bucket; note that jobs are indistinguishable in that one can be
replaced with another without affecting feasibility or reward.  
So at this point, we have a set of jobs $A_1\subseteq J$ partitioned as
$\bigcup_{q\in[\anbuck]}\abkt_q$ into $\anbuck$ type buckets, where 
$\anbuck\leq\frac{2}{\ve}\cdot\nbuck=O\bigl(\tfrac{1}{\ve^2}\cdot\log{\tfrac{n}{\ve}}\bigr)$,
such that:   
\begin{enumerate}[label=(\roman*), ref={\theenumi\,\roman*}, topsep=0.1ex, noitemsep, leftmargin=*]
\item each job in $\abkt_q$ has the same (size, reward) tuple, denoted sometimes
by ($p^{(q)}$, $\rewd^{(q)}$); 
\item $\rewd(A_1)\ge (1-\ve)\rewd(A)\geq(1-4\ve)\OPT$; \label{a1ineq}
\item there is an assignment $\tsg:A_1\to[m]$ satisfying $\wvec{\tsg}\leq\wksmooth$. 
%$\wvec{\tsg}_i\le \theta_i$ for each $i\in M^\fast$, and
%$\wvec{\tilde{\sg}}_i\le \w_\ell^*$ for each $\ell\in[L]$, $i\in M_\ell$, where
%$\theta_1\ge\hdots\ge\theta_{m^\fast}\ge\w_1^*\ge\hdots\ge \w_L^*=0$ are as in the
%statement of Lemma~\ref{lemma:capacities}. 
\end{enumerate}
\end{enumerate}

\vspace*{-1ex}
\paragraph{II: Machine capacities and classes, fast, large, and small jobs.}

\begin{enumerate}[resume*, start=3]
\item {\bf Guessing \boldmath $\wksmooth$.}
Recall that $\Mc_{\fast}$ consists of the first $m_{\fast}$ machines, and $i^*=m_{\fast}+1$.
We guess a non-increasing vector $\wkest\in\R^{[m]-\Mc_{\fast}}$, where each $\wkest_i$ is $0$
or a power of $(1+\ve)$, such that for all $i\in[m]-\Mc_{\fast}$, we have 
$\wksmooth_i\leq\wkest_i<(1+\ve)\wksmooth_i$ if $\wkest_i>0$ and $\wkest_i=0$ otherwise.
Note that for $r\geq 1$, all machines in $\Mc_r$ have the same
$\wkest_i$ guess, since they have the same $\wksmooth_i$ value (Lemma~\ref{strucprop}\,\ref{smooth}).
We can do this in polynomial time since $(\wksmooth_i)_{i\in[m]-\Mc_{\fast}}$ is also a
non-increasing vector (Lemma~\ref{strucprop}\,\ref{sorted}),
every non-zero entry of $\wksmooth_i$, for $i\geq i^*$ is at least $\wksmooth_{i^*}/m$
(Lemma~\ref{strucprop}\,\ref{item:wm}), and $\wksmooth_{i^*}$, if non-zero, lies in
$\bigl[p_{\min},\sum_{j\in J}p_j\bigr]$, where $p_{\min}$ the minimum non-zero processing
time of a job. 
%Note that if $\wksmooth_i=0$, then $\wkest_i=0$.

We sometimes say that $\wkest_i$ is the (work) capacity of machine $i$. We call a maximal
(consecutive) set of machines having the same $\wkest_i$ value, a machine class. 
(Note that this is a collection of machines in $[m]-\Mc_{\fast}$.)
The capacity of a machine class is the common $\wkest_i$ value of machines in that class. 
We can discard machines with capacity $0$, so every machine class contains at least
$m_1=\frac{1}{\ve^2}$ machines. 
Since capacities of different machine classes differ by at least a $(1+\ve)$-factor, 
there are at most $O\bigl(\frac{\log m}{\ve}\bigr)$ machine classes. 
Let $\mclass$ denote the collection of machine classes. 

%We can partition $A_1$ into three sets of jobs. This is for descriptive purpopses, and we
%emphasize that {\em we do not know this partition}.
Let $A^{\fast}$ be the jobs in $A_1$ assigned by $\tsg$ to machines in $\Mc_{\fast}$;
We call a job in $A^{\fast}$ is a {\em fast} job.
%We consider a partition of $A_1$ into three job classes.
%Consider a job $j\in A_1$.
%We say that a job $j$ is {\em fast} if $\tilde{\sg}(j)\in M^\fast$.
%If $\tilde{\sg}(j)\in M_\ell$ for some $\ell\in[L-1]$, we consider two cases.
We say that a job $j\in A_1$ is {\em large} for capacity $W$, if $\ve W<p_j\leq W$;
we say that $j$ is {\em small} for capacity $W$ if $p_j\leq\ve W$.
%Otherwise, we say that $j$ is {\em small} (if $\tilde{\sg}(j)\in M_\ell$ and
%$p_j\le\ve\,\w^*_\ell$). 
Let $A^{\lrg}$ be the set of jobs $j\in A_1-A^{\fast}$ such that $j$ is large for capacity
$\wkest_{\tsg(j)}$, and $A^{\sml}$ consist of jobs $j\in A_1-A^{\fast}$ such
that $j$ is small for capacity $\wkest_{\tsg(j)}$.
%Let $A^\fast$, $A^\lrg$, and $A^\sml$ denote the sets of fast, large, and small jobs,
%respectively. Clearly, $A^\fast\cup A^\lrg\cup A^\sml=A_1$ is a partition of $A_1$. 
Clearly, $A^{\fast}$, $A^{\lrg}$, $A^{\sml}$ partition $A_1$. 
Note that {\em we do not know this partition}.

We next proceed to obtain assignments for these three categories of jobs. 
%that is consistent with $\tsg$. 
The assignment may not quite assign all jobs in a category, but it will be consistent with
$\tsg$, and will earn large-enough reward.
\end{enumerate}

\vspace*{-1ex}
\paragraph{III: Assigning jobs to machines in \boldmath $\Mc^\fast$.}

\begin{enumerate}[resume*, start=4]
\item\label{step:fast} 
We use Lemma~\ref{enumlem} to guess, for each $q\in[\anbuck]$, the number of jobs in
$\abkt_q\cap A^{\fast}$.  
To elaborate, for each $q\in[\anbuck]$, we apply Lemma~\ref{enumlem} to guess the sequence
$\Bigl\{\bigl|\abkt_q\cap\tsg^{-1}(i)\bigr|\Bigr\}_{i\in \Mc_{\fast}}$ up to cumulative error
$\ve|\abkt_q|$ by taking $\est=|\abkt_q|$. This takes time
$2^{O(|\Mc_{\fast}|/\ve)}=2^{O(1/\ve^4)}$. and so the total time for doing this for all
$q\in[\anbuck]$ is $2^{O(\anbuck/\ve^4)}=\bigl(\frac{n}{\ve}\bigr)^{1/\ve^6}$.

Let $\bigl\{\tdh_{q,i}\bigr\}_{i\in \Mc_{\fast}}$ denote the correctly guessed sequence for
$q\in[\anbuck]$. Since jobs in $\abkt_q$ are indistinguishable, for all $i\in \Mc_{\fast}$,
we arbitrarily select $\ceil{\tdh_{q,i}}$ (unassigned) jobs from $\abkt_q$ and assign them
to machine $i$. If we run out of jobs in $\abkt_q$ while doing so,
we declare failure; this can only happen if one of our guesses is incorrect.
We do this for all $q\in[\anbuck]$. 

Let $\tA^{\fast}$ be the set of jobs so assigned to machines in $\Mc_{\fast}$. 
%Since we
%assume we have the correct guess for 
%$\bigl\{\bigl|\abkt_q\cap\tsg^{-1}(i)\bigr|\bigr\}_{i\in \Mc_{\fast}}$, for all
%$q\in[\anbuck]$, 
Since the $\tdh_{q,i}$ values underestimate the $\bigl|\abkt_q\cap\tsg^{-1}(i)\bigr|$
quantities, we have $\tA^{\fast}\sse A^{\fast}$. The following claim 
%(proved in Section~\ref{relptas-analysis}) 
shows that $\tA^{\fast}$ achieves a good amount of reward. 

\begin{claim} \label{rewdfast}
We have $\rewd(\tA^{\fast})\geq\rewd(A^{\fast})-\ve\cdot\rewd(A_1)$.
\end{claim}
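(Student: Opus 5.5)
We compare, bucket by bucket, the jobs that $\tsg$ places on fast machines with the jobs that step~\ref{step:fast} places there, and then sum over buckets. For each $q\in[\anbuck]$, let $h_{q,i}:=\bigl|\abkt_q\cap\tsg^{-1}(i)\bigr|$ for $i\in\Mc_{\fast}$. Then $|\abkt_q\cap A^{\fast}|=\sum_{i\in\Mc_{\fast}}h_{q,i}\leq|\abkt_q|$. So the hypotheses of Lemma~\ref{enumlem} hold with $k=|\Mc_{\fast}|$, $\Gm=\est=|\abkt_q|$, and $a_i=h_{q,i}$.

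By Lemma~\ref{enumlem}(b), the correctly guessed sequence satisfies $\tdh_{q,i}\leq h_{q,i}$ for every $i$ and
\[
\sum_{i\in\Mc_{\fast}}\tdh_{q,i}\geq\sum_{i\in\Mc_{\fast}}h_{q,i}-\ve|\abkt_q|.
\]
The algorithm assigns $\ceil{\tdh_{q,i}}\leq h_{q,i}$ jobs of $\abkt_q$ to machine $i$; the inequality holds because $h_{q,i}$ is an integer. Since the total requested never exceeds $\sum_i h_{q,i}\leq|\abkt_q|$, we never run out of jobs and no failure occurs. Hence the number of bucket-$q$ jobs in $\tA^{\fast}$ is at least $|\abkt_q\cap A^{\fast}|-\ve|\abkt_q|$.

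All jobs in $\abkt_q$ have the same reward $\rewd^{(q)}$, so multiplying by it gives
\[
\rewd(\tA^{\fast}\cap\abkt_q)\geq\rewd(A^{\fast}\cap\abkt_q)-\ve\,\rewd(\abkt_q).
\]
Summing over $q\in[\anbuck]$ and using that the type buckets $\abkt_q$ partition $A_1$ yields $\rewd(\tA^{\fast})\geq\rewd(A^{\fast})-\ve\cdot\rewd(A_1)$.

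There is no real obstacle in this argument. The only points needing care are the integrality step ($\ceil{\tdh_{q,i}}\leq h_{q,i}$, which justifies both the no-failure claim and the containment $\tA^{\fast}\sse A^{\fast}$ up to relabeling of indistinguishable jobs), and the fact that $\rewd$ here means the modified rewards. Both of these are consistent with how $\rewd(A_1)$ was defined in step~\ref{step:goodsettwo}.
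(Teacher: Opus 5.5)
Your proof is correct and follows essentially the same route as the paper. Both apply Lemma~\ref{enumlem} separately to each type bucket with $\est=|\abkt_q|$, multiply the resulting count bound by the common reward $\rewd^{(q)}$, and sum over the buckets, which partition $A_1$. Your treatment of the ceiling is a bit more careful than the paper's: you use the integrality of $h_{q,i}$ to get $\ceil{\tdh_{q,i}}\leq h_{q,i}$, which also rules out failure. The paper instead writes an equality at the step where $\ceil{\tdh_{q,i}}\geq\tdh_{q,i}$ only gives an inequality.
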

\end{enumerate}

\vspace*{-2ex}
\paragraph{IV: Assigning large jobs.}
We next aim to determine the assignments of jobs in $A^{\lrg}$.
%for each machine $i\notin M^{\fast}$, the jobs
%assigned by $\tsg$ to $i$ that are large for capacity $\wkest_i$. 
We will not quite be able to assign all jobs in $A^{\lrg}$, but we will find a
large-reward subset of $A^{\lrg}$ and an assignment of these jobs.
This is the most involved portion of the algorithm. 

\begin{comment}
We will actually refrain from
computing an actual assignment of jobs to machines; instead, for each machine class $I$,
at this point, we will only find a work-vector (and the job-sets corresponding to the
work-vector coordinates) 
%an assignment leading to this work-vector) 
using a small number of extra machines, without committing to the mapping between
coordinates of the work-vector and machines in $I$.
\end{comment}
For each machine class $I$, the assignment we compute to machine in $I$ will end up using
$O(\ve)|I|$ extra machines. At this point, we are only considering the {\em work} assigned to
machines, and we will not concern ourselves with issues such as the speeds of these extra
machines. 
%{\em The fact that we do not need to worry about machine speeds is a significant
%source of savings.}
We will retain these extra machines (and more extra machines for a class $I$ may
get added later) until the very end. As one of the last steps of the
algorithm (step~\ref{rel-postprocess}), we will pare down the collection of 
%(regular and extra)
machines used for class $I$ to $|I|$ machines; %retain $|I|$ machines for class $I$;
%drop these extra machines; 
this yields a corresponding work-vector, and guided by Lemma~\ref{wksort}, we will assign  
jobs to machines in $I$ by assigning the jobs corresponding to the largest work-vector
coordinate %(and the corresponding jobs) 
to the fastest machine in $I$, the jobs corresponding to the second-largest coordinate to
the second-fastest machine in $I$, and so on. 
%that can be used to obtain a suitable assignment for machines in $I$
%To keep terminology succinct, we continue to say ``assign jobs to machines'' below, to
%mean that we are assigning jobs to work-vector coordinates. 

We find the job-assignment in various steps.  
Recall that $\mclass$ is the collection of machine classes. 
We say that a job $j$ is large for a machine class if it is large for the capacity of 
that class. 
First, for each machine class $I\in\mclass$ and $q\in[\anbuck]$, 
%letting $W$ be the capacity of machines in $I$, 
we determine the number of jobs from $\abkt_q$ that are large for class $I$ 
%the capacity of class $I$ 
and assigned by $\tsg$ to machines in $I$ (step~\ref{rel-largenum}). Since jobs in a type
bucket $\abkt_q$ are 
indistinguishable, this also yields a set $\tA_I$ of large jobs for $I$ and
assigned by $\tsg$ to machines in $I$.

We will use the configuration-enumeration approach from step~\ref{lrgasgn} of the PTAS
for identical machines to assign these jobs %an assignment of these jobs 
to machines in $I$. This will require us to sparsify job sizes and rewards for jobs in
$\tA_I$ so that we only have a constant (depending on $\frac{1}{\ve}$) number of 
distinct job-types for class $I$. This sparsification may lead to $O(\ve)|I|$
extra machines being created. 
\begin{comment}
As mentioned above, 
%we will retain these extra machines until the very end, when 
we will argue at the end that we can pare down the collection of %(regular and extra)
machines used for class $I$ to $|I|$ machines, 
%drop the extra machines and argue that this yields 
to obtain a feasible assignment without sacrificing the reward by much. 
\end{comment}
For technical reasons, we will also first isolate some ``giant'' jobs in
$\tA_I$ that will be assigned to separate machines in $I$, without any other large jobs 
assigned to these machines. 

So for each machine class $I$, we first assign the giant jobs in $\tA_I$
(step~\ref{giantasgn}), and then sparsify the job types for the remaining jobs in $\tA_I$
(steps~\ref{rel-sizesparse}, \ref{rel-rewdsparse}). 
Next, we use configuration enumeration to 
find the configurations to use for machines in $I$, for all machine classes $I$
(step~\ref{rel-cenum}). We will argue 
that the time needed for enumeration in step~\ref{rel-cenum}, 
%across all machine classes, 
is polynomially bounded. Finally, we map the configurations obtained for each machine
class $I$ to an assignment of a subset of $\tA_I$ to machines in $I$
(step~\ref{rel-lrgasgn}). 
(As noted earlier, all of this leads to a work-vector for each class $I$ using
$O(\ve)|I|$ extra machines.)

\begin{enumerate}[resume*, start=5]
\item {\bf Guessing the number of large jobs assigned to a machine class.} \label{rel-largenum}
We use an idea similar to that in~\cite{ChekuriK05}.
Consider $q\in[\anbuck]$. A job in $\abkt_q$ can be large for a class $I\in\mclass$ with 
capacity $W$ only if $\ve W<p^{(q)}\leq W$, or equivalently 
$W\in\bigl[p^{(q)},\frac{p^{(q)}}{\ve}\bigr)$. Since capacities of different classes differ
by at least a $(1+\ve)$-factor, this means that a job in $\abkt_q$ may be large for at
most $\log_{1+\ve}\frac{1}{\ve}=O\bigl(\frac{1}{\ve}\log\frac{1}{\ve}\bigr)$ machine
classes. 

Let $\lmc_q\sse\mclass$ be the collection of these machine classes, so
$|\lmc_q|=O\bigl(\frac{1}{\ve}\ln\frac{1}{\ve}\bigr)$. 
%for which jobs in $\abkt_q$
As in step~\ref{step:fast}, we use Lemma~\ref{enumlem} to guess 
$\Bigl(\bigl|\{j\in\abkt_q: \tsg(j)\in I\}\bigr|\Bigr)_{I\in\lmc_q}$ up to cumulative
error $\ve|\abkt_q|$ by taking $\est=|\abkt_q|$. 
This takes time $2^{O(|\lmc_q|/\ve)}=2^{O(\frac{1}{\ve^2}\log\frac{1}{\ve})}$.
Let $\bigl\{\tn_{q,I}\bigr\}_{I\in\lmc_q}$ denote the correctly-guessed sequence. 
For each $I\in\lmc_q$, we select an arbitrary set $\tA_{q,I}$ of $\ceil{\tn_{q,I}}$ unassigned
jobs from $\abkt_q$ to assign as large jobs to machines in $I$. 
Again, if we run out of jobs in $\abkt_q$ while doing so, we declare failure. 
%this can only happen if one of our guesses is incorrect.
We do this for all $q\in[\anbuck]$.

The total time required for this enumeration for all $q\in[\anbuck]$ is
$2^{O(\frac{\anbuck}{\ve^2}\log\frac{1}{\ve})}=\bigl(\frac{n}{\ve}\bigr)^{\frac{1}{\ve^4}\log\frac{1}{\ve}}$.
For a machine class $I$, we now have a set of jobs 
$\tA_I=\bigcup_{q\in[\anbuck]: I\in\lmc_q}\tA_{q,I}$ to be assigned (as large jobs) to
machines in $I$. Since the $\tn_{q,I}$ values underestimate the
$\bigl|\{j\in\abkt_q: \tsg(j)\in I\}\bigr|$ quantities, we may assume that 
$\tA_I\sse A^{\lrg}$ and that jobs in $\tA_I$ are assigned by $\tsg$ (as large jobs) to
machines in $I$. 

Note that $|\tA_I|\leq |I|/\ve$ since if $W$ is the capacity of class $I$, we have
$p_j\geq\ve W$ for all $j\in\tA_I$, and 
$p(\tA_I)\leq p\bigl(\tsg^{-1}(I)\bigr)\leq W\cdot|I|$. If $|\tA_I|>|I|/\ve$, then we
declare failure; this can only happen if one of our guesses is incorrect.
Similar to Claim~\ref{rewdfast}, we have the following.

\begin{claim} \label{rewdlarge}
We have $\sum_{I\in\mclass}\rewd(\tA_I)\geq\rewd(A^{\lrg})-\ve\cdot\rewd(A_1)$.
\end{claim}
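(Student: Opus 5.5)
We prove Claim~\ref{rewdlarge} bucket by bucket, reducing to the error guarantee of Lemma~\ref{enumlem}, exactly as one would for Claim~\ref{rewdfast}.

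Fix $q\in[\anbuck]$. Every job in $A^{\lrg}\cap\abkt_q$ is assigned by $\tsg$ to some machine $i\notin\Mc_{\fast}$ and is large for the capacity $\wkest_i$. That capacity is the capacity of the class $I\ni i$, and by definition $I\in\lmc_q$. Writing $n_{q,I}:=\bigl|\{j\in\abkt_q:\tsg(j)\in I\}\bigr|$, this gives
\[
|A^{\lrg}\cap\abkt_q|\leq\sum_{I\in\lmc_q}n_{q,I}.
\]
One point needs care here. Since $I\in\lmc_q$, every job of $\abkt_q$ (all of size $p^{(q)}$) that $\tsg$ places in $I$ is large for $I$. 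So $n_{q,I}$ counts exactly the $A^{\lrg}$-jobs of type $q$ in $I$, and the inequality above is in fact an equality. Moreover $\sum_{I\in\lmc_q}n_{q,I}\leq|\abkt_q|$, so Lemma~\ref{enumlem} applies with $\est=|\abkt_q|$. By Lemma~\ref{enumlem}(b), the correctly guessed sequence satisfies
\[
\sum_{I\in\lmc_q}\tn_{q,I}\geq\sum_{I\in\lmc_q}n_{q,I}-\ve|\abkt_q|.
\]

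Next we convert counts to reward. All jobs in $\abkt_q$ have the common reward $\rewd^{(q)}$. The algorithm selects $|\tA_{q,I}|=\ceil{\tn_{q,I}}\geq\tn_{q,I}$ jobs for each $I\in\lmc_q$. Selection does not fail, because $\tn_{q,I}\leq n_{q,I}$ and the integer $n_{q,I}$ is at least $\ceil{\tn_{q,I}}$; the classes are disjoint, and fast jobs were already taken only up to their own underestimates, so enough unassigned jobs of type $q$ remain. Therefore
\[
\sum_{I\in\lmc_q}\rewd(\tA_{q,I})\;\geq\;\rewd^{(q)}\sum_{I\in\lmc_q}\tn_{q,I}\;\geq\;\rewd(A^{\lrg}\cap\abkt_q)-\ve\,\rewd(\abkt_q).
\]

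Finally we sum over $q\in[\anbuck]$. Since $\tA_I=\bigcup_{q:I\in\lmc_q}\tA_{q,I}$, the left-hand side totals $\sum_{I\in\mclass}\rewd(\tA_I)$. The buckets $\abkt_q$ partition $A_1$, so the right-hand side totals $\rewd(A^{\lrg})-\ve\,\rewd(A_1)$, which is the claim.

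The only step requiring attention is the feasibility of the guesses, namely that $\ceil{\tn_{q,I}}$ unassigned jobs of type $q$ really remain after step~\ref{step:fast}. This follows from the underestimate property $\tdh_{q,i}\leq|\abkt_q\cap\tsg^{-1}(i)|$ and $\tn_{q,I}\leq n_{q,I}$, so the picks are dominated by the disjoint sets $\tsg^{-1}(\Mc_{\fast})\cap\abkt_q$ and $\tsg^{-1}(I)\cap\abkt_q$.
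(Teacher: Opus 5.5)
Your proof is correct and follows the paper's argument essentially verbatim. You apply Lemma~\ref{enumlem} per type bucket $\abkt_q$ with $\est=|\abkt_q|$, convert counts to rewards via the common reward $\rewd^{(q)}$, and sum over $q$ using that the buckets partition $A_1$. Your use of $\ceil{\tn_{q,I}}\geq\tn_{q,I}$, where the paper writes an equality, and your check that the selections never run out of jobs of a given type, add a little extra care beyond the paper's version.
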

\end{enumerate}

As noted above, to assign the designated large jobs for each machine class, we 
sparsify job types and use %sparsification and 
configuration enumeration.
%
\begin{comment}
\item {\bf Assigning the designated large jobs.}
We now consider each machine class $I\in\mclass$, and assign the jobs in $\tA_I$ to machines
in $I$. This will be via the configuration-enumeration approach described in
step~\ref{lrgasgn} 
of the PTAS for identical machines. To facilitate this, we need to sparsify jobs in
$\tA_I$ as in steps~\ref{sizesparse}, \ref{rewdsparse} of that algorithm, so that the
enumeration for a specific 
machine class can be done in {\em constant} time (depending on $\frac{1}{\ve}$), and hence
the enumeration for all $O\bigl(\frac{\log m}{\ve}\bigr)$ machine classes takes polynomial
time. 
\end{comment}
Here, we significantly depart from the approach in~\cite{ChekuriK05}, because
our task is not just to find some assignment that respects the capacity of the
machine class, but rather to find an assignment that is {\em consistent} with the 
{\em unknown} assignment $\tsg$.
%respects the {\em unknown} load-vector $\tlvec$ (and hence the norm budget). 
This stronger condition ensures that the load vector from this partial
assignment satisfies the norm budget despite the fact that the $\wkest_i$ capacities are 
only estimates of $\wksmooth_i$.
It also ensures that we can extend this partial assignment
to one that assigns also the small jobs while respecting the norm budget.
%will be {\em crucial} in order to argue that the resulting partial assignment can be
%extended to one that assigns the
%remaining jobs in %$A_3-\bigcup_{I\in\lmc}J_I=
%$\bigl(\bigcup_{I\in\mclass-\lmc}J_I\bigr)\cup\bigl(\bigcup_{\text{small size $s$}}A_{3,s}\bigr)$ while
%respecting the norm budget.  

\smallskip
We execute the following steps for each machine class $I\in\mclass$ with corresponding  
capacity $W$.  
%Let $W$ be the  capacity of class $I$.

\begin{enumerate}[resume*, start=6]
%[label=(\theenumi.\arabic*), leftmargin=2ex, labelsep=*]
\item {\bf Assigning giant jobs.} \label{rel-giant} \label{giantasgn}
We call a job $j\in\tA_I$ with $p_j\geq (1-\ve)W$ a {\em giant} job for $I$; let
$A^{\giant}_I$ be the set of giant jobs in $\tA_I$.
We assign every giant job to a separate machine in  
$I$, and do not assign any other jobs in $\tA_I$ to this machine.
Note that $\tsg$ must also do this, since otherwise the total work assigned by $\tsg$ to a
machine $i\in I$ would exceed $(1-\ve)W+\ve W\geq\wksmooth_i$.
So if $|A^{\giant}_I|>|I|$, we declare failure; this only happens if one of
our guesses is incorrect.

Let $A'_I=\tA_I-A^{\giant}_I$ be the remaining jobs in $\tA_I$.

\item {\bf Sparsifying sizes of jobs in \boldmath $A'_I$.} \label{rel-sizesparse}
We mimic step~\ref{sizesparse}.
Consider jobs in $A'_I$ in non-increasing order of size,
and divide them into $k'=1+\frac{1}{\ve^2}$ groups, where the first $\frac{1}{\ve^2}$
groups contain $\floor{\ve^2|A'_I|}$ jobs, and the last group contains less than
$\frac{1}{\ve^2}$ jobs. 
%(If $\ve^2|\tA_I|<1$, then the first $\frac{1}{\ve^2}$ groups
%are empty, and the last group contains all jobs in $A'_I$.) 
We increase the size of every job in groups $2,\ldots,k'-1$ to
the size of the largest job in that group, and move each job in the first group to a 
separate extra machine. 
This creates at most $\ve^2|A'_I|\leq\ve|I|$ extra machines. 

Let $A''_I$ denote the jobs in groups $2,\ldots,k'$ with their (potentially) modified
sizes, so jobs in $A''_I$ have at most $\frac{2}{\ve^2}$ distinct modified sizes.
Again, to keep notation simple, we continue to use $p_j$'s to denote the modified sizes.
%We will work with these modified sizes in order to assign the jobs in $A'$, and then
%revert to the original job sizes.
For all $r=2,\ldots,k'-1$, the space used by jobs in group $r-1$ under assignment $\tsg$
can be used to accommodate the jobs in group $r$ with their modified sizes. 
So $\tsg$ can be used to obtain a feasible assignment of jobs with their modified sizes,
%$A_1\cup A'\cup A^{\sml}\mapsto[m]$, where jobs in $A_1$ are assigned as before to
%the first $|A_1|$ machines and 
%jobs in $A'$ %(with their modified sizes) 
%and $A^{\sml}$ are assigned to machines in
%$\mcset$, and 
where the work assigned to a machine $i\in I$ (under the modified sizes) continues to be
at most $\wksmooth_i$. 
This may involve changing the $\tsg$-assignments of some jobs in $A'_I$;
%avoid excessive notation, 
as before, we continue to use $\tsg$ to denote this modified feasible assignment. 
%denote the resulting modified assignment of jobs in $(J_I-\tJ_I)\cup J'_I$.

\item {\bf Sparsifying rewards of jobs in \boldmath $A''_I$.} \label{rel-rewdsparse}
We mimic step~\ref{rewdsparse}.
%We next use a similar shifting idea to reduce the number of distinct rewards of jobs in
%$A'$ to $O\bigl(\frac{1}{\ve^3}\bigr)$. 
Consider jobs in $A''_I$ in non-increasing order
of reward and divide them into $k''=1+\frac{1}{\ve^3}$ groups, where the first
$\frac{1}{\ve^3}$ groups contain $\floor{\ve^3|A''_I|}$ jobs, and the last group contains
less than $\frac{1}{\ve^3}$ jobs. 
%(Again, if $\ve^3|A'|<1$, then the last group is all of
%$A'$ and all other groups are empty.)
For every $r=\frac{1}{\ve}+1,\ldots,k''-1$, we reduce the reward of each job in group
$r$ to the smallest reward of a job in that group. We also move each job in the first
$\frac{1}{\ve}$ groups to a separate extra machine.

Let $\bA_I\sse A''_I$ denote the jobs in groups $\frac{1}{\ve}+1,\ldots,k''$. 
%We continue to use $\rewd_j$ to denote the modified rewards.
Let $\trewd_j\leq\rewd_j$ denote the modified reward of each job $j\in\tA_I$, where
$\trewd_j=\rewd_j$ if $j$'s reward is unchanged.
We create at most $\ve^2|A''_I|\leq\ve|I|$ extra machines, and jobs in
$\bA_I$ now have at most $\frac{2}{\ve^3}$ distinct rewards.
%Moreover, the following claim shows that this reward-sparsification step does not incur
%much loss.
Mimicking Claim~\ref{rwdsparse}, we have the following.

\begin{claim} \label{rel-rwdsparse}
We have $\trewd(A''_I)\geq(1-\ve)\rewd(A''_I)$.
\end{claim}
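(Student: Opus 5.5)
The plan is to mimic the proof of Claim~\ref{rwdsparse} verbatim, with $A'$ replaced by $A''_I$ and $k'=1+\frac{1}{\ve^3}$ replaced by $k''=1+\frac{1}{\ve^3}$. The single structural fact we need is that all groups except the last have the same cardinality. Combined with the non-increasing reward order, this lets each rounded-down group be charged to the next group.

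First we dispose of the trivial case. If $\ve^3|A''_I|<1$, every group other than the last is empty and no reward is modified, so $\trewd(A''_I)=\rewd(A''_I)$.

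Otherwise, write $B_1,\ldots,B_{k''}$ for the groups, in non-increasing order of reward. Here $|B_r|=\floor{\ve^3|A''_I|}$ for $r<k''$.
\begin{enumerate}[label=(\roman*)]
\item Rewards are unchanged on $B_1,\ldots,B_{1/\ve}$. The jobs in these groups are moved to extra machines, but their reward is still counted in $\trewd(A''_I)$, exactly as in Claim~\ref{rwdsparse}. Rewards are also unchanged on $B_{k''}$.
\item For $r\in\{\frac{1}{\ve}+1,\ldots,k''-2\}$, every reward in $B_r$ is at least every reward in $B_{r+1}$, and $|B_r|=|B_{r+1}|$. Since rewards in $B_r$ are lowered only to the minimum reward in $B_r$, we get $\trewd(B_r)\geq\rewd(B_{r+1})$. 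For $r=k''-1$ the same argument gives $\trewd(B_{k''-1})\geq\rewd(B_{k''})$, because $|B_{k''}|<\frac{1}{\ve^3}$ and, once $\ve^3|A''_I|\geq 1$, $|B_{k''-1}|\geq|B_{k''}|$.
\item Summing (i) and (ii) telescopes to $\trewd(A''_I)\geq\rewd(A''_I)-\rewd(B_{1/\ve+1})$.
\end{enumerate}

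It remains to bound the single lost group $B_{1/\ve+1}$. Each of its jobs has reward at most the average reward over $B_1\cup\cdots\cup B_{1/\ve}$. That union has $\frac{1}{\ve}|B_{1/\ve+1}|$ jobs, since all these groups have equal size. Hence $\rewd(B_{1/\ve+1})\leq\ve\sum_{r\leq 1/\ve}\rewd(B_r)\leq\ve\cdot\rewd(A''_I)$, and the claim follows.

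The only point needing care is the bookkeeping in step (ii): either count the last group separately (it is unchanged) or charge $B_{k''-1}$ to it. In both versions the inequalities hold because group sizes are equal, and $|B_{k''}|$ is small when it differs. Nothing else in this proof depends on the related-machines setting, so the argument is identical to that of Claim~\ref{rwdsparse}.
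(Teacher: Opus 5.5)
Your argument is the paper's. The paper omits this proof, saying it duplicates the proof of Claim~\ref{rwdsparse}, and that proof is the telescoping you reproduce: $B_1,\dots,B_{1/\ve}$ and $B_{k''}$ keep their rewards, $\trewd(B_r)\ge\rewd(B_{r+1})$ for $\frac{1}{\ve}+1\le r\le k''-2$, the term $\trewd(B_{k''-1})\ge 0$ is dropped, and the averaging bound gives $\rewd(B_{1/\ve+1})\le\ve\,\rewd(A''_I)$.

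One correction: the alternative bookkeeping in step (ii), charging $B_{k''-1}$ to $B_{k''}$, is invalid. The condition $\ve^3|A''_I|\ge 1$ does not imply $|B_{k''-1}|\ge|B_{k''}|$. For example, take $\ve=\frac14$ (so $1/\ve^3=64$) and $|A''_I|=100$. Then $B_1,\dots,B_{64}$ are singletons while $|B_{k''}|=36$, so $\trewd(B_{k''-1})\ge\rewd(B_{k''})$ fails when all rewards are equal. Delete that option and use only the first one: count $B_{k''}$ at its unchanged reward and discard $\trewd(B_{k''-1})\ge 0$. With that choice the telescoping in (iii) is valid and the proof is complete.
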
 
%\end{enumerate}

%\begin{enumerate}[resume*, start=8]
\medskip
\item {\bf Enumerating configurations.} \label{rel-cenum}
We proceed as in step~\ref{lrgasgn}. 
%but use Lemma~\ref{enumlem} to simplify the description. 
%We describe the enumeration used for 
Consider a machine class $I\in\mclass$ with capacity $W$. 
Jobs in $\bA_I$ are of at most $\frac{4}{\ve^5}$ distinct types, where recall
that type of a job is its (size, reward) tuple.
A configuration for $I$ lists how many jobs from $\bA_I$ of each type are assigned to a
machine so that the total work assigned is at most $W$. 
(Recall that $W$ is the capacity of class $I$.)
Let $\cfgset_I$ denote the collection of all such configurations for class $I$.
Since every job in $\bA_I$ is large for $I$, we have 
$|\cfgset_I|\leq C=\bigl(\frac{1}{\ve}\bigr)^{4/\ve^5}$. 
The reward of a configuration $\cfg$, denoted $\trewd(\cfg)$, is the reward obtained from
the jobs assigned by $\cfg$. 
%abusing notation, we denote this by $\trewd(\cfg)$. 

%We say that a machine $i$ uses a configuration $\cfg$ 
%to mean that the jobs from $\bA_I$ assigned to it conform to $\cfg$, i.e., the
%number of $\bA_I$-jobs of each type assigned to $i$ is as specified by $\cfg$.
Say that $\tsg$ uses a configuration $\cfg$ for $I$, if some machine in $I$ is 
assigned $\cfg$ under $\tsg$.
Call a tuple in $\Zp^{\cfgset_I}$ specifying how many times each configuration in
$\cfgset_I$ is used for machines in $I$ a {\em configuration-usage sequence for $I$}.
For $\cfg\in\cfgset_I$, let $N_{I,\cfg}$ denote the number of times $\tsg$ uses
configuration $\cfg$ for machines in $I$. 
We call $\{N_{I,\cfg}\}_{\cfg\in\cfgset_I}$, the $\tsg$-configuration-usage sequence for
$I$. We apply Lemma~\ref{enumlem} to guess the sequence 
$\bigl\{N_{I,\cfg}\cdot\trewd(\cfg)\}_{\cfg\in\cfgset_I}$ up to cumulative error
$\ve\cdot\trewd(\bA_I)$ by taking $\est=\trewd(\bA_I)$. 
%(This satisfies the requirements of Lemma~\ref{enumlem} since
%$\sum_{\cfg\in\cfgset_I}N_{\cfg}\trewd(\cfg)\leq\trewd(\bA_I)$.) 
This takes time $2^{O(C/\ve)}$ (which is a constant). Recall that this means,
more precisely, that we identify a set of size $2^{O(C/\ve)}$ that contains a sequence
close to the desired sequence.
%configuration-usage sequence for $I$ that is close to the $\tsg$-configuration-usage
%sequence for $I$. 
Define $\num_{I,\cfg}$ to be the entry for configuration $\cfg$ in the correct guessed
sequence, divided by $\trewd(\cfg)$. 
Then, $\num_{I,\cfg}\leq N_{I,\cfg}$ for all $\cfg\in\cfgset_I$, and
\begin{equation}
\sum_{\cfg\in\cfgset_I}\num_{I,\cfg}\cdot\trewd(\cfg)\geq
\sum_{\cfg\in\cfgset_I}N_{I,\cfg}\cdot\trewd(\cfg)-\ve\cdot\trewd(\bA_I)
%=\trewd(\bA_I)-\ve\cdot\trewd(\bA_I)
=(1-\ve)\trewd(\bA_I). \label{aiclass}
\end{equation} 

\smallskip
Our goal is to find an assignment consistent with $\tsg$ of a large-reward subset of
$A^{\lrg}$ to machines in $[m]-\Mc_{\fast}$, so we need to consider all possible combinations
of configuration-usage sequences for the different machine classes. So the overall time
needed to find a suitable configuration-usage sequence for {\em every} machine class, 
%configurations for {\em all} machine classes, 
i.e., the size of the search set that we need to consider such that this set contains, for
every $I\in\mclass$, a configuration-usage sequence for $I$ that is close to the
$\tsg$-configuration-usage sequence for $I$, is 
$\bigl(2^{O(C/\ve)}\bigr)^{|\mclass|}=m^{O(C/\ve^2)}$, which is polynomially bounded.

\item {\bf Mapping the configuration-usage-sequences to a job-assignment.}
\label{rel-lrgasgn}
%{\bf Assigning jobs in \boldmath $\tA_I$ to machines in $I$ (including extra machines).}
We do the following for each machine class $I$.
Recall %at this point we have, for each machine class $I$, 
that we have a vector $\bigl(\num_{I,\cfg}\bigr)_{\cfg\in\cfgset_I}$ specifying the number
of times each $\cfg\in\cfgset_I$ is used, and (under the correct guesses) this is
component-wise at most the $\tsg$-configuration-sequence for $I$,
$\{N_{I,\cfg}\}_{\cfg\in\cfgset_I}$, which specifies the number of times $\tsg$ uses each
configuration for machines in $I$. 

For every $\cfg\in\cfgset_I$, we choose $\ceil{\num_{I,\cfg}}$ machines from $I$ that have
not been assigned any jobs yet, 
%are not assigned any jobs in $A^{\giant}_I$, 
and assign these machines configuration $\cfg$, where when we pick $\bA_I$-jobs as
specified by $\cfg$ to assign to a machine, we always pick from the unassigned jobs in
$\bA_I$. 
%Assigning $\cfg$ to a
%machine $i$ means that, from the unassigned jobs in $A''$, for each job type, we select
%the number of jobs of that type specified by $\cfg$ and assign these jobs to $i$. 
If we run out of machines while doing so, i.e.,
$\sum_{\cfg\in\cfgset}\ceil{\num_{I,\cfg}}+|A^{\giant}_I|>|I|$, then we declare
failure. Similarly, if we run out of jobs of a particular type, then we again declare 
failure. These failure events can only happen if one of our guesses is incorrect. 
Let $\hA_I\sse\bA_I$ denote the set of jobs assigned to machines in $I$ via this process. 
\end{enumerate}

\vspace*{-1ex}
\paragraph{V: Assigning small jobs.}
Recall that $A^{\fast}$ denotes the jobs in $A_1$ assigned by $\tsg$ to machines in
$\Mc_{\fast}$. 
Also, $A^{\lrg}$ and $A^{\sml}$ denote respectively the jobs in $A_1$ that are
assigned by $\tsg$ as large and small on their respective machines.  

At this point, we have a partial assignment $\sg$ consistent with $\tsg$ that assigns jobs
in  $\tA^{\fast}\sse A^{\fast}$ to machines in $\Mc_{\fast}$, and for each machine class
$I\in\mclass$, assigns jobs in $A^{\giant}_I\cup\hA_I\sse\tA_I\sse A^{\lrg}$ to machines
in $I$. Also, for each machine class $I\in\mclass$, jobs in 
$A'_I-\bA_I\sse\tA_I\sse A^{\lrg}$ are assigned to extra machines for $I$.
Let $S=\tA^{\fast}\cup\bigl(\bigcup_{I\in\mclass}(A^{\giant}_I\cup\hA_I)\bigr)$ be the
jobs assigned by $\sg$, and let 
$A^{\rem}=A_1-S-\bigcup_{I\in\mclass}(A'_I-\bA_I)$ be the jobs remaining in $A_1$, which are
not assigned to regular or extra machines. Note that $A^{\rem}\supseteq A^{\sml}$.

\begin{enumerate}[resume*, start=11]
\item We extend $\sg$ by writing an LP to find a large-reward subset of $A^{\rem}$ to assign to
machines in $[m]-\Mc_{\fast}$. Since we have taken care of large jobs, we insist that all
these jobs are assigned as small jobs on their respective machines, encoded
by \eqref{forbid}. We also ensure that we do not exceed capacity $\wkest_i$ on any machine
$i$, taking into account both the jobs assigned by $\sg$ to $i$ and the jobs assigned by
the LP to $i$, encoded by \eqref{workmc}. Also, we revert to the original
$\rewd_j$-rewards. This yields the following LP.
\begin{alignat}{3}
\max & \quad & \sum_{j\in A_2,i\in[m]-\Mc_{\fast}}\rewd_jx_{ij} & \tag{LP} \label{extnlp} \\
\text{s.t.} & \quad & \sum_{i\in[m]-\Mc_{\fast}}x_{ij} & \leq 1 \qquad && \forall j\in A^{\rem}
\label{extnlp-jasgn} \\
&& x_{ij} = 0 \quad \text{if }p_j&>\ve\cdot\wkest_i \qquad && 
\forall i\in[m]-\Mc_{\fast},\,j\in A^{\rem} \label{forbid} \\
&& p\bigl(\sg^{-1}(i)\bigr)+\sum_{j\in A^{\rem}}p_jx_{ij} & \leq \wkest_i 
\qquad && \forall i\in[m]-\Mc_{\fast} \label{workmc} \\
&& x & \geq 0. \notag
\end{alignat}

\begin{claim} \label{extnlpopt}
Under the correct guesses in previous steps, we have
$\OPT_{\text{\ref{extnlp}}}\geq\rewd(A^{\sml})$. 
\end{claim}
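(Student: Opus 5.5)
We exhibit a feasible solution to \eqref{extnlp} of objective value at least $\rewd(A^{\sml})$, namely the restriction of $\tsg$ to $A^{\sml}$. Concretely, set $x_{ij}=1$ if $j\in A^{\sml}$ and $\tsg(j)=i$, and $x_{ij}=0$ otherwise. The objective value is then exactly $\rewd(A^{\sml})$, with the original rewards, since every job in $A^{\sml}$ is assigned by $\tsg$ to a machine outside $\Mc_{\fast}$. Before checking the constraints, we confirm $A^{\sml}\sse A^{\rem}$, so that this $x$ is supported on the variables of \eqref{extnlp}. This holds because $S$ and the sets $A'_I-\bA_I$ consist only of fast or large jobs, which are disjoint from $A^{\sml}$; this is the inclusion $A^{\rem}\supseteq A^{\sml}$ noted in the text.

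Constraint \eqref{extnlp-jasgn} is immediate, since each job is assigned at most once. For \eqref{forbid}, a job $j\in A^{\sml}$ is by definition small for capacity $\wkest_{\tsg(j)}$, i.e., $p_j\le\ve\wkest_{\tsg(j)}$. Hence $x_{ij}>0$ forces $p_j\le\ve\wkest_i$.

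The main step is \eqref{workmc}. Fix $i\in[m]-\Mc_{\fast}$ in some class $I$. Under the correct guesses, the partial assignment $\sg$ is consistent with $\tsg$ on $I$, in the following sense. The giant jobs are placed one per machine, exactly as $\tsg$ does. The jobs of $\hA_I$ are placed according to configurations, each used at most $N_{I,\cfg}$ times, so they can be matched to machines of $I$ that $\tsg$ assigns a superset of the same types. Since machines in a class have the same capacity and are interchangeable at the level of work, we may permute the $\tsg$-assignment within $I$ so that $\sg^{-1}(i)$ is dominated, in size type by type, by the large jobs of $\tsg$ on $i$. After this permutation, $p(\sg^{-1}(i))$ is at most the large work that $\tsg$ places on $i$, measured in the modified sizes. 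Adding the small jobs that $\tsg$ places on $i$, the left-hand side of \eqref{workmc} is at most $\wvec{\tsg}_i\le\wksmooth_i\le\wkest_i$. Here we use Lemma~\ref{strucprop}\,\ref{item:capacities} together with the fact that the size-rounding steps preserve $\wvec{\tsg}\le\wksmooth$.

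The obstacle is making this consistency rigorous. Permuting $\tsg$ within a class must keep each small job $j$ on a machine with $p_j\le\ve\wkest$; this holds because all machines in $I$ share one capacity. We must also handle the size modifications in steps~\ref{rel-sizesparse} and \ref{step:goodsettwo}: the LP uses the modified sizes, and $\tsg$ was re-chosen to remain feasible under them. We therefore redefine $x$ using this final permuted and modified $\tsg$, and the bound $\OPT_{\text{\ref{extnlp}}}\ge\rewd(A^{\sml})$ follows.
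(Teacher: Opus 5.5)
Your proposal is correct and follows the paper's proof. Both take the restriction of $\tsg$ to $A^{\sml}$ as an integral solution to \eqref{extnlp} and check the three constraints, with \eqref{workmc} coming from the consistency of $\sg$ with $\tsg$ together with $\wvec{\tsg}_i\le\wksmooth_i\le\wkest_i$. Your explicit within-class permutation, which matches each machine's $\sg$-jobs against the \emph{large} work of some $\tsg$-machine, fills in a detail the paper leaves implicit. One small correction: after permuting, the bound should read $\wksmooth_{\pi(i)}\le\wkest_{\pi(i)}=\wkest_i$, not $\wksmooth_i$, because a class can span several groups $\Mc_r$ with different $\wksmooth$ values. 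Your remark that all machines in $I$ share one capacity already covers this.
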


\item Let $x^*$ be an optimal solution to \eqref{extnlp}.
We round $x^*$ using \gap rounding~\cite{ShmoysT93}. By properties of \gap rounding, the
resulting integer solution has objective value at least that of $x^*$.
Let $\hsg$ denote the assignment obtained by concatenating the resulting integer solution
and $\sg$.   
\end{enumerate}

\vspace*{-1ex}
\paragraph{VI. Postprocessing to obtain a feasible, near-optimal solution.}
We now clean things up to obtain a feasible solution without sacrificing the reward by
much. There are three interrelated issues we encounter: (1) we have used some
extra machines for each machine class $I$; (2) the assignment $\hsg$ may exceed the
$\wkest_i$ work-capacity on some machine $i\in[m]-\Mc_{\fast}$; (3) even if we respect the
$\wkest_i$ work capacities, since $\wkest_i$ overestimates $\wksmooth_i$, this need not
yield a feasible solution. 

Consider any machine $i\in[m]-\Mc_{\fast}$. 
By properties of \gap rounding, we know that there is at most one job $j$ with $\hsg(j)=i$
and $x^*_{ij}>0$ whose removal will reduce the work assigned to $i$ to at most $\wkest_i$;
also, since $x^*_{ij}>0$, we have $p_j\leq\ve\cdot\wkest_i$. 
So we have $p\bigl(\hsg^{-1}(i)\bigr)\leq (1+\ve)\wkest_i$. Moreover, if
$p\bigl(\hsg^{-1}(i)\bigr)>p\bigl(\sg^{-1}(i)\bigr)$, then this work difference is due to
small jobs for $\wkest_i$ assigned by $\hsg$ to $i$.

\smallskip
For every machine class $I\in\mclass$ with capacity $W$, we execute the following steps. 
\begin{enumerate}[resume*, start=13]
\item \label{rel-movejobs}
For each machine $i\in I$, let $R_i\sse\hsg^{-1}(i)-\sg^{-1}(i)$ be a minimal set of
jobs such that  
$p\bigl(\hsg^{-1}(i)\bigr)-p(R_i)\leq\max\bigl\{p(\sg^{-1}(i)),(1-\ve)W\}$. From the above
observations, we have that $p_j\leq\ve W$ for all $j\in R_i$ and 
$p(R_i)\leq 3\ve W$. 

We create extra machines and pack the jobs $\bigcup_{i\in I}R_i$ arbitrarily on these
extra machines so that each extra machine is packed maximally within capacity $(1-\ve)W$.
This creates at most $1+\floor{\frac{3\ve W\cdot|I|}{(1-2\ve)W}}$ extra machines,
since every extra machine, save for at most one, has at least 
$(1-\ve)W-\ve W$ load assigned to it, and $p\bigl(\bigcup_{i\in I}R_i\bigr)\leq 3\ve W\cdot|I|$.
We have $|I|\geq\frac{1}{\ve^2}$, so the number of extra machines created this
way is bounded by $7\ve|I|$ (recall that $\ve\leq 0.25$). Combined with the $2\ve|I|$ extra
machines created in steps~\ref{rel-sizesparse} and~\ref{rel-rewdsparse}, we have created
at most $9\ve|I|$ extra machines for class $I$.

\item \label{rel-dropmc} \label{rel-postprocess}
Considering all the machines used for class $I$, i.e., 
$I\cup\{\text{extra machines for $I$}\}$. 
We retain the $|I|$ largest-reward machines from this collection, and discard the rest.

\item \label{rel-finalsort}
This yields a work-vector for class $I$ consisting of the work, and the
corresponding jobs, assigned to the $|I|$ retained machines for class $I$. For the actual 
assignment of jobs to machines in $I$, we sort the work-vector coordinates, and assign the
the jobs corresponding to the $\ell$-th-largest work-vector coordinate to the $\ell$-th
fastest machine in $I$, for all $\ell=1,\ldots,|I|$.
%work-vector coordinate and assign these to the fastest machine in $I$, i.e., the
%smallest-index machine in $I$

\smallskip
\item \label{rel-ptas-end}
The final assignment is the assignment given by $\sg$ for machines in $\Mc_{\fast}$ 
together with the assignment computed above for machines in $[m]-\Mc_{\fast}$. (Note that
a machine $i$ with $\wksmooth_i=\wkest_i=0$ does not have any jobs assigned to it.)
\end{enumerate}

\subsubsection{Analysis} \label{relptas-analysis}
We prove the following.

\begin{theorem} \label{rel-ptasthm}
The algorithm described in steps~\ref{rel-ptas-start}--\ref{rel-ptas-end} is a PTAS
for \normsched on related machines. 
\end{theorem}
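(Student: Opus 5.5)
The plan is to mirror the identical-machines analysis (Theorem~\ref{ident-ptasthm}). We split the argument into three parts: running time, reward, and feasibility. Throughout, we assume that every enumeration in steps~\ref{rel-ptas-start}--\ref{rel-ptas-end} has produced the correct guess.

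\emph{Running time.} We collect the enumeration bounds already stated. Step~\ref{rel-ptas-start} costs $(n/\ve)^{O(1/\ve)}$. Guessing $\wkest$ is polynomial, by Lemma~\ref{strucprop}\,\ref{sorted},\,\ref{item:wm},\,\ref{smooth}: there are $O(\log m/\ve)$ groups and $O(\log m/\ve)$ candidate values per group, and a non-increasing choice is counted by Claim~\ref{polybnd}. Step~\ref{step:fast} costs $(n/\ve)^{O(1/\ve^6)}$, step~\ref{rel-largenum} is similar, and step~\ref{rel-cenum} costs $m^{O(C/\ve^2)}$. The LP \eqref{extnlp} and \gap rounding are polynomial. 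Each guess is followed by a feasibility-oracle check, and we return the best feasible solution found.

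\emph{Reward.} We chain losses, starting from $\rewd(A_1)\ge(1-4\ve)\OPT$ (item~\ref{a1ineq}) and the partition $A_1=A^{\fast}\cup A^{\lrg}\cup A^{\sml}$. Claim~\ref{rewdfast} accounts for the fast jobs and Claim~\ref{rewdlarge} for the large jobs. For each class $I$, $\tA_I=A^{\giant}_I\cup(A'_I-A''_I)\cup(A''_I-\bA_I)\cup\bA_I$. The first three pieces are assigned either to regular machines or to extra machines. For $\bA_I$, \eqref{aiclass} shows that $\hA_I$ has modified reward at least $(1-\ve)\trewd(\bA_I)$, and Claim~\ref{rel-rwdsparse} then gives $\rewd$-reward $\ge(1-\ve)^2\rewd(A''_I)$ on the pieces handled by configurations, with the remaining pieces kept at full reward. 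Claim~\ref{extnlpopt} and the fact that \gap rounding does not decrease the objective give small-job reward at least $\rewd(A^{\sml})$. Moving the sets $R_i$ in step~\ref{rel-movejobs} loses nothing. Each class uses at most $9\ve|I|$ extra machines, so keeping the $|I|$ best of $(1+9\ve)|I|$ machines retains a $1/(1+9\ve)$ fraction. Altogether the reward is $(1-O(\ve))\OPT$.

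\emph{Feasibility.} This is the main obstacle. The goal is to exhibit a dominating vector with norm at most $B$, namely $L_i=\wksmooth_i/\spd_i$ from Lemma~\ref{strucprop}\,\ref{item:feasible}. On $\Mc_{\fast}$ we have $\sg$ consistent with $\tsg$, so the work there is at most $\wksmooth_i$. For a class $I$ with capacity $W$, every retained machine (regular or extra) has work at most $\max\{p(\sg^{-1}(i)),(1-\ve)W\}$ or is an extra machine carrying work at most $(1-\ve)W$. The term $p(\sg^{-1}(i))$ is at most $\wksmooth_i$, since the large-job assignment is consistent with $\tsg$. Single-job extra machines need care: a giant job is placed on its own regular machine, and a non-giant large job has size below $(1-\ve)W$. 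Because $\wkest_i<(1+\ve)\wksmooth_i$, we get $(1-\ve)W\le\wksmooth_i$ for every $i\in I$, and by Lemma~\ref{strucprop}\,\ref{smooth} all of $I$ shares this value. Consequently the sorted work vector of the $|I|$ retained machines is at most $\wksmooth$ restricted to $I$ coordinatewise after matching; machines with $p(\sg^{-1}(i))$ are matched to their own $\wksmooth_i$, which is the same value anyway. The sorting in step~\ref{rel-finalsort} gives $\wvec{\cdot}\le\wksmooth$ on $I$. Speeds are monotone and the final work vector is similarly ordered, so the load is at most $L$ coordinatewise, and monotonicity yields $f\le B$. The delicate points to verify are that the size-sparsified jobs, once reverted to their original sizes, only shrink the work, and that Lemma~\ref{wksort} is not needed across classes because of this coordinatewise bound.
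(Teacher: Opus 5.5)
\textbf{Same approach as the paper, with one incorrect step in the feasibility part.} Your three-part structure matches the paper's proof: running time, the reward chain, and feasibility via the coordinatewise bound $\twvec\le\wksmooth$ plus Lemma~\ref{strucprop}\,\ref{item:feasible}. The running-time and reward parts are fine.

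\textbf{The incorrect step.} You assert, citing Lemma~\ref{strucprop}\,\ref{smooth}, that all machines of a class $I$ share the same $\wksmooth$ value. You then use this to claim $p(\sg^{-1}(i))\le\wksmooth_i$ for each regular machine $i\in I$. But property~\ref{smooth} only gives constancy on each group $\Mc_r$. A machine class is a maximal run of equal $\wkest$ values. It can therefore be a union of several consecutive groups whose different $\wksmooth$ values round up to the same power of $1+\ve$; on $I$ you are only guaranteed $(1-\ve)W\le\wksmooth_i\le W$.

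This matters because step~\ref{rel-lrgasgn} places configurations and giant jobs on arbitrary free machines of $I$. Consistency with $\tsg$ therefore only gives a permutation $\pi$ of $I$ with
$p(\sg^{-1}(i))\le p(\tsg^{-1}(\pi(i)))\le\wksmooth_{\pi(i)}$.
The per-machine bound against $\wksmooth_i$ can genuinely fail. For example, a configuration that $\tsg$ uses on a machine of an earlier, higher-$\wksmooth$ group may land on a machine of a later, lower-$\wksmooth$ group.

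\textbf{The fix (this is what the paper does).} Build an injective matching of the $|I|$ retained machines into $I$:
\begin{itemize}
\item Map each retained regular machine whose work exceeds $(1-\ve)W$ to $\pi(i)$. Its work is then at most $p(\sg^{-1}(i))\le\wksmooth_{\pi(i)}$.
\item Map every other retained machine, regular or extra, to a distinct unused index of $I$. Each such machine has work at most $(1-\ve)W\le\min_{i'\in I}\wksmooth_{i'}$.
\end{itemize}
This shows the sorted work vector of the retained machines is dominated by the sorted vector $(\wksmooth_i)_{i\in I}$. The sorting in step~\ref{rel-finalsort}, together with $\wksmooth=\wksmooth^{\down}$, then gives $\twvec_i\le\wksmooth_i$ for all $i\in I$. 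With this correction your argument coincides with the paper's.
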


\begin{proof}
%\begin{lemma} \label{rel-runtime}
%The total time required for all the enumeration described in
%steps~\ref{rel-ptas-start}--\ref{rel-lrgasgn} is bounded $(mn)^{g(1/\ve)}$ for a suitable
%function $g$. Hence, for any fixed $\ve>0$, the running time is polynomially
%bounded. %$\poly\bigl(\size(\I),(mn)^{\poly(1/\ve)}\bigr)$. 
%\end{lemma}
%\begin{proof}
Lemmas~\ref{rel-rewd} and~\ref{rel-feas} show that we obtain a feasible solution with
reward $\bigl(1-O(\ve)\bigr)\OPT$.
As discussed when describing the algorithm, the time required for enumeration in any of
the individual steps in~\ref{rel-ptas-start}--\ref{rel-cenum} is of the form
$(mn)^{g(1/\ve)}$ for some function $g$. There are only a constant number of steps, so we
have polynomial running time for any fixed $\ve>0$.
\end{proof}

We prove Lemmas~\ref{enumlem}--\ref{strucprop} at the end of this section.
We begin the proof of the performance guarantee by proving
Claim~\ref{rewdfast}, Claim~\ref{rewdlarge} and Claim~\ref{extnlpopt} that were stated
while describing the algorithm. 
We omit the proof of Claim~\ref{rel-rwdsparse} as this simply duplicates the proof of
Claim~\ref{rwdsparse} by making the appropriate notational changes.

\begin{proofof}{Claim~\ref{rewdfast}}
Recall that $\tA^{\fast}$ consists of $\sum_{i\in \Mc_{\fast}}\ceil{\tdh_{q,i}}$ jobs from
$\abkt_q$, for each $q\in[\anbuck]$, and we have (by Lemma~\ref{enumlem})
that $\sum_{i\in \Mc_{\fast}}\tdh_{q,i}\geq\sum_{i\in \Mc_{\fast}}\bigl|\abkt_q\cap\tsg^{-1}(i)\bigr|-\ve|\abkt_q|$. 
Therefore,
\begin{equation*}
\begin{split}
\rewd(\tA^{\fast}) & =\sum_{q\in[\anbuck]}\rewd^{(q)}\sum_{i\in \Mc_{\fast}}\tdh_{q,i}
\ge \sum_{q\in[\anbuck]}\rewd^{(q)}\biggl(\sum_{i\in M^\fast}\bigl|\abkt_q\cap\tsg^{-1}(i)\bigr|-\ve|\abkt_q|\biggr)
\\
& =\rewd(A^\fast)- \ve\cdot\sum_{q\in[\anbuck]}\rewd^{(q)}\cdot|\abkt_q|
= \rewd(A^\fast)-\ve\cdot\rewd(A_1). \qedhere
\end{split}
\end{equation*}
\end{proofof}

\begin{proofof}{Claim~\ref{rewdlarge}}
Recall that $\tA_I=\bigcup_{q\in[\anbuck]: I\in\lmc_q}\tA_{q,I}$, 
and for each $q\in[\anbuck]$, we have $|\tA_{q,I}|=\ceil{\tn_{q,I}}$ and
the $\bigl(\tn_{q,I}\bigr)_{I\in\lmc_q}$
sequence satisfies (by Lemma~\ref{enumlem})
$\sum_{I\in\lmc_q}\tn_{q,I}\geq\sum_{I\in\lmc_q}\bigl|\{j\in\abkt_q: \tsg(j)\in I\}\bigr|-\ve\cdot|\abkt_q|$.
So we have
\begin{equation*}
\begin{split}
\sum_{I\in\mclass}\rewd(\tA_I)
& =\sum_{I\in\mclass}\sum_{q\in[\anbuck]:I\in\lmc_q}\rewd(\tA_{q,I})
=\sum_{q\in[\anbuck]}\rewd^{(q)}\sum_{I\in\lmc_q}\tn_{q,I} \\
& \geq\sum_{q\in[\anbuck]}\rewd^{(q)}
\biggl(\sum_{I\in\lmc_q}\Bigl|\bigl\{j\in\abkt_q:\tsg(j)\in I\bigr\}\Bigr|-\ve|\abkt_q|\biggr) \\
& =\rewd(A^{\lrg})-\ve\cdot\sum_{q\in[\anbuck]}\rewd^{(q)}\cdot|\abkt_q|
=\rewd(A^{\lrg})-\ve\cdot\rewd(A_1). \qedhere
\end{split}
\end{equation*}
\end{proofof}

\begin{proofof}{Claim~\ref{extnlpopt}}
This follows simply because $\tsg$ yields a feasible integer solution to \eqref{extnlp} of
objective value $\rewd(A^{\sml})$. For every $j\in A^{\sml}$, we set $x_{ij}=1$ if
$i=\tsg(j)$ and $0$ otherwise. This clearly satisfies constraints \eqref{extnlp-jasgn}, and
satisfies \eqref{forbid} by the definition of $A^{\sml}$. We satisfy \eqref{workmc}
because under the correct guesses, $\sg$ is consistent with $\tsg$, and the total work
assigned by $\tsg$ to a machine $i\in[m]-\Mc_{\fast}$ is at most
$\wksmooth_i\leq\wkest_i$. 
\end{proofof}

\begin{lemma} \label{rel-rewd}
The assignment returned obtains reward at least $\bigl(1-O(\ve)\bigr)\OPT$.
\end{lemma}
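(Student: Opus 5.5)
We account for the reward by tracking, category by category, how much of $\rewd(A_1)\geq(1-4\ve)\OPT$ (property~(\ref{a1ineq}) of step~\ref{step:goodsettwo}) survives to the final assignment. We have $A_1=A^{\fast}\cup A^{\lrg}\cup A^{\sml}$, so it suffices to show three things: the fast machines keep essentially $\rewd(A^{\fast})$; each class $I$ keeps essentially the large-job reward assigned to it; and the small jobs keep essentially $\rewd(A^{\sml})$. We then lose only a $(1-O(\ve))$ factor in the final pare-down of step~\ref{rel-dropmc}.

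\textbf{Fast machines.} Jobs in $\tA^{\fast}$ are never moved. By Claim~\ref{rewdfast}, their reward is at least $\rewd(A^{\fast})-\ve\rewd(A_1)$.

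\textbf{Large jobs, class by class.} Fix a class $I$. Before postprocessing, the regular and extra machines of $I$ together hold $A^{\giant}_I$, the first size group (moved to extra machines in step~\ref{rel-sizesparse}), $A''_I-\bA_I$ (moved to extra machines in step~\ref{rel-rewdsparse}), and $\hA_I$. By \eqref{aiclass} and the fact that $\trewd\leq\rewd$,
\[
\rewd(\hA_I)\geq\trewd(\hA_I)\geq\sum_{\cfg}\num_{I,\cfg}\trewd(\cfg)\geq(1-\ve)\trewd(\bA_I).
\]
Adding back the jobs parked on extra machines, whose rewards are unchanged, and applying Claim~\ref{rel-rwdsparse}, the large-job reward on $I$'s machines is at least $(1-\ve)^2\rewd(\tA_I)$. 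Summing over classes and using Claim~\ref{rewdlarge} gives at least $(1-O(\ve))\rewd(A^{\lrg})-\ve\rewd(A_1)$.

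\textbf{Small jobs.} By Claim~\ref{extnlpopt}, the LP value is at least $\rewd(A^{\sml})$. GAP rounding does not decrease the objective, so the jobs added by $\hsg$ beyond $\sg$ have reward at least $\rewd(A^{\sml})$. Step~\ref{rel-movejobs} only relocates jobs $R_i$ to extra machines and never drops them, so this reward is still present across $I$'s regular and extra machines.

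\textbf{Pare-down.} This is the step needing the most care. Every job placed on any machine of class $I$ is on one of at most $(1+9\ve)|I|$ machines. Keeping the $|I|$ highest-reward machines retains at least a $\frac{1}{1+9\ve}\geq 1-9\ve$ fraction of the class's total reward. This bound works because the reward of a machine is well defined (each job sits on exactly one machine), so a simple averaging argument suffices. Sorting in step~\ref{rel-finalsort} permutes machines within the class and does not change reward.

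\textbf{Combining.} Summing the three contributions gives reward at least
\[
(1-O(\ve))\bigl(\rewd(A^{\fast})+\rewd(A^{\lrg})+\rewd(A^{\sml})\bigr)-2\ve\rewd(A_1)\geq(1-O(\ve))\rewd(A_1)\geq(1-O(\ve))\OPT.
\]

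The main obstacle is bookkeeping rather than a new idea. One point to check is that the "declare failure" branches do not trigger under correct guesses, which follows from the consistency with $\tsg$ already argued. The other is that rewards use the original $\rewd_j$ (or the bucket-rounded rewards of step~\ref{step:goodset}, which lose only a $(1+\ve)$ factor) and not the sparsified $\trewd$.
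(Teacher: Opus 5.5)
Your proposal is correct and follows the paper's proof essentially step for step. You lower-bound the pre-pare-down reward by $\rewd(\tA^{\fast})+\sum_{I}(1-\ve)^2\rewd(\tA_I)+\rewd(A^{\sml})$ using \eqref{aiclass}, Claim~\ref{rel-rwdsparse} and Claim~\ref{extnlpopt}, convert this to $(1-O(\ve))\rewd(A_1)\geq(1-O(\ve))\OPT$ via Claims~\ref{rewdfast} and~\ref{rewdlarge}, and then lose only a $\frac{1}{1+9\ve}$ factor by keeping the $|I|$ highest-reward machines of each class. Your step $\rewd(\hA_I)\geq\trewd(\hA_I)\geq\sum_{\cfg}\num_{I,\cfg}\trewd(\cfg)$ is slightly cleaner than the paper's version, which writes $\rewd(\cfg)$ for configurations that are defined in terms of $\trewd$-types.
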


\begin{proof}
We first lower bound the reward obtained before dropping the extra machines in
step~\ref{rel-dropmc}.
%Before dropping the extra machines, including also the jobs assigned to extra machines for
%a machine class, 
This reward (where we are including jobs assigned to extra machines) is 
\begin{equation}
\rewd(\tA^{\fast})+\sum_{I\in\mclass}\Bigl(\rewd(A^{\giant}_I)+\rewd(A'_I-\bA_I)+\rewd(\hA_I)\Bigr)
+\rewd(A^{\sml}). \label{totrewdineq1}
\end{equation}
In the above expression, For a machine class $I$, the $\rewd(A'_I-\bA_I)$ term is the
reward from jobs assigned to extra machines for $I$ in
steps~\ref{rel-sizesparse}, \ref{rel-rewdsparse}; 
the $\rewd(\hA_I)$ term is the reward obtained from the large jobs assigned to machines in $I$ in
step~\ref{rel-lrgasgn} as a result of the configurations obtained for $I$. The last term
$\rewd(A^{\sml})$ is the reward from the small jobs assigned by \gap rounding
(Claim~\ref{extnlpopt}).

Consider a machine class $I$. We have 
\begin{equation}
\rewd(\hA_I)=\sum_{\cfg\in\cfgset_I}\ceil{\num_{I,\cfg}}\cdot\rewd(\cfg) 
\geq\sum_{\cfg\in\cfgset_I}\num_{I,\cfg}\cdot\trewd(\cfg)\geq(1-\ve)\trewd(\bA_I)
\label{aiineq1}
\end{equation}
where the last inequality is due to \eqref{aiclass}. So 
\begin{equation*}
\begin{split}
\rewd(A'_I-\bA_I)+\rewd(\hA_I)&\geq\rewd(A'_I-A''_I)+\rewd(A''_I-\bA_I)+(1-\ve)\trewd(\bA_I)
\\ & \geq\rewd(A'_I-A''_I)+(1-\ve)\trewd(A''_I)
\geq\rewd(A'_I-A''_I)+(1-\ve)^2\rewd(A''_I) \\
& \geq(1-\ve)^2\rewd(A'_I)
\end{split}
\end{equation*}
The first two inequalities use the fact that $\bA_I\sse A''_I\sse A'_I$; the first
inequality also uses \eqref{aiineq1}. The third is due to
Claim~\ref{rel-rwdsparse}. 

Plugging the above in \eqref{totrewdineq1}, and since
$\tA_I=A^{\giant}_I\cup A'_I$, we obtain that the total reward is at least
$\rewd(\tA^{\fast})+\sum_{I\in\mclass}(1-\ve)^2\rewd(\tA_I)+\rewd(A^{\sml})$.
Now using Claims~\ref{rewdfast} and~\ref{rewdlarge}, and since 
$A_1=A^{\fast}\cup A^{\lrg}\cup A^{\sml}$, the total reward obtained is at least 
$(1-\ve)^2\rewd(A_1)-2\ve\cdot\rewd(A_1)\geq(1-4\ve)\rewd(A_1)$.
We have $\rewd(A_1)\geq (1-4\ve)\OPT$ (see~\ref{a1ineq}), so the total reward obtained
before dropping extra machines is at least $(1-8\ve)\OPT$.

When we drop machines, for a machine class $I$, we retain the $|I|$ largest-reward
machines from the (at most) $(1+9\ve)|I|$ machines used for class $I$, so we obtain at
least a $\frac{1}{1+9\ve}$-fraction of the total reward before dropping machines. So the
reward of the final assignment is at least
$\frac{1-8\ve}{1+9\ve}\cdot\OPT\geq(1-17\ve)\OPT$. 
\end{proof}

\begin{lemma} \label{rel-feas}
The assignment returned is feasible, i.e., $f(\text{resulting load vector})\leq B$.
\end{lemma}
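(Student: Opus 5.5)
The plan is to show that the final load vector $\hL$ is dominated, after sorting within each group, by the load vector $L$ of Lemma~\ref{strucprop}\,\ref{item:feasible}. Concretely, I will aim for: (i) on every fast machine $i\in\Mc_{\fast}$, the work is at most $\wvec{\tsg}_i\leq\wksmooth_i$; and (ii) for each machine class $I$, after step~\ref{rel-finalsort} the $\ell$-th fastest machine of $I$ receives work at most $\wksmooth_{i}$, where $i$ is that same $\ell$-th machine of $I$. Since speeds are fixed per machine, (i) and (ii) give $\hL\leq L$ coordinatewise. Monotonicity of $f$ and Lemma~\ref{strucprop}\,\ref{item:feasible} then yield $f(\hL)\leq B$. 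Machines with $\wkest_i=0$ receive nothing, so they are harmless.

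Part (i) is direct. In step~\ref{step:fast} each fast machine $i$ receives $\ceil{\tdh_{q,i}}\leq|\abkt_q\cap\tsg^{-1}(i)|$ jobs of each type $q$; the inequality holds because the left side is an integer bounded by the underestimate. Hence its work is at most $\wvec{\tsg}_i$.

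For (ii), the key issue is that $\wkest_i$ may exceed $\wksmooth_i$ by a factor up to $1+\ve$. So capacity $W=\wkest_i$ alone does not suffice; consistency with $\tsg$ must be used. Within a class $I$ (a union of groups $\Mc_r$, $r\ge1$), the values $\wksmooth_i$ are non-increasing. My intended claim is that every retained machine for $I$, regular or extra, has work at most $\max\{p(\sg^{-1}(i)),(1-\ve)W\}$, where for an extra machine the first term is absent. I would check this case by case.
\begin{itemize}
\item An extra machine from steps~\ref{rel-sizesparse}/\ref{rel-rewdsparse} holds one non-giant large job, so its work is below $(1-\ve)W$.
\item An extra machine from step~\ref{rel-movejobs} is packed within $(1-\ve)W$ by construction.
\item A regular machine $i$, after removing $R_i$, has work at most $\max\{p(\sg^{-1}(i)),(1-\ve)W\}$.
\end{itemize}
Here $p(\sg^{-1}(i))$ uses modified sizes, which are at least the original sizes. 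Because $\sg$ is consistent with $\tsg$ on $I$, the value $p(\sg^{-1}(i))$ is at most the work that $\tsg$ places on some machine of $I$. Moreover, these machines are distinct for distinct $i$, since configurations and giant jobs were placed on machines mirroring $\tsg$'s usage. On the other side, $(1-\ve)W\leq\wkest_i/(1+\ve)\leq\wksmooth_i$ for all $i\in I$, using $\wkest_i<(1+\ve)\wksmooth_i$.

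It follows that the multiset of retained works for $I$ is dominated, after sorting, by $(\max\{\wvec{\tsg}_{i'},\,\min_{i\in I}\wksmooth_i\})_{i'\in I}$. This vector is itself at most $(\wksmooth_{i'})_{i'\in I}$ coordinatewise, because $\wvec{\tsg}\leq\wksmooth$. Sorting both sides and using that $\wksmooth$ restricted to $I$ is sorted (Lemma~\ref{strucprop}\,\ref{sorted}) gives (ii).

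The main obstacle is making the consistency claim for the regular machines rigorous. I would formalize it as an injection $\phi$ from machines of $I$ to machines of $I$ with $p(\sg^{-1}(i))\leq\wvec{\tsg}_{\phi(i)}$. This injection comes from matching the $\ceil{\num_{I,\cfg}}\leq N_{I,\cfg}$ chosen configurations, plus the giant jobs, to distinct $\tsg$-machines using those configurations.

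A subtle point is that the large jobs in $\hA_I$ carry modified (rounded-up) sizes. The $\tsg$ I compare against is therefore the modified one from step~\ref{rel-sizesparse}, which still satisfies $\wvec{\tsg}\le\wksmooth$. Reverting to the original sizes only decreases work. I would also double-check that the choice of which regular and extra machines are retained in step~\ref{rel-dropmc} is irrelevant. It is, because the bound above holds for any $|I|$ of them: extra machines are bounded by the uniform term $(1-\ve)W\le\min_I\wksmooth$, and regular machines are injected via $\phi$.
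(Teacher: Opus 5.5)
Your proposal is correct and follows essentially the same route as the paper's proof. Both show that the final work vector is coordinatewise at most $\wksmooth$: fast machines are handled via the underestimates from step~\ref{step:fast}, and each class $I$ via consistency with $\tsg$, the bound $(1-\ve)W\leq\min_{i\in I}\wksmooth_i$, and the sorting in step~\ref{rel-finalsort}; Lemma~\ref{strucprop}\,\ref{item:feasible} and monotonicity of $f$ then finish the argument, with your explicit injection and the remark on reverting modified sizes only spelling out details the paper leaves implicit.
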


\begin{proof}
Let $\twvec$ be the work-vector resulting from the assignment returned.
We argue that %the sorted work-vector resulting from the assignment is coordinate-wise at
$\twvec\leq\wksmooth$. By Lemma~\ref{strucprop}\,\ref{item:feasible}, this implies that
the resulting load vector has norm at most $B$.

Recall that the partial assignment $\sg$ computed in
steps~\ref{rel-ptas-start}--\ref{rel-lrgasgn} is consistent with $\tsg$ and
$\wvec{\tsg}\leq\wksmooth$ (Lemma~\ref{strucprop}\,\ref{item:capacities}).
Machines in $\Mc_{\fast}$ are only assigned jobs by $\sg$, so this implies that
$\twvec_i\leq\wvec{\tsg}_i\leq\wksmooth_i$ for all $i\in\Mc_{\fast}$.

Next, consider a machine class $I\in\mclass$ with capacity $W$. 
Note that all through steps~\ref{rel-largenum}--\ref{rel-dropmc}, we are concerned with
the work-vector of machines in $I$.
So $\sg$ being consistent with $\tsg$ means more precisely that there is a permutation
$\pi:I\mapsto I$ such that 
$\wvec{\sg}_i=p(\sg^{-1}(i))\leq p\bigl(\tsg^{-1}(\pi(i))\bigr)=\wvec{\tsg}_{\pi(i)}$. 
When we move jobs in $\tA_I$ to extra machines in
steps~\ref{rel-sizesparse}, \ref{rel-rewdsparse}, one job per extra machine, we ensure
that these are not giant jobs for $I$; so the work on an extra machine is at most $(1-\ve)W$.
In step~\ref{rel-movejobs}, by design, we ensure that the work assigned to any machine
$i\in I$ is at most 
$\max\bigl\{p(\sg^{-1}(i)),(1-\ve)W\bigr\}$, and
the work assigned to any extra machine for $I$ is at most $(1-\ve)W$. Note that
$(1-\ve)W\leq\wksmooth_i$ for every $i\in I$, since $W$ is the common value of $\wkest_i$
for all $i\in I$. So $(1-\ve)W\leq\min_{i\in I}\wksmooth_i$. 
%We can also say that if $p(\sg^{-1}(i))>(1-\ve)W$, then 

This implies that for every machine retained for $I$ in step~\ref{rel-dropmc}, one can
bound the work assigned to that machine by a {\em distinct} $\wvec{\tsg}_i$ term, and
hence a distinct $\wksmooth_i$ term, for some $i\in I$. 
It follows that the sorted work-vector of the machines retained for $I$ is coordinate-wise
at most $(\wksmooth_i)_{i\in I}$. Due to the sorting performed in step~\ref{rel-finalsort},
this implies that $(\twvec_i)_{i\in I}\leq(\wksmooth)_{i\in I}$. 

This holds for every $I\in\mclass$, so combined with
$(\twvec_i)_{i\in\Mc_{\fast}}\leq(\wksmooth_i)_{i\in\Mc_{\fast}}$, we obtain that $\twvec\leq\wksmooth$.
\end{proof}

\subsubsection*{Proofs of Lemmas~\ref{wksort} and~\ref{strucprop}}

\begin{proofof}{Lemma~\ref{wksort}}
If $\pi$ is the identity permutation, there is nothing to be shown.
Let $\w=\wvec{\sg}$.
So suppose there are machines $i,i'\in[m]$ with $i<i'$ (so $\spd_i\geq\spd_{i'}$) and
$\w_i<\w_{i'}$. Consider the assignment $\sg'$, where we switch the
assignments of machines 
$i$ and $i'$. That is, for $j\in S$, we set $\sg'(j)=\sg(j)$ if $\sg(j)\notin\{i,i'\}$; we
set $\sg'(j)=i$ if $\sg(j)=i'$, and $\sg'(j)=i'$ if $\sg(j)=i$.
%Let $i,i^\prime\in[m]$ be two machines (indices) with $i<i^\prime$ (which implies that $s_i\ge s_{i^\prime}$).
%Let $w_i=\wvec{\sg}_i$ and $w_{i^\prime}=\wvec{\sg}_{i^\prime}$.
%Suppose that $w_i<w_{i^\prime}$.
%Consider the assignment $\sg^\prime:S\to[m]$ 
%with $\sg^\prime(j)=i$ if $j\in\sg^{-1}(i^\prime)$, $\sg^\prime(j)=i^\prime$ if $j\in\sg^{-1}(i)$, and $\sg(j)=\sg^\prime(j)$ for all other $j\in S$.
It suffices to show that $f(\lvec{\sg^\prime})\le f(\lvec{\sg})$, since via a series of
pairwise interchanges, one can move from the assignment $\sg$ to the assignment
$\pi\circ\sg$.

We utilize Claim~\ref{schur}. Let $v=\lvec{\sg}$, $u=\lvec{\sg'}$, and
$\kp=\bigl(\w_{i'}-\w_i\bigr)/\spd_{i'}$. 
Then $u_\ell=v_\ell$ for all $\ell\in[m]-\{i,i'\}$. We have 
\begin{alignat}{1}
u_i & =\frac{\sum_{j\in S:\sg'(j)=i}p_j}{\spd_i}=\frac{\w_{i'}}{\spd_i}
\leq\frac{\w_i}{\spd_i}+\frac{\w_{i'}-\w_i}{\spd_{i'}}=v_i+\kp, \label{uiineq} \\
\text{and} \quad u_{i'} & = \frac{\w_i}{\spd_{i'}}=\frac{\w_{i'}}{\spd_{i'}}-\kp=v_{i'}-\kp \notag
\end{alignat}
where the inequality in \eqref{uiineq} is because $\spd_i\geq\spd_{i'}$.
So by Claim~\ref{schur}, we have $f(u)\leq f(v)$.
\end{proofof}

We restate Lemma~\ref{strucprop} for convenience.
\firststrucprop*

\begin{proof} %of}{Lemma~\ref{strucprop}}
%Consider the work-vector $\wvecopt$ of 
Consider an optimal assignment $\sg^*:\optset\to[m]$, %where $A^*\sse J$ is some subset of jobs.
and let $\wvecopt=\wvec{\sg^*}$.
By Claim~\ref{claim:sortedwork}, we can assume that $\wvecopt=\wvecopt^{\down}$.
%$\wvecopt_i\ge\wvecopt_{i^\prime}$ for any $i>i^\prime$.
%(Recall that $\Mc_0=\Mc_{\fast}$.)
For the boundary case where $\ngrp=0$, which means that $\Mc_{\fast}=[m]$, we take
$\noptset=\optset$, $\tsg=\sg^*$ and $\wksmooth=\wvecopt$. It is easy to verify that this
satisfies~\ref{item:reward}--\ref{item:feasible}.

So assume that $\ngrp\geq 1$.
We take $\wksmooth_i=\wvecopt_i$ for all $i\in\Mc_{\fast}$. 
For all $i\in\Mc_{\ngrp}$, set $\wksmooth_i=0$.
%If $\ngrp>1$, then for all $i\in\Mc_1$, set $\wksmooth_i=\wvecopt_{i'}$, where $i'$ is the
%last machine (i.e., slowest machine) in $\Mc_1$. 
Consider an index $r\in[\ngrp-1]$. Let $i'$ be the last machine in $\Mc_r$. 
For all $i\in\Mc_r$, set $\wksmooth_i=\wvecopt_{i'}$, if this value is at least
$\wksmooth_{i^*}/m$, and $0$ otherwise. 

This satisfies properties~\ref{sorted}--\ref{smooth} by
construction. Property~\ref{item:feasible} also holds by construction, since
$\wksmooth\leq\wvecopt$, and so $f(L)\leq f(\lvec{\sg^*})\leq B$.

We next define $\noptset$ and $\tsg$ so as to
satisfy~\ref{item:reward}, \ref{item:capacities}. 
Define the reward of a machine to be the total reward of the jobs assigned to it under
$\sg^*$. Let $H_{\fast}\sse\Mc_{\fast}$ be the $|\Mc_{\fast}|-\frac{1}{\ve^2}-1$
largest-reward machines in $\Mc_{\fast}$.
Let $H_1=\Mc_1$. 
%to be the $\frac{1}{\ve^2}-\frac{1}{\ve}$ largest-reward machines in $\M_1$. 
For $r=2,\ldots,\ngrp$, define $H_r\sse\Mc_r$ to be the $m_{r-1}=|\Mc_{r-1}|$
largest-reward machines in $\Mc_r$. 

Let $\noptset$ be the jobs assigned to machines in
$H_{\fast}\cup\bigcup_{r=1}^{\ngrp}H_r$. Note that $\frac{|H_{\fast}|}{|\Mc_{\fast}|}\geq(1-\ve)$
and $\frac{|H_r|}{|\Mc_r|}\geq(1-\ve)$ for
all $r=1,\ldots,\ngrp$. This is clearly true for $r=1$. For $r\geq 2$, this follows
because $m_r\leq\frac{m_{r-1}}{1-\ve}$. Hence, $\rewd(\noptset)\geq (1-\ve)\OPT$.

The assignment $\tsg$ is defined as follows. 
%Let $\ell$ be the smallest index for which $\wksmooth_\ell=0$
We drop all jobs assigned to machines in $\Mc_{\fast}-H_{\fast}$.
Loop through indices $r=1,\ldots,\ngrp$ in that order, and perform the following steps. 

\begin{enumerate}[label=$\bullet$, noitemsep, leftmargin=*]
\item If $r=1$, we move the jobs assigned to machines in $H_1=\Mc_1$, to machines in
$\Mc_{\fast}-H_{\fast}$ (thus freeing up all machines in $\Mc_1$). 
That is, for each machine $i\in H_1$, we pick a distinct machine
$i'\in\Mc_{\fast}-H_{\fast}$ and move all jobs assigned to $i$, to machine $i'$. This is
always possible since $|H_1|\leq|\Mc_{\fast}|-|H_{\fast}|-1$. Also, after this movement,
we still have at least one machine in $\Mc_{\fast}-H_{\fast}$ that does not have any jobs
assigned to it. Let $\bi\in\Mc_{\fast}-H_{\fast}$ be such a free machine.

\item If $r>1$, we drop the jobs assigned to machines in $\Mc_r-H_r$, and
%move the jobs assigned to machines in $H_r$ to machines in $\Mc_{r-1}$ (thus freeing up
%all machines in $\Mc_r$). 
%More precisely, 
for each machine $i\in H_r$, pick a distinct machine $i'\in\Mc_{r-1}$
and move all the jobs assigned to $i$, to machine $i'$. Again, this is always possible
since $|H_r|\leq|\Mc_{r-1}|$ for all $r=2,\ldots,\ngrp$. After this movement all machines
in $\Mc_r$ are free.
%(Recall that $\Mc_0=\Mc_{\fast}$.)

\item If $r>1$ and $\wksmooth_i=0$ for machines in $\Mc_r$, then we 
%pick a machine in $\Mc_{\fast}-H_{\fast}$ that does not have any jobs assigned to it, and 
move all jobs assigned to machines in $H_{r+1}\cup H_{r+2}\cup\ldots\cup H_{\ngrp}$ to the
free machine $\bi\in\Mc_{\fast}-H_{\fast}$. Also terminate the loop here.
\end{enumerate}

It is clear that $\tsg$ assigns precisely the jobs in $H_{\fast}\cup\bigcup_{r=1}^{\ngrp}H_r$.
We argue that~\ref{item:capacities} holds. When we move jobs from a machine $i\in H_r$ to
a machine $i'\in\Mc_{r-1}$ that does not have any jobs assigned to it, we are assigning
$\wvecopt_i$ work to machine $i'$. But note that $\wksmooth_{i'}$ is the work assigned by
$\sg^*$ to the slowest machine in $\Mc_{r-1}$, which is therefore at least $\wvecopt_i$,
since $\wvecopt=\wvecopt^{\down}$. So in this case, we have
$\wvec{\tsg}_{i'}\leq\wksmooth_{i'}$. 
The other type of movement happens when, for some index $r>1$, we assign all jobs assigned
to machines in
$H_{r+1}\cup\ldots\cup H_{\ngrp}$ to the free machine in $\bi\in\Mc_{\fast}-H_{\fast}$. But this
happens because $\wksmooth_i=0$, for all machines $i\in\Mc_r$, which implies
that the first machine in $\Mc_{r+1}$ has at most $\wvecopt_{i^*}/m$ work assigned to it
under $\sg^*$. So machine $\bi\in\Mc_{\fast}-H_{\fast}$ %that is assigned all these jobs
is assigned at most $\wvecopt_{i^*}\leq\wvecopt_{\bi}=\wksmooth_{\bi}$ units of work.
Thus, we have shown that $\wvec{\tsg}\leq\wksmooth$.
\end{proof} %of}

\bibliographystyle{abbrv}
\bibliography{stoc_ref}

\appendix \label{appstart}

\section{Proofs omitted from Sections~\ref{prelim} and~\ref{knapsack}} 
\label{append-prelim} \label{append-knapsack}

Recall that for a non-increasing vector $v\in\Rp^{\POS_{N,\dt}}$, its expansion $v^\exp\in\Rp^N$ 
is the vector: $v^\exp_i=v_i$ for $i\in\POS_{N,\dt}$ and $v^\exp_i=v_{\prev(i)}$ for
$i\in[N]\sm\POS_{N,\dt}$.  
Recall also that for $u\in\R^N$ and $\tht\in\R$, we define 
$Q^{>\tht}(u):=\bigl|\{i\in[N]: u_i>\tht\}\bigr|$. 
The following results will be useful.

\begin{theorem}[Claim 2.3 and Theorem 2.4 in~\cite{IbrahimpurS21}] \label{normprops}
Let $u,v\in\R_+^N$, and $\ell\in[N]$.
\begin{enumerate}[label=(\alph*), topsep=0.2ex, itemsep=0.1ex, leftmargin=*]
\item We have 
$\topl(u)=\min_{t\geq 0}\bigl(\ell t+\sum_{i\in[M]}(u_i-t)^+\bigr)
=\ell u^{\down}_\ell+\sum_{i\in[M]}(u_i-u^{\down}_\ell)^+\bigr)
=\int_0^{\infty}\min\bigl\{\ell,Q^{>\tht}(u)\bigr\}d\tht$.

\item If $\topl(u)\leq\al\topl(v)+\beta$ for all $\ell\in[N]$, then
$\gnorm(u)\leq\al\cdot\gnorm(v)+\beta\cdot\gnorm(1,0,\ldots,0)$ for every monotone,
symmetric norm $\gnorm:R^N\mapsto\R_+$.
\end{enumerate}
\end{theorem}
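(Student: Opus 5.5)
We prove the two parts separately. Part (a) is a statement about the single norm $\topl$. Part (b) is the standard "Top-$\ell$ domination implies domination for every monotone, symmetric norm" fact.

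For part (a), we first show $\topl(u)\le \ell t+\sum_i(u_i-t)^+$ for every $t\ge 0$. Let $S$ be the set of the $\ell$ largest coordinates of $u$. For each $i\in S$ we have $u_i\le t+(u_i-t)^+$. Summing over $S$ gives $\topl(u)\le \ell t+\sum_{i\in S}(u_i-t)^+\le \ell t+\sum_{i\in[N]}(u_i-t)^+$, since the extra terms are nonnegative. For equality, take $t=u^{\down}_\ell$. Then $(u_i-t)^+>0$ only for coordinates strictly larger than $u^{\down}_\ell$, and all of these lie in $S$. Hence $\ell t+\sum_i(u_i-t)^+=\sum_{i\in S}\bigl(t+(u_i-t)\bigr)=\topl(u)$, where the coordinates of $S$ equal to $t$ contribute exactly $t$. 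This gives the min formula and the middle expression. For the integral formula, write $\topl(u)=\sum_{k=1}^{\ell}u^{\down}_k=\sum_{k\le\ell}\int_0^\infty \bon[u^{\down}_k>\tht]\,d\tht$ and swap sum and integral. For fixed $\tht$, the number of $k\le\ell$ with $u^{\down}_k>\tht$ is exactly $\min\{\ell,Q^{>\tht}(u)\}$.

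For part (b), we use the well-known representation of a monotone, symmetric norm on $\R^N_+$ as a supremum of ordered norms. Let $C=\{w\in\R^N_+: w=w^{\down},\ \gnorm^*(w)\le 1\}$, where $\gnorm^*$ is the dual norm. Then $\gnorm(x)=\sup_{w\in C}\langle w,x^{\down}\rangle$ for $x\ge 0$. Monotonicity and symmetry of $\gnorm$ let us restrict the dual maximizer to nonnegative non-increasing vectors, and the rearrangement inequality lets us pair $w$ with $x^{\down}$. Any non-increasing $w\ge 0$ satisfies $\langle w,x^{\down}\rangle=\sum_{\ell=1}^N(w_\ell-w_{\ell+1})\topl(x)$ with $w_{N+1}:=0$, and all these coefficients are nonnegative. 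Substituting the hypothesis $\topl(u)\le\al\topl(v)+\beta$ termwise gives
\[
\langle w,u^{\down}\rangle\le \al\langle w,v^{\down}\rangle+\beta\sum_\ell(w_\ell-w_{\ell+1})=\al\langle w,v^{\down}\rangle+\beta w_1 .
\]
Finally, $w_1=\langle w,(1,0,\ldots,0)\rangle\le \gnorm(1,0,\ldots,0)$ because $\gnorm^*(w)\le1$. Taking the supremum over $w\in C$ gives $\gnorm(u)\le\al\gnorm(v)+\beta\gnorm(1,0,\ldots,0)$.

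The main obstacle is justifying this dual representation, in particular that we may take the dual vectors to be nonnegative and sorted. For nonnegativity, given any dual $w$, replacing it by $|w|$ does not decrease $\langle w,x\rangle$ for $x\ge0$. This replacement keeps $\gnorm^*(|w|)\le1$, because monotone, symmetric norms are absolute, so $\gnorm^*$ is absolute as well. For sortedness, $\gnorm^*$ is symmetric, so permuting $w$ into $w^{\down}$ stays in the dual ball, and by the rearrangement inequality this can only increase the inner product with $x^{\down}$. The rest is routine.
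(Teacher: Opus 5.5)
The paper does not prove this statement; it imports it from~\cite{IbrahimpurS21}, which builds on~\cite{ChakrabartyS19a}. Your argument is correct and is essentially the standard proof behind the cited result. Part (a) is a direct computation. Part (b) represents $\gnorm$ on $\R^N_+$ as a supremum of ordered norms $x\mapsto\langle w,x^{\down}\rangle$, with $w\ge 0$ non-increasing in the dual unit ball. You then expand each ordered norm as $\sum_\ell (w_\ell-w_{\ell+1})\topl(x)$ and use $w_1\le\gnorm(1,0,\ldots,0)$.

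One step needs tightening. The paper defines monotonicity only on $\R^N_+$ (``$f(x)\geq f(y)$ whenever $x\geq y\geq 0$''). Under that definition a monotone, symmetric norm need not be absolute, and your replacement step $\gnorm^*(|w|)\le 1$ can then fail. A concrete example is $g(x)=\max\{\|x\|_\infty,\,|x_1-x_2|\}$ on $\R^2$. It is symmetric and coincides with $\|x\|_\infty$ on $\R^2_+$, so it is monotone there. Yet $g^*(1,-1)=1$ while $g^*(1,1)=2$.

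The repair is immediate, because the statement only involves values on $\R^N_+$. Run your argument for $\widehat{\gnorm}(x):=\gnorm(|x|)$.
\begin{itemize}
\item $\widehat{\gnorm}$ is a norm. The triangle inequality follows from $\gnorm(|x+y|)\le\gnorm(|x|+|y|)\le\gnorm(|x|)+\gnorm(|y|)$, where the first step uses monotonicity on $\R^N_+$.
\item $\widehat{\gnorm}$ is symmetric and absolute.
\item $\widehat{\gnorm}$ agrees with $\gnorm$ on $\R^N_+$, including at $(1,0,\ldots,0)$.
\end{itemize}
If you instead take ``monotone'' to mean that $|x|\le|y|$ coordinatewise implies $\gnorm(x)\le\gnorm(y)$ (a common convention), such norms are absolute and your step is fine as written.

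A last minor point: the final supremum step uses $\beta w_1\le\beta\,\gnorm(1,0,\ldots,0)$ and $\al\langle w,v^{\down}\rangle\le\al\,\gnorm(v)$. Both need $\al,\beta\ge 0$, which is the intended setting.
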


%For a vector $v\in\R^M$ and $\tht\in\R$, define $Q^{>\tht}(v):=\bigl|\{i\in[M]:
%v_i>\tht\}\bigr|$. 

\begin{comment}
\begin{lemma}[Lemma 2.8 in~\cite{IbrahimpurS21} paraphrased] \label{toplestim}
Let $\ve,\dt,\gm>0$.
Let $u\in\Rp^M$, and $v\in\Rp^{\POS_{M,\dt}}$ be a non-increasing vector. 
Let $\gnorm:\R^M\mapsto\Rp$ be a monotone, symmetric norm. %Let $\ve,\gm>0$.
%
\begin{enumerate}[(a), topsep=0.25ex, itemsep=0.15ex, leftmargin=*]
%\item If $u^{\down}_\ell\leq v_\ell$ for all $\ell\in\POS$, then
%$\topl[i](u)\leq\topl[i](v^\exp)$ for all $i\in[m]$, and hence,
%$\gnorm(u)\leq\gnorm(v^\exp)$. 
%
\item If $v_\ell\leq(1+\ve)u^{\down}_\ell+\gm$ for all $\ell\in\POS_{M,\dt}$, 
%then $\topl[i](v^\exp)\leq(1+\dt)(1+\ve)\topl[i](u)+i\gm$ for all $i\in[m]$, and hence,
then $\gnorm(v^\exp)\leq(1+\dt)(1+\ve)\gnorm(u)+M\gm\cdot\gnorm(1,0,\ldots,0)$. 

\item Let $v$ be as in part (a). Let $\dt\leq 1$. %(in $\POS=\POS_{m,\dt}$).
Let $\al\in\R^M$ be such that $\al^{\down}_1\leq v_1$ and
%$N^{>v_\ell}(\al):=\bigl|\{i\in[M]:\al_i>v_\ell\}\bigr|
$Q^{>v_\ell}(\al)\leq(1+\dt)\ell-1$ for all $\ell\in\POS_{M,\dt}$.
%Then, $\topl[i](\al)\leq(1+4\dt)(1+\ve)\topl[i](u)+5i\kp$ for all
%$i\in[m]$. Hence, 
Then
$\gnorm(\al)\leq(1+4\dt)(1+\ve)\gnorm(u)+5M\gm\cdot\gnorm(1,0,\ldots,0)$.
\end{enumerate}
\end{lemma}
\end{comment}

\begin{proofof}{Lemma~\ref{toplestim}}
As noted earlier, part (a) is precisely Lemma 2.8 (b) in~\cite{IbrahimpurS21}.

For part (b), we mimic the proof of Lemma 2.8 (c) in~\cite{IbrahimpurS21}. 
We drop the subscripts $N,\dt$ from $\POS$, $\prev$, $\nxt$.
Consider any index $i\in[N]$. By Lemma~\ref{normprops} (a), we have 
$\topl[i](\al)=\int_0^{\infty}\min\bigl\{i,Q^{>\tht}(\al)\bigr\}d\tht$. Note that
$Q^{>v_1}(\al)=0$, so we can cap the limit of integration at $v_1$.
Let $\bell=i$ if $i\in\POS$ and $\prev(i)$ otherwise. Observe that $i\leq(1+\dt)\bell$. 
We have
\begin{equation*}
\begin{split}
\topl[i](\al) & \leq \int_0^{v_{\bell}}id\qt
+\sum_{\ell\in\POS:1<\ell\leq\bell}\int_{v_\ell}^{v_{\prev(\ell)}}Q^{>\qt}(\al)d\qt
\leq i\cdot v_{\bell}
+\sum_{\ell\in\POS:1<\ell\leq\bell}(v_{\prev(\ell)}-v_\ell)Q^{>v_\ell}(\al) \notag \\
& \leq i\cdot v_{\bell}
+\sum_{\ell\in\POS:1<\ell\leq\bell}(v_{\prev(\ell)}-v_\ell)(1+\dt)(\ell-1)
\leq i\cdot v_{\bell}
+\sum_{\ell\in\POS:1<\ell\leq\bell}(v_{\prev(\ell)}-v_\ell)(1+\dt)^2\prev(\ell).
\end{split}
\end{equation*}
The second inequality is because $Q^{>\qt}(\al)$ is non-increasing in $\qt$; the third
follows from the conditions in the lemma statement; and the final inequality is because
%\eqref{pfineq1} follows since 
$\ell-1\leq(1+\dt)\prev(\ell)$.
Recall that $\prev(1)=0$. Since $i\leq(1+\dt)\bell\leq(1+\dt)^2\bell$, we can upper bound
the final expression above by 
$(1+\dt)^2\sum_{\ell\in\POS:\ell\leq\bell}v_\ell\bigl(\ell-\prev(\ell)\bigr)$. 

Finally,
observe that, since $v$ is non-increasing, 
$\sum_{\ell\in\POS:\ell\leq\bell}v_\ell\bigl(\ell-\prev(\ell)\bigr)$ is the
$\topl[\bell]$-norm of the vector $w\in\Rp^N$, where $w_j=v_j$ if $j\in\POS$, and
$w_j=v_{\nxt(j)}$ if $j\in[N]\sm\POS$. We have $w\leq v^{\exp}$ and $\bell\leq i$, so 
$\sum_{\ell\in\POS:\ell\leq\bell}v_\ell\bigl(\ell-\prev(\ell)\bigr)\leq\topl[i](v^{\exp})$.
So we obtain that $\topl[i](\al)\leq(1+\dt)^2\topl[i](v^{\exp})$ for all $i\in[N]$. By
Lemma~\ref{normprops} (b), this shows that $\gnorm(\al)\leq(1+\dt)^2\gnorm(v^{\exp})$, and we
have $(1+\dt)^2\leq 1+3\dt$ since $\dt\leq 1$.
\end{proofof}

\begin{proofof}{Theorem~\ref{scaling}}
We have
$\rewd_e\leq\rewd'_e\cdot\frac{\ve\cdot\rmax}{n}\leq\rewd_e+\frac{\ve\cdot\rmax}{n}$
for all $e\in[n]$.
So for any $T\sse[n]$, we have 
$\rewd(T)\leq\rewd'(T)\cdot\frac{\ve\cdot\rmax}{n}\leq\rewd(T)+\ve\cdot\rmax$.

Applying the first inequality to the oprimal solution with $\{\rewd_e\}_{e\in[n]}$
rewards yields $\OPT\leq\frac{\ve\cdot\rmax}{n}\cdot\OPT'$. Applying the second inequality
to the optimal solution with $\{\rewd'_e\}_{e\in[n]}$ rewards yields 
$\frac{\ve\cdot\rmax}{n}\cdot\OPT'\leq\OPT+\ve\cdot\rmax\leq(1+\ve)\OPT$. This proves part
(a). 

For part (b), using the second inequality above and part (a), we have
\begin{equation*}
\begin{split}
\rewd(T) & \geq\rewd'(T)\cdot\frac{\ve\cdot\rmax}{n}-\ve\cdot\rmax
\geq\frac{1}{\rho}\cdot\OPT'\cdot\frac{\ve\cdot\rmax}{n}-\ve\cdot\OPT \\
& \geq\frac{1}{\rho}\cdot\OPT-\ve\cdot\OPT. \qedhere
\end{split}
\end{equation*}
\end{proofof}

\begin{proofof}{Claim~\ref{polybnd}} 
%We reproduce the argument from~\cite{ChakrabartyS19a}, for completeness.
A sequence $a_1,a_2,\ldots,a_k$ of nonnegative integers such that 
$\sum_{i\in[k]}a_i\leq M$ can be mapped bijectively to the set of $k+1$ integers 
$a_1,a_2,\ldots,a_k,M-a_k$ from $\{0\}\cup [M]$ that sum to $M$. The number
of such sequences of $k+1$ integers is equal to the coefficient of $x^M$ in the generating
function $(1+x+\ldots+x^M)^{k+1}$. This is equal to the coefficient of $x^M$ in $(1-x)^{-(k+1)}$,
which is $\binom{M+k}{M}$ using the binomial expansion. 
Let $U=\max\{M,k\}$. 
We have $\binom{M+k}{M}=\binom{M+k}{U}\leq\bigl(\frac{e(M+k)}{U}\bigr)^U\leq(2e)^U$.

If we have a non-increasing sequence $a_1\geq a_2\geq \ldots \geq a_k$ of $k$ integers
from $\dbrack{M}$, then we can map this bijectively to the set of $k+1$ integers
$M-a_1,a_1-a_2,a_2-a_3,\ldots,a_k$ from $\dbrack{M}$ that sum to $M$.
\end{proofof}

\section{Bicriteria guarantees} \label{append-budgviol}
We briefly discuss how the machinery developed for minimum-norm covering problems can be
used to obtain bicriteria guarantees for norm-budgeted packing problems where the norm
budget is violated by a $(1+\ve)$-factor, for any $\ve>0$. 

For \normsepfl, we already obtain such a guarantee by combining Lemma~\ref{constfrp} and
Theorem~\ref{normsepfl-bi}: given a $\beta$-approximation algorithm for \onefrp, we obtain
a bicriteria $\bigl(O(\beta),1+O(\ve)\bigr)$-approximation algorithm for \normsepfl.
So we focus on \normknap and \normmwis, %(and \normmatch), 
wherein $\sz(T)$ is the
size-weighted characteristic vector of $T$.

%As in the proof of Theorem~\ref{normsepfl-bi}, 
We utilize the machinery from Section~\ref{msn-prelim}. Let $\dt=\min\{\ve,1\}$. We take
$\POS=\POS_{n,\dt}$. Recall that $\optset$ denotes some fixed optimal solution. 
%Let $\ovec=\svec{\optset}$. 
For $u\in\R^n$ and $\tht\in\R$, recall the notation 
$Q^{>\tht}(u):=\bigl|\{i\in[n]: u_i>\tht\}\bigr|$.
We may assume that $f$ is normalized so that $f(1,0,\ldots,0)=1$.
As in the proof of Theorem~\ref{normsepfl-bi}, we can identify a polynomial-size set
$\T\sse\Rp^{\POS}$ containing a non-increasing vector $\vt$ such that
$\ovec^{\down}_\ell\leq t_\ell\leq(1+\ve)\ovec^{\down}_\ell+\frac{\ve\cdot B}{n}$. Given
$\vt$, we aim to find a maximum-reward solution $T\in\sols$ satisfying
$Q^{>t_\ell}\bigl(\svec{T}\bigr)\leq\ell-1$ for all $\ell\in\pos$. 
Lemma~\ref{toplestim} coupled with the bounds on the $t_\ell$'s then implies that
$f\bigl(\svec{T}\bigr)\leq\bigl(1+O(\ve)\bigr)B$. 

Note that the constraints $Q^{>t_\ell}\bigl(\svec{T}\bigr)\leq\ell-1$ for all
$\ell\in\pos$ can be captured by a matroid. So incorporating these constraints 
in: %This yields the following.
\begin{enumerate}[label=$\bullet$, topsep=0.2ex, itemsep=0.1ex, leftmargin=*]
\item \normknap, %incorporating these constraints 
yields an (standard) \mwis problem on a matroid, which can be solved exactly, so we obtain
a $(1,1+\ve)$-approximation here; 
\item \normmwis on a matroid, yields a weighted matroid-intersection problem, which can be
  solved exactly, so we obtain a $(1,1+\ve)$-approximation here;
\item \normmwis on a $k$-set system, yields an (standard) \mwis problem on a $(k+1)$-set
system, which admits (slightly better than) a $(k+1)$-approximation, so we obtain a
$\bigl(k+1,1+\ve)$-approximation here.
\end{enumerate}

\section{Proof of Lemma~\ref{constfrp}} \label{append-constfrp}
%\begin{proofof}{Lemma~\ref{constfrp}}
Fix a facility $i$.
Recall that $\M_i=(\C,\sols_i)$, we have client-rewards $v\in\Rp^\C$ and client-weights 
$\wt\in\Rp^C$, and a budget $t\in\Rp$. 
Define $\sols_{i,t}:=\{S\in\sols_i: \wt(S)\leq t\}$. 
In \constr \onefrp, we seek a maximum-reward
solution $A\in\sols_{i,t}$. %such that $\wt(A)\leq t$.

Let $\alg^{\onefrp}$ be the given $\beta$-approximation algorithm for \onefrp.
%Consider any facility $i\in\F$, $t\in\Rp$, and rewards $v\in\Rp^\C$. 
Let $\vopt=\max_{S\in\sols_{i,t}}v(S)$, and $S^*\in\sols_{i,t}$ be such that $v(S^*)=\vopt$.
Let $\targ$ be a given target value. We will show that as long as $\targ\leq\vopt$, we can
find $R\in\sols_i$ such that $v(R)\geq\frac{\targ}{\beta+1}$ and $\wt(R)\leq(1+\ve)t$;
%$\sum_{j\in R}c_{ij}\leq 2t$; 
call this a  ``success''.
We can then do binary search in the range $[0,v(\C)]$ to
find an interval $[\targ,\targ+\gm]$, such that we have success
for $\targ$, and we do not have success for $\targ+\gm$. It must therefore be that
$\vopt<\targ+\gm$ and 
by taking $\gm$ sufficiently small, but still such that $\log\bigl(\frac{1}{\gm}\bigr)$ is
polynomially bounded in the input size, we obtain that $\vopt\leq\targ$.

So suppose we have a target value $\targ\leq\vopt$. We may assume that $\ve\leq 1$.
%By combining the above with 
%an enumeration step that enumerates all elements in
%$S^*$ with $c_{ij}\geq\ve t$, one can obtain a bicriteria $(\beta+1,1+\ve)$ approximation.  
We may assume that we know $Q^*=\{j\in S^*: \wt_j\geq\ve t\}$ since $|Q^*|\leq\frac{1}{\ve}$.
%Suppose we have a set $Q\in\sols_{i,t}$ with $c_{ij}\geq\ve t$ for all $j\in Q$. 
(More precisely, we run the steps below for all $Q\in\sols_{i,t}$ with
$|Q|\leq\frac{1}{\ve}$ and $\wt_j\geq\ve t$ for all $j\in Q$, and return the best
solution found.)  
%We perform a similar computation as above to find, 
%For a given target value $\targ$, 
Let $\C'=\{j\in\C: \wt_j<\ve t\}$.
We use $\alg^{\onefrp}$ to compute a set
$Z\in\sols_i$ with $Z\sse Q^*\cup \C'$ %\{j\in\C: \wt_j<\ve t\}$ 
that approximately maximizes
$v(S)-\frac{\targ}{(\beta+1)t}\cdot\wt(S)$ over all $S\in\sols_i$ satisfying 
$S\sse Q^*\cup\C'$. %\{j\in\C: \wt_j<\ve t\}$. 
This can be cast as an \mwis problem on the independence system $\M_i$ as
follows. Let $\ld_j:=v_j-\tfrac{\targ}{(\beta+1)t}\cdot \wt_j$ for all 
$j\in Q^*\cup\C'$, %\{j\in\C: \wt_j<\ve t\}$, 
and $\ld_j:=0$ otherwise. 
Define $\tld_j:=\ld_j$ if $\ld_j\geq 0$, %and $\wt_j\leq t$, 
and $\tld_j:=0$ otherwise.
Then the \mwis problem on $\M_i$ with $\tld$ client rewards is the same as maximizing
$v(S)-\frac{\targ}{(\beta+1)t}\cdot\wt(S)$ over all $S\in\sols_i$ with 
$S\sse Q^*\cup\C'$. %\{j\in\C: \wt_j<\ve t\}$. 
This is because we can always take a solution $S$ 
to this \mwis problem and discard clients with $\tld_j=0$ to obtain a set $T\in\sols_i$ 
with $T\sse Q^*\cup\C'$, without affecting the $\tld$-value.

%Consider client rewards $\tld_j=\max\bigl\{\ld_j:=v_j-\tfrac{\targ}{(\beta+1)t}\cdot c_{ij},0\bigr\}$.
We have $\ld(S^*)\geq v(S^*)-\frac{\targ}{(\beta+1)t}\cdot\wt(S^*)
\geq \vopt-\frac{\targ}{\beta+1}\geq\frac{\beta}{\beta+1}\cdot\targ$, so since
$\alg^{\onefrp}$ is a $\beta$-approximation algorithm for (in particular) the \mwis
problem on $\M_i$, we obtain that $\tld(Z)\geq\ld(Z)\geq\frac{\targ}{\beta+1}$.
As noted above, we may assume that $\ld_j>0$ for all $j\in Z$, because otherwise,
we can simply delete $j$ from $Z$ without decreasing the $\tld$-value of the set.

%This corresponds to \mwis with the
%ground set $\C-\{j\in\C-Q^*: c_{ij}\geq\ve t\}$.)
%for all $j\in S$. 
Now let $R$ be a minimal subset of $Z$ containing $Z\cap Q^*$, with 
$\wt(R)\geq t$. Note that this is well defined, since $Q^*\in\sols_{i,t}$.  
%to obtain a set $R$ as above.
Since $Z-Q^*\sse\C'$, %\{j\in\C: c_{ij}<\ve t\}$, 
we have $\wt(R)\leq(1+\ve)t$.
If $R=Z$, then $v(R)\geq\ld(R)=\ld(Z)=\tld(Z)\geq\frac{\targ}{\beta+1}$. 
Otherwise, since $\ld_j>0$ for all $j\in Z$, we have 
$v(R)\geq\frac{\targ}{(\beta+1)t}\cdot\wt(R)\geq\frac{\targ}{\beta+1}$. So we
always have $v(R)\geq\frac{\targ}{\beta+1}$. \hfill \qed

\begin{remark} 
When the $\M_i$s are matroids, \constr \onefrp amounts to finding a maximum-weight
independent set subject to a knapsack constraint, which admits a
PTAS~\cite{BergerBGS11,GrandoniRSZ14}, i.e. a $(1+\ve,1)$-approximation, for any $\ve>0$.
%$\frac{e}{e-1}$-approximation~\cite{Sviridenko04}. 
There is also a PTAS when each $\M_i$ corresponds to the intersection of two
matroids~\cite{BergerBGS11,GrandoniRSZ14}. 
%\end{enumerate}
\end{remark}

\section{Proof of Theorem~\ref{identicalbiptas}} \label{append-identbiptas}
%\begin{proofof}{Theorem~\ref{identicalbiptas}} 
\label{identical} \label{ident-biptas}
For the $(1+\ve,1+\ve)$-approximation algorithm, which we also refer to as a bicriteria
PTAS, we identify a set $A$ of jobs with $\rewd(A)\geq(1-\ve)^2\OPT$ that admits an
assignment satisfying the norm-budget constraint {\em exactly}. Given this, one can use 
the PTAS for minimum-norm load-balancing on identical machines from~\cite{IbrahimpurS21}
to find an assignment for $A$ that violates the norm=budget constraint by a
$(1+\ve)$-factor. Combining the bicriteria PTAS with the PTAS for \normknap using
Theorem~\ref{lbredn}, yields the 
%we therefore obtain an improved 
$(2+\ve)$-approximation for \normsched on identical machines. 

We find $A$ using the same approach as for norm-budgeted knapsack. 
%supplementing this with one additional insight. 
Let $\sg^*:\jopt\mapsto[m]$ be an optimal solution, and 
$\lvecopt=\lvec{\sg^*}$ be the load-vector induced by $\sg^*$. 
As always, we may assume that all rewards are integers bounded by $\frac{n}{\ve}$.
We may also assume that we know the maximum reward $\rmax$ of a job in $\jopt$, and that
$\rmax$ is the maximum reward among all jobs. 
%since we can discard all jobs with higher rewards. 
For an integer $q\geq 0$, define $\thresh_q:=\frac{\rmax}{(1+\ve)^q}$, and 
$\itemset_q:=\bigl\{j\in J:\frac{\thresh_q}{1+\ve}<\rewd_j\leq\thresh_q\bigr\}$.
Let $\nbuck\leq O\bigl(\log\frac{n}{\ve}\bigr)$ be the number of reward buckets that
together cover all jobs with non-zero reward.
As with norm-budgeted knapsack, assume that we have an estimate $\optval$ such that
$\optval\leq\OPT\leq(1+\ve)\optval$. 
By enumeration over a polynomial-size set, we may assume that we know
$\rnum_q:=\floor{\frac{\rewd(\jopt\cap\itemset_q)}{\Dt}}$ for all
$q\in\dbrack{\nbuck}$, where $\Dt=\frac{\ve\cdot\optval}{\nbuck}$. 
Define $\num_q=\rnum_q\cdot\frac{\Dt}{\thresh_q}$ for $q\in\dbrack{\nbuck}$. 

Now, we claim that if we select, for each $q\in\dbrack{\nbuck}$, the set $S_q$ of
$\ceil{\num_q}$ smallest-size jobs in $\itemset_q$, then the union $A$ of these $S_q$-sets
has the desired properties: we have $\rewd(A)\geq(1-\ve)^2\OPT$, and there is an assignment
$\sg:A\mapsto[m]$ such that $f(\lvec{\sg})\leq B$.  

The analysis in the proof of Theorem~\ref{normknapthm} shows that the quality
of the $\rnum_q$ and $\num_q$ estimates is good enough to yield
$\rewd(A)\geq(1-\ve)^2\OPT$, since %we have
$\rewd(\jopt\cap\itemset_q)<(1+\ve)\rewd(S_q)+\Dt$ %\frac{\ve\optval}{\nbuck}$ 
for every $q\in\dbrack{\nbuck}$. 

We also have $\svec[p]{S_q}\leq\svec[p]{\jopt\cap\itemset_q}$ for every
$q\in\dbrack{\nbuck}$, and so $\svec[p]{A}\leq\svec[p]{\jopt}$. Let $\pi:A\mapsto\jopt$ be
a one-to-one mapping such that $p_j\leq p_{\pi(j)}$ for all $j\in A$. So, if we consider the
assignment $\sg(j)=\sg^*(\pi(j))$ for every $j\in A$, then the load vector
$\lvec{\sg}$ is coordinate-wise dominated by $\lvo$. This is because  
$\lvec{\sg}_i=\sum_{j\in A: \sg^*(\pi(j))=i}p_j\leq\sum_{j\in A:\sg^*(\pi(j))=i}p_{\pi(j)}\leq\lvec{\sg^*}_i$.
Therefore, $f(\lvec{\sg})\leq B$. \hfill \qed 
%\end{proofof}

\section{Proofs omitted from Section~\ref{relatedmc}} \label{append-relmc}

\begin{proofof}{Lemma~\ref{enumlem}}
Define $\Dt=\frac{\ve\cdot\est}{k}$. 
Let 
\[
\seqset:=\biggl\{b\in\Rp^k:\ b_r\text{ is a multiple of }\Dt\ \forall r\in[k],
\quad \sum_{r=1}^kb_r\leq\Bigl(1+\tfrac{1}{\ve}\Bigr)k\Dt\biggr\}.
\]
By Claim~\ref{polybnd}, we have $|\seqset|\leq 2^{O(k/\ve)}$ since
$\bigl(\frac{b_r}{\Dt}\bigr)_{r\in[k]}$ is a sequence of $k$ nonnegative integers that
sum up to at most $\bigl(1+\frac{1}{\ve}\bigr)k$. It is easy to give a recursive procedure
that enumerates all sequences in $\seqset$, where each leaf of the recursion tree is
labeled by a distinct sequence in $\seqset$, so elements in $\seqset$ can be enumerated in
time $2^{O(k/\ve)}$.

Define $\ta_r=\floor{\frac{a_r}{\Dt}}\cdot\Dt$ for all $r\in[k]$. Then, 
$a_r-\Dt\leq\ta_r\leq a_r$ and is a multiple of $\Dt$ for all $r\in[k]$, 
and $\sum_{r=1}^k\ta_r\leq\Gm\leq(1+\ve)\est=\bigl(1+\frac{1}{\ve}\bigr)k\Dt$. Therefore,
$\ta\in\seqset$. Also
$\sum_{r=1}^k\ta_r\geq\sum_{r=1}^ka_r-k\Dt=\sum_{r=1}^ka_r-\ve\cdot\est$. 
\end{proofof}

\begin{proofof}{Claim~\ref{schur}}
We have: (i) $u_\ell=v_\ell$ for all $\ell\in[m]-\{i,i'\}$, 
(ii) $\max\{u_i,u_{i'}\}<v_{i'}$ and (iii) $v_i<v_{i'}$. 
Let $\pi:[m]\mapsto[m]$ be the permutation 
%where each index $i\in[m]$ is mapped to its corresponding position in 
corresponding to $v^{\down}$,
i.e., $v_{\pi(1)}\geq v_{\pi(2)}\geq\ldots\geq v_{\pi(m)}$. Let $\ell=\pi^{-1}(i')$. Then
$\ell<\pi^{-1}(i)$ since $v_{i'}>v_i$. Also, for any $j<\ell$, we have
$\pi(j)\in[m]-\{i,i'\}$ and $u_{\pi(j)}=v_{\pi(j)}\geq v_{i'}>\max\{u_i,u_{i'}\}$. 
It follows that $u^{\down}_j=v^{\down}_j$ for all $j=1,\ldots,\ell-1$.
Also $u^{\down}_\ell\leq\max\{u_i,u_{i'}\}<v_{i'}=v^{\down}_\ell$. Hence, $u^{\down}$ is
lexicographically smaller than $v^{\down}$.

To show that $\gnorm(u)\leq\gnorm(v)$, we argue that $\topl[j](u)\leq\topl[j](v)$ for all
$j\in[m]$. Consider any $j\in[m]$, and let $S\sse[n]$ be the coordinates of $u$
corresponding to the first $j$ coordinates of $u^{\down}$; in particular, we have
$\topl[j](u)=\sum_{r\in S}u_r$. If $S$ contains $i,i'$, or $S\sse[m]-\{i,i'\}$, then we
have $\sum_{r\in S}u_r=\sum_{r\in S}v_r\leq\topl[j](v)$. If $S$ contains $i'$ but not $i$,
then $\sum_{r\in S}u_r<\sum_{r\in S}v_r\leq\topl[j](v)$. If $S$ contains $i$ but not $i'$, 
then taking $T=S-\{i\}\cup\{i'\}$, we have 
$\sum_{r\in S}u_r<\sum_{r\in T}v_r\leq\topl[j](v)$. 
\end{proofof}

\begin{proofof}{Lemma~\ref{extnlem}}
Let $L^*$ be the vector $\bigl(\max\{\preload_i,z^*\}\bigr)_{i\in[m]}$, where recall that
$\preload_i=p\bigl(\sg^{-1}(i)\bigr)$ for all $i\in[m]$. Clearly,
$(x^*,L^*)$ is a feasible solution to \eqref{extncp}. (Note that the inequality
$\sum_{j\in J}p_jx^*_{ij}\leq (z^*-\preload_i)^+$ is tight for all $i\in\mcset$, since
adding these inequalities gives the same quantity $p(J_2)$ on the LHS and RHS.)
Since the feasible region of \eqref{extncp} is closed and bounded, and $f$ is continuous,
\eqref{extncp} has an optimal solution. Let $(x, L)$ be an optimal solution to
\eqref{extncp} such that $L^{\down}$ is lexicographically smallest among the sorted
load-vectors of all optimal solutions to \eqref{extncp}.
%We show that $\topl(L^*)\leq\topl(L)$ for all
%$\ell\in[m]$, which implies that $f(L^*)\leq f(L)$.

If $L^*\leq L$, then $f(L^*)\leq f(L)$, so $(x^*,L^*)$ is also an optimal solution. 
%due to monotonicity of $f$.
So suppose $L^*_i>L_i$ for some $i\in\mcset$. In this case, we arrive at a contradiction.
%Consider some $i\in\mcset$ where $L^*_i>L_i$. Then, 
It must be that $L^*_i=z^*>\preload_{i}$. There must 
be some $i'\in\mcset$ such that $L_{i'}>\max\{\preload_{i'},z^*\}>L_i$. Otherwise, 
\[
p(J_2)=\sum_{r\in\mcset}(L_r-\preload_r)<\sum_{r\in\mcset}\bigl(\max\{\preload_r,z^*\}-\preload_r\bigr)
=\sum_{r\in\mcset}(z^*-\preload_r)^+=p(J_2)
\]
which yields a contradiction. The first equality above is because
$L_r-\preload_r=\sum_{j\in J_2}p_jx_{rj}$ and $\sum_{r\in\mcset}x_{rj}=1$ for every 
$j\in J_2$; the second inequality is because $L_i<\max\{\preload_i,z^*\}$; the last
equality is from the definition of $z^*$.
Since $i'\in\mcset$ and $L_{i'}>\max\{\preload_{i'},z^*\}$, there is some job $j\in J_2$
with $x_{i'j}>0$. We can now decrease $x_{i'j}$ by some $\tht>0$ and increase $x_{ij}$ by
$\tht$, which has the effect of decreasing $L_{i'}$ and increasing $L_i$ by
$\kp=p_j\tht$. We can choose $\tht$ so that $\kp<L_{i'}-L_i$.
So if $L'$ denotes the new load vector, by Claim~\ref{schur}, we have that
$f(L')\leq f(L)$, so $L'$ corresponds to the load vector of an optimal solution.  
But also $(L')^{\down}$ is lexicographically smaller than $L^{\down}$ (by
Claim~\ref{schur}), which contradicts the choice of $(x,L)$.
\end{proofof}

\section{\boldmath Dynamic-programming based FPTAS for \normknap with ordered norms}
\label{dpknap}
An ordered norm is a nonnegative linear combination of $\topl$ norms. Equivalently, an
ordered 
norm $f:\R^n\mapsto\Rp$ is specified by a non-increasing vector $\mu\in\Rp^n$, which
defines $f(v):=\sum_{i\in[n]}\mu_iv^{\down}_i$ for $v\geq 0$.%
\footnote{For $v\in\R^n-\Rp^n$, we define $f(v)=f\bigl(\{|v_i|\}_{i\in[n]}\bigr)$.}

Let $\bigl([n],\{\sz_i,\rewd_i\}_{i\in [n]},f:\R^n\mapsto\Rp,B\bigr)$ be an instance of
\normknap, where $f$ is an ordered norm specified as above by a non-increasing vector
$\mu\in\Rp^n$. 
We first apply Theorem~\ref{scaling} (scaling and rounding) to obtain that $\rewd_e$ is
an integer and $\rewd_e\leq\ceil{\frac{n}{\ve}}$ for all $e\in[n]$, losing a
$(1-\ve)$-factor in the optimal value.
Let $\rmax$ be the maximum (scaled) reward of an item.
The key observation that enables the dynamic program (DP) is that if we know that item $e$
is the $j$-th largest-size item in our solution, then we can determine its norm
contribution to be $\mu_j\sz_e$ since we have an ordered norm. 

This motivates the following DP. 
Order items so that $\sz_1\geq\ldots\geq\sz_n$. 
For $i,j\in\dbrack{n}$ and integers $\targ\in[0,n\cdot\rmax]$, define 
\[
\dptab(i,j,\targ)\, :=\, 
\min\ \bigl\{f(\svec{T}):\ \ T\sse[i], \quad |T|=j, \quad \rewd(T)\geq\targ\bigr\}.
\]
We set $\dptab(i,j,\targ)=\infty$ to denote that there is no feasible solution to the
underlying problem, and use the convention that $\infty+x=\infty$ for any $x\in\R$.
We can calculate the $\dptab(\cdot)$ entries via the following DP. 
%The base cases are: 
\begin{alignat*}{2}
\text{Base cases:} \quad
& \text{for all}\ i\in\dbrack{n}, \quad
&& \dptab(i,0,\targ)=0\ \,\text{if $\targ=0$},\quad \dptab(i,0,\targ)=\infty\ \text{otherwise}; 
\\
& && \dptab(i,j,\targ)=\infty\ \, \text{if }i<j. \\
%to denote that there is no feasible solution to the underlying problem. 
%& \frall j\in[n],\ \text{all}\ \targ, \quad \dptab(0,j,\targ)=\infty.
\text{DP recurrence:} \quad
& \text{for other}\ i,j,\targ, \quad
&&
\dptab(i,j,\targ)=\min\bigl\{\dptab(i-1,j-1,\targ-\rewd_i)+\mu_j\sz_i,\,\dptab(i-1,j,\targ)\bigr\}.
%\label{dprec}
\end{alignat*}

The two terms in the RHS of the DP recurrence correspond to including item $i$ in the
solution, which means that it is the $j$-th largest item in the solution, or not including
item $i$. Clearly, we can calculate the $\dptab(\cdot)$ entries in
$O(n^3\cdot\rmax)=O\bigl(\frac{n^4}{\ve}\bigr)$ time. The optimum value is then found by
considering the largest value $\targ$ for which $\dptab(n,j,\targ)$ is at most $B$ for some
$j\in\dbrack{n}$, and the optimal solution can be computed by tracing back to see how
$\dptab(n,j,\targ)$ is computed.  
By Theorem~\ref{scaling}, the optimal solution to the scaled instance yields reward
at least $(1-\ve)\cdot\OPT$. 

\medskip
We note that this DP approach does not seem amenable to handle more-general monotone
symmetric norms, even the case where the norm is the maximum of {\em two} ordered
norms. For this setting, one can come up with a pseudopolynomial time
algorithm, by keeping track of the budget consumption under each individual ordered norm
in the DP state; but this does not translate to an approximation scheme. A monotone,
symmetric norm may in general be the pointwise maximum of an arbitrary (even uncountable)
collection of ordered norms, so an extension of the DP-based approach to handle this
general setting seems rather infeasible. 

\end{document}